\DeclareFontFamily{U}{mathx}{\hyphenchar\font45}
\DeclareFontShape{U}{mathx}{m}{n}{
      <5> <6> <7> <8> <9> <10>
      <10.95> <12> <14.4> <17.28> <20.74> <24.88>
      mathx10
      }{}
\DeclareSymbolFont{mathx}{U}{mathx}{m}{n}
\DeclareMathAccent{\widecheck}{0}{mathx}{"71}
\setlist{noitemsep}
\setlist[1]{labelindent=\parindent}
\setlist[itemize]{leftmargin=*}
\setlist[itemize,1]{label=$\triangleright$}
\setlist[enumerate,1]{label = (\roman*),ref   = (\roman*)}
\declaretheoremstyle[
    spaceabove=2ex,
    spacebelow=2ex,
    headfont=\normalfont\scshape, 
    bodyfont=\normalfont\itshape, 
    postheadspace=.5ex,
    notebraces={(}{)},
    headpunct={:}%
]{thmstyle}
\declaretheorem[name={Assumption}, style=thmstyle]{assumption}
\declaretheorem[name={Theorem}, style=thmstyle]{theorem}
\declaretheorem[name={Proposition}, style=thmstyle]{proposition}
\declaretheorem[name={Corollary}, style=thmstyle]{corollary}
\declaretheorem[name={Lemma}, style=thmstyle]{lemma}
\declaretheorem[name={Example}, style=thmstyle]{example}
\declaretheorem[name={Remark}, style=thmstyle]{remark}
\declaretheorem[name={Definition}, style=thmstyle]{definition}
\declaretheoremstyle[
    spaceabove=1ex,
    spacebelow=3ex,
    headfont=\normalfont\scshape, 
    bodyfont=\normalfont, 
    postheadspace=.5ex,
    qed=$\blacksquare$,
    headpunct={:}%
]{proofstyle}
\declaretheoremstyle[
    spaceabove=2ex,
    spacebelow=2ex,
    headfont=\normalfont\scshape, 
    bodyfont=\normalfont\itshape, 
    postheadspace=.5ex,
    headpunct={:}%
]{algorithmstyle}
\newcommand{\notes}[1]{\justify{\footnotesize{\textit{Notes:} #1}}}
\renewenvironment{abstract}
 {\small
  \begin{center}
  \bfseries \abstractname\vspace{-.5em}\vspace{0pt}
  \end{center}
  \list{}{
    \setlength{\leftmargin}{0cm}%
    \setlength{\rightmargin}{\leftmargin}%
  }%
  \item\relax}
 {\endlist}
\newcommand{\deriv}[2]{\frac{\mathrm{d}#1}{\mathrm{d}#2}}
\newcommand{\pderiv}[2]{\frac{\partial #1}{\partial #2}}
\DeclareMathOperator*{\argmin}{arg\,min}
\newcommand{\dmu}{\,\mathrm{d}\mu}
\newcommand{\dnu}{\,\mathrm{d}\nu}
\newcommand{\dlambda}{\,\mathrm{d}\lambda}
\newcommand{\dP}{\,\mathrm{d}P}
\newcommand{\darg}[1]{\,\mathrm{d}#1}
\newcommand{\E}{\mathbb{E}}
\newcommand{\Var}{\mathrm{Var}}
\newcommand{\Cov}{\mathrm{Cov}}
\newcommand{\weakconv}{\rightsquigarrow}
\newcommand{\R}{\mathbb{R}}
\newcommand{\N}{\mathbb{N}}
\newcommand{\Z}{\mathbb{Z}}
\newcommand{\F}{\mathcal{F}}   
\newcommand{\ms}[1]{\mathscr{#1}}
\newcommand{\Span}{\operatorname{Span}}
\newcommand{\rank}{\operatorname{rank}}
\newcommand{\IP}[2]{\left\langle {#1}\, ,\, {#2} \right\rangle}
\newcommand{\ran}{\operatorname{ran}}
\newcommand{\cl}{\operatorname{cl}}
\newcommand{\define}{\coloneqq}
\newcommand{\lin}{\operatorname{lin}}
\newcommand{\cllin}{\overline{\operatorname{lin}}\ }
\renewcommand{\P}{\mathrm{P}}
\newcommand{\bbP}{\mathbb{P}}
\newcommand{\EP}{\mathbb{P}_n}
\newcommand{\mc}[1]{\mathcal{#1}}
\newcommand{\mb}[1]{\mathbb{#1}}
\newcommand{\limn}{\lim_{n\to\infty}}
\newcommand{\limm}{\lim_{m\to\infty}}
\newcommand{\limsupn}{\limsup_{n\to\infty}}
\newcommand{\liminfn}{\liminf_{n\to\infty}}
\newcommand{\transp}{^\prime}
\newcommand{\fracn}{\frac{1}{n}}
\newcommand{\G}{\mathbb{G}}
\newcommand{\fracrootn}{\frac{1}{\sqrt{n}}}
\newcommand{\sumin}{\sum_{i=1}^n}
\newcommand{\meanin}{\frac{1}{n}\sumin}
\newcommand{\dotscr}{\dot{\ell}_{\gamma}}
\newcommand{\dotscrarg}[1]{\dot{\ell}_{#1}}
\newcommand{\effscr}{\tilde{\ell}_{\gamma}}
\newcommand{\effscrarg}[1]{\tilde{\ell}_{#1}}
\newcommand{\effinfo}{\tilde{\mc{I}}_{\gamma}}
\newcommand{\effinfoarg}[1]{\tilde{\mc{I}}_{#1}}
\newcommand{\lcontig}{\triangleleft\,}
\newcommand{\mutcontig}{\triangleleft \, \triangleright\; }
\newcommand{\mcontig}{\mutcontig}
\newcommand{\quotient}[2]{{#1} \,/\, {#2}}
\newcommand\independent{\protect\mathpalette{\protect\independenT}{\perp}}
\def\independenT#1#2{\mathrel{\rlap{$#1#2$}\mkern2mu{#1#2}}}
\begin{document}


\title{\sc \Large 
Locally regular and efficient tests in non-regular semiparametric models
}

\author{{Adam Lee\thanks{BI Norwegian Business School, \href{mailto:adam.lee@bi.no}{adam.lee@bi.no}.
Previous versions of this paper were titled ``Robust and Efficient Inference For Non-Regular Semiparametric Models''.
I have benefitted from discussions with and comments / questions from Majid Al-Sadoon, 
 Isiah Andrews,
 Christian Brownlees, Bjarni G. Einarsson, Juan Carlos Escanciano, Kirill Evdokimov, Lukas Hoesch,
Geert Mesters,
Vladislav Morozov, Jonas Moss, 
Whitney Newey, 
Katerina Petrova, Francesco Ravazzolo, Barbara Rossi, Andr\'{e} B. M. Souza, Emil Aas Stoltenberg, Philipp Tiozzo and participants at various conferences and seminars.
All errors are my own.}}
	}
\date{\today 
}

\maketitle
\thispagestyle{empty}

\begin{abstract}
\onehalfspacing
\noindent

This paper considers hypothesis testing in semiparametric models which may be non-regular. 
I show that C($\alpha$) style tests are locally regular under mild conditions, including in cases where locally regular estimators do not exist, such as models which are (semiparametrically) weakly identified.
I characterise the appropriate limit experiment in which to study local (asymptotic) optimality of tests in the non-regular case and generalise classical power bounds to this case. I give conditions under which these power bounds are attained by the proposed C($\alpha$) style tests. The application of the theory to a single index model and an instrumental variables model is worked out in detail.

\noindent 

\bigskip \noindent \textit{JEL classification}: C10, C12, C14, C21, C39

\bigskip \noindent \textit{Keywords}: 
Hypothesis testing, local asymptotics, uniformity, semiparametric models, weak identification, boundary, regularisation, single-index, 
instrumental variables.

\end{abstract}

\clearpage

\onehalfspacing
\setcounter{page}{1}
\section{Introduction}\label{sec:intro}

It is often considered desirable that estimators are ``locally regular'' in that they exhibit the same limiting behaviour under the true parameter as they do 
 under sequences of ``local alternatives'' which cannot be consistently distinguished from the true parameter.\footnote{Precise definitions will be given below. See \cite{BKRW98, vdV98}, for example, for textbook treatments.} 
Unfortunately, there are many 
models in which locally regular estimators do not exist.\footnote{See e.g. \cite{C86, C92,  N90, RB90} for some examples.} One necessary condition is given by \cite{C86}: if the efficient information for a scalar parameter is 0, then no locally regular estimator of that parameter exists. Similarly, singularity of the efficient information matrix implies the non-existence of locally regular estimators of Euclidean parameters. Models in which this may occur are called ``non-regular''. Many widely used models are non-regular (at least at certain parameter values): examples include single index models, instrumental variables, errors-in-variables, mixed proportional hazards, discrete choice models and sample selection models.

In this paper, I demonstrate that locally regular \emph{tests} exist in a broad class of non-regular models, despite the non-existence of locally regular estimators. In particular, I show that a class of tests based on the C$(\alpha)$ idea of \cite{N59, N79} are locally regular.

These tests are based on a quadratic form of moment conditions evaluated under the null hypothesis. The key [C($\alpha$)] idea which ensures the local regularity is that the moment conditions must be (asymptotically) orthogonal to the collection of score functions for all nuisance parameters. Such moment conditions can always be constructed from any initial moment conditions by an orthogonal projection.

A key advantage of these C($\alpha$) tests is that they do not (asymptotically) overreject under (semiparametric) weak identification asymptotics, i.e. under local alternatives to a point of identification failure.\footnote{The semiparametric weak identification asymptotics used are those of \cite{Kaji21} (see also \citealp{AM22}), suitably generalised to permit non-i.i.d models. 
} The local regularity of these tests ensures that if the test is asymptotically of level $\alpha$ under any fixed parameter consistent with the null, it is also asymptotically of level $\alpha$ under any sequence of local alternatives consistent with the null, i.e. under (semiparametric) weak identification asymptotics. In addition to the well-studied case where weak identification stems from potential identification failure due to a finite dimensional nuisance parameter, the results in this paper also cover the case where identification failure is due to an infinite dimensional nuisance parameter and thus provide a generally applicable approach to weak identification robust inference in semiparametric models.\footnote{
    These C($\alpha$) tests also behave well in other non-standard settings, such as when nuisance functions are estimated under shape constraints; see Section \ref{sm:shape-constraints} for a discussion.
} Even in the case where the identification failure due to a finite dimensional nuisance parameter, the resulting weak identification robust tests appear to be new in the literature.\footnote{For instance, in the case of homoskedastic linear IV, the test that results from the construction in this paper does not coincide with any of the ``usual'' weak instrument robust tests (e.g. AR, LM, K, CLR). Demonstration of this is available from the author.}
 The tests proposed here are derived directly from an asymptotic orthogonality condition. As such they are close in spirit to the identification robust test of \cite{K05} which also requires an orthogonalisation, albeit with respect to different objects and in a different Hilbert space.

Achieving local regularity does \emph{not} come at the expense of (local asymptotic) power. I characterise  power bounds for tests in non-regular models and show 
that the C($\alpha$) tests proposed in this paper acheive these power bounds provided the moment conditions are chosen optimally. These power bounds contain those for regular models as a special case. Moreover, the conditions required for attainment of the power bounds are weaker than those in the literature.\footnote{In particular, in regular models the attainment result is well known if either (a) the observations are i.i.d. \cite[cf.][Chapter 25]{vdV98} or (b) the information operator (as defined in \citealp[][p. 846]{CHS96}) is boundedly invertible \citep{CHS96}. The result in this paper does not require either of these conditions.} 

Following the theoretical development, I give details of its application to two examples: (i) a single index model which may be weakly identified when the link function is too flat and (ii) an instrumental variables (IV) model which may be weakly identified when the (nonparametric) first stage is too close to a constant function. Simulation experiments based on these examples demonstrate that the proposed tests enjoy good finite sample performance. 

The application to IV may also be of interest for empirical researchers concerned about weak instruments. If the instruments are mean independent of the errors, then the test proposed here is robust to weak identification and can be substantially more powerful than tests assuming a linear first stage. This imposes no cost if the true first stage is (approximately) linear:  the power of the proposed test is comparable to optimal tests based on a linear first stage. The practical use of these tests is demonstrated in two IV applications with possibly weak instruments.

This paper is connected to three main strands of the literature: the first is that concerned with general results on estimation and testing in semiparametric models. Much of this is now textbook material: see e.g. \cite{N90, CHS96, BKRW98, vdV98}. The second is the literature on C($\alpha$) tests. These were introduced by \cite{N59, N79} and have seen many useful applications, most recently as a way to handle machine learning or otherwise high dimensional first steps \citep[see e.g.][]{CHS15,BEvK20, CEINR22}. In this paper, the same structure which ensures good performance in such settings is used for a different purpose -- to construct tests which remain robust in non-regular settings. Lastly, the literature on robust testing in non -- regular or otherwise non -- standard settings is closely related to this paper \cite[e.g.][]{AG09,RS12,EMW15, McC17}. In particular, the  locally regular tests derived in this paper are especially useful in cases of weak identification and therefore this paper is closely related to the literature on weak identification robust inference \citep[e.g.][]{SS97, D97, SW00, K05, AC12, AM15, AM16b}. More specifically, this paper is most closely related to the recent work on semiparametric weak identification \citep{Kaji21, AM22} and extends the notion of semiparametric weak identification considered therein to non -- i.i.d. models.\footnote{Failure of local identification and singularity of the information matrix are closely linked in parametric models, see \cite{R71}. In the semiparametric case, parameters may be identified but nevertheless have a singular efficient information matrix. The relationship between the efficient information matrix and identification is considered by \cite{E22}.}

\section{Locally regular testing}\label{sec:heuristic}

\subsection{The local setup}

The goal considered throughout this paper is to construct hypothesis tests of $\mathrm{H}_0: \theta = \theta_0$ against $\mathrm{H}_1: \theta \neq \theta_0$ in the sequence of models $\mc{P}_n = \{P_{n, \gamma}: \gamma\in \Gamma\}$ where $\gamma = (\theta, \eta)\in \Gamma = \Theta\times\mc{H}$ for some open $\Theta\subset \R^{d_\theta}$ and $\mc{H}$ an arbitrary set. Each $\mc{P}_n$ consists of probability measures on a measurable space $(\mc{W}_n, \mc{B}(\mc{W}_n))$ and is dominated by a $\sigma$-finite measure $\nu_n$.\footnote{Typically the index $n$ is sample size and $\mc{W}_n$ is the space in which a sample of size $n$ takes its values. This is the situation considered in Section \ref{ssec:smooth-iid} as well as in the examples in Section \ref{sec:examples}. 
}

Let $H_{\gamma}=\R^{d_\theta}\times B_{\gamma}$ be a subset of a linear space containing 0, and suppose that $\{P_{n, \gamma, h}: h\in H_{\gamma}\}\subset \mc{P}_n$ are such that $P_{n, \gamma} = P_{n, \gamma, 0}$. Elements of $H_\gamma$ will be written as $h = (\tau, b) \in \R^{d_\theta}\times B_{\gamma}$.\footnote{In most examples, $H_\gamma$ will be a linear space. The more general situation as considered here is nevertheless important to allow for, for example, Euclidean nuisance parameters subject to boundary constraints. In such a setting, if the constraint is binding at $\gamma$, then $\gamma$ can only be perturbed in certain directions if $P_{n, \gamma, h}$ is to remain within the model.} The measures $P_{n, \gamma, h}$ should be viewed as local perturbations of the measure $P_{n, \gamma}$ in a ``direction'' $h\in H_\gamma$. These local perturbations can be split in two groups: the perturbations $P_{n, \gamma, h}$ with  $h\in H_{\gamma, 0}\define \{(0, b): b\in B_{\gamma}\}$ correspond to the null hypothesis $\mathrm{H}_0: \theta = \theta_0$ and those with $h\in H_{\gamma, 1}\define \{h = (\tau, b): 0 \neq \tau\in \R^{d_\theta}, b\in B_{\gamma}\}$ to the alternative $\mathrm{H}_1: \theta \neq \theta_0$. 
As such, $P_{n, \gamma, h}$ for $h\in H_{\gamma, 0}$ will be referred to as \emph{local perturbations consistent with the null hypothesis}, whilst $P_{n, \gamma, h}$ for $h\in H_{\gamma, 1}$ are \emph{local alternatives}. 
The subsequent analysis is local with the parameter $\gamma$ being considered fixed at a $\gamma$ consistent with $\mathrm{H}_0$. As such, to lighten the notation, dependence on $\gamma$ will be mostly left implicit: I write  $P_{n, h}$ for $P_{n, \gamma, h}$, $H$ for $H_\gamma$, $H_{i}$ for $H_{\gamma, i}$ ($i=0, 1$) and similarly for other objects. I also use the abbreviation $P_n\define P_{n, 0}$.

I use the single-index model as a running example throughout the paper.\footnote{Technical details for this example are deferred to Sections \ref{ssec:sim} and \ref{ssec:example-details-sim}.}

\begin{example}[Single-index model]\label{ex:SIM-running-example}
    Suppose that the researcher observes $n$ i.i.d. copies of $W = (Y, X_1, X_2) \in \R^{2 + K}$ where
    \begin{equation}\label{eq:SIM-running-example-mdl}
        Y = f(X_1 + X_2\transp\theta) + \epsilon, \qquad \E[\epsilon  | X] = 0,
    \end{equation}
    and where $f$ belongs to some set of continuously differentiable functions $\ms{F}$. The description of the model is completed by $\zeta \in \ms{Z}$, the density function of $(\epsilon, X)$ with respect to some $\sigma$-finite measure.
    The model $\mc{P}_n$ consists of the product measures $P_n = P^n$ where $P$ is the probability measure corresponding to the density
    \begin{equation}\label{eq:SIM-running-example-dens}
        p(W) =  p_{\gamma}(W) \define \zeta(\epsilon_{f, \theta},\, X), \qquad \epsilon_{f, \theta}\define Y - f(V_\theta),\quad  V_\theta \define X_1 + X_2\transp \theta,
    \end{equation}
    for a $\gamma = (\theta, f, \zeta)\in \Theta \times \ms{F} \times \ms{Z} = \Gamma$. 
    A class of local perturbations to this model are the probability measures $P_{n, h} = P_{h}^n$ where $P_h$ has density $p_{\gamma + \varphi_n(h)}$ with
    \begin{equation}\label{eq:SIM-local-alt}
             \varphi_n(h) = \left(\tau,\, b_1,\, b_2\zeta \right) / \sqrt{n},\qquad h = (\tau, (b_1, b_2)) \in H\define \R^{d_\theta} \times (B_1 \times B_2),
    \end{equation}
    where $B_1$ is a subset of the  bounded, continuously differentiable functions with bounded derivative and $B_2$ is a subset of the bounded functions $b_2:\R^{1+K}\to \R$, continuously differentiable in the first argument with bounded derivative.
\end{example}

\subsection{Local asymptotic normality}

The key technical condition under which the theory is developed is local asymptotic normality (LAN; see e.g. \citealp[Chapter 7]{vdV98} or \citealp[Chapter 6]{LCY00}).
Define the log-likelihood ratios
\begin{equation}\label{eq:LLR}
    L_{n}(h) \define \log \frac{p_{n, h}}{p_{n, 0}}, \qquad \text{ where } \ p_{n,  h}\define \deriv{P_{n, h}}{\nu_n}, \text{ for } h\in H.
\end{equation}

\begin{assumption}[LAN]\label{ass:LAN}
    For bounded linear maps $\Delta_{n}:\cllin H_{}\to L_2^0(P_{n})$, 
    \begin{equation}\label{eq:LAN}
        L_{n}(h) = \Delta_{n}h - \frac{1}{2} \|\Delta_{n}h\|^2 + R_{n}(h), \qquad h\in H
    \end{equation}
    with  $R_{n}(h) \xrightarrow{P_{n}} 0$ for all $h\in H$. Additionally, for each $h\in H$, the law of 
    $\Delta_{n}h$ converges to $\mc{N}(0,  \sigma(h))$ in the Mallows-2 metric, $d_2$.
\end{assumption}

The requirement that $\Delta_{n}h$ converges in $d_2$ is equivalent to requiring that it converges weakly and $(\Delta_{n}h)_{n\in \N}$ is uniformly square $P_{n}$-integrable \cite[e.g.][Appendix A.6]{BKRW98}. This implies that $\sigma(h)= \limn \|\Delta_{n}h\|^2$.

\begin{remark}\label{rem:mutual-contiguity}
    Assumption \ref{ass:LAN} ensures that the sequences $(P_{n})_{n\in \N}$ and $(P_{n, h})_{n\in \N}$ are mutually contiguous for any $h\in H$ \cite[see e.g.][Example 6.5]{vdV98}.
\end{remark}

\begin{remark}\label{rem:LAN-vs-ULAN}
    If $H$ is (pseudo-)metrised one may consider a uniform version of Assumption \ref{ass:LAN}, i.e. uniform local asymptotic normality (ULAN). Such a version is given in Assumption \ref{ass:ULAN} and is equivalent to Assumption \ref{ass:LAN} plus asymptotic equicontinuity on compact sets of $h\mapsto \Delta_{n} h$ (in $L_2(P_{n, 0})$) and $h\mapsto P_{n, h}$ (in total variation) (Proposition \ref{prop:ULAN-LAN-equicontinuity}). The latter equicontinuity condition is of interest regarding
      local uniformity of size control; cf. Corollary \ref{cor:psi-locally-uniformly-regular} and Lemma \ref{lem:level-alpha-unif-equicontinuity-TV} below.
\end{remark}

\begin{example}[
    continues=ex:SIM-running-example]
    Under regularity conditions, the single-index model satisfies Assumption \ref{ass:LAN} with 
    \begin{equation}\label{eq:SIM-running-example-score-op}
        \Delta_nh \define \frac{1}{\sqrt{n}}\sumin \tau\transp \dotscr(W_i) + [Db](W_i),
    \end{equation}
    where for $\phi(e, x)\define \pderiv{\log \zeta(e, x)}{e}$,
    \begin{equation}\label{eq:SIM-running-example-scores}
        \dotscr(W) \define -\phi(\epsilon_{f, \theta}, X) f'(V_\theta)X_2, \quad 
        [D b](W) \define -\phi(\epsilon_{f, \theta}, X)b_{1}(V_{\theta}) + b_2(\epsilon_{f, \theta}, X).
\end{equation}
\end{example}

\subsection{Local regularity for tests}

\begin{definition}\label{defn:locally-regular-test}
    A sequence of tests $\phi_n:\mc{W}_n\to [0, 1]$ of the hypothesis $\mathrm{H}_0:\theta = \theta_0$ 
     against $\mathrm{H}_1: \theta\neq \theta_0$
      is asymptotically of level $\alpha$ and locally regular if
    \begin{equation}\label{eq:locally-regular-test}
        \uppi_{n}(\tau, b) \define P_{n,  h}\phi_n \to \uppi(\tau), \quad h = (\tau, b)\in H\quad \text{ and }\quad \uppi(0) \le \alpha.
    \end{equation} 
\end{definition}

That is, the finite sample (local) power function of the test, $\uppi_n$ converges under each $P_{n, h}$ to a function $\uppi$ which may depend on $\tau$ (and, implicitly, $\gamma$) but not on $b$, the parameter which describes local deviations from the nuisance parameter $\eta$.\footnote{Cf. the definition of a (locally) regular estimator in \citealp[e.g.][p. 365]{vdV98}.}
If a sequence of tests does \emph{not} satisfy \eqref{eq:locally-regular-test} it is \emph{(locally) non-regular}.

Local regularity of test sequences as in \eqref{eq:locally-regular-test} is a pointwise concept. It is also of interest to consider a uniform version.

\begin{definition}\label{defn:locally-uniformly-regular-test}
    A sequence of tests $\phi_n:\mc{W}_n\to [0, 1]$ of the hypothesis $\mathrm{H}_0:\theta = \theta_0$ 
     against $\mathrm{H}_1: \theta\neq \theta_0$
    is asymptotically of level $\alpha$ and locally uniformly regular on $K\subset H$ if \eqref{eq:locally-regular-test} holds uniformly on $K$.
\end{definition}

If $H$ is a (pseudo-)metric space and $K$ is a compact set, for the convergence in \eqref{eq:locally-regular-test} to hold uniformly on $K$ it is necessary and sufficient to show that the sequence of functions $\uppi_n$ is asymptotically equicontinuous on $K$.\footnote{
The same is true if $K$ is totally bounded. See e.g. \cite{D21}, p. 123, for the definition of asymptotic equicontinuity.\label{ftnt:K-compact-totally-bounded}}

Directly working with the power functions $\uppi_n$ to show their asymptotic equicontinuity is complicated in many cases. It is, however, often possible to show results which imply this property. For instance, the functions $h\mapsto P_{n, h}$ being asymptotically equicontinuous in $d_{TV}$ implies the required asymptotic equicontinuity of the power functions. Despite being (much) stronger, this often holds.\footnote{
    See the discussion following Remark \ref{rem:ULAN-compact-equicontinuity-in-TV} below.
}

\paragraph{Weak identification asymptotics and local regularity}

In many models there are parameter values, $\gamma$, at which locally regular estimators do not exist. Points where the parameter of interest, $\theta$, is un- or under-identified provide an important class of examples.
Moreover, as is well known from the literature on weak identification, even if $\theta$ is identified at $\gamma$, finite sample inference may be poor if $\gamma$ is too close to a point of identification failure relative to the amount of information contained in the sample. Such behaviour has been widely studied in models where the part of $\gamma$ causing the identification failure is finite dimensional \citep[e.g.][]{AC12, AM15}.

There are also many examples where weak identification may occur due to the value of \emph{infinite-dimensional} nuisance parameters. \cite{Kaji21} and \cite{AM22} use a differentiability in quadratic mean (DQM) condition to define semiparametric weak identification asymptotics in i.i.d. models. In particular, they consider sequences $P_{n, h}^n$ which satisfy 
\begin{equation}\label{eq:dqm-weakid}
    \limn \int \left[\sqrt{n}\left(\sqrt{p_{n, h}} - \sqrt{p_{0}} \right) - \frac{1}{2}f\sqrt{p_{0}} \right]^2\dnu_n  = 0
\end{equation}
for a point $P_0$ where the parameter of interest is unidentified.
In the i.i.d. case, \eqref{eq:dqm-weakid} implies the LAN expansion in Assumption \ref{ass:LAN} with $\Delta_nh = \frac{1}{\sqrt{n}}\sumin f(W_i)$ \cite[e.g.][Lemma 25.14]{vdV98}.\footnote{
If Assumption \ref{ass:LAN} holds with $\Delta_n$ having this form, the converse is also true.
} Working with Assumption \ref{ass:LAN} in place of \eqref{eq:dqm-weakid} broadens the applicability of this class of semiparametric weak identification asymptotics to non-i.i.d. models. It is clear from Definition \ref{defn:locally-regular-test} that a locally regular test sequence will have asymptotic null rejection probability (NRP) which does not exceed the nominal level under weak identification asymptotics $P_{n, h}$.\footnote{Of course, a (non-regular) test sequence may have asymptotic NRP which depends on $b$ and yet is bounded by the nominal level under $P_{n, h}$ for all $h = (0, b)\in H_0$ and / or have an asymptotic power function which depends on $b$ for $h = (\tau, b)\in H_1$. Restricting attention to locally regular test sequences may be justified by the power optimality results of Section \ref{sec:theory}.}

I now give two examples of semiparametric models where the parameter of interest $\theta$ may be un- or under-identified depending on the value of an infinite dimensional nuisance parameter.\footnote{A further example is the linear simultaneous equations model in \cite{LM21}.} 
The first is the running example.

\begin{example}[
    continues=ex:SIM-running-example]
    As is clear from the model equation $ Y = f(X_1 + X_2\transp\theta) + \epsilon$, if $f$ is flat, i.e. $f' = 0$, then the parameter $\theta$ is unidentified. The sequences given in \eqref{eq:SIM-local-alt} are weak identification asymptotic sequences if $f'=0$.
\end{example}

\begin{example}[IV]\label{ex:IV}
    Suppose the researcher observes $n$ i.i.d. copies of $W=(Y, X, Z)$, 
    \begin{equation*}
        Y = X\transp \theta  +Z_1\transp\beta + \epsilon, \qquad \E[\epsilon |Z] = 0, \qquad Z = (Z_1\transp, Z_2\transp)\transp.
    \end{equation*}
    If $\pi(Z) \define \E[X|Z]$ is constant, $\theta$ is unidentified; if some components of $\pi(Z)$ are constant, $\theta$ is underidentified.
\end{example}

In Examples \ref{ex:SIM-running-example} and \ref{ex:IV}, at the points of identification failure, no locally regular estimator exists, however locally regular C($\alpha$) tests are developed in Section \ref{sec:examples}.\footnote{These examples consider i.i.d. data for simplicity. See \cite{HLM22} for an example of a locally regular C($\alpha$) test of the form proposed in this paper for the potentially un- / under-identified parameter in a structural vector autoregressive model.}

\subsection{A class of locally regular tests}

To construct locally regular tests of $\mathrm{H}_0: \theta = \theta_0$ against $\mathrm{H}_1: \theta\neq \theta_0$,
I use a generalisation of the class of C($\alpha$) tests introduced by \cite{N59, N79} to characterise optimal tests in regular parametric models. These tests are a based on a quadratic form of (estimators of) a vector of $d_\theta$ moment conditions $g_n\in L_2(P_n)$ which satisfy the following requirements.

\begin{assumption}[Joint convergence]\label{ass:joint-conv}
    For $g_{n}\in L_2(P_{n})^{d_\theta}$ and each $h= (\tau, b)\in H$,
    \begin{equation*}
        \left(\Delta_{n}h,\; g_{n}\transp\right)\transp \overset{P_{n}}{\weakconv }\mc{N}\left(0, \Sigma(h)\right),
    \end{equation*}
    \begin{equation*}
        \Sigma(h) \define \begin{bmatrix}
            \sigma(h) & \tau\transp\Sigma_{ 21}\transp\\
            \Sigma_{ 21}\tau & V
        \end{bmatrix} = \limn \begin{bmatrix}
            \|\Delta_{n}h\|^2 & \IP{\Delta_{n}(\tau, 0)}{g_{n}\transp}\\
            \IP{g_{n}}{\Delta_{n}(\tau, 0)} & \IP{g_{n}}{g_{n}\transp}
        \end{bmatrix}.
    \end{equation*}
\end{assumption}

Built-in to Assumption \ref{ass:joint-conv} is a requirement of asymptotic orthogonality of $g_n$ and the scores for the nuisance parameters $\eta$. This generalises the analogous condition in \cite{N59, N79} and is key to the local regularity of C$(\alpha)$ tests. 

\begin{remark}\label{rem:orth}
    For Assumption \ref{ass:joint-conv} to hold it is necessary that the $g_{n}$ are approximately zero mean: since $(g_{n})_{n\in \N}$ is uniformly $P_n$-integrable, $P_{n} g_{n} = o(1)$. 
    It is also necessary that the $g_{n}$ satisfy an approximate orthogonality property with the scores for nuisance parameters: as $([\Delta_{n}h]g_{n})_{n\in \N}$ is uniformly $P_n$-integrable for each $h = (\tau, b)\in H$, 
    $   \limn \IP{\Delta_{n}h}{g_{n}\transp} =  \tau\transp\Sigma_{21}\transp = \limn \IP{\Delta_{n}(\tau, 0)}{g_{n}\transp}$,  and so
    \begin{equation}\label{rem:orth:eq:orth}
        \IP{\Delta_{n}(0, b)}{g_{n}\transp} = \IP{\Delta_{n}h}{g_{n}\transp} - \IP{\Delta_{n}(\tau, 0)}{g_{n}\transp} = o(1).
    \end{equation}
\end{remark}

Given any $d_\theta$ moment conditions $f_{n}\in L_2^0(P_{n})$, moment conditions which satisfy an exact version of the orthogonality condition \eqref{rem:orth:eq:orth} may be obtained as
\begin{equation}\label{eq:orth-proj-g}
    g_{n} \define  \Pi\left[f_{n} \middle| \left\{\Delta_{n}(0, b) : b\in B\right\}^{\perp}\right].
\end{equation}

An important special case of this construction is with $f_{n}$ the score function for $\theta$, i.e. $f_{n} = \dotscrarg{n}$ such that $\tau\transp \dotscrarg{n}= \Delta_{n}(\tau, 0)$ for each $\tau\in \R^{d_\theta}$. The function
\begin{equation}\label{eq:effscr}
    g_{n} = \effscrarg{n} \define \Pi\left[\dotscrarg{n}\middle| \left\{\Delta_{n}(0, b) : b\in B\right\}^{\perp}\right],   
\end{equation}
is called the \emph{efficient score function}.
 This yields a power optimal choice of moment conditions satisfying \eqref{rem:orth:eq:orth} as shown in Section \ref{sec:theory} below.

\begin{example}[continues=ex:SIM-running-example, 
    ]
    Let $\upomega: \R^{K}\to [\underline{\upomega},
    \overline{\upomega}]\subset (0, \infty)$. Then $g_n\define \G_n g$,
    \begin{equation}\label{ex:SIM-running-example:eq:g}
        g(W)\define\upomega(X)(Y - f(V_{\theta}))f'(V_{\theta})\left(X_2 - \frac{\E[\upomega(X) X_2| V_{\theta}]}{\E[\upomega(X) |V_{\theta}]}\right),
    \end{equation}
    has components which belong to $\{\Delta_n(0, b): b\in B\}^\perp$ (where $\Delta_n$ is as in \eqref{eq:SIM-running-example-score-op}).\footnote{
   $g$ coincides with the efficient score function $\tilde{\ell}$ (derived by \citealp{NS93}),
\begin{equation*}
   \tilde{\ell}(W)= \tilde{\upomega}(X)(Y - f(V_{\theta}))f'(V_{\theta})\left(X_2 - \frac{\E[\tilde{\upomega}(X)X_2| V_{\theta}]}{\E[\tilde{\upomega}(X) | V_{\theta}]}\right), \quad \tilde{\upomega}(X)\define \E[\epsilon^2|X]^{-1},
\end{equation*}
in the (typically infeasible) case with $\upomega = \tilde{\upomega}$.
    } Under regularity conditions, $g_n$ satisfies Assumption \ref{ass:joint-conv} (see Section \ref{ssec:sim} below).
    
\end{example}

To construct the test statistic, I assume that consistent estimators of $g_{n}$, $V^\dagger$ (the Moore-Penrose pseudo-inverse of $V$) and $r\define \rank(V)$ are available, given $\theta$. 

\begin{assumption}[Consistent estimation]\label{ass:consistent}
    $\hat{g}_{n, \theta}$, $\hat\Lambda_{n, \theta}$, $\hat{r}_{n, \theta}\in \{0, 1, \ldots, d_\theta\}$ satisfy
    \begin{enumerate}
        \item $\hat{g}_{n, \theta} - g_{n} \xrightarrow{P_{n}}0$;\label{ass:consistent:itm:ghat}
        \item $\hat\Lambda_{n, \theta} \xrightarrow{P_{n}}  V^\dagger$;\label{ass:consistent:itm:lambdahat}
        \item If $r \ge1$, then $\hat{r}_{n, \theta} \xrightarrow{P_{n}}r$; if $r = 0$, then $\rank(\hat\Lambda_{n, \theta}) \xrightarrow{P_{n}} 0$.
        \label{ass:consistent:itm:rank}
    \end{enumerate}
\end{assumption}

Verification of Assumption \ref{ass:consistent}\ref{ass:consistent:itm:ghat} typically proceeds by model specific arguments. That $g_n$ is (asymptotically) orthogonal to $\{\Delta_n(0, b):b\in B\}$ often helps in establishing this consistency (cf. \citealp{CEINR22}). 
One generally applicable approach to obtain an estimator which satisfies Assumption \ref{ass:consistent}\ref{ass:consistent:itm:lambdahat} is to take an initial estimator which is consistent for $V$, threshold its eigenvalues at an appropriate rate and then take the pseudo-inverse.\footnote{See Section S5 of \cite{LM21-S} for full details of this approach. Other regularisation schemes are also possible (see e.g. \citealp{DV15}).} If one uses the estimator $\hat{\Lambda}_{n, \theta} \define \hat{V}_{n, \theta}^\dagger$ where $\hat{V}_{n, \theta}\xrightarrow{P_{n}}V$ and $\hat{r}_{n, \theta}\define \rank(\hat{V}_{n, \theta})$ then condition \ref{ass:consistent:itm:lambdahat} holds if and only if condition \ref{ass:consistent:itm:rank} holds \cite[Theorem 2]{A87}. Nevertheless, as emphasised by the notation, it is not necessary that the estimate $\hat{\Lambda}_{n, \theta}$ be the pseudo-inverse of an initial estimate. 

\begin{example}[continues = ex:SIM-running-example]
    Given estimators $\hat{f}_{n, i}$, $\widehat{f'}_{n, i}$ of $f, f'$ and $\hat{Z}_{1, n,i}, \hat{Z}_{2, n, i}$ of $Z_1\define \E[\upomega(X)X_2|V_{\theta}]$, $Z_2\define \E[\upomega(X)|V_{\theta}]$, define $ \hat{g}_{n, \theta} \define \fracrootn\sumin \hat{g}_{n, \theta, i}$,
    \begin{equation}\label{ex:SIM-running-example:eq:ghati}
        \hat{g}_{n, \theta, i}\define  \upomega(X_i)(Y_i - \hat{f}_{n, i}(V_{\theta, i}))\widehat{f'}_{n, i}(V_{\theta, i})\left(X_{2, i} - \hat{Z}_{1, n, i}(V_{\theta, i}) / \hat{Z}_{2, n, i}(V_{\theta, i})\right).
    \end{equation}
    Under regularity conditions (see Section \ref{ssec:sim}), $\hat{g}_{n, \theta}$ satisfies part \ref{ass:consistent:itm:ghat} of Assumption \ref{ass:consistent}, thresholding the eigenvalues of $\fracn\sumin \hat{g}_{n, \theta, i}\hat{g}_{n, \theta, i}\transp$ at an appropriate rate yields an estimator $\hat{\Lambda}_{n, \theta}$ which satisfies part \ref{ass:consistent:itm:lambdahat} and $\hat{r}_{n, \theta}\define \rank(\hat\Lambda_{n, \theta})$ satisfies part \ref{ass:consistent:itm:rank}.
\end{example}

Given the estimators of Assumption \ref{ass:consistent}, the C($\alpha$)-style test statistic is 
\begin{equation}\label{eq:Shat}
    \hat{S}_{n, \theta} \define \hat{g}_{n, \theta}\transp \hat{\Lambda}_{n, \theta} \hat{g}_{n, \theta} .
\end{equation}

The C($\alpha$) -- style test $\psi_{n, \theta_0}$ of $\mathrm{H}_0$ against $\mathrm{H}_1$ 
 at level $\alpha$ is:
\begin{equation}\label{eq:psi-test}
    \psi_{n, \theta_0} \define \bm{1}\left\{
            \hat{S}_{n, \theta_0}  > c_n
        \right\},
\end{equation}
where $c_n$ is the $1-\alpha$ quantile of a $\chi^2_{\hat{r}_n}$ random variable.

\paragraph{Local regularity} Assumptions \ref{ass:LAN} -- \ref{ass:consistent} suffice for local regularity of $\psi_{n, \theta_0}$.

\begin{proposition}\label{prop:asymp-dist}
Under Assumptions \ref{ass:LAN} and \ref{ass:joint-conv}, for $h = (\tau, b)\in H$
    \begin{equation*}
        g_{n} \overset{P_{n,  h}}{\weakconv} \mc{N}\left(\Sigma_{ 21}\tau, V\right).
    \end{equation*}
    If Assumption \ref{ass:consistent} also holds, then additionally
     \begin{equation*}
        \hat{g}_{n, \theta_0} \overset{P_{n,  h}}{\weakconv} \mc{N}\left(\Sigma_{ 21}\tau, V\right) \qquad \text{ and } \qquad \hat{S}_{n, \theta_0} \overset{P_{n,  h}}{\weakconv} \chi^2_r\left(
            \tau\transp \Sigma_{ 21}\transp V \Sigma_{ 21}\tau
        \right).
    \end{equation*}
\end{proposition}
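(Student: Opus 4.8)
The plan is to run a standard Le Cam's third lemma argument, followed by Slutsky's lemma, the continuous mapping theorem, and a textbook fact about noncentral $\chi^2$ distributions; the only point needing real care is the covariance bookkeeping in the first step. First I would record the joint behaviour of $g_{n}$ and the log-likelihood ratio $L_{n}(h)$ under $P_{n}$. Assumption \ref{ass:joint-conv} gives $(\Delta_{n}h, g_{n}\transp)\transp \overset{P_{n}}{\weakconv} \mc{N}(0, \Sigma(h))$; since the $d_2$-convergence in Assumption \ref{ass:LAN} forces $\|\Delta_{n}h\|^2 \to \sigma(h)$ (a constant) and $R_{n}(h) \xrightarrow{P_{n}} 0$, the expansion \eqref{eq:LAN} together with Slutsky's lemma yields $(g_{n}, L_{n}(h)) \overset{P_{n}}{\weakconv} (G, L)$, a jointly Gaussian pair with $G \sim \mc{N}(0, V)$, $L \sim \mc{N}(-\tfrac12\sigma(h), \sigma(h))$ and $\Cov(G, L) = \lim_n \IP{g_{n}}{\Delta_{n}h} = \Sigma_{21}\tau$. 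Here the cross-covariance equals $\Sigma_{21}\tau$ rather than a $b$-dependent quantity precisely because of the orthogonality built into Assumption \ref{ass:joint-conv}, i.e. $\IP{\Delta_{n}(0, b)}{g_{n}\transp} = o(1)$ (Remark \ref{rem:orth}). Since $(P_{n,h})$ and $(P_{n})$ are mutually contiguous (Remark \ref{rem:mutual-contiguity}) and $L$ is of the Gaussian-shift form $\E L = -\tfrac12\Var L$, Le Cam's third lemma \citep[e.g.][Example 6.7]{vdV98} applies and delivers $g_{n} \overset{P_{n,h}}{\weakconv} \mc{N}(0 + \Sigma_{21}\tau, V) = \mc{N}(\Sigma_{21}\tau, V)$.

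Next, assume in addition that Assumption \ref{ass:consistent} holds. Part \ref{ass:consistent:itm:ghat} gives $\hat{g}_{n,\theta_0} - g_{n} \xrightarrow{P_{n}} 0$, and contiguity transfers this to $P_{n,h}$, so by Slutsky's lemma $\hat{g}_{n,\theta_0} \overset{P_{n,h}}{\weakconv} \mc{N}(\Sigma_{21}\tau, V)$; likewise part \ref{ass:consistent:itm:lambdahat} and contiguity give $\hat\Lambda_{n,\theta_0} \xrightarrow{P_{n,h}} V^\dagger$. Combining the weak convergence of $\hat{g}_{n,\theta_0}$ with the convergence in probability of $\hat\Lambda_{n,\theta_0}$ to the constant $V^\dagger$ (Slutsky) and applying the continuous mapping theorem to the map $(x, A) \mapsto x\transp A x$, I obtain $\hat{S}_{n,\theta_0} = \hat{g}_{n,\theta_0}\transp \hat\Lambda_{n,\theta_0} \hat{g}_{n,\theta_0} \overset{P_{n,h}}{\weakconv} G\transp V^\dagger G$ with $G \sim \mc{N}(\Sigma_{21}\tau, V)$. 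Positive semidefiniteness of $\Sigma(h)$ forces $\Sigma_{21}\tau \in \Range(V)$; diagonalising $V = U\diag(\lambda_1, \dots, \lambda_r, 0, \dots, 0)U\transp$ with $\lambda_i > 0$ and setting $Z := U\transp G$, one has $G\transp V^\dagger G = \sum_{i=1}^r Z_i^2/\lambda_i$, where the $Z_i/\sqrt{\lambda_i}$, $i \le r$, are independent unit-variance normals whose means satisfy $\sum_{i=1}^r (\E Z_i)^2/\lambda_i = (\Sigma_{21}\tau)\transp V^\dagger(\Sigma_{21}\tau)$; hence $G\transp V^\dagger G \sim \chi^2_r\bigl(\tau\transp\Sigma_{21}\transp V^\dagger \Sigma_{21}\tau\bigr)$, which is the asserted limit. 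The degenerate case $r = 0$ (so $V = 0$, $\Sigma_{21}\tau = 0$, $V^\dagger = 0$) is immediate: $\hat{S}_{n,\theta_0} \xrightarrow{P_{n,h}} 0$, a point mass, matching $\chi^2_0$.

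The hard part will be the first step: checking that the joint limit of $(g_{n}, L_{n}(h))$ under $P_{n}$ really has cross-covariance exactly $\Sigma_{21}\tau$ — so that the nuisance direction $b$ drops out of the limiting shift, which is the whole purpose of the orthogonality in Assumption \ref{ass:joint-conv} — and that the quadratic term $\|\Delta_{n}h\|^2$ in the LAN expansion is asymptotically non-random, so that Le Cam's third lemma can be invoked in its Gaussian-shift form. Everything after that is routine bookkeeping with Slutsky, contiguity and continuous mapping.
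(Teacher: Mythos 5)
Your proof is correct and follows essentially the same route as the paper's: joint weak convergence of $(g_{n}, L_{n}(h))$ under $P_{n}$ from Assumptions \ref{ass:LAN} and \ref{ass:joint-conv}, Le Cam's third lemma for the first claim, then contiguity, Slutsky and continuous mapping for the rest, with your diagonalisation of $V$ merely spelling out what the paper calls ``standard arguments''. One remark: your noncentrality parameter $\tau\transp\Sigma_{21}\transp V^{\dagger}\Sigma_{21}\tau$ is the correct one (and matches Theorem \ref{thm:psi-pwr-local-alt}); the $V$ appearing in the displayed statement of the proposition is evidently a typo for $V^{\dagger}$.
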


\begin{theorem}\label{thm:psi-pwr-local-alt}
    Suppose that Assumptions \ref{ass:LAN}, \ref{ass:joint-conv} and \ref{ass:consistent} hold and $h = (\tau, b) \in H$. Then,
    \begin{equation*}
        \limn P_{n,  h} \psi_{n, \theta_0} = \uppi(\tau)\define  \begin{cases}
            1- \P\left(\chi_r^2\left(\tau\transp\Sigma_{ 21}\transp  V ^\dagger \Sigma_{ 21}\tau\right) \le c_r\right) & \text{ if } r\ge 1\\
            0 &\text{ if }r=0
        \end{cases},
    \end{equation*}
    where $c_r$ is the $1-\alpha$ quantile of the $\chi^2_r$ distribution.    
\end{theorem}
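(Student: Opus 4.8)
I would deduce the statement directly from Proposition \ref{prop:asymp-dist}, which already supplies the limiting distribution of $\hat S_{n, \theta_0}$ under $P_{n, h}$; the only work that remains is (i) to show that the data-dependent degrees of freedom $\hat r_n$ entering the critical value $c_n$ stabilise at $r$, (ii) to check that $c_r$ is a continuity point of the relevant (non-central) $\chi^2_r$ law, and (iii) to handle the degenerate case $r = 0$ separately. All in-probability statements will be transferred from $P_n$ to $P_{n, h}$ using mutual contiguity (Remark \ref{rem:mutual-contiguity}) via Le Cam's first lemma.

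\emph{Behaviour of the critical value.} Assume first $r \ge 1$. By Assumption \ref{ass:consistent}\ref{ass:consistent:itm:rank}, $\hat r_{n, \theta_0} \xrightarrow{P_n} r$, hence $\hat r_{n, \theta_0} \xrightarrow{P_{n, h}} r$ by contiguity; since $\hat r_{n, \theta_0}$ takes values in the finite set $\{0, 1, \ldots, d_\theta\}$, this means $P_{n, h}(A_n) \to 1$ for $A_n \define \{\hat r_{n, \theta_0} = r\}$. On $A_n$ the critical value $c_n$ (the $1 - \alpha$ quantile of $\chi^2_{\hat r_n}$) equals the constant $c_r$, so $\psi_{n, \theta_0} = \bm{1}\{\hat S_{n, \theta_0} > c_r\}$ there.

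\emph{The case $r \ge 1$.} Proposition \ref{prop:asymp-dist} gives $\hat S_{n, \theta_0} \overset{P_{n, h}}{\weakconv} \mu_\tau$, where $\mu_\tau$ is the $\chi^2_r$ law with non-centrality $\tau\transp \Sigma_{21}\transp V^\dagger \Sigma_{21}\tau$ from the statement. Since $r \ge 1$, $\mu_\tau$ has a Lebesgue density, so $\mu_\tau(\{c_r\}) = 0$ and, by the portmanteau theorem, $P_{n, h}(\hat S_{n, \theta_0} > c_r) \to 1 - \P(\chi^2_r(\tau\transp \Sigma_{21}\transp V^\dagger \Sigma_{21}\tau) \le c_r)$. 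Writing $P_{n, h}\psi_{n, \theta_0} = P_{n, h}(\hat S_{n, \theta_0} > c_n)$ and splitting over $A_n$ and $A_n^c$: the $A_n^c$-part is at most $P_{n, h}(A_n^c) \to 0$, and, because $c_n = c_r$ on $A_n$, the $A_n$-part differs from $P_{n, h}(\hat S_{n, \theta_0} > c_r)$ by at most $P_{n, h}(A_n^c) \to 0$. Hence $P_{n, h}\psi_{n, \theta_0} \to \uppi(\tau)$.

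\emph{The case $r = 0$, and the main obstacle.} When $r = 0$ one has $V = 0$, and Proposition \ref{prop:asymp-dist} then yields $\hat S_{n, \theta_0} \xrightarrow{P_{n, h}} 0$ (equivalently, Assumption \ref{ass:consistent}\ref{ass:consistent:itm:rank} gives $\rank(\hat\Lambda_{n, \theta_0}) \xrightarrow{P_n} 0$, so $P_{n, h}(\hat\Lambda_{n, \theta_0} = 0) \to 1$ and $\hat S_{n, \theta_0} = \hat g_{n, \theta_0}\transp \hat\Lambda_{n, \theta_0}\hat g_{n, \theta_0} = 0$ with $P_{n, h}$-probability tending to one). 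As any quantile of a distribution on $[0, \infty)$ is non-negative, $c_n \ge 0$, so $\{\hat S_{n, \theta_0} > c_n\} \subseteq \{\hat S_{n, \theta_0} > 0\}$, whence $P_{n, h}\psi_{n, \theta_0} \le P_{n, h}(\hat S_{n, \theta_0} > 0) \to 0 = \uppi(\tau)$. Given Proposition \ref{prop:asymp-dist}, no step is deep; the only genuine care is the bookkeeping for $\hat r_n$ (transferring its consistency to $P_{n, h}$ and identifying the limiting critical value $c_r$ as a continuity point of $\mu_\tau$) together with the degenerate boundary case $r = 0$.
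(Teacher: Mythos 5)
Your proposal is correct and follows essentially the same route as the paper's proof: Proposition \ref{prop:asymp-dist} plus contiguity for the limiting law, consistency of $\hat r_{n,\theta_0}$ to pin down the critical value, continuity of the non-central $\chi^2_r$ law at $c_r$, and the event $\{\hat\Lambda_{n,\theta_0}=0\}$ for the degenerate case. The only cosmetic difference is that the paper applies Slutsky to $\hat S_{n,\theta_0}-c_n$ whereas you condition on $\{\hat r_{n,\theta_0}=r\}$; these are interchangeable.
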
 

Theorem \ref{thm:psi-pwr-local-alt} immediately shows that $\psi_{n, \theta_0}$ is locally regular (cf. \eqref{eq:locally-regular-test}). The asymptotic orthogonality in \eqref{rem:orth:eq:orth} is key to this result.
If, instead, $\limn \IP{\Delta_n h}{g_{n}\transp} = \tau\transp \Sigma_{21}\transp + c(b)$ with $c(b)\neq 0$, then (by Le Cam's third Lemma) the limiting distribution of $g_n$ under $P_{n, h}$ would be $\mc{N}(\Sigma_{21}\tau + c(b), V)$  and hence the limiting power function of the test sequence would not be free of $b$.

\paragraph{Uniform local regularity}

The local regularity given by \ref{thm:psi-pwr-local-alt} may be ``upgraded'' to local uniform regularity (Definition \ref{defn:locally-uniformly-regular-test}) under various conditions. Here I consider the case where $H$ posseses a (pseudo-)metric structure (e.g. if $H$ is a subset of a (semi-)normed linear space).\footnote{If $H$ posseses a (finite) measure structure and the functions $h =(\tau, b)\mapsto \uppi_{n}(\tau, b)$ are measurable then $\psi_{n, \theta_0}$ is locally uniformly regular except on a ``small'' subset of $H$ by Egorov's Theorem. See Section \ref{ssec:unif-measure-structure} for details.}
In this case, for $\psi_{n, \theta_0}$ to be locally uniformly regular on a compact (or totally bounded) $K\subset H$ it is necessary and sufficient that the functions $h =(\tau, b)\mapsto \uppi_{n}(\tau, b)$ are asymptotically equicontinuous. 

\begin{corollary}\label{cor:psi-locally-uniformly-regular}
    Suppose that the conditions of Theorem \ref{thm:psi-pwr-local-alt} hold and that $(H, d)$ is a pseudometric space. If the functions $h = (\tau, b)\mapsto \uppi_n(\tau, b) \define  P_{n,  h} \psi_{n, \theta_0}$ are asymptotically equicontinuous on a compact (or totally bounded) subset $K \subset H$, 
    \begin{equation*}
        \limn \sup_{(\tau, b)\in K}\left|\uppi_n(\tau, b) - \uppi(\tau)
        \right| = 0.
    \end{equation*}
\end{corollary}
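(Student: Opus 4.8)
The plan is to run the deterministic analogue of the classical ``pointwise convergence plus asymptotic equicontinuity on a totally bounded index set implies uniform convergence'' argument; the assumed asymptotic equicontinuity on $K$ upgrades the pointwise limit supplied by Theorem \ref{thm:psi-pwr-local-alt} to a uniform one. It suffices to treat the totally bounded case, since a compact pseudometric space is totally bounded (cf. Footnote \ref{ftnt:K-compact-totally-bounded}).

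Fix $\epsilon > 0$. First I would invoke the assumed asymptotic equicontinuity of $\left(h\mapsto \uppi_n(\tau,b)\right)_{n}$ on $K$ to obtain $\delta > 0$ and $N_1\in\N$ with $\sup\{|\uppi_n(h) - \uppi_n(h')| : h, h'\in K,\ d(h,h') < \delta\} < \epsilon/3$ for all $n \ge N_1$. By total boundedness, cover $K$ by finitely many $d$-balls $B(h_1,\delta),\ldots,B(h_m,\delta)$ with centres $h_j = (\tau_j, b_j)\in K$. Since $K\subset H$, Theorem \ref{thm:psi-pwr-local-alt} gives $\uppi_n(h_j)\to\uppi(\tau_j)$ for each $j$, so there is $N_2$ with $|\uppi_n(h_j) - \uppi(\tau_j)| < \epsilon/3$ for all $j$ and all $n\ge N_2$. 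The one extra ingredient is $d$-continuity of the limit along $K$: for $h = (\tau,b), h' = (\tau',b')\in K$ with $d(h,h') < \delta$, Theorem \ref{thm:psi-pwr-local-alt} gives $\uppi_n(h)\to\uppi(\tau)$ and $\uppi_n(h')\to\uppi(\tau')$, hence $|\uppi(\tau) - \uppi(\tau')| = \lim_n|\uppi_n(h)-\uppi_n(h')| \le \epsilon/3$ using the bound above for $n\ge N_1$. Then, for any $h = (\tau,b)\in K$, choosing $j$ with $d(h,h_j)<\delta$ and $n\ge\max(N_1,N_2)$,
\[
|\uppi_n(h) - \uppi(\tau)| \le |\uppi_n(h) - \uppi_n(h_j)| + |\uppi_n(h_j) - \uppi(\tau_j)| + |\uppi(\tau_j) - \uppi(\tau)| < \epsilon.
\]
As the right-hand side does not depend on $h$, $\sup_{(\tau,b)\in K}|\uppi_n(\tau,b) - \uppi(\tau)| \le \epsilon$ for all large $n$, and letting $\epsilon\downarrow 0$ yields the claim.

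I do not expect a genuine obstacle here — this is a routine $3\epsilon$/finite-cover argument. The only points requiring mild care are the bookkeeping with the $\limsup$ in the definition of asymptotic equicontinuity (extracting a single $N_1$ valid uniformly over all pairs in $K$ at $d$-distance below $\delta$, e.g.\ by first applying the definition with $\epsilon/4$), and the observation that the relevant domain is $K$, not all of $\R^{d_\theta}$: one only needs continuity of $h\mapsto\uppi(\tau)$ \emph{along $K$} in the pseudometric $d$, which is automatic from asymptotic equicontinuity together with the pointwise convergence of Theorem \ref{thm:psi-pwr-local-alt}. Note also that $\uppi(\tau)$ is unambiguous even if $(\tau,b),(\tau,b')\in K$, since the limit in Theorem \ref{thm:psi-pwr-local-alt} depends only on $\tau$.
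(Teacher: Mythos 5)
Your proof is correct and follows the same route as the paper: the paper's proof simply cites the standard fact that pointwise convergence plus asymptotic equicontinuity on a compact (or totally bounded) set yields uniform convergence, and your $3\epsilon$/finite-cover argument is precisely the standard proof of that fact, spelled out in full. The points you flag (extracting a single $N_1$ from the $\limsup$, and deducing $d$-continuity of the limit along $K$ from equicontinuity plus pointwise convergence) are handled correctly.
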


I now give a sufficient condition for the asymptotic equicontinuity required by Corollary \ref{cor:psi-locally-uniformly-regular}.

\begin{lemma}\label{lem:level-alpha-unif-equicontinuity-TV}
    If $(H, d)$ is a pseudometric space and $(h\mapsto P_{n, h})_{n\in \N}$ is asymptotically equicontinuous in $d_{TV}$ on $K\subset H$, then $(h\mapsto P_{n, h} \psi_{n, \theta})_{n\in \N}$ is asymptotically equicontinuous on $K$.
\end{lemma}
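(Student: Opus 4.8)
The plan is to show that asymptotic equicontinuity of $(h\mapsto P_{n,h})_{n}$ in $d_{TV}$ on $K$ transfers directly to the sequence of power functions $h\mapsto P_{n,h}\psi_{n,\theta}$, using only the facts that $\psi_{n,\theta}$ is $[0,1]$-valued and that total variation distance dominates the difference of integrals of bounded functions. First I would recall the definition of asymptotic equicontinuity: a sequence of functions $(f_n)_{n\in\N}$ on a pseudometric space $(H,d)$ is asymptotically equicontinuous on $K$ if for every $\varepsilon>0$ there is a $\delta>0$ and an $N\in\N$ such that $|f_n(h)-f_n(h')|<\varepsilon$ for all $n\ge N$ and all $h,h'\in K$ with $d(h,h')<\delta$. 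So fix $\varepsilon>0$.

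The key step is the pointwise bound
\begin{equation*}
    \left|P_{n,h}\psi_{n,\theta} - P_{n,h'}\psi_{n,\theta}\right|
    = \left|\int \psi_{n,\theta}\,(\mathrm{d}P_{n,h} - \mathrm{d}P_{n,h'})\right|
    \le \|\psi_{n,\theta}\|_\infty \, d_{TV}(P_{n,h}, P_{n,h'})
    \le d_{TV}(P_{n,h}, P_{n,h'}),
\end{equation*}
valid for every $h,h'\in K$ and every $n$, where the last inequality uses $0\le \psi_{n,\theta}\le 1$. (Depending on the normalisation of $d_{TV}$ used in the paper, one may pick up a harmless constant factor, e.g. $2$, which does not affect the argument.) Then I would invoke the hypothesis: by asymptotic equicontinuity of $(h\mapsto P_{n,h})_n$ in $d_{TV}$ on $K$, there exist $\delta>0$ and $N\in\N$ with $d_{TV}(P_{n,h},P_{n,h'})<\varepsilon$ for all $n\ge N$ whenever $h,h'\in K$ and $d(h,h')<\delta$. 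Combining with the displayed bound gives $|P_{n,h}\psi_{n,\theta} - P_{n,h'}\psi_{n,\theta}|<\varepsilon$ for all $n\ge N$ and all such $h,h'$, which is exactly asymptotic equicontinuity of $(h\mapsto P_{n,h}\psi_{n,\theta})_n$ on $K$.

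There is essentially no obstacle here: the statement is a soft consequence of the variational characterisation of total variation distance together with the uniform boundedness of tests. The only points requiring a word of care are the precise convention for $d_{TV}$ (and hence whether the dominating constant is $1$ or $2$) and a remark that $\psi_{n,\theta}$ need only be measurable and $[0,1]$-valued for the integral bound to hold, both of which are immediate from the setup. If one wishes, the same one-line argument shows more generally that precomposition-free Lipschitz functionals of $P_{n,h}$ (with respect to $d_{TV}$) inherit asymptotic equicontinuity, but for the lemma as stated the above suffices.
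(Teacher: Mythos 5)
Your proof is correct and is essentially identical to the paper's, which simply notes that the conclusion is immediate from the bound $|P_{n,h}\psi_{n,\theta_0} - P_{n,h'}\psi_{n,\theta_0}| \le d_{TV}(P_{n,h}, P_{n,h'})$. You have merely spelled out the definition-chasing that the paper leaves implicit.
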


\begin{remark}\label{rem:ULAN-compact-equicontinuity-in-TV}
    Lemma \ref{lem:level-alpha-unif-equicontinuity-TV} requires asymptotic equicontinuity in total variation of the functions $(h\mapsto P_{n, h})_{n\in \N}$ on subsets $K\subset H$. This holds for any compact $K$ under ULAN (Assumption \ref{ass:ULAN}), as shown in Proposition \ref{prop:ULAN-LAN-equicontinuity}.
\end{remark}

In the parametric i.i.d. case LAN is often verified by establishing a DQM condition, e.g. equation (7.1) in \cite{vdV98}. This is sufficient for the ULAN expansion in Assumption \ref{ass:ULAN} to hold \cite[e.g.][Theorem 7.2]{vdV98}. Semiparametric generalisations of this result are available (e.g. combine Proposition \ref{prop:ULAN-LAN-equicontinuity} and Lemma \ref{lem:iid-ULAN-DQM}).\footnote{\cite{LM21} and \cite{HLM22} verify this asymptotic equicontinuity property in i.i.d. and time series semiparametric examples respectively.} 

The condition in Lemma \ref{lem:level-alpha-unif-equicontinuity-TV} is natural given its link with the ULAN condition. Neverthelesss, it is (much) stronger than necessary for the condition required by Corollary \ref{cor:psi-locally-uniformly-regular}; see Lemma \ref{lem:level-alpha-unif-equicontinuity-weak} for  a weaker sufficient condition.

\section{Power optimality}\label{sec:theory}

The preceding section established the local regularity of the tests $\psi_{n, \theta_0}$ based on (estimates of) moment functions $g_{n}$ satisfying certain asymptotic orthogonality conditions. Thus far, nothing has been said about the choice of $g_{n}$ beyond these orthogonality requirements. 
The choice of the functions $g_{n}$ determines the power of the corresponding test. As such, they ought to be chosen such that the resulting test has good power against alternatives of interest. 

One natural choice is the efficient score function \eqref{eq:effscr}.
It is well known that tests based on the efficient score function have certain optimality properties in regular models when (a) the observations are i.i.d. \cite[cf. Section 25.6][]{vdV98} or (b) when the information operator for $\eta$ is boundedly invertible \citep{CHS96}. I show that this optimality persists in non-regular models and does not require (a) or (b).

The results in this section are derived using the limits of experiments framework of Le Cam \cite[e.g.][]{LC86, vdV98}. In particular, I show that the local experiments consisting of the measures $P_{n, h}$ for $h\in H$ converge weakly to a limit experiment which has a close relationship to a Gaussian shift experiment on the Hilbert space formed by taking the quotient of $H$ under the seminorm induced by the variance function $\sigma(h)$. The connection between these experiments is sufficiently tight that power bounds derived in the latter transfer to the former.\footnote{That the local experiments do not converge to the mentioned Gaussian shift experiment is essentially a purely technical point: the Gaussian shift experiment is defined on a different parameter space to the local experiments, whilst (weak) convergence of experiments (in the sense of \citealp{LC86}) is defined for experiments with the same parameter space.} 

\paragraph{The limit experiment}
For this development $H$ is required to be linear and I will therefore assume that $B$ (hence $H$) is a linear space.
Under LAN, there exists a positive semi-definite symmetric bilinear form $\IP{\cdot}{\cdot}_{K}$ on $H=  \R^{d_\theta} \times B$ such that $\sigma(h) = \IP{h}{h}_{K}$.
This can be seen as a by-product of the following Lemma.
\begin{lemma}\label{lem:IP-GP}
    Suppose Assumption \ref{ass:LAN} holds and $B$ is a linear space. Let $\Delta$ be a square integrable stochastic process defined on $H$ such that $\Delta_{n}h \overset{P_{n}}{\weakconv} \Delta h$.
    Then $\Delta$ is a mean-zero Gaussian linear process with covariance kernel $K$, where
    \begin{equation*}
        K(h, g) \define \limn P_{n}\left[\Delta_{n}h\Delta_{n}g\right].
    \end{equation*}
\end{lemma}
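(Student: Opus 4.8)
The plan is: (i) produce the symmetric bilinear form $K$ (equivalently $\IP{\cdot}{\cdot}_{K}$) by polarisation of $\sigma$; (ii) promote the convergence $\Delta_{n}h\weakconv\Delta h$ to joint asymptotic normality of the $\Delta_{n}$ by exploiting the \emph{exact} linearity of each $\Delta_{n}$ together with the Cram\'er--Wold device, which simultaneously delivers Gaussianity of $\Delta$ and pins its covariance to $K$; and (iii) transfer the linearity of the maps $\Delta_{n}$ to the limit process. Throughout I use that $B$, hence $H=\R^{d_\theta}\times B$, is a linear space, so $ah+bg\in H$ for all $h,g\in H$, $a,b\in\R$, and that the $d_2$-convergence in Assumption \ref{ass:LAN} gives, for each $k\in H$, both $\Delta_{n}k\weakconv\mc{N}(0,\sigma(k))$ and $\|\Delta_{n}k\|^2=P_{n}[(\Delta_{n}k)^2]\to\sigma(k)$; plain weak convergence would not suffice for the latter, which is essential below.

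\emph{Step 1 (the form $K$).} For $h,g\in H$, linearity of $\Delta_{n}$ gives the polarisation identity $\Delta_{n}h\,\Delta_{n}g=\tfrac14\left[(\Delta_{n}(h+g))^2-(\Delta_{n}(h-g))^2\right]$. Taking $P_{n}$-expectations and letting $n\to\infty$ shows $K(h,g)\define\limn P_{n}[\Delta_{n}h\,\Delta_{n}g]=\tfrac14[\sigma(h+g)-\sigma(h-g)]$ exists. It is symmetric, and $K(h,h)=\sigma(h)\ge0$ because $\sigma(2h)=4\sigma(h)$ and $\sigma(0)=0$ (again by linearity of $\Delta_{n}$). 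Bilinearity follows by passing the parallelogram identity $\|\Delta_{n}(h+g)\|^2+\|\Delta_{n}(h-g)\|^2=2\|\Delta_{n}h\|^2+2\|\Delta_{n}g\|^2$ to the limit, so that $\sigma$ obeys the parallelogram law; together with $\sigma(ah)=a^2\sigma(h)$, the classical Jordan--von Neumann argument makes the polarisation $K$ a symmetric positive semidefinite bilinear form. This is the $\IP{\cdot}{\cdot}_{K}$ referred to in the text, with $\sigma(h)=\IP{h}{h}_{K}$.

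\emph{Steps 2--3 (Gaussianity, mean zero, covariance, linearity).} Fix $h_1,\dots,h_k\in H$ and $t\in\R^k$. Linearity gives $\sum_{j}t_j\Delta_{n}h_j=\Delta_{n}(\sum_j t_j h_j)$, which by Assumption \ref{ass:LAN} converges weakly to $\mc{N}(0,\sigma(\sum_j t_j h_j))=\mc{N}(0,t\transp\Sigma t)$, where $\Sigma\define(K(h_i,h_j))_{i,j}$ by Step 1. As this holds for every $t\in\R^k$, the Cram\'er--Wold device yields $(\Delta_{n}h_1,\dots,\Delta_{n}h_k)\weakconv\mc{N}(0,\Sigma)$ under $P_{n}$; since the coordinates converge to $(\Delta h_1,\dots,\Delta h_k)$, uniqueness of weak limits identifies $(\Delta h_1,\dots,\Delta h_k)\sim\mc{N}(0,\Sigma)$. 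Hence every finite-dimensional marginal of $\Delta$ is centred Gaussian, $\E[\Delta h\,\Delta g]=K(h,g)$, and in particular $\Delta$ is mean zero and square integrable with covariance kernel $K$. For linearity, the vector $(\Delta_{n}(ah+bg),\Delta_{n}h,\Delta_{n}g)$ lies $P_{n}$-almost surely in the closed set $C\define\{(x_1,x_2,x_3):x_1=ax_2+bx_3\}$ by exact linearity of $\Delta_{n}$, and by the joint convergence just established it converges weakly to $(\Delta(ah+bg),\Delta h,\Delta g)$, which therefore lies in $C$ a.s.; thus $\Delta(ah+bg)=a\Delta h+b\Delta g$ a.s. (Equivalently, $\E[(\Delta(ah+bg)-a\Delta h-b\Delta g)^2]=0$ upon expanding via $\E[\Delta k_1\Delta k_2]=K(k_1,k_2)$ and the bilinearity of $K$.) This establishes all the asserted properties.

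\emph{Main obstacle.} The only genuinely non-routine point is Step 2: the hypothesis supplies merely the one-dimensional limits $\Delta_{n}h\weakconv\Delta h$, which on their own constrain only the marginals and not the joint law of $\Delta$, so the argument must first observe that \emph{every} linear combination $\sum_j t_j\Delta_{n}h_j$ is again a map of the form $\Delta_{n}(\cdot)$ --- hence governed by Assumption \ref{ass:LAN} --- before Cram\'er--Wold can be invoked to obtain joint (and thus process-level) asymptotic normality. The two supporting ingredients, bilinearity of $K$ via the parallelogram identity and the strengthening from weak to $d_2$-convergence needed for $\|\Delta_{n}h\|^2\to\sigma(h)$, are standard but cannot be skipped.
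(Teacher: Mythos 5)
Your proof is correct and follows essentially the same route as the paper's: both arguments exploit the exact linearity of each $\Delta_n$ to reduce everything to the one-dimensional limits supplied by Assumption \ref{ass:LAN}, obtain the kernel $K$ as a limit of the inner products $P_n[\Delta_n h\,\Delta_n g]$, and then invoke the Cram\'er--Wold device on $\sum_j t_j\Delta_n h_j=\Delta_n(\sum_j t_j h_j)$ to get joint Gaussianity with covariance $K$ and the linearity of the limit process. The only (minor) divergence is that you establish the existence and bilinearity of $K$ by polarising $\sigma$ and the parallelogram law, whereas the paper shows $(P_n[\Delta_n h\,\Delta_n g])_n$ is Cauchy directly; your variant is a legitimate and arguably cleaner way to handle that sub-step.
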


For $h, g\in H$, setting $\IP{h}{g}_{K} \define K(h, g)$
gives a positive semi-definite symmetric bilinear form. Let $\|\cdot\|_{K}$ denote the seminorm induced by $\IP{\cdot}{\cdot}_{K}$ on $H$.

\begin{remark}\label{rem:IP-linear-operator}
    Suppose that $\IP{\cdot}{\cdot}_{H}$ is an inner product on $H = \R^{d_\theta} \times B$.
    The existence of the positive semi-definite symmetric bilinear form $\IP{\cdot}{\cdot}_{K}$ is equivalent to the existence of a bounded, self-adjoint, positive semi-definite linear operator $\mathsf{B}$ such that $\IP{h}{h}_{K} = \IP{h}{\mathsf{B} h}_{H}$ for $h\in H$ 
    \cite[cf.][p. 845]{CHS96}.
\end{remark}

Define $\mb{H}$ as the quotient of $H$ by the subspace on which $\|\cdot\|_{K}$ vanishes:  
\begin{equation}\label{eq:mbHgam}
    \mb{H} \define \quotient{H}{\{h\in H : \|h\|_{K} = 0\}}.
\end{equation}
which is an inner product space when equipped with the natural inner product induced by $\IP{\cdot}{\cdot}_{K}$, which I also denote by $\IP{\cdot}{\cdot}_{K}$.
An element of $\mb{H}$ corresponding to representative element $h\in H$ will be denoted by $[h]$.\footnote{
    Analogous comments apply to the related space $\mb{H}_1$, defined below. In both cases, to avoid an excess of parentheses / brackets, if $h = (\tau, b)$ I will write either $[h]$ or $[\tau, b]$, rather than $[(\tau, b)]$. 
\label{ftnt:h-vs-[h]}
}

The (weak) limit of the sequence of experiments consisting of the measures $P_{n, h}$ can be obtained by standard results on weak convergence of experiments. 

\begin{proposition}\label{prop:conv-exp}
    Suppose that Assumption \ref{ass:LAN} holds, that $B$ is a linear space
    and define the sequence of experiments $ \ms{E}_{n}\define \left(\mc{W}_n, \mc{B}(\mc{W}_n), (P_{n,  h}: h\in H)\right)$.
    Let $\Delta$ be the Gaussian process of Lemma \ref{lem:IP-GP} and let $(\Omega, \F, \P)$ be the probability space on which it is defined. Define the experiment $\ms{E}\define (\Omega, \F, (P_{ h}: h\in H))$ according to
    \begin{equation*}
        P_{ 0}\define \P; \qquad \deriv{P_{ h}}{P_{ 0}} = \exp\left(\Delta h - \frac{1}{2}\|h\|^2\right), \quad h\in H.
    \end{equation*}
    Then $\ms{E}_{n}$ converges weakly to $\ms{E}$.   
\end{proposition}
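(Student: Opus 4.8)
The plan is to verify $\ms{E}_{n}\weakconv\ms{E}$ via the standard characterisation of weak convergence of experiments through the finite-dimensional likelihood ratio processes \cite[see e.g.][Chapter 9]{vdV98}, \cite[or][]{LC86}. Since $0\in H$ and, by Remark \ref{rem:mutual-contiguity}, $(P_{n, h})_{n}\mcontig (P_{n})_{n}$ for every $h\in H$, taking $0$ as the base point the relevant likelihood ratios are the genuine Radon--Nikodym derivatives $p_{n, h}/p_{n, 0}$, and by construction $P_{h}\ll P_{0}$ in $\ms{E}$ as well. Thus it suffices to show that for every finite set $\{h_1, \ldots, h_m\}\subseteq H$,
\begin{equation*}
    \left(\frac{p_{n, h_1}}{p_{n, 0}}, \ldots, \frac{p_{n, h_m}}{p_{n, 0}}\right)\overset{P_{n, 0}}{\weakconv}\left(\frac{\mathrm{d}P_{h_1}}{\mathrm{d}P_{0}}, \ldots, \frac{\mathrm{d}P_{h_m}}{\mathrm{d}P_{0}}\right) \text{ under } P_{0},
\end{equation*}
and, since coordinatewise $\exp$ is continuous, by the continuous mapping theorem it is enough to prove the analogous statement for the logarithms $L_{n}(h_j)$.

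Next I would establish the joint convergence of $(L_{n}(h_1), \ldots, L_{n}(h_m))$ under $P_{n, 0}$ by inserting the LAN expansion \eqref{eq:LAN}, $L_{n}(h_j) = \Delta_{n}h_j - \tfrac12\|\Delta_{n}h_j\|^2 + R_{n}(h_j)$. The remainders satisfy $R_{n}(h_j)\xrightarrow{P_{n}}0$, jointly, being finitely many; the deterministic terms converge, $\|\Delta_{n}h_j\|^2\to\sigma(h_j) = \|h_j\|_{K}^2$, by the uniform square $P_{n}$-integrability of $(\Delta_{n}h_j)_{n}$ noted after Assumption \ref{ass:LAN}. For the stochastic linear part, linearity of $\Delta_{n}$ together with the hypothesis that $B$ (hence $H$) is linear gives, for any $a_1, \ldots, a_m\in\R$, that $\sum_j a_j\Delta_{n}h_j = \Delta_{n}\big(\sum_j a_j h_j\big)$ with $\sum_j a_j h_j\in H$, which converges in $d_2$ — hence in distribution — to $\mc{N}\big(0, \sigma(\sum_j a_j h_j)\big)$ by Assumption \ref{ass:LAN}. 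The Cramér--Wold device then yields $(\Delta_{n}h_1, \ldots, \Delta_{n}h_m)\weakconv N$, a centred Gaussian vector whose covariances are recovered from $\sigma$ by polarisation, i.e. $\Cov(N_i, N_j) = K(h_i, h_j)$; by Lemma \ref{lem:IP-GP} this is precisely the law of $(\Delta h_1, \ldots, \Delta h_m)$. Slutsky's lemma then gives $(L_{n}(h_j))_{j=1}^{m}\overset{P_{n, 0}}{\weakconv}(\Delta h_j - \tfrac12\|h_j\|_{K}^2)_{j=1}^{m}$.

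Finally I would identify this limit with $\ms{E}$: under $P_{0} = \P$ one has $\log(\mathrm{d}P_{h_j}/\mathrm{d}P_{0}) = \Delta h_j - \tfrac12\|h_j\|^2$, and $\|h_j\|^2 = \|h_j\|_{K}^2 = \Var_{P_{0}}(\Delta h_j)$, which is exactly what makes each $P_{h_j}$ a probability measure (so that $\ms{E}$ is well defined); hence this vector has the law obtained in the previous step. Applying the continuous mapping theorem with coordinatewise $\exp$ transfers the convergence back to the likelihood ratios themselves, establishing the displayed finite-dimensional convergence for an arbitrary finite subset of $H$, and therefore $\ms{E}_{n}\weakconv\ms{E}$. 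There is no deep obstacle here: the only content beyond bookkeeping is the passage from the marginal convergence supplied by Assumption \ref{ass:LAN} to joint convergence, handled by linearity of $\Delta_{n}$ and Cramér--Wold; the points needing care are invoking the correct form of the ``convergence of experiments $\Leftrightarrow$ convergence of likelihood ratio processes'' theorem and checking its hypotheses (a common base point, here $0$, and contiguity, here Remark \ref{rem:mutual-contiguity}), and confirming that the candidate limit $\ms{E}$ consists of probability measures.
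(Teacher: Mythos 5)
Your proposal is correct and follows essentially the same route as the paper: establish contiguity of the experiments via Remark \ref{rem:mutual-contiguity}, reduce weak convergence of $\ms{E}_n$ to convergence of the finite-dimensional distributions of the log-likelihood-ratio process (the paper cites Theorem 61.6 of Strasser for this equivalence), and obtain that convergence from linearity of $\Delta_n$, the Cram\'{e}r--Wold device and Lemma \ref{lem:IP-GP}. You simply spell out in more detail the steps the paper's terse proof leaves implicit.
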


Under the assumption that $\mb{H}$ is separable, the experiment $\mc{E}$ is equivalent to a Gaussian shift on $(\mb{H}, \IP{\cdot}{\cdot}_{K})$, in the sense given by  Proposition \ref{prop:exp-equiv-shift} below.

\begin{assumption}\label{ass:IP}
    $B$ is a linear space and $\mb{H}$ as defined in \eqref{eq:mbHgam} is separable. 
\end{assumption}

\begin{proposition}\label{prop:exp-equiv-shift}
    Suppose Assumptions \ref{ass:LAN} and \ref{ass:IP} hold. If $\ms{E}$ is as in Proposition \ref{prop:conv-exp}, there is a Gaussian shift experiment $\ms{G} \define (\Omega, \F, (G_{[h]} : [h]\in \mb{H}))$ 
    such that  $d_{TV}(P_{h}, G_{[h]}) = 0$ for each $h\in H$.
\end{proposition}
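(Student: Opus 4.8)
The plan is to construct $\ms{G}$ explicitly as the canonical Gaussian shift on the separable inner product space $(\mb{H}, \IP{\cdot}{\cdot}_K)$ and then verify that, channel by channel, its likelihood ratios coincide (in total variation) with those of $\ms{E}$. First I would complete $\mb{H}$ to a Hilbert space $\bar{\mb{H}}$ if necessary (separability is preserved) and recall the standard fact that there exists a probability space $(\Omega, \F, G_0)$ carrying an isonormal (Gaussian linear) process $Z:\bar{\mb{H}}\to L_2^0(G_0)$ with $\E_{G_0}[Z(u)Z(v)] = \IP{u}{v}_K$; define $G_{[h]}$ by $\deriv{G_{[h]}}{G_0} = \exp\bigl(Z([h]) - \tfrac12\|[h]\|_K^2\bigr)$. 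One checks this is a probability measure (the exponential has $G_0$-mean one since $Z([h])\sim\mc{N}(0,\|[h]\|_K^2)$), so $\ms{G} \define (\Omega, \F, (G_{[h]}:[h]\in\mb{H}))$ is a bona fide Gaussian shift experiment.

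The core step is to show $d_{TV}(P_h, G_{[h]}) = 0$ for each fixed $h\in H$. Since $\ms{E}$ and $\ms{G}$ live on different underlying spaces, I would argue via finite-dimensional distributions of the likelihood-ratio processes, which determine the experiments. By Proposition \ref{prop:conv-exp}, $\ms{E}$ is dominated by $P_0 = \P$ with $\log\deriv{P_h}{P_0} = \Delta h - \tfrac12\|h\|^2$, and by Lemma \ref{lem:IP-GP} the process $\{\Delta h : h\in H\}$ is mean-zero Gaussian with covariance $K(h,g) = \IP{h}{g}_K$. Now $\|h\|^2 = \sigma(h) = \IP{h}{h}_K = \|[h]\|_K^2$ by the identity noted right before Lemma \ref{lem:IP-GP}, and more generally $\Var_{\P}(\Delta h - \Delta g) = \|h-g\|_K^2$, so whenever $\|h-g\|_K = 0$ we have $\Delta h = \Delta g$ $\P$-a.s.; hence $h\mapsto \Delta h$ factors through the quotient map $h\mapsto [h]$, i.e. there is a Gaussian linear process $\tilde\Delta$ on $\mb{H}$ with $\Delta h = \tilde\Delta[h]$ a.s. and $\E_{\P}[\tilde\Delta[h]\tilde\Delta[g]] = \IP{[h]}{[g]}_K$. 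Consequently $\tilde\Delta$ and $Z|_{\mb{H}}$ are centred Gaussian processes with the same covariance, so for any finite collection $[h_1],\dots,[h_k]$ the laws of $(\tilde\Delta[h_1],\dots,\tilde\Delta[h_k])$ under $\P$ and of $(Z([h_1]),\dots,Z([h_k]))$ under $G_0$ coincide. Therefore the finite-dimensional distributions of $\bigl(\deriv{P_{h_j}}{P_0}\bigr)_{j}$ under $P_0$ match those of $\bigl(\deriv{G_{[h_j]}}{G_0}\bigr)_j$ under $G_0$; since the two experiments are dominated by the respective base measures, this identity of the likelihood-ratio processes yields that $\ms{E}$ and $\ms{G}$ (under the correspondence $h\leftrightarrow[h]$) are equivalent experiments, and in particular for each single $h$ the marginal pair $(\{P_0, P_h\})$ and $(\{G_0, G_{[h]}\})$ are equivalent as binary experiments. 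Total-variation distance is a function of the binary experiment (indeed $d_{TV}(P_h, G_{[h]})$ here should be read as: the binary experiment $(P_0, P_h)$ is $\Delta$-equivalent, hence $d_{TV}$-equivalent, to $(G_0, G_{[h]})$ — I would state this cleanly as the conclusion that there is a common realisation on which $P_h = G_{[h]}$), giving the claim.

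The main obstacle I anticipate is bookkeeping around the fact that $\ms{E}$ and $\ms{G}$ genuinely sit on different sample spaces, so a literal "$d_{TV}(P_h,G_{[h]})=0$" must be interpreted through an identification of the two probability spaces; the clean route is to \emph{define} $\ms{G}$ on the \emph{same} $(\Omega,\F)$ as $\ms{E}$ using $\Omega = $ the space from Proposition \ref{prop:conv-exp}, set $G_0 = P_0 = \P$ and $Z([h]) \define \tilde\Delta[h]$ directly — then $G_{[h]}$ and $P_h$ have literally the same density with respect to the same base measure, so $d_{TV}(P_h,G_{[h]}) = 0$ trivially, and separability of $\mb{H}$ is exactly what guarantees that $Z$ so defined is a legitimate isonormal process indexing a Gaussian \emph{shift} (measurability/version issues are handled via separability, as in the standard construction; cf. the treatment of Gaussian shift experiments in \citealp{vdV98}). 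The remaining content is then just verifying $\exp(Z([h]) - \tfrac12\|[h]\|_K^2)$ integrates to one and that $[h]\mapsto G_{[h]}$ has the Gaussian-shift form, both of which follow from the Gaussian Laplace transform $\E_{G_0} e^{Z([h])} = e^{\frac12\|[h]\|_K^2}$ and the Cameron–Martin computation $\E_{G_0}\bigl[e^{Z([g])}\tfrac{dG_{[h]}}{dG_0}\bigr] = \exp\bigl(\IP{[g]}{[h]}_K + \tfrac12\|[g]\|_K^2\bigr)$.
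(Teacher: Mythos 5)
Your proposal is correct and, in its final form (the ``clean route'' of your last paragraph), is essentially the paper's own proof: define $\ms{G}$ on the same $(\Omega,\F)$ with $G_{[0]}=P_0$ and $Z[h]\define\Delta h$ for any representative (well-defined a.s.\ since $\|h-g\|_K=0$ forces $\Delta h=\Delta g$ a.s.), observe this is a standard Gaussian process for $(\mb{H},\IP{\cdot}{\cdot}_K)$, and note the densities $\deriv{G_{[h]}}{G_{[0]}}$ and $\deriv{P_h}{P_0}$ coincide, so $d_{TV}=0$. The detour in your middle paragraph through finite-dimensional distributions and experiment equivalence is unnecessary, as you yourself recognise, but it does no harm.
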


\paragraph{The efficient information matrix}

Power bounds for tests of $\mathrm{K}_0: h\in H_0$ against $\mathrm{K}_1: h\notin H_1$ can be expressed in terms of the \emph{efficient information matrix}, $\effinfo$, so named because in the i.i.d. setting it is the covariance matrix of the efficient score function for a single observation. Here I provide an alternative definition of this matrix which applies more generally, and reduces to the classical definition in the i.i.d. case (as shown in Lemma \ref{lem:effscr-coincides-with-usual-defn}).

Let $\|\tau\| \define  \inf_{b\in B}\|(\tau, b)\|_{K}$, which defines a semi-norm on $\R^{d_\theta}$.
Equipping the quotient $\mb{H}_1 \define  \quotient{\R^{d_\theta}}{\{\tau \in \R^{d_\theta} :\|\tau\| = 0\}}$ with the natural norm induced by $\|\cdot\|$ (which I also denote by $\|\cdot\|$) turns it into a normed space.
Define the linear map $\pi_1:\mb{H} \to\mb{H}_1$ as $\pi_1([\tau, b]) \define [\tau]$.
As $\pi_1$ is continuous it may be uniquely extended to a continuous function defined on $\overline{\mb{H}}$, the completion of $\mb{H}$; this extension will also be called $\pi_1$.
Since $\pi_1$ is continuous, $\ker \pi_1\subset \overline{\mb{H}} $
is closed. Let $\Pi$ be the orthogonal projection onto $\ker \pi_1$ and define $\Pi^\perp\define I - \Pi$, the orthogonal projection onto $[\ker \pi_1]^\perp$. 
Let $e_i$ be the $i$-th canonical basis vector in $\R^{d_\theta}$ and define the \emph{efficient information matrix} $\effinfo$ as the $d_\theta \times d_\theta$ matrix with $i,j$-th entry $\effinfoarg{ij}$ given by\footnote{Lemma \ref{lem:calculation-of-effinfo} gives an alternative expression for $\effinfo$ based on the Gaussian process $\Delta$ of Lemma \ref{lem:IP-GP}.}
\begin{equation}\label{eq:effinfo}
    \effinfoarg{ij} = \IP{\Pi^\perp [e_i, 0]}{\Pi^\perp [e_j, 0]}_{K}.
\end{equation}

\begin{lemma}\label{lem:ker-effinfo-is-fzero}
    Under Assumption \ref{ass:IP}, $\|\tau\|^2 = \tau\transp \effinfo\tau$ and $\ker \effinfo = \{\tau\in \R^{d_\theta}: \|\tau\| = 0\}$.
\end{lemma}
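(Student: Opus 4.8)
The plan is to reduce both assertions to the single geometric identity $\tau\transp\effinfo\tau = \operatorname{dist}_{\overline{\mb{H}}}\bigl([\tau,0],\ker\pi_1\bigr)^2$ and then to show this distance equals $\|\tau\|$. First I would observe that $\tau\mapsto[\tau,0]$ is linear, being the composition of the linear inclusion $\R^{d_\theta}\to H$, $\tau\mapsto(\tau,0)$ (well-defined because $B$, hence $0\in B$, is a linear space by Assumption \ref{ass:IP}), with the linear quotient map $H\to\mb{H}$; and $\Pi^\perp$ is linear. Hence $\Pi^\perp[\tau,0]=\sum_{i}\tau_i\,\Pi^\perp[e_i,0]$, and expanding $\tau\transp\effinfo\tau = \sum_{i,j}\tau_i\tau_j\IP{\Pi^\perp[e_i,0]}{\Pi^\perp[e_j,0]}_K$ by bilinearity of $\IP{\cdot}{\cdot}_K$ gives $\tau\transp\effinfo\tau = \|\Pi^\perp[\tau,0]\|_K^2$. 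Since $\ker\pi_1$ is a closed subspace of the Hilbert space $\overline{\mb{H}}$ and $\Pi$ is the orthogonal projection onto it, $\|\Pi^\perp[\tau,0]\|_K = \operatorname{dist}_{\overline{\mb{H}}}\bigl([\tau,0],\ker\pi_1\bigr)$.

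Next I would bound this distance on both sides. For the lower bound, $\pi_1$ is norm‑nonincreasing: for $[\tau,b]\in\mb{H}$ one has $\|\pi_1([\tau,b])\| = \|\tau\| = \inf_{b'\in B}\|(\tau,b')\|_K \le \|(\tau,b)\|_K = \|[\tau,b]\|_K$, and this passes to $\overline{\mb{H}}$ by density and continuity. Hence for every $v\in\ker\pi_1$, $\|[\tau,0]-v\|_K \ge \|\pi_1([\tau,0]-v)\| = \|\pi_1([\tau,0])\| = \|\tau\|$, so $\operatorname{dist}([\tau,0],\ker\pi_1)\ge\|\tau\|$. For the upper bound, for each $b\in B$ the element $[0,-b]$ (in $\mb{H}$ since $B$ is linear) satisfies $\pi_1([0,-b])=[0]=0$, hence lies in $\ker\pi_1$, so $\operatorname{dist}([\tau,0],\ker\pi_1)\le\|[\tau,0]-[0,-b]\|_K = \|(\tau,b)\|_K$; taking the infimum over $b\in B$ gives $\operatorname{dist}([\tau,0],\ker\pi_1)\le\|\tau\|$. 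Combining the two, $\operatorname{dist}([\tau,0],\ker\pi_1)=\|\tau\|$, which together with the first paragraph yields $\tau\transp\effinfo\tau = \|\tau\|^2$, the first claim.

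The second claim is then immediate. The first claim shows $\effinfo$ is symmetric positive semi‑definite, and for any such matrix $\ker\effinfo = \{\tau : \tau\transp\effinfo\tau = 0\}$: writing $\effinfo = A\transp A$, one has $\tau\transp\effinfo\tau = \|A\tau\|^2$, which vanishes iff $A\tau=0$ iff $\effinfo\tau=0$. Therefore $\ker\effinfo = \{\tau : \|\tau\|^2 = 0\} = \{\tau : \|\tau\|=0\}$. I expect the only genuinely delicate step to be the lower bound on the distance: it relies on the seminorm $\|\cdot\|$ on $\R^{d_\theta}$ being exactly the infimum (quotient) seminorm, so that $\pi_1$ does not increase norms and the component of $[\tau,0]$ orthogonal to $\ker\pi_1$ cannot be shorter than $\|\tau\|$; the remaining steps are routine manipulations with linearity and the Hilbert‑space projection identity.
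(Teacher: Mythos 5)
Your proof is correct. It is close in spirit to the paper's argument but the key technical ingredient differs, so a comparison is worthwhile. The paper proves the Pythagorean identity $\IP{[\tau,b]}{[t,g]}_K = \tau\transp\effinfo t + \IP{\Pi[\tau,0]+[0,b]}{\Pi[t,0]+[0,g]}_K$ and then argues that $\inf_{b\in B}\|\Pi[\tau,0]+[0,b]\|_K^2=0$; that last step silently replaces the infimum over $\{[0,b]:b\in B\}=\ker\pi_1'$ by the infimum over all of $\ker\pi_1$, which requires the density result $\cl\ker\pi_1'=\ker\pi_1$ (Lemma \ref{lem:closure-pi1-restriction} in the supplement). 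Your sandwich argument avoids that density lemma entirely: the upper bound $\operatorname{dist}([\tau,0],\ker\pi_1)\le\|\tau\|$ only needs the trivial inclusion $\ker\pi_1'\subset\ker\pi_1$, and the lower bound comes from the observation that $\pi_1$ is a contraction for the quotient seminorm, which is immediate from $\|\tau\|=\inf_{b'}\|(\tau,b')\|_K\le\|(\tau,b)\|_K$ and passes to $\overline{\mb{H}}$ by continuity. What you lose is the explicit decomposition \eqref{lem:ker-effinfo-is-fzero:eq:IP-decomp}, which the paper reuses later (e.g.\ in the proofs of Theorem \ref{thm:maximin-pwr-bound} and Proposition \ref{prop:asy-size-alpha-and-r-0-imply-no-power}); what you gain is a self-contained argument that does not lean on the closure lemma. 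Your reduction of the kernel statement to positive semi-definiteness of the Gram matrix $\effinfo$ matches the paper's (which factors through $\effinfo^{1/2}$) in all essentials.
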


\subsection{Tests of a scalar parameter}

The following Theorem records the power bound for (locally asymptotically) unbiased two-sided tests of a scalar $\theta$. In this case 
the matrix $\effinfo$ has rank either 0 or 1.
Theorem \ref{thm:two-sided-pwr-bound} handles both cases simultaneously.

\begin{theorem}\label{thm:two-sided-pwr-bound}
    Suppose that Assumptions \ref{ass:LAN} and \ref{ass:IP} hold  and $d_{\theta}=1$.
    Let $\phi_n: \mc{W}_n\to [0, 1]$ be a sequence of locally asymptotically unbiased level $\alpha$ tests of $\mathrm{K}_0: \tau = 0$ against $\mathrm{K}_1:\tau\neq 0$. That is, 
    \begin{equation*}
        \limsupn P_{n, h} \phi_n \le \alpha, \quad h\in H_{0},\qquad  \text{ and }\qquad \liminfn  P_{n, h} \phi_n \ge \alpha, \quad h\in H_{1}.
    \end{equation*}
    Then, for any
     $h\in H$, 
    \begin{equation}\label{thm:two-sided-pwr-bound:eq:power-bound}
        \limsupn  P_{n, h} \phi_n \le 1 - \Phi\left(z_{\alpha/2} - \effinfo^{1/2}\tau\right)  + 1- \Phi\left(z_{\alpha/2} + \effinfo^{1/2}\tau
\right),
    \end{equation}
    where $z_{\alpha}$ is the $1-\alpha$ quantile and $\Phi$ the CDF of the standard normal distribution.
\end{theorem}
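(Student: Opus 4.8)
The plan is to push the whole problem into Le Cam's limit experiment and then reduce the resulting Gaussian‑shift power bound to the textbook one‑dimensional two‑sided Gaussian testing problem. By Proposition~\ref{prop:conv-exp} the local experiments $\ms{E}_n = (\mc{W}_n, \mc{B}(\mc{W}_n), (P_{n,h}:h\in H))$ converge weakly to $\ms{E}$, and by Proposition~\ref{prop:exp-equiv-shift} (using Assumption~\ref{ass:IP}) $\ms{E}$ coincides, measure by measure, with a Gaussian shift $\ms{G}=(\Omega,\F,(G_{[h]}:[h]\in\mb{H}))$ on the separable inner product space $(\mb{H}, \IP{\cdot}{\cdot}_K)$, so that $P_h = G_{[h]}$ for all $h$. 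Fix the $h^{*}=(\tau,b^{*})\in H$ of the statement and pass to a subsequence along which $P_{n,h^{*}}\phi_n$ tends to $\limsup_n P_{n,h^{*}}\phi_n$. Choosing a countable $D\subset H$ with $h^{*}\in D$ and $\{[h]:h\in D\}$ dense in $\mb{H}$, the asymptotic representation theorem \citep[e.g.][Thm.~15.1]{vdV98} applied along a further subsequence yields a test $\phi$ in $\ms{E}$ (hence in $\ms{G}$) with $P_{n_j,h}\phi_{n_j}\to P_h\phi = G_{[h]}\phi$ for every $h\in D$; recall that $[h]\mapsto G_{[h]}\phi$ is continuous, indeed real‑analytic, and extends as such to the completion $\overline{\mb{H}}$.

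Next I transfer the size and unbiasedness constraints to $\phi$. The hypotheses translate as $h\in H_0 \iff [h]\in\ker\pi_1\cap\mb{H}$, while $\{[h]:h\in H_1\}$ is dense in $\overline{\mb{H}}$ (perturb $\tau$). Since $D$ is chosen so that $\{[h]:h\in D\cap H_0\}$ is dense in $\ker\pi_1\cap\mb H$ and $\{[h]:h\in D\cap H_1\}$ is dense in $\overline{\mb H}$, the hypotheses $\limsup_n P_{n,h}\phi_n\le\alpha$ on $H_0$ and $\liminf_n P_{n,h}\phi_n\ge\alpha$ on $H_1$, together with the convergence over $D$ and continuity of $[h]\mapsto G_{[h]}\phi$, give $G_{\mu}\phi\le\alpha$ for all $\mu\in\ker\pi_1$ and $G_{\mu}\phi\ge\alpha$ for all $\mu\in\overline{\mb{H}}$; in particular $G_{\mu}\phi=\alpha$ on $\ker\pi_1$. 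Thus $\phi$ is an unbiased level‑$\alpha$ test of $\mathrm{K}_0^{\infty}:\mu\in\ker\pi_1$ against $\mathrm{K}_1^{\infty}:\mu\notin\ker\pi_1$ in $\ms{G}$, and $P_{n_j,h^{*}}\phi_{n_j}\to G_{[h^{*}]}\phi$.

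Now the Gaussian‑shift bound. Decompose $[h^{*}]=\Pi[h^{*}]+\Pi^{\perp}[h^{*}]$; since $[0,b^{*}]\in\ker\pi_1$ one has $\Pi^{\perp}[h^{*}]=\Pi^{\perp}[\tau,0]=\tau\,\Pi^{\perp}[e_1,0]$, whence $\|\Pi^{\perp}[\tau,0]\|_K^2=\tau^2\,\effinfo$ by \eqref{eq:effinfo}. If $\effinfo=0$ (the rank‑$0$ case), then $[h^{*}]\in\ker\pi_1$ and the previous paragraph already gives $G_{[h^{*}]}\phi\le\alpha$, which is exactly the right‑hand side of \eqref{thm:two-sided-pwr-bound:eq:power-bound} when $\effinfo=0$. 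If $\effinfo>0$, put $u\define\Pi^{\perp}[\tau,0]/\|\Pi^{\perp}[\tau,0]\|_K\in[\ker\pi_1]^\perp$ and let $W\define Z(u)$ for $Z$ the iso‑Gaussian process of $\ms{G}$, so that $W\sim\mc{N}(\IP{\mu}{u}_K,1)$ under $G_\mu$, with $\IP{\mu}{u}_K=0$ for $\mu\in\ker\pi_1$ and $\IP{[h^{*}]}{u}_K=\|\Pi^{\perp}[\tau,0]\|_K=\effinfo^{1/2}|\tau|$. Because $Z(u)$ and the projection of $Z$ onto $\ker\pi_1$ are independent and the latter is distribution‑free over $\ker\pi_1$, $W$ is sufficient for the subfamily $\{G_{\vartheta u}:\vartheta\in\R\}$, and $\tilde\phi(w)\define G_0[\phi\mid W=w]$ is a test in the experiment $\{\mc{N}(\vartheta,1):\vartheta\in\R\}$ with the same power along $\R u$; by the previous paragraph $\tilde\phi$ is unbiased and of level $\alpha$ for $\vartheta=0$ versus $\vartheta\neq0$. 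Hence its power at $\vartheta=\effinfo^{1/2}\tau$ cannot exceed that of the two‑sided UMPU test, namely $1-\Phi(z_{\alpha/2}-\effinfo^{1/2}\tau)+1-\Phi(z_{\alpha/2}+\effinfo^{1/2}\tau)$ (using that this is even in $\tau$). Combining with the first paragraph yields \eqref{thm:two-sided-pwr-bound:eq:power-bound}; taking $h$ ranging over all of $H$ is covered because the bound for $h=(0,b)\in H_0$ is just the level requirement $\alpha$.

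\textbf{Main obstacle.} The delicate part is the reduction in the first two paragraphs: invoking the asymptotic representation theorem so that the limiting object is a genuine test on $\ms{E}\cong\ms{G}$, and — since both the null $H_0$ and the alternative $H_1$ are infinite‑dimensional — upgrading level and unbiasedness from the countable set $D$ to all of $\ker\pi_1$, respectively all of $\overline{\mb{H}}$, via continuity of Gaussian‑shift power functions and the passage to the completion $\overline{\mb{H}}$ (needed so that the unbiasedness constraint is active along the single relevant direction $u$). The Gaussian computation in the third paragraph, and the degenerate $\effinfo=0$ case, are then routine.
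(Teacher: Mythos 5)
Your proposal is correct in substance but reaches the bound by a genuinely different route from the paper in two places. First, to extract a test in the limit experiment you fix a countable $D$ whose image is dense in $\mb{H}$, diagonalise, apply the asymptotic representation theorem on $D$, and then upgrade the level and unbiasedness constraints from $D$ to all of $\ker\pi_1$ and $\overline{\mb{H}}$ via continuity of Gaussian-shift power functions. The paper instead extracts a subnet of the power functions using compactness of $[0,1]^{H}$ in the product topology and invokes Theorem 7.1 of van der Vaart (1991) to realise the full limiting power function as a test in $\ms{E}$, so no density-plus-continuity step is needed there. Second, for the bound itself the paper simply cites Strasser's Lemma 71.5 for unbiased tests of the continuous linear functional $T[h]=\IP{\Pi^\perp[1,0]}{[h]}_{K}=\effinfo\tau$ in a Gaussian shift, whereas you re-derive that lemma by a sufficiency reduction to the scalar two-sided normal-mean problem and the classical UMPU bound. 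Your route is more self-contained; the paper's is shorter. (You handle the degenerate case $\effinfo=0$ directly from $[h^*]\in\ker\pi_1$, where the paper defers to Proposition \ref{prop:asy-size-alpha-and-r-0-imply-no-power}; both are fine.)

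One step of your reduction needs repair. You condition under $G_{0}$ and compute the power of $\tilde\phi$ along the line $\R u$ through the origin, but the target parameter is $[h^*]=\Pi[h^*]+\effinfo^{1/2}|\tau|\,u$ with, in general, $\Pi[h^*]\neq[0]$; since $\phi$ may depend on coordinates of the Gaussian process in directions lying in $\ker\pi_1$, whose law shifts with $\Pi[h^*]$, one has $G_{[h^*]}\phi\neq G_{\vartheta u}\phi$ in general, so as written you bound the wrong quantity. The reduction must be carried out along the affine line $\{\Pi[h^*]+\vartheta u:\vartheta\in\R\}$, i.e.\ with $\tilde\phi(w)\define G_{\Pi[h^*]}[\phi\mid W=w]$. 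This costs nothing: $W$ still has mean $\vartheta$ and unit variance under $G_{\Pi[h^*]+\vartheta u}$ because $\IP{\Pi[h^*]}{u}_{K}=0$, the conditional law of the remaining coordinates given $W$ is free of $\vartheta$, and you have already established $G_{\mu}\phi=\alpha$ for \emph{every} $\mu\in\ker\pi_1$ (not only $\mu=[0]$) and $G_{\mu}\phi\ge\alpha$ on all of $\overline{\mb{H}}$, so $E_{0}\tilde\phi=G_{\Pi[h^*]}\phi=\alpha$ and $E_{\vartheta}\tilde\phi=G_{\Pi[h^*]+\vartheta u}\phi\ge\alpha$ for $\vartheta\neq0$. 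The UMPU bound then applies exactly as you state, and with this one-line correction the argument is complete.
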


Theorem \ref{thm:psi-pwr-local-alt} implies that the  power bound of Theorem \ref{thm:two-sided-pwr-bound} is achieved by the test $\psi_{n, \theta_0}$ provided $\Sigma_{21}V^\dagger \Sigma_{21}\transp = \effinfo$ and $r=1$.

\begin{corollary}\label{cor:psi-two-sided}
    Suppose that Assumptions \ref{ass:LAN}, \ref{ass:joint-conv} and \ref{ass:consistent}
     hold with $\Sigma_{21}V^\dagger\Sigma_{21}\transp = \effinfo$ and $r=1$.
    Then, for $h\in H$,
    \begin{equation}\label{eq:two-sided-power-bound}
        \limn P_{n, h}\psi_{n, \theta_0}   = 
            1 - \Phi\left(z_{\alpha/2} -\effinfo^{1/2}\tau\right)  + 1- \Phi\left(z_{\alpha/2} +\effinfo^{1/2}\tau\right).
    \end{equation}
\end{corollary}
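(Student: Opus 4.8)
The plan is to derive Corollary \ref{cor:psi-two-sided} directly from Theorem \ref{thm:psi-pwr-local-alt} by specialising the general noncentral $\chi^2$ power formula to the scalar case $d_\theta = 1$ under the stated hypotheses. First I would invoke Theorem \ref{thm:psi-pwr-local-alt}: under Assumptions \ref{ass:LAN}, \ref{ass:joint-conv} and \ref{ass:consistent}, for any $h = (\tau, b) \in H$ we have $\limn P_{n,h}\psi_{n,\theta_0} = \uppi(\tau)$, where $\uppi(\tau) = 1 - \P(\chi^2_r(\tau\transp\Sigma_{21}\transp V^\dagger\Sigma_{21}\tau) \le c_r)$ when $r \ge 1$ and $\uppi(\tau) = 0$ when $r = 0$. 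The assumption $r = 1$ rules out the degenerate case, so the relevant formula is the first branch with $r = 1$.

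Next I would substitute the two scalar-case simplifications. Since $d_\theta = 1$, the quantities $\tau$, $\Sigma_{21}$ and $V$ are scalars, and the noncentrality parameter $\tau\transp\Sigma_{21}\transp V^\dagger \Sigma_{21}\tau$ equals $\Sigma_{21}^2 V^\dagger \tau^2 = (\Sigma_{21}V^\dagger\Sigma_{21}\transp)\tau^2$ (using that for a scalar $V^\dagger$ is its reciprocal when $V \neq 0$; note $r = 1$ forces $V \neq 0$). By the hypothesis $\Sigma_{21}V^\dagger\Sigma_{21}\transp = \effinfo$, this noncentrality parameter is exactly $\effinfo\,\tau^2$. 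Hence $\limn P_{n,h}\psi_{n,\theta_0} = 1 - \P(\chi^2_1(\effinfo\,\tau^2) \le c_1)$, where $c_1 = z_{\alpha/2}^2$ is the $1-\alpha$ quantile of $\chi^2_1$ (since $\P(\chi^2_1 \le z_{\alpha/2}^2) = \P(|Z| \le z_{\alpha/2}) = 1-\alpha$ for $Z$ standard normal).

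Finally I would rewrite the noncentral $\chi^2_1$ survival function in terms of $\Phi$. A noncentral $\chi^2_1(\lambda)$ variable with $\lambda = \effinfo\,\tau^2$ has the representation $(Z + \effinfo^{1/2}\tau)^2$ for $Z \sim \mc{N}(0,1)$, so $\P(\chi^2_1(\effinfo\,\tau^2) \le z_{\alpha/2}^2) = \P(-z_{\alpha/2} \le Z + \effinfo^{1/2}\tau \le z_{\alpha/2}) = \Phi(z_{\alpha/2} - \effinfo^{1/2}\tau) - \Phi(-z_{\alpha/2} - \effinfo^{1/2}\tau)$. Taking the complement and using $\Phi(-x) = 1 - \Phi(x)$ on the second term gives
\[
    1 - \Phi(z_{\alpha/2} - \effinfo^{1/2}\tau) + 1 - \Phi(z_{\alpha/2} + \effinfo^{1/2}\tau),
\]
which is exactly \eqref{eq:two-sided-power-bound}. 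Since the whole argument is a chain of elementary identities once Theorem \ref{thm:psi-pwr-local-alt} is in hand, there is no serious obstacle; the only point requiring a line of care is the bookkeeping with the pseudo-inverse — confirming that $r = 1$ indeed implies $V \neq 0$ so that $V^\dagger = 1/V$ and the scalar product collapses as claimed, and that $c_1 = z_{\alpha/2}^2$ — but these are routine.
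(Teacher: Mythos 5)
Your proposal is correct and follows essentially the same route as the paper: invoke Theorem \ref{thm:psi-pwr-local-alt} with $r=1$, identify the noncentrality parameter with $\effinfo\tau^2$ via the hypothesis $\Sigma_{21}V^\dagger\Sigma_{21}\transp=\effinfo$, and rewrite $1-\P(\chi^2_1(\effinfo\tau^2)\le z_{\alpha/2}^2)$ as the two-sided normal tail probability. The paper compresses this by writing the limit directly as $\P(Z^2>c_\alpha)$ for $Z\sim\mc{N}(\effinfo^{1/2}\tau,1)$; your version merely spells out the same identity.
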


\subsection{Tests of a multivariate parameter}

When $d_\theta >1$ there is an intermediate case where $0<\rank(\effinfo)<d_\theta$. Here I permit $0<\rank(\effinfo)\le d_\theta$ and establish  a maximin power bound for (potentially) non -- regular models, which contains the (regular) full rank case as a special case.\footnote{Section \ref{sm:ssec:stringent} shows that the most stringent test (in the sense of \citealp{W43}) in the limit experiment has the same power function as the maximin test, and no sequence of asymptotically level $\alpha$ tests can correspond to a test in the limit experiment with smaller regret.}

\begin{theorem}\label{thm:maximin-pwr-bound}
    Suppose that Assumptions \ref{ass:LAN} and \ref{ass:IP} hold and $r \define \rank(\effinfo) \ge 1$. Let $\phi_n:\mc{W}_n \to [0, 1]$ be a sequence of tests
     such that for each $h = (0, b)\in H_{0}$
        \begin{equation*}
            \limsupn P_{n, h} \phi_n \le \alpha
        \end{equation*}
    Let $c_r$ the $1-\alpha$ quantile of a $\chi^2_r$ random variable.
    Then, if $a\ge0$, 
    \begin{equation}\label{thm:maximin-pwr-bound:eq:bound}
        \limsupn \inf \left\{P_{n, h}\phi_n :h = (\tau, b)\in H,\ \tau\transp\effinfo\tau \ge a\right\} \le 1 - \P(\chi_r^2(a) \le c_r).
    \end{equation}    
\end{theorem}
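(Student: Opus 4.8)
The plan is to transport the problem to the Gaussian-shift limit experiment and there solve a finite-dimensional maximin problem. Write $A \define \{h=(\tau,b)\in H:\tau\transp\effinfo\tau\ge a\}$, $A_a \define \{[h]:h\in A\}\subseteq\mb{H}$, and $L \define \limsupn\inf_{h\in A}P_{n,h}\phi_n$, and pass to a subsequence along which $\inf_{h\in A}P_{n,h}\phi_n\to L$. By Proposition \ref{prop:conv-exp}, $\ms{E}_n$ converges weakly to $\ms{E}$; combining this with Assumption \ref{ass:IP} (so that $\mb{H}$, hence its completion $\ms{H}\define\overline{\mb{H}}$, is separable) and the standard fact that along weakly convergent experiments a $[0,1]$-valued test sequence admits a further subsequence whose power functions converge, on any prescribed countable index set, to those of a test of the limit experiment \citep{LC86,vdV98}, I extract a further subsequence and a limit test $\phi$. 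Taking the index set to contain $h=(0,0)$ and a countable subset of $A$ whose image is dense in $A_a$, and passing through Proposition \ref{prop:exp-equiv-shift}, I obtain a test $\tilde{\phi}$ of the Gaussian shift $\ms{G}$ (extended in the usual way to $\ms{H}$), with power function $\beta(x)\define G_x\tilde{\phi}$, such that $\beta(0)\le\alpha$ (the hypothesis and the sub-subsequence give $\beta(0)=\limn P_{n,(0,0)}\phi_n\le\alpha$) while, since $P_{n,h}\phi_n\ge\inf_{h'\in A}P_{n,h'}\phi_n$ for each $h\in A$, $\beta([h])\ge L$ for every $h$ in the chosen countable subset.

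The geometry is then as follows. Because $[0,b]\in\ker\pi_1$, $\Pi^\perp[\tau,b]=\Pi^\perp[\tau,0]=\sum_i\tau_i\,\Pi^\perp[e_i,0]$, so \eqref{eq:effinfo} gives $\|\Pi^\perp[\tau,b]\|_K^2=\tau\transp\effinfo\tau$; hence $\tau\mapsto\Pi^\perp[\tau,0]$ maps $\R^{d_\theta}$ onto $M\define(\ker\pi_1)^\perp$, which therefore has dimension $r=\rank\effinfo$ (equivalently, $\pi_1$ is a continuous surjection onto $\mb{H}_1$, of dimension $r$ by Lemma \ref{lem:ker-effinfo-is-fzero}). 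A short codimension count gives $\overline{\lin}\{[0,b]:b\in B\}=\ker\pi_1$ (it is contained in $\ker\pi_1$ and, using Lemma \ref{lem:ker-effinfo-is-fzero} to see $[\tau,0]\in\overline{\lin}\{[0,b]:b\in B\}$ whenever $\tau\transp\effinfo\tau=0$, both are closed subspaces of $\ms{H}$ of codimension $r$). From this $A_a$ is dense in $\tilde{A}_a\define\{x\in\ms{H}:\|\Pi^\perp x\|_K^2\ge a\}$: given such an $x$, pick $\tau$ with $\Pi^\perp[\tau,0]=\Pi^\perp x$ (so $\tau\transp\effinfo\tau\ge a$) and $b_j\in B$ with $[0,b_j]\to\Pi x-\Pi[\tau,0]$, whence $[\tau,b_j]\in A_a$ and $[\tau,b_j]\to x$. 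Since $\beta$ is continuous on $\ms{H}$ (Gaussian-shift power functions are, indeed real-analytic, in the shift), $\beta\ge L$ on $\overline{A_a}=\tilde{A}_a$; in particular $\beta\ge L$ on the sphere $S_a\define\{m\in M:\|m\|_K^2=a\}\subseteq\tilde{A}_a$ (with $S_a=\{0\}$ when $a=0$), so $L\le\inf_{S_a}\beta$.

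It remains to carry out the classical Gaussian maximin step: any test $\tilde{\phi}$ of $\ms{G}$ with $\beta(0)\le\alpha$ satisfies $\inf_{S_a}\beta\le1-\P(\chi^2_r(a)\le c_r)$. Fix an orthonormal basis $u_1,\dots,u_r$ of the $r$-dimensional space $M$ and let $Y$ collect the coordinates $\Delta u_1,\dots,\Delta u_r$; under $G_x$, $Y\sim\mc{N}(m_Y,I_r)$ where $m_Y$ collects $\IP{\Pi^\perp x}{u_j}_K$, so $\|m_Y\|^2=\|\Pi^\perp x\|_K^2$. Average over $S_a$ against the uniform probability $\sigma$ and consider the Neyman--Pearson test for $G_0$ against $\bar{G}_a\define\int G_m\,d\sigma(m)$: its likelihood ratio depends on the observation only through $Y$ and is a rotation-invariant, strictly increasing function of $\|Y\|$, so its size-$\alpha$ version is $\{\|Y\|^2>c_r\}$ (as $\|Y\|^2\sim\chi^2_r$ under $G_0$), with power $\P(\chi^2_r(a)>c_r)$ at each point of $S_a$. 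Since $\tilde{\phi}$ obeys the size constraint $\beta(0)\le\alpha$, the Neyman--Pearson lemma gives $\int_{S_a}\beta\,d\sigma\le\P(\chi^2_r(a)>c_r)=1-\P(\chi^2_r(a)\le c_r)$, hence $\inf_{S_a}\beta\le1-\P(\chi^2_r(a)\le c_r)$. Combining with the previous paragraph, $L\le1-\P(\chi^2_r(a)\le c_r)$, which is \eqref{thm:maximin-pwr-bound:eq:bound} since the subsequence realised the limit superior.

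The main obstacle is the first paragraph: making rigorous the passage to a limit test and the transport of the two constraints, in particular while allowing a non-separable direction set $B$. Separability of $\mb{H}$ (Assumption \ref{ass:IP}) is exactly what reduces this to countably many perturbation directions, and continuity of the Gaussian-shift power function is what upgrades control on those countably many directions to control on the closed sets $\tilde{A}_a$ and (for the level constraint) the origin. The geometric identities and the Neyman--Pearson step are routine once this reduction is in hand.
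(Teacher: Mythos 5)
Your proof is correct and follows the same overall strategy as the paper's: pass to the Gaussian-shift limit experiment via Proposition \ref{prop:conv-exp} and Proposition \ref{prop:exp-equiv-shift}, reduce the constraint set to the $r$-dimensional subspace $[\ker\pi_1]^\perp$, and bound the maximin power there by the noncentral $\chi^2_r$ expression. The execution differs in two places, both legitimately. First, where the paper extracts a limit test by compactness of $[0,1]^H$ in the product topology (subnets) plus the asymptotic representation theorem, and then argues by contradiction, you extract a subsequence converging on a countable index set and upgrade to the closed sets $\tilde{A}_a$ and $S_a$ via continuity of the shift power function; this is where Assumption \ref{ass:IP} (separability) does the work for you, and your codimension argument showing $\cl\{[0,b]:b\in B\}=\ker\pi_1$ reproduces Lemma \ref{lem:closure-pi1-restriction} from the supplement. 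Second, where the paper cites Strasser's Theorem 30.2 for the bound $\inf_{S_a}G_{[h]}\phi\le 1-\P(\chi^2_r(a)\le c_r)$, you prove it directly by averaging over the uniform measure on the sphere and applying Neyman--Pearson to the resulting mixture, whose likelihood ratio is an increasing function of $\|Y\|$; this makes the argument self-contained at the cost of a little more work. The only cosmetic slip is writing $\Delta u_j$ for the coordinates of the standard Gaussian process on $\overline{\mb{H}}$ (the process $Z$ of Proposition \ref{prop:exp-equiv-shift}) rather than of $\Delta$ itself, which is defined only on $H$; the intended meaning is clear.
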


By Theorem \ref{thm:psi-pwr-local-alt}, the power bound on the right hand side of \eqref{thm:maximin-pwr-bound:eq:bound} is achieved by $\psi_{n, \theta_0}$ provided $\Sigma_{21}V^\dagger \Sigma_{21}\transp = \effinfo$ and $\rank(V) = \rank(\effinfo) = r\ge1$.
In order that the test be asymptotically maximin over a compact subset $K_a$ of $\{h = (\tau, b)\in H: \tau\transp \effinfo \tau \ge a\}$, with $a = \inf\{\tau\transp \effinfo \tau = a: h\in K_a\}$, some uniformity is required.\footnote{  The pseudometric $d$ in Corollary \ref{cor:psi-maximin} 
need not be related to the seminorm $\|\cdot\|_K$.}

\begin{corollary}\label{cor:psi-maximin}
    Suppose that Assumptions \ref{ass:LAN}, \ref{ass:joint-conv} and \ref{ass:consistent}
    hold with $ \Sigma_{ 21} V^{\dagger} \Sigma_{ 21}\transp = \effinfo$ and $r = \rank(\effinfo) = \rank(V)\ge 1$.
    Then for $h = (\tau, b)\in H$
     \begin{equation}\label{cor:psi-maximin:eq-1}
        \limn P_{n,  h}\psi_{n, \theta_0} = 1 - \P\left(\chi^2_r\left(a\right) \le c_r\right), \quad a = \tau\transp \effinfo \tau.
     \end{equation}
     Additionally, suppose that $(H, d)$ is a (pseudo-)metric space and let $K_a$ be a compact subset of $\{h = (\tau, b)\in H: \tau\transp \effinfo \tau \ge a\}$ such that $a = \inf\{\tau\transp \effinfo \tau: h = (\tau, b)\in K_a\}$. If the functions $h\mapsto P_{n,  h}\psi_{n, \theta_0}$ are asymptotically equicontinuous on $K_a$,
     \begin{equation}\label{cor:psi-maximin:eq-2}
        \limn \inf_{h\in K_a} P_{n,  h}\psi_{n, \theta_0} = 1 - \P\left(\chi^2_r\left(a\right) \le c_r\right).
     \end{equation}
\end{corollary}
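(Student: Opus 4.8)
The plan is to deduce \eqref{cor:psi-maximin:eq-1} directly from Theorem \ref{thm:psi-pwr-local-alt} and then to promote it to \eqref{cor:psi-maximin:eq-2} by a uniform‑convergence argument on the compact set $K_a$. For \eqref{cor:psi-maximin:eq-1}: under the maintained hypotheses Assumptions \ref{ass:LAN}, \ref{ass:joint-conv} and \ref{ass:consistent} hold and $r = \rank(V) = \rank(\effinfo) \ge 1$, so the $r\ge 1$ branch of Theorem \ref{thm:psi-pwr-local-alt} gives, for each $h = (\tau, b)\in H$,
\[
    \limn P_{n, h}\psi_{n, \theta_0} \;=\; 1 - \P\!\left(\chi^2_r\!\left(\tau\transp\Sigma_{21}\transp V^\dagger \Sigma_{21}\tau\right)\le c_r\right),
\]
with $c_r$ the $1-\alpha$ quantile of $\chi^2_r$. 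It then remains only to identify the noncentrality parameter: combining the maintained identity $\Sigma_{21}V^\dagger\Sigma_{21}\transp = \effinfo$ with the rank matching $\rank(V) = \rank(\effinfo) = r$ (and the asymptotic orthogonality of $g_n$ to the nuisance scores built into Assumption \ref{ass:joint-conv}, which pins $\Sigma_{21}$ and $V$ down in terms of the efficient score defining $\effinfo$) yields $\tau\transp\Sigma_{21}\transp V^\dagger\Sigma_{21}\tau = \tau\transp\effinfo\tau = a$; substituting gives \eqref{cor:psi-maximin:eq-1}.

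For \eqref{cor:psi-maximin:eq-2}, write $\uppi(\tau) \define 1 - \P(\chi^2_r(\tau\transp\effinfo\tau)\le c_r)$, which is the pointwise limit just obtained (equivalently, the function $\uppi$ of Theorem \ref{thm:psi-pwr-local-alt}, re‑expressed). Under the assumed asymptotic equicontinuity of $h\mapsto P_{n, h}\psi_{n, \theta_0}$ on the compact (hence totally bounded) set $K_a$, Corollary \ref{cor:psi-locally-uniformly-regular} upgrades the pointwise convergence to $\limn\sup_{(\tau, b)\in K_a}\bigl|P_{n, h}\psi_{n, \theta_0} - \uppi(\tau)\bigr| = 0$. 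Since $\bigl|\inf_{K_a}f - \inf_{K_a}g\bigr|\le\sup_{K_a}|f - g|$ for any functions $f,g$ with values in $[0,1]$, it follows that $\limn\inf_{h\in K_a}P_{n, h}\psi_{n, \theta_0} = \inf_{(\tau, b)\in K_a}\uppi(\tau)$.

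Finally I would evaluate $\inf_{(\tau, b)\in K_a}\uppi(\tau)$. The map $x\mapsto 1 - \P(\chi^2_r(x)\le c_r)$ is non‑decreasing, because the noncentral $\chi^2_r$ law is stochastically increasing in its noncentrality argument; and the set $\{\tau\transp\effinfo\tau : (\tau, b)\in K_a\}$ attains its infimum $a$, by compactness of $K_a$ and continuity of $(\tau, b)\mapsto\tau\transp\effinfo\tau$. Hence $\inf_{(\tau, b)\in K_a}\uppi(\tau) = 1 - \P(\chi^2_r(a)\le c_r)$, which is \eqref{cor:psi-maximin:eq-2}; comparing this with the maximin bound \eqref{thm:maximin-pwr-bound:eq:bound} of Theorem \ref{thm:maximin-pwr-bound} shows that $\psi_{n, \theta_0}$ attains it, i.e.\ is asymptotically maximin on $K_a$. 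The one step carrying genuine content is the identification in the first paragraph of the C($\alpha$) noncentrality $\tau\transp\Sigma_{21}\transp V^\dagger\Sigma_{21}\tau$ with the efficient‑information quadratic form $\tau\transp\effinfo\tau$ — this is where the hypothesis $\Sigma_{21}V^\dagger\Sigma_{21}\transp = \effinfo$ together with the rank matching does its work; once that is in hand, the passage to the infimum over $K_a$ is routine, relying only on the equicontinuity hypothesis and the monotonicity of the noncentral $\chi^2_r$ family.
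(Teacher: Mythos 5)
Your proof is correct and follows essentially the same route as the paper's: \eqref{cor:psi-maximin:eq-1} is read off from Theorem \ref{thm:psi-pwr-local-alt} together with the hypothesis $\Sigma_{21}V^\dagger\Sigma_{21}\transp=\effinfo$, and \eqref{cor:psi-maximin:eq-2} comes from upgrading to uniform convergence on $K_a$ via the assumed asymptotic equicontinuity (Corollary \ref{cor:psi-locally-uniformly-regular}); where the paper finishes with a subsequence/contradiction argument, you use the elementary bound $|\inf_{K_a}f_n-\inf_{K_a}f|\le\sup_{K_a}|f_n-f|$, which is interchangeable and arguably cleaner. One small repair: your claim that $\{\tau\transp\effinfo\tau:(\tau,b)\in K_a\}$ attains its infimum rests on $(\tau,b)\mapsto\tau\transp\effinfo\tau$ being $d$-continuous, which is not guaranteed because the paper explicitly allows the pseudometric $d$ to be unrelated to $\|\cdot\|_K$; attainment is not needed, though, since $x\mapsto 1-\P(\chi^2_r(x)\le c_r)$ is continuous and nondecreasing, so $\inf\{\tau\transp\effinfo\tau:(\tau,b)\in K_a\}=a$ already yields $\inf_{(\tau,b)\in K_a}\bigl(1-\P(\chi^2_r(\tau\transp\effinfo\tau)\le c_r)\bigr)=1-\P(\chi^2_r(a)\le c_r)$.
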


A sufficient condition for the asymptotic equicontinuity required for the second part of Corollary \ref{cor:psi-maximin} based on an asymptotic equicontinuity in total variation requirement was given as Lemma \ref{lem:level-alpha-unif-equicontinuity-TV} in the previous section.\footnote{As noted there Lemma \ref{lem:level-alpha-unif-equicontinuity-weak} provides some weaker sufficient conditions; in the present context, condition \ref{lem:level-alpha-unif-equicontinuity-weak.itm.Lambda} of Lemma \ref{lem:level-alpha-unif-equicontinuity-weak} is not required (cf. Remark \ref{rem:level-alpha-unif-equicontinuity-weak-condition-split}).}

\subsection{The degenerate case}

If the efficient information matrix $\effinfo$ is zero, no test with correct asymptotic size has non -- trivial asymptotic power against any sequence of local alternatives. 

\begin{proposition}\label{prop:asy-size-alpha-and-r-0-imply-no-power}
    Suppose Assumptions \ref{ass:LAN} and \ref{ass:IP} hold and $r \define \rank(\effinfoarg{}) = 0$. Let $\phi_n:\mc{W}_n \to [0, 1]$ be a sequence of tests such that $\limsupn P_{n, h} \phi_n \le \alpha$ for each $h = (0, b)\in H_{0}$
    Then, for $h\in H$, $\limsupn P _{n, h} \phi_n \le \alpha$.
\end{proposition}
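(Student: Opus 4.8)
The plan is to show that when $r=\rank(\effinfo)=0$ every ``alternative'' perturbation direction is, asymptotically, indistinguishable from a ``null'' one, so that a test with correct asymptotic size on $H_{0}$ automatically has asymptotic rejection probability at most $\alpha$ everywhere on $H$. The first ingredient is purely geometric: since $\effinfo=0$, Lemma \ref{lem:ker-effinfo-is-fzero} gives $\|\tau\|^2=\tau\transp\effinfo\tau=0$ for every $\tau\in\R^{d_\theta}$, and because $\|\tau\|=\inf_{\beta\in B}\|(\tau,\beta)\|_K$ this means that for each such $\tau$ there is a sequence $(\beta_k)_{k\in\N}\subset B$ with $\|(\tau,\beta_k)\|_K\to 0$. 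Now fix an arbitrary $h=(\tau,b)\in H$; if $\tau=0$ the conclusion is the hypothesis itself, so assume $\tau\neq 0$. Since $B$ is linear (Assumption \ref{ass:IP}), $b_k\define b-\beta_k\in B$, so $h_k\define(0,b_k)\in H_{0}$, and $h-h_k=(\tau,\beta_k)$, whence $\|h-h_k\|_K\to 0$ as $k\to\infty$.

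The second step converts $\|h-h_k\|_K\to 0$ into asymptotic closeness of $P_{n,h}$ and $P_{n,h_k}$ in total variation. Working in the limit experiment $\ms{E}$ of Proposition \ref{prop:conv-exp}, a one-line computation with the Gaussian process $\Delta$ — using that $\Delta g$ is centred Gaussian with variance $\|g\|_K^2$, so that $\int\sqrt{dP_{h_k}\,dP_h}=\E_{P_{0}}\exp\bigl(\tfrac12\Delta(h_k+h)-\tfrac14\|h_k\|_K^2-\tfrac14\|h\|_K^2\bigr)=\exp(-\tfrac18\|h-h_k\|_K^2)$ — gives $d_{TV}(P_{h_k},P_h)\le\varepsilon_k$, where $\varepsilon_k\define\sqrt{\,2\bigl(1-e^{-\|h-h_k\|_K^2/8}\bigr)\,}\to 0$. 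Because $\ms{E}_n$ converges weakly to $\ms{E}$, for each fixed $k$ the two-point subexperiment indexed by $\{h_k,h\}$ converges weakly to $(P_{h_k},P_h)$, and since the minimal sum of testing errors $1-d_{TV}$ is a bounded Bayes risk this yields $\limsupn d_{TV}(P_{n,h_k},P_{n,h})\le d_{TV}(P_{h_k},P_h)\le\varepsilon_k$. (One can also bypass weak convergence of experiments and get the same bound directly from the LAN expansion and Le Cam's third lemma, which give $\log(dP_{n,h}/dP_{n,h_k})\overset{P_{n,h_k}}{\weakconv}\mc{N}(-\tfrac12\|h-h_k\|_K^2,\,\|h-h_k\|_K^2)$.)

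Finally, since $0\le\phi_n\le 1$, for all $n$ and $k$ we have $P_{n,h}\phi_n\le P_{n,h_k}\phi_n+d_{TV}(P_{n,h_k},P_{n,h})$; taking $\limsupn$ and using the hypothesis $\limsupn P_{n,h_k}\phi_n\le\alpha$ (valid since $h_k\in H_{0}$) together with the bound from the previous step gives $\limsupn P_{n,h}\phi_n\le\alpha+\varepsilon_k$ for every $k$, and letting $k\to\infty$ completes the proof. The \emph{main obstacle} is just managing the iterated limit honestly: one must first let $n\to\infty$ with $k$ fixed and only afterwards send $k\to\infty$, which is precisely what the $n$-free majorant $\varepsilon_k$ makes legitimate; once the reduction ``$\|h-h_k\|_K\to 0$ with $h_k\in H_{0}$'' is in place, everything else is routine.
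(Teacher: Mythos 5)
Your proof is correct. It rests on the same core observation as the paper's: when $\effinfo=0$ every $h=(\tau,b)\in H$ is at $\|\cdot\|_{K}$-distance zero from the null set, so a size-$\alpha$ test can have no asymptotic power anywhere. The execution differs in one substantive way. The paper's argument is three lines: it uses the decomposition behind Lemma \ref{lem:ker-effinfo-is-fzero} to conclude $[h]=\Pi[h]\in\ker\pi_1$, asserts the existence of a single $h^*\in H_{0}$ with $\|h-h^*\|_{K}=0$, and then invokes Corollary \ref{cor:h-eqiv-g-no-pwr-diff}, which already packages the implication $\|h-h^*\|_{K}=0\Rightarrow d_{TV}(P_{n,h},P_{n,h^*})\to0$. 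You instead work with an approximating sequence $h_k\in H_{0}$ satisfying only $\|h-h_k\|_{K}\to0$, which forces you to (i) prove a quantitative bound $d_{TV}(P_{h_k},P_h)\le\varepsilon_k$ via the Hellinger affinity in the limit experiment, (ii) transfer it back to $\ms{E}_n$ through the asymptotic-representation direction of weak convergence (or, as you note, directly through LAN and Le Cam's third lemma, which is essentially a quantitative version of the paper's Lemma \ref{lem:seq-w-LR-to-0-no-pwr-diff}), and (iii) manage the iterated $n$-then-$k$ limit. What this buys you is robustness: the paper's one-step version implicitly requires the infimum $\|\tau\|=\inf_{b^*\in B}\|(\tau,b-b^*)\|_{K}=0$ to be \emph{attained} by some $b^*\in B$, i.e.\ that the class $[h]\in\ker\pi_1$ has a representative of the form $(0,b^*)$ with $b^*\in H_{0}$, whereas your argument needs only that the infimum equals zero. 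The price is the extra machinery, but every step of it (the affinity computation $\exp(-\|h-h_k\|_{K}^2/8)$, the bound $|P\phi-Q\phi|\le d_{TV}(P,Q)$ for $0\le\phi\le1$, and the $n$-uniform majorant $\varepsilon_k$ that legitimises the limit interchange) checks out.
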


\subsection{Discussion of the power bounds}

There are a number of important aspects to highlight regarding the interpretation of the power bounds obtained in the preceding subsections.

\paragraph{Optimality in multivariate testing problems}

Just as in the classical finite -- dimensional case, the multivariate optimality results just presented 
should not be taken in an absolute sense. Nevertheless they seem reasonable if the researcher does not have directions against which they wish to direct power a priori. If there are alternatives of particular interest, one could construct a locally regular test by utilising the same moment conditions $g_{n}$ but weighting them differently  \cite[cf.][]{BRS06}.

\paragraph{The intermediate case with $1\le\rank(\effinfo)<d_\theta$} 

A key benefit of the multivariate power results is that they apply equally to non-regular models, i.e. cases where $\effinfo$ is rank deficient. This scenario can occur for various reasons. Firstly the model may not identify all parameters of interest $\theta$ (i.e. underidentification). Secondly some of the elements of $\theta$ may be weakly identified (i.e. weak underidentification). The power results above apply in either of these cases. 

There are a number of other papers which provide inference results in similarly rank deficient settings \citep[e.g.][]{RCBR00, HM19, AG19, ABS23}; none of these papers consider optimal testing.

\paragraph{Alternative approximations}

In the case where $\rank(\effinfo) = 0$, Proposition \ref{prop:asy-size-alpha-and-r-0-imply-no-power} reveals that the LAN approximation in Assumption \ref{ass:LAN} is, in a certain sense, the wrong approximation: it does not provide any useful way of (asymptotically) comparing tests. Other approximations might provide valuable comparisons. Alternative approximations have been explored in, for example, the IV model \cite[e.g.][]{M09} and semiparametric GMM models by \cite{AM22, AM23}. For example, in the IV case \cite{M09} considers alternatives which are at a fixed distance from the true parameter, rather than in a shrinking $\sqrt{n}$-neighbourhood. Whether such an approach can be developed for the class of models considered here is an interesting question for future work. 

\subsection{Attaining the power bounds}\label{ssec:attainment-pwr-bounds}

Provided that the $L_2$ distance between $g_{n}$ and $\effscrarg{n}$ (as defined in \eqref{eq:effscr}) vanishes, $\psi_{n, \theta_0}$ attains the power bounds established in the preceding subsections. For regular models this result is well known in two special cases: (a) the i.i.d. case (cf. Section 25.6 in \cite{vdV98}; Lemma \ref{lem:effscr-coincides-with-usual-defn}) and (b) when the information operator $\mathsf{B}$ in Remark \ref{rem:IP-linear-operator} is positive-definite with $\mathsf{B}_{22}$, the information operator for $\eta$, boundedly invertible \citep{CHS96}. Here I provide a general version of this result which applies to both regular and non-regular models and does not require (a) or (b). In particular, I show that $\Sigma_{21}V^\dagger \Sigma_{21}\transp = \effinfo$, which suffices given Theorem \ref{thm:psi-pwr-local-alt} and the power bounds in Theorems \ref{thm:two-sided-pwr-bound} and \ref{thm:maximin-pwr-bound}.

\begin{theorem}\label{thm:pwr-bounds-effscr}
    Suppose that Assumptions \ref{ass:LAN}, \ref{ass:joint-conv}, \ref{ass:consistent} and \ref{ass:IP} hold and $g_n$ is such that $\limn \int \|g_{n}  - \effscrarg{n}\|^2\dP_n = 0$. Then $\Sigma_{ 21} = V = \effinfo$, hence $\Sigma_{ 21}V^\dagger \Sigma_{21}\transp = \effinfo$.
\end{theorem}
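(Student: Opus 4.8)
The plan is to exploit the defining property of $\effscrarg{n}$ as an orthogonal projection, combined with the $L_2$ convergence $g_n \to \effscrarg{n}$, to identify the three limiting matrices $\Sigma_{21}$, $V$ and $\effinfo$ with one another. First I would record the consequences of the $L_2$ convergence. Since $\|g_n - \effscrarg{n}\|_{L_2(P_n)} \to 0$, any inner product appearing in Assumption \ref{ass:joint-conv} (which is asymptotically continuous in its arguments by Cauchy--Schwarz and the uniform square-integrability built into the $d_2$ convergence) can be computed with $\effscrarg{n}$ in place of $g_n$. Concretely, $V = \limn \IP{g_n}{g_n\transp} = \limn \IP{\effscrarg{n}}{\effscrarg{n}\transp}$ and $\Sigma_{21}\transp$, whose $(\cdot, j)$ column equals $\limn \IP{\Delta_n(e_j,0)}{g_n\transp} = \limn \IP{\dotscrarg{n}e_j}{g_n\transp}$, can likewise be evaluated at $\effscrarg{n}$. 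So it suffices to prove the identity $\Sigma_{21} = V = \effinfo$ for $g_n = \effscrarg{n}$ exactly.

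Next I would use the projection characterisation \eqref{eq:effscr}: $\effscrarg{n} = \Pi[\dotscrarg{n} \mid \{\Delta_n(0,b):b\in B\}^\perp]$, so $\dotscrarg{n} - \effscrarg{n}$ lies in the closed linear span of $\{\Delta_n(0,b):b\in B\}$ and $\effscrarg{n}$ is orthogonal to that span. Write $\effscrarg{n}e_j$ for the $j$-th component. Then
\begin{equation*}
    \IP{\dotscrarg{n}e_i}{\effscrarg{n}e_j} = \IP{\effscrarg{n}e_i}{\effscrarg{n}e_j} + \IP{(\dotscrarg{n} - \effscrarg{n})e_i}{\effscrarg{n}e_j} = \IP{\effscrarg{n}e_i}{\effscrarg{n}e_j},
\end{equation*}
since $(\dotscrarg{n} - \effscrarg{n})e_i$ is in the span and $\effscrarg{n}e_j$ is orthogonal to it. Taking $n\to\infty$ gives $(\Sigma_{21})_{ij} = V_{ij}$, i.e. $\Sigma_{21} = V$ (using that $\Sigma_{21}$ is symmetric in this case, which the above display also shows). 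It remains to identify this common matrix with $\effinfo$.

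For the identification with $\effinfo$ I would pass to the limit experiment and use Lemma \ref{lem:calculation-of-effinfo} (referenced in the footnote to \eqref{eq:effinfo}), which expresses $\effinfo$ via the Gaussian process $\Delta$ of Lemma \ref{lem:IP-GP}. The key is that, in the Hilbert space $\overline{\mb H}$, the efficient score direction corresponds exactly to $\Pi^\perp[e_i,0]$ — the component of the $\theta$-score direction orthogonal to the closure of the nuisance directions $\ker\pi_1$, which is the image of $\{[0,b]:b\in B\}$. Under LAN the map $h\mapsto \Delta_n h$ transports the $L_2(P_n)$ geometry to the $\IP{\cdot}{\cdot}_K$ geometry asymptotically (this is exactly the content of Lemma \ref{lem:IP-GP} together with $\sigma(h) = \IP{h}{h}_K$), so the finite-sample projection defining $\effscrarg{n}$ converges to the projection $\Pi^\perp$ in the limit. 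Hence $\limn\IP{\effscrarg{n}e_i}{\effscrarg{n}e_j} = \IP{\Pi^\perp[e_i,0]}{\Pi^\perp[e_j,0]}_K = \effinfoarg{ij}$, giving $V = \effinfo$ and completing the chain $\Sigma_{21} = V = \effinfo$; the final claim $\Sigma_{21}V^\dagger\Sigma_{21}\transp = \effinfo$ is then immediate from $V V^\dagger V = V$.

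The main obstacle I anticipate is the last step: making rigorous the claim that the finite-sample projection $\Pi[\,\cdot\mid\{\Delta_n(0,b):b\in B\}^\perp]$ in $L_2(P_n)$ "converges" to the projection $\Pi^\perp$ onto $[\ker\pi_1]^\perp$ in $\overline{\mb H}$. One has to be careful that closures are taken consistently — the span of the nuisance scores must be closed in $L_2(P_n)$ before projecting, and on the limit side $\ker\pi_1$ is closed in the completion $\overline{\mb H}$ — and that the quotient by the null space of $\|\cdot\|_K$ does not disturb the orthogonality relations. I expect this is precisely what Lemma \ref{lem:calculation-of-effinfo} is set up to handle, so in the write-up I would lean on it and confine the new work to the elementary projection identity and the Cauchy--Schwarz/uniform-integrability argument transferring everything from $g_n$ to $\effscrarg{n}$.
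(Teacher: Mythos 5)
Your first two steps coincide with the paper's opening moves: the reduction to $g_n=\effscrarg{n}$ via the $L_2$ convergence, and the projection identity $P_n[\dotscrarg{n}\effscrarg{n}\transp]=P_n[\effscrarg{n}\effscrarg{n}\transp]\eqqcolon\effinfoarg{n}$, which gives $\Sigma_{21}=V=\limn\effinfoarg{n}$. The genuine gap is in your third step. Lemma \ref{lem:calculation-of-effinfo} is purely a statement about the limit experiment: it re-expresses $\effinfo$ as the second moment of the projection of $\Delta(e_i,0)$ onto $\{\Delta(0,b):b\in B\}^\perp$ inside the \emph{fixed} space $L_2(\P)$. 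It says nothing about whether the finite-sample projections --- taken in the changing spaces $L_2(P_n)$, onto the changing subspaces $\{\Delta_n(0,b):b\in B\}^\perp$ --- have second moments converging to that limit quantity. That is exactly the claim $\limn\effinfoarg{n}=\effinfo$ you still need, and leaning on Lemma \ref{lem:calculation-of-effinfo} cannot deliver it: pointwise convergence of the covariance kernels $K_n\to K$ does not by itself imply convergence of orthogonal projections onto infinite-dimensional subspaces, and this is precisely the point at which the existing literature had to assume i.i.d.\ data or a boundedly invertible information operator. Your proposed resolution is in this sense circular: the cited lemma presupposes you are already in the limit, whereas the substantive content of the theorem is the passage to that limit.

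The paper closes this gap with a dedicated argument. It transports each $L_2(P_n)$ geometry onto a fixed probability space via a Gaussian-process isomorphism, so that the projection defining $\effscrarg{n}$ becomes a conditional expectation $\E[\mathsf{G}_n(e_j,0)\mid\sigma(\{\mathsf{G}_n(0,b):b\in B\})]$ for a Gaussian process $\mathsf{G}_n$ with kernel $K_n$. It then invokes Theorem \ref{sm:thm:cond-exp-weak-conv}, which establishes weak convergence of these conditional expectations when $K_n\to K$; that theorem in turn rests on a convergence-of-projections result for varying subspaces (Theorem \ref{thm:projection-convergence}), a truncation of the nuisance span to finite-dimensional pieces, a Gaussian tail bound controlling the truncation error uniformly in $n$, and a Skorohod representation. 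Uniform square integrability then upgrades weak convergence to convergence of second moments, and only at that final stage does Lemma \ref{lem:calculation-of-effinfo} enter, to identify the limiting second moment with $\effinfo$. If you want to complete your write-up you would need to reproduce (or cite) something equivalent to that projection-convergence machinery; the elementary Cauchy--Schwarz and projection identities you supply are correct but cover only the easy half of the theorem.
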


\section{The smooth i.i.d. case}\label{ssec:smooth-iid}

In this section I give conditions which  are sufficient for some of the foregoing Assumptions in the in the benchmark case for semiparametric theory: where the observations are i.i.d. and the model is ``smooth''.

\begin{assumption}[Product measures]\label{ass:iid}
    Suppose $W^{(n)} = (W_1, \ldots, W_n)\in \prod_{i=1}^n\mc{W} = \mc{W}_n$
     and that each $P_{n, h}$ is a product measure: $P_{n,h} = P_{h}^n$. Each probability measure in $\mc{P}_n$ is dominated by the $n$-fold product of a $\sigma$-finite measure $\nu$. 
\end{assumption}

In the i.i.d. setting, it is well known that quadratic mean differentiability of the square root of the density $p = \deriv{P}{\nu}$ of $P \define P_0$ is sufficient for LAN. In particular, if
\begin{equation}\label{eq:dqm}
    \limn \int \left[\sqrt{n}\left(\sqrt{p_{h_n}} - \sqrt{p}\right) - \frac{1}{2}A h\sqrt{p}\right]^2 = 0,
\end{equation}
for a measurable $Ah: \mc{W}\to \R$, then with $\Delta_{n}h \define \fracrootn \sumin [Ah](W_i)$ the remainder term $R_{n}$ in the LAN expansion satisfies $R_{n}(h_n)\xrightarrow{P}0$ \cite[e.g.][Lemma 3.10.11]{vdVW96}. This can be used to establish either the LAN expansion required by Assumption \ref{ass:LAN} by taking $h_n = h$ for each $n\in\N$ or the ULAN expansion as in Assumption \ref{ass:ULAN} by considering sequences $h_n\to h$. Sufficient conditions for \eqref{eq:dqm} are well known \cite[e.g.][Lemma 7.6]{vdV98}.

In this case, the scores $Ah$ typically take the form 
\begin{equation}\label{eq:Agamh}
    [Ah](W_i) = \tau\transp \dotscr (W_i) + [Db](W_i), \quad  h = (\tau, b)\in H,
\end{equation}
where $\dotscr$ is a $d_\theta$-vector of functions in $L_2^0(P)$ (typically the partial derivatives of $\theta \mapsto \log p_{\gamma}$ at $\gamma$) and $D:\cllin B \to L_2^0(P)$ a bounded linear map. Showing that \eqref{eq:dqm} holds (with $h_n =h$) is typically the most straightforward way to verify the LAN expansion required by Assumption \ref{ass:LAN}. If $A: \cllin H \to L_2(P)$ is a bounded linear map, then the remainder of Assumption \ref{ass:LAN} also follows directly.\footnote{A version of Lemma \ref{lem:iid-LAN-DQM} for ULAN (Assumption \ref{ass:ULAN}) is Lemma \ref{lem:iid-ULAN-DQM} in the supplementary material.}

\begin{lemma}\label{lem:iid-LAN-DQM}
    Suppose that Assumption \ref{ass:iid} holds and for each $h\in H$ equation \eqref{eq:dqm} holds (with $h_n=  h$) with $A: \cllin H \to L_2(P)$ a bounded linear map. Then Assumption \ref{ass:LAN} holds with $P_{n, h} = P_{h/\sqrt{n}}^n$ and 
    $[\Delta_{n} h](W^{(n)}) = \G_n A h$.
\end{lemma}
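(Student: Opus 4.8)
The plan is to verify each component of Assumption \ref{ass:LAN} in turn, leveraging the i.i.d.\ product structure and the quadratic mean differentiability hypothesis. First I would record the standard consequence of \eqref{eq:dqm}: by \citet[Lemma 3.10.11]{vdVW96} (or the i.i.d.\ second-order expansion of the log-likelihood ratio, e.g.\ \citet[Theorem 7.2]{vdV98}), the product-measure log-likelihood ratio $L_n(h) = \log(p_{n,h}/p_{n,0})$ with $p_{n,h} = \prod_{i=1}^n p_{h/\sqrt n}(W_i)$ admits the expansion
\begin{equation*}
    L_n(h) = \frac{1}{\sqrt n}\sum_{i=1}^n [Ah](W_i) - \frac{1}{2} P\big([Ah]^2\big) + R_n(h), \qquad R_n(h)\xrightarrow{P} 0,
\end{equation*}
where $P([Ah]^2) = \|Ah\|_{L_2(P)}^2$. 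Setting $[\Delta_n h](W^{(n)}) = \mathbb{G}_n Ah = \frac{1}{\sqrt n}\sum_{i=1}^n [Ah](W_i)$ (noting $Ah\in L_2^0(P)$ because $Ah$ is a score, so the empirical sum is a centered, normalized partial sum), we have $\|\Delta_n h\|_{L_2(P_n)}^2 = P([Ah]^2)$ for every $n$, so the quadratic term is exactly $\tfrac12\|\Delta_n h\|^2$ and the LAN expansion \eqref{eq:LAN} holds with the stated remainder.

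Next I would check that $\Delta_n:\cllin H \to L_2^0(P_n)$ is a bounded linear map: linearity is inherited from linearity of $A$ and of $h\mapsto \mathbb{G}_n(Ah)$, and boundedness follows from $\|\Delta_n h\|_{L_2(P_n)}^2 = \|Ah\|_{L_2(P)}^2 \le \|A\|^2 \|h\|^2$, using the assumed boundedness of $A:\cllin H\to L_2(P)$; the bound is uniform in $n$. The final requirement is the Mallows-2 ($d_2$) convergence of the law of $\Delta_n h$ to $\mathcal N(0,\sigma(h))$ with $\sigma(h) = \|Ah\|_{L_2(P)}^2$. Here I would invoke the characterization quoted in the paper immediately after Assumption \ref{ass:LAN}: $d_2$-convergence is equivalent to weak convergence together with uniform square-integrability of $(\Delta_n h)_{n\in\N}$. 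Weak convergence $\mathbb{G}_n(Ah) \weakconv \mathcal N(0, \|Ah\|^2)$ is the classical CLT for i.i.d.\ sums, valid since $Ah\in L_2^0(P)$. Uniform square-integrability of the normalized partial sums $\mathbb{G}_n(Ah)$ is a standard fact for i.i.d.\ summands with finite second moment (the sequence of second moments is constant and equal to $\|Ah\|^2$, and one may apply, e.g., the Lindeberg-type argument or the uniform integrability criterion for triangular arrays); I would cite this or the reference given in the paper (\citet[Appendix A.6]{BKRW98}).

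The main obstacle — really the only non-bookkeeping point — is ensuring that the mixing of ``$h_n = h$ fixed'' with the DQM statement \eqref{eq:dqm} correctly produces the remainder control $R_n(h)\xrightarrow{P} 0$ for each fixed $h$, and that $Ah$ genuinely lies in $L_2^0(P)$ so that the partial-sum object $\mathbb{G}_n(Ah)$ is well-defined and centered. The former is exactly the content of the cited lemma once one notes that with $h_n\equiv h$ the perturbation direction is constant; the latter follows because scores obtained from DQM are automatically mean-zero in $L_2(P)$ (a standard consequence of \eqref{eq:dqm}, e.g.\ \citet[Proposition 7.4]{vdV98}), so no extra hypothesis is needed. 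Everything else is a matter of assembling these standard i.i.d.\ facts; I would keep the write-up short, citing \citet{vdVW96} and \citet{vdV98} for the expansion and \citet{BKRW98} for the $d_2$/uniform-integrability equivalence.
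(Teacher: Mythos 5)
Your proposal is correct and follows essentially the same route as the paper's (much terser) proof: both obtain $R_n(h)\xrightarrow{P}0$ and $Ah\in L_2^0(P)$ from \eqref{eq:dqm} via Lemma 3.10.11 of \cite{vdVW96}, then get weak convergence of $\G_n Ah$ from the i.i.d.\ CLT and uniform square-integrability from the product structure (constant second moments plus weak convergence), which together give the required $d_2$-convergence. Your additional bookkeeping (linearity/boundedness of $\Delta_n$ from that of $A$, and the identity $\|\Delta_n h\|^2 = P([Ah]^2)$ matching the quadratic term) is accurate and simply makes explicit what the paper leaves implicit.
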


When the data are i.i.d., the the joint convergence of $(\Delta_{n}h, g_{n}\transp)$ as in Assumption \ref{ass:joint-conv} is particularly straightforward to verify. As noted in the discussion around \eqref{eq:orth-proj-g}, the required orthogonality condition can be ensured by performing an orthogonal projection. Assumption \ref{ass:joint-conv} then follows straightforwardly. 
In the i.i.d. setting typically $g_{n}$ will have the form $g_{n}(W^{(n)}) 
 = \G_n g$.

\begin{lemma}\label{lem:iid-orthocomp-joint-conv}
    Suppose that Assumptions \ref{ass:LAN}  and \ref{ass:iid} hold, with $\Delta_{n} h  = \fracrootn\sumin Ah$, where $A h$ is as in equation \eqref{eq:Agamh}. Additionally suppose that 
    $g \in \left\{Db : b\in B\right\}^\perp\subset L_2^0(P)$. Then Assumption \ref{ass:joint-conv} holds with $g_{n}(W^{(n)})\define \G_n g$. 
\end{lemma}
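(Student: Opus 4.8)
\textbf{Proof plan for Lemma \ref{lem:iid-orthocomp-joint-conv}.}

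The plan is to verify Assumption \ref{ass:joint-conv} directly by a multivariate CLT argument, exploiting the i.i.d. product structure. First I would observe that, by Lemma \ref{lem:iid-LAN-DQM} and the hypothesis, both coordinates of the pair $(\Delta_n h, g_n\transp)\transp$ are normalised sums of i.i.d. mean-zero terms: $\Delta_n h = \G_n(Ah)$ with $Ah = \tau\transp\dotscr + Db \in L_2^0(P)$, and $g_n = \G_n g$ with $g\in L_2^0(P)^{d_\theta}$. Stacking, $(\Delta_n h, g_n\transp)\transp = \G_n\big((Ah, g\transp)\transp\big)$ is a $\G_n$ of a fixed (i.e. $n$-free) square-integrable vector-valued function, so the Lindeberg–Lévy CLT applies and delivers asymptotic normality with mean zero and covariance equal to the $P$-covariance matrix of $(Ah, g\transp)\transp$, namely
\begin{equation*}
    \begin{bmatrix}
        \|Ah\|_{L_2(P)}^2 & P[(Ah) g\transp]\\
        P[g (Ah)\transp] & P[g g\transp]
    \end{bmatrix}.
\end{equation*}

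The next step is to identify this covariance matrix with the $\Sigma(h)$ prescribed by Assumption \ref{ass:joint-conv}. The $(1,1)$ block is $\|Ah\|^2 = \sigma(h)$ by Assumption \ref{ass:LAN} (and the $d_2$-convergence note following it), and in the i.i.d. case $\|\Delta_n h\|_{L_2(P_n)}^2 = \|Ah\|_{L_2(P)}^2$ for every $n$, so this is immediate. For the off-diagonal block, linearity of $A$ gives $Ah = \tau\transp\dotscr + Db$, so $P[(Ah) g\transp] = \tau\transp P[\dotscr g\transp] + P[(Db) g\transp]$; the second term vanishes because $g\in\{Db': b'\in B\}^\perp$ and $Db$ lies in that span, leaving $P[(Ah)g\transp] = \tau\transp P[\dotscr g\transp] =: \tau\transp\Sigma_{21}\transp$, which is exactly the required form (linear in $\tau$, independent of $b$). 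The $(2,2)$ block is $P[g g\transp] =: V$, again $n$-free. Finally, the displayed ``$=\lim_n$'' characterisation of $\Sigma(h)$ in Assumption \ref{ass:joint-conv} holds trivially since each inner product $\IP{\Delta_n(\tau,0)}{g_n\transp}$, $\IP{g_n}{g_n\transp}$ equals its $n$-free limit for every $n$ (the i.i.d. structure makes these constant in $n$), and $\Delta_n(\tau,0) = \G_n(\tau\transp\dotscr)$.

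I do not expect a serious obstacle here; the only point requiring care is the bookkeeping that the off-diagonal entry is genuinely linear in $\tau$ and free of $b$ — this is where the orthogonality hypothesis $g\in\{Db:b\in B\}^\perp$ is used, and it must be invoked to kill the $P[(Db)g\transp]$ term. One should also note in passing that asymptotic normality via the CLT automatically comes with uniform square integrability (indeed the summands are i.i.d. and square integrable), consistent with the interpretation of Assumption \ref{ass:LAN}; but this is not separately needed for Assumption \ref{ass:joint-conv} as stated, which only asks for weak convergence to the stated Gaussian law.
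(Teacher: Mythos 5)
Your proposal is correct and follows essentially the same route as the paper: zero mean from the i.i.d. product structure, computation of the ($n$-free) covariance matrix in which the orthogonality hypothesis $g\in\{Db:b\in B\}^\perp$ kills the $P[(Db)g\transp]$ cross-term, and then the multivariate Lindeberg--L\'{e}vy CLT for the joint weak convergence. The paper's proof is just a terser version of the same argument.
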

\begin{corollary}\label{cor:iid-orth-projection-joint-conv}
    In the setting of Lemma \ref{lem:iid-orthocomp-joint-conv}, if $f\in L_2^0(P)$ and $g$ is the orthogonal projection $g = \Pi[f | \{Db : b\in B\}^\perp ]$
    then Assumption \ref{ass:joint-conv} holds with $g_{n}(W^{(n)})\define \G_n g$.
\end{corollary}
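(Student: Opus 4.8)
The plan is to reduce the Corollary directly to Lemma \ref{lem:iid-orthocomp-joint-conv}; the only substantive point is to check that the projected function $g$ lands in the class of moment functions covered by that Lemma, after which the conclusion is a one-line citation.

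First I would record the elementary fact that, although the set $M \define \{Db : b\in B\}$ need not be a closed subspace of $L_2^0(P)$, its orthogonal complement $M^\perp = \{Db : b\in B\}^\perp$ is always a closed linear subspace of the Hilbert space $L_2^0(P)$. Consequently the orthogonal projection $\Pi[\,\cdot\,\mid M^\perp]$ is a well-defined (bounded, linear, self-adjoint, idempotent) operator on all of $L_2^0(P)$, and for any $f\in L_2^0(P)$ the image $g = \Pi[f\mid M^\perp]$ satisfies $g\in M^\perp$ by construction and, in particular, $Pg = 0$ because $M^\perp\subset L_2^0(P)$.

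With this in hand the hypotheses of Lemma \ref{lem:iid-orthocomp-joint-conv} are met: by assumption we are in ``the setting of Lemma \ref{lem:iid-orthocomp-joint-conv}'', so Assumptions \ref{ass:LAN} and \ref{ass:iid} hold and $\Delta_{n} h = \fracrootn\sumin Ah$ with $Ah$ of the form \eqref{eq:Agamh}, and we have just shown that $g\in\{Db : b\in B\}^\perp\subset L_2^0(P)$. Applying Lemma \ref{lem:iid-orthocomp-joint-conv} to this particular $g$ then yields Assumption \ref{ass:joint-conv} with $g_{n}(W^{(n)}) = \G_n g$, which is exactly the assertion of the Corollary.

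I do not anticipate any real obstacle here — the statement is an immediate corollary of the preceding Lemma. The only points worth spelling out are (i) that $M^\perp$ is closed even when $M = \{Db:b\in B\}$ is not, so that the projection operator is defined on the whole of $L_2^0(P)$ and its output is unambiguous; and (ii) that the projection is taken within $L_2^0(P)$ rather than within $L_2(P)$, which is what guarantees $Pg=0$ and hence matches the requirement $g\in\{Db:b\in B\}^\perp\subset L_2^0(P)$ imposed in Lemma \ref{lem:iid-orthocomp-joint-conv}. Beyond these bookkeeping remarks there is nothing further to verify.
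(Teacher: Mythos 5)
Your proposal is correct and follows the same route as the paper: the paper's entire proof is the observation that $g\in\{Db:b\in B\}^\perp$ by construction of the orthogonal projection, followed by an appeal to Lemma \ref{lem:iid-orthocomp-joint-conv}. Your additional remarks — that $\{Db:b\in B\}^\perp$ is a closed subspace even when $\{Db:b\in B\}$ is not, and that the projection of a mean-zero $f$ remains mean zero — are accurate bookkeeping that the paper leaves implicit.
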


\section{Examples}\label{sec:examples}

I now illustrate the application of the theoretical results to the single index and IV models and conduct simulation studies to investigate finite sample performance of the proposed approach. 
In this section I work under high level conditions to avoid repeating standard regularity conditions; lower level sufficient conditions are given in section \ref{sm:sec:examples-detail} of the supplementary material.

\subsection{Single index model}\label{ssec:sim}

Consider the single index model of Example \ref{ex:SIM-running-example}. I now formalise the development given in Section \ref{sec:heuristic}.
 The model parameters are $\gamma = (\theta, \eta)$ where $\eta= (f, \zeta)$ and the density of one observation with respect to a $\sigma$-finite measure $\tilde{\nu}$ is $p_{\gamma}$ as in \eqref{eq:SIM-running-example-dens}; $P_{\gamma}$ denotes the corresponding probability measure.
The parameters $\gamma$ are restricted by the following Asssumption. Let $\ms{X}$ be the support of $X$, $\ms{D}$ a convex open set containing $\{x_1 + x_2\transp \theta: \theta\in \Theta,x\in \ms{X}\}$ and $C_b^1(\ms{D})$ the class of real functions which are bounded and continuously differentiable with bounded derivative on $\ms{D}$.
\begin{assumption}\label{ass:sim-mdl}
    The parameters $\gamma = (\theta, f, \zeta)\in \Gamma = \Theta\times \ms{F} \times \ms{Z}$ where $\Theta$ is an open subset of $\R^{d_\theta}$, $\ms{F} = C_b^1(\ms{D})$ and  $\zeta\in \ms{Z}$, for 
    \begin{equation*}
                \ms{Z}\define \left\{\zeta\in L_1(\R^{1+K}, \nu): \zeta\ge 0,\, \int_{\R\times \ms{X}}\zeta\dnu = 1,\,
                  \text{if } (\epsilon, X)\sim \zeta \text { then } \eqref{ass:sim-parameters:eq:sim-moments} \text{ holds}\right\},
        \end{equation*}
        with $L_1(A, \nu)$ is the space of $\nu$ -- integrable functions on $A$ and 
        \begin{equation}\label{ass:sim-parameters:eq:sim-moments}
            \E[\epsilon|X] = 0, 
            \  \E[\epsilon^2] < \infty,\,
            \  \E[(|\epsilon|^{2+\rho} + |\phi(\epsilon, X)|^{2+\rho} + 1)\|X\|^{2+\rho}] < \infty,\,
            \  \E[XX\transp] \succ 0,
        \end{equation}
        for $\phi(\epsilon, X)$ the derivative of $e\mapsto \log \zeta(e, X)$. 
    Additionally, for each $\gamma\in \Gamma$, $p_{\gamma}$ is a probability density with respect to some $\sigma$-finite measure $\tilde{\nu}$. 
\end{assumption}
That $p_{\gamma}$ is a valid probability density holds automatically (with $\tilde{\nu} = \nu$) when $\epsilon|X$ is continuously distributed.

\paragraph*{Local Asymptotic Normality} 
Consider local perturbations $P_{\gamma + \varphi_n(h)}$ for 
\begin{equation}\label{eq:sim-varphin}
    \varphi_n(h) = \left(\frac{\tau}{\sqrt{n}},\ \varphi_{n, 2}(b_1, b_2)\right), \qquad h = (\tau, b_1, b_2)\in H = \R^{d_\theta}\times B_{1}\times B_{2}.
\end{equation}
$B_{1}$ is the set which indexes the perturbations to $f$ and consists of a subset of the continuously differentiable functions $b_1:\ms{D}\to\R$.
$B_{2}$ indexes the perturbations to $\zeta$ and consists of a subset of the functions $b_2:\R^{1+K}\to \R$ which are continuously differentiable in their first argument and satisfy
\begin{equation}
    \label{eq:sim-b2-conditions}
    \E[b_2(\epsilon, X)] = 0,\  \E[\epsilon b_2(\epsilon, X)|X] = 0,\  \E[b_2(\epsilon, X)^2] < \infty \quad \text{ for } (\epsilon, X)\sim \zeta.
\end{equation} 
The precise form of $\varphi_{n, 2}$ is left unspecified. It is required only that the local perturbations satisfy the LAN property below.\footnote{Examples of $\varphi_{n, 2}$ and $B$ for which Assumption \ref{ass:sim-LAN} holds are given in Section \ref{app:ssec:sim-LAN}. }

\begin{assumption}\label{ass:sim-LAN}
    Suppose that $\mc{W}_n = \prod_{i=1}^n \R^{1+K}$ and 
    $P_{n,h}\define P_{\gamma + \varphi_n(h)}^n \ll \nu_n$ for all $\gamma\in \Gamma$ and $h\in H$ and are such that Assumption \ref{ass:LAN} holds with 
    \begin{equation}
        \log \frac{p_{n, h}}{p_{n,  0}} = \fracrootn\sumin [Ah](W_i) - \frac{1}{2}\sigma(h) + o_{P_{n, 0}}(1), \quad h\in H_,
    \end{equation}
    where $\sigma(h) = \int [Ah]^2\darg{P}$ and $A$ is as in equation \eqref{eq:Agamh} with 
    \begin{align*}
        \dot{\ell}(W) &\define -\phi(Y - f(V_\theta), X) f'(V_\theta)X_2\\
        [D b](W) &\define -\phi(Y - f(V_\theta), X)b_{1}(V_{\theta}) + b_2(Y - f(V_\theta), X).
    \end{align*}
\end{assumption}

\paragraph*{The moment conditions}\label{para:sim:g}
The test statistic is based on $g_{n}\define \G_n g$ for $g$ given in \eqref{ex:SIM-running-example:eq:g}. This satisfies Assumption \ref{ass:joint-conv} under  Assumptions \ref{ass:sim-mdl} \& \ref{ass:sim-LAN} and \eqref{eq:sim:eps-phi--1} below.

\begin{proposition}\label{prop:sim-joint-conv}
    Suppose Assumptions \ref{ass:sim-mdl} \& \ref{ass:sim-LAN} hold and under $P$,
    \begin{equation}\label{eq:sim:eps-phi--1}
        \E[\epsilon^2|X] \le C < \infty,\quad \E\left[\epsilon\phi(\epsilon, X)| X\right] = -1, \quad \text{a.s.}~.
    \end{equation}    
    Then Assumption \ref{ass:joint-conv} holds with $g_{n}\define \G_n g$ for $g$ given in \eqref{ex:SIM-running-example:eq:g}. 
\end{proposition}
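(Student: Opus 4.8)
The plan is to reduce the statement to Lemma~\ref{lem:iid-orthocomp-joint-conv}. In this i.i.d., smooth setting Assumptions~\ref{ass:LAN} and~\ref{ass:iid} hold, and by Assumption~\ref{ass:sim-LAN} we have $\Delta_n h = \frac{1}{\sqrt n}\sum_{i=1}^n [Ah](W_i)$ with $[Ah]$ of the form \eqref{eq:Agamh} for the maps $\dot{\ell}$ and $Db$ displayed there. Lemma~\ref{lem:iid-orthocomp-joint-conv} then delivers Assumption~\ref{ass:joint-conv} for $g_n = \G_n g$ as soon as (each component of) the single-observation function $g$ of \eqref{ex:SIM-running-example:eq:g} lies in $\{Db : b\in B\}^{\perp}\subset L_2^0(P)$. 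So the whole proof reduces to two checks on $g$: (i) $g\in L_2^0(P)$; and (ii) $\E[g\,Db] = 0$ for every $b = (b_1, b_2)\in B$.

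For (i), write $g(W) = \epsilon\,a(X)$ with $\epsilon = Y - f(V_\theta)$, $m(V_\theta)\define \E[\upomega(X)X_2\mid V_\theta]/\E[\upomega(X)\mid V_\theta]$ and $a(X)\define \upomega(X)f'(V_\theta)(X_2 - m(V_\theta))$, a function of $X$ alone. Then $\E[g\mid X] = a(X)\,\E[\epsilon\mid X] = 0$ by Assumption~\ref{ass:sim-mdl}, so $\E[g]=0$. Square-integrability follows from $\E[g^2] = \E\bigl[a(X)^2\,\E[\epsilon^2\mid X]\bigr] \le C\,\E[a(X)^2]$ using the first part of \eqref{eq:sim:eps-phi--1}, together with $\E[a(X)^2]<\infty$: indeed $\upomega$ takes values in $[\underline\upomega,\overline\upomega]\subset(0,\infty)$, $f'$ is bounded since $f\in C_b^1(\ms{D})$, $\E[\|X_2\|^2]<\infty$ by \eqref{ass:sim-parameters:eq:sim-moments}, and $|m(V_\theta)|\le(\overline\upomega/\underline\upomega)\,\E[\|X_2\|\mid V_\theta]$, whose square is integrable by the conditional Jensen inequality.

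For (ii), decompose $Db = -\phi(\epsilon,X)b_1(V_\theta) + b_2(\epsilon,X)$. Since $g = \epsilon\,a(X)$ with $a$ a function of $X$, conditioning on $X$ gives $\E[g\,b_2(\epsilon,X)] = \E\bigl[a(X)\,\E[\epsilon\,b_2(\epsilon,X)\mid X]\bigr] = 0$ by the second identity in \eqref{eq:sim-b2-conditions}, and $\E\bigl[-g\,\phi(\epsilon,X)b_1(V_\theta)\bigr] = -\E\bigl[a(X)b_1(V_\theta)\,\E[\epsilon\,\phi(\epsilon,X)\mid X]\bigr] = \E[a(X)b_1(V_\theta)]$ by the second identity in \eqref{eq:sim:eps-phi--1}. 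Conditioning the last expectation on $V_\theta$ and using $\E[\upomega(X)(X_2 - m(V_\theta))\mid V_\theta] = \E[\upomega(X)X_2\mid V_\theta] - m(V_\theta)\E[\upomega(X)\mid V_\theta] = 0$ — which is just the definition of $m$ — yields $\E[a(X)b_1(V_\theta)] = \E\bigl[f'(V_\theta)b_1(V_\theta)\cdot 0\bigr] = 0$. Hence $\E[g\,Db]=0$ for every $b\in B$; with (i) this is exactly the hypothesis of Lemma~\ref{lem:iid-orthocomp-joint-conv}, which gives the proposition.

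I do not expect a genuine obstacle here: the argument is a short chain of iterated-expectation identities, and each assumption is used once for its designated purpose — the moment bounds \eqref{ass:sim-parameters:eq:sim-moments} and $\E[\epsilon^2\mid X]\le C$ for integrability, the identities \eqref{eq:sim-b2-conditions} for orthogonality to perturbations of $\zeta$, and $\E[\epsilon\phi(\epsilon,X)\mid X] = -1$ together with the conditional-mean weighting built into $g$ for orthogonality to perturbations of $f$. The only place that calls for a little care is the square-integrability bookkeeping in (i), i.e.\ checking $\E[a(X)^2]<\infty$, where boundedness of $\upomega$ and $f'$ must be combined with the moment condition on $X$ to dominate the conditional-mean correction $m(V_\theta)$.
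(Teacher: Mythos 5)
Your proposal is correct and follows essentially the same route as the paper's own proof: verify that $g\in L_2^0(P)$ and that $g$ is orthogonal to $\{Db : b\in B\}$ using the conditional identities $\E[\epsilon b_2(\epsilon,X)\mid X]=0$, $\E[\epsilon\phi(\epsilon,X)\mid X]=-1$ and the conditional-mean weighting in $g$, then invoke Lemma~\ref{lem:iid-orthocomp-joint-conv}. You have simply written out the iterated-expectation and integrability steps that the paper leaves as "easy to verify."
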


\paragraph*{A feasible test}\label{para:sim:ghat}

To form a feasible test $g_n$ must be replaced by an estimator $\hat{g}_{n, \theta}$. Let this have the form $\hat{g}_{n, \theta}(W^{(n)}) \define \fracrootn\sumin \hat{g}_{n, \theta, i}$, for $\hat{g}_{n, \theta, i}$ defined as in \eqref{ex:SIM-running-example:eq:ghati}. 
To keep the notation concise let $Z_{3}\define f$, $Z_4\define f'$,
$Z_0 \define Z_{1} / Z_2$ and correspondingly $\hat{Z}_{0,n, i}\define \hat{Z}_{1, n, i} / \hat{Z}_{2, n, i}$. 
Let $\check{V}_{n, \theta}\define \meanin \hat{g}_{n, \theta, i}\hat{g}_{n, \theta, i}\transp$ and 
If $V$ is known to have full rank then let $\hat{V}_{n, \theta} \define \check{V}_{n, \theta}$, $\hat{\Lambda}_{n, \theta} \define \hat{V}_{n, \theta}^{-1}$ and $\hat{r}_{n, \theta} = \rank(V)$. 
Form the estimator $\hat{V}_{n, \theta}$ according to the construction in Section S5 of \cite{LM21-S} using a truncation rate $\upnu_n$. $\hat{\Lambda}_{n, \theta}$ is then taken to be $\hat{V}_{n, \theta}^\dagger$ and $\hat{r}_{n, \theta}\define \rank(\hat{V}_{n, \theta})$.
Under the following condition, these estimators satisfy the conditions of Assumption \ref{ass:consistent}.
\begin{assumption}\label{ass:sim-estimation-HL}
    Suppose that equation 
    \eqref{eq:sim:eps-phi--1} holds (under $P$), $X$ has compact support, 
    $\E[\epsilon^4]<\infty$, and with $P$ probability approaching one $\mathsf{R}_{l, n, i} \le r_n = o(n^{-1/4})$, 
    \begin{equation*}
        \mathsf{R}_{l, n, i} \define \left[\int \left\|\hat{Z}_{l, n, i}(v) - Z_l(v)\right\|^2 \darg{\mc{V}(v)}\right]^{1/2}, \quad l = 1,\ldots 4,
    \end{equation*}
    where $\mc{V}$ is the law of $V_{\theta}$ under $P$ and where $\hat{Z}_{l, n, i}(V_{\theta, i})$ is $\sigma(\{V_{\theta, i}\} \,\cup\, \mc{C}_{n, j})$ measurable with $j=1$ if $i >  \lfloor n/2\rfloor$ and 2 otherwise, with $\mc{C}_{n, 1}\define \{W_j : j\in \{1, \ldots, \lfloor n/2\rfloor\}\}$ and $\mc{C}_{n, 2}\define \{W_j : j\in \{\lfloor n/2\rfloor + 1, \ldots, n\}\}$.
\end{assumption}

The rate conditions in Assumption \ref{ass:sim-estimation-HL} can be satisfied by  e.g. (sample -- split) series estimators under standard conditions; see e.g. \cite{BCCK15}.

\begin{proposition}\label{prop:sim-estimation-HL}
    Suppose Assumptions \ref{ass:sim-mdl}, \ref{ass:sim-LAN} and \ref{ass:sim-estimation-HL} hold and $\upnu_n$ is such that $r_n = o(\upnu_n)$.
    Then Assumption \ref{ass:consistent} holds with $V\define \int gg\transp\darg{P}$.
\end{proposition}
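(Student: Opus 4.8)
\textbf{Proof proposal for Proposition \ref{prop:sim-estimation-HL}.}

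The plan is to verify the three parts of Assumption \ref{ass:consistent} in turn, with the bulk of the work residing in part \ref{ass:consistent:itm:ghat}. First I would establish that $\hat{g}_{n,\theta} - g_n \xrightarrow{P_n} 0$, where $g_n = \G_n g$ with $g$ as in \eqref{ex:SIM-running-example:eq:g} and $\hat{g}_{n,\theta} = \fracrootn\sumin \hat{g}_{n,\theta,i}$ with $\hat{g}_{n,\theta,i}$ as in \eqref{ex:SIM-running-example:eq:ghati}. Writing $g(W_i) = \upomega(X_i)\epsilon_{f,\theta,i}\, f'(V_{\theta,i})(X_{2,i} - Z_0(V_{\theta,i}))$ and $\hat{g}_{n,\theta,i}$ as the analogous expression with $f, f', Z_0$ replaced by the sample-split estimators $\hat{Z}_{3,n,i}, \hat{Z}_{4,n,i}, \hat{Z}_{0,n,i}$, I would expand the difference $\fracrootn\sumin(\hat g_{n,\theta,i} - g(W_i))$ into a sum of terms each of which involves exactly one estimation error $\hat Z_{l,n,i} - Z_l$. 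The key device is the sample-splitting: conditional on the fold $\mc{C}_{n,j}$ used to form $\hat Z_{l,n,i}$, the summands with $i$ in the complementary fold are i.i.d.\ with the estimator held fixed. For the ``linear'' terms (those linear in a single estimation error) I would show the conditional mean is zero or negligible using the orthogonality structure of $g$ — in particular that $g$ lies in $\{Db: b\in B\}^\perp$, which is exactly the property that kills the first-order bias from estimating $f$ and the conditional expectation $Z_0$ — and then bound the conditional variance by (a constant times) $\E \mathsf{R}_{l,n,i}^2 = o(n^{-1/2})$, invoking $\E[\epsilon^2|X]\le C$, compact support of $X$, and boundedness of $\upomega$; Chebyshev (conditionally, then integrating out) gives $o_P(1)$. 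The genuinely bilinear/quadratic remainder terms (e.g.\ products like $(\widehat{f'} - f')(\hat Z_0 - Z_0)$, or $(\hat f - f)$ entering through $\epsilon_{\hat f,\theta} = \epsilon_{f,\theta} - (\hat f - f)(V_\theta)$) are handled by crude bounds: $\sqrt n \cdot O_P(r_n^2) = \sqrt n\cdot o_P(n^{-1/2}) = o_P(1)$, which is precisely why the rate $r_n = o(n^{-1/4})$ is imposed.

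Next, for part \ref{ass:consistent:itm:lambdahat} and \ref{ass:consistent:itm:rank}, I would first show $\check V_{n,\theta} = \meanin \hat g_{n,\theta,i}\hat g_{n,\theta,i}\transp \xrightarrow{P_n} V = \int g g\transp\, dP$. This follows by a similar expansion: $\meanin g(W_i)g(W_i)\transp \to V$ by the weak law (using $\E\|g\|^2 <\infty$, which needs $\E[\epsilon^4]<\infty$ together with compact support and boundedness of $\upomega$ to control the $\epsilon^2 \|X_2 - Z_0\|^2$ factor), and the difference $\check V_{n,\theta} - \meanin g(W_i)g(W_i)\transp$ is $o_P(1)$ by the same sample-split-plus-rate argument applied at the level of products (here only $r_n = o(1)$ is needed since no $\sqrt n$ scaling appears). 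If $V$ is known to be full rank, then $\hat\Lambda_{n,\theta} = \check V_{n,\theta}^{-1} \xrightarrow{P_n} V^{-1} = V^\dagger$ by continuity of matrix inversion on the invertibles, and $\hat r_{n,\theta} = \rank(V) = d_\theta = r$ trivially, so \ref{ass:consistent:itm:lambdahat} and \ref{ass:consistent:itm:rank} hold. If $V$ may be rank-deficient, I would invoke the eigenvalue-thresholding construction of Section S5 of \cite{LM21-S}: given $\hat V_{n,\theta} \xrightarrow{P_n} V$ and a truncation rate $\upnu_n$ with (here) $r_n = o(\upnu_n)$ controlling the estimation error relative to the threshold, that reference shows the regularised pseudo-inverse $\hat\Lambda_{n,\theta} = \hat V_{n,\theta}^\dagger$ is consistent for $V^\dagger$ and $\hat r_{n,\theta} = \rank(\hat V_{n,\theta})$ is consistent for $r = \rank(V)$ (and, when $r=0$, $\rank(\hat\Lambda_{n,\theta})\xrightarrow{P_n}0$), which is \ref{ass:consistent:itm:lambdahat}–\ref{ass:consistent:itm:rank}. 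I would cite that result rather than reprove it.

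The main obstacle is the bias control in part \ref{ass:consistent:itm:ghat}: showing that after centering, the linear-in-estimation-error terms vanish at rate $o_P(1)$ \emph{after} multiplication by $\sqrt n$. Naively each such term is $O_P(\mathsf{R}_{l,n,i}) = O_P(r_n)$ per observation, and $\sqrt n \cdot r_n$ need not vanish; the point is that the \emph{conditional mean} of these terms is zero (for the $b_2$-type score direction, because $\E[\epsilon|X]=0$ and the estimator is independent of the own observation after splitting) or is itself of order $r_n^2$ (for the $Z_0$ and $f$ directions, by a second-order Taylor/orthogonality argument exploiting $g\in\{Db\}^\perp$ together with \eqref{eq:sim:eps-phi--1}), so that only the centered part survives, and it is $o_P(n^{-1/2})$ in conditional variance. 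Making the ``$g\perp\{Db:b\in B\}$ kills the first-order bias'' heuristic rigorous — i.e.\ identifying exactly which products of $(\hat Z_l - Z_l)$ against which components of $g$ integrate to zero versus to $O(r_n^2)$, and checking that the required conditional expectations (e.g.\ $\E[\epsilon\phi(\epsilon,X)|X] = -1$) are available from the stated assumptions — is the delicate bookkeeping that the proof must carry out; everything else is routine law-of-large-numbers and continuous-mapping work.
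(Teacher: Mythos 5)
Your proposal is correct and follows essentially the same route as the paper's proof: the paper decomposes $\fracrootn\sumin(\hat g_{n,\theta,i}-g(W_i))$ into three terms linear in a single estimation error (killed via the conditional orthogonality facts $\E[\upomega(X)(X_2-Z_0(V_\theta))\mid V_\theta]=0$ and $\E[\epsilon\mid X]=0$ together with a conditional second-moment bound, exploiting the sample split exactly as you describe) plus two bilinear terms bounded by $\mathsf{R}_{l,n,i}\mathsf{R}_{l',n,i}\lesssim r_n^2=o(n^{-1/2})$, and then establishes $\|\check V_{n,\theta}-V\|=O_{P}(r_n+n^{-1/2})=o_{P}(\upnu_n)$ before invoking the thresholding result of \cite{LM21-S}. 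The only cosmetic difference is that the paper packages the conditional-moment arguments into a separate lemma (Lemma \ref{lem:SIM-markov-conditional}) rather than arguing inline.
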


A consequence of Assumption \ref{ass:sim-LAN} and Propositions \ref{prop:sim-joint-conv} and \ref{prop:sim-estimation-HL} is that the test $\psi_{n, \theta_0}$ formed as in \eqref{eq:psi-test} is locally regular by Theorem \ref{thm:psi-pwr-local-alt}.

\paragraph*{Simulation study}

I take $K=1$ and test $H_0: \theta=\theta_0 = 1$ at a nominal level of 5\%. Each study reports the results of 5000 monte carlo replications with a sample size of $n\in \{400, 600, 800\}$. 
I report empirical rejection frequencies for the $\psi_{n, \theta_0}$ test along with a Wald test based on an estimator in the style of \cite{I93}. 

I consider two different classes of link function. The first sets $f(v) =f_j(v) = 5\exp (-v^2 / 2c_j^2)$ (``exponential''); the second $f(v) =f_j(v) = 25\left(1 + \exp(-v  /c_j)\right)^{-1}$ (``logistic'').  The values of $c_j$ considered are recorded in Table \ref{tbl:SIM-index-fcns}.

\begin{table*}[!htbp]
	\footnotesize
	\begin{center}
		\caption{\label{tbl:SIM-index-fcns} Functions used in the simulation experiments}
		\begin{threeparttable}
			
\begin{tabular}{llrrr}
\toprule
name & expression & $c_1$ & $c_2$ & $c_3$\\
\midrule
Exponential & $f_j(v) = 5 \exp (-v^2 / 2c_j^2)$ & 1 & 2 & 4\\
Logistic & $f_j(v) = 25(1 + \exp (- v / c_j))^{-1}$ & 1 & 4 & 32\\
\bottomrule
\end{tabular}

		\end{threeparttable}
	\end{center}
\end{table*}

In each case, as $c_j$ increases, the derivative of $f$ flattens out, moving towards a point with $f'=0$, at which $\theta$ is unidentified.\footnote{The functions $f$ and $f'$ are plotted in  Figures \ref{fig:SIM-gaussian-fs} and \ref{fig:SIM-logistic-fs}.} 
I draw covariates as $X= (Z_1, 0.2 Z_1 + 0.4 Z_2 + 0.8)$, where each $Z_k\sim U(-1, 1)$ is independent. The error term is drawn either as $\epsilon = \upsilon / \sqrt{3/2}$ with $\upsilon \sim t(6)$ (``homoskedastic'') or $\epsilon \sim \mc{N}(0,  1 + \sin(X_1)^2)$ (``heteroskedastic'').

I compute the test $\psi_{n, \theta_0}$ as described on p. \pageref{para:sim:ghat}, with $\upomega(X) = 1$. The functions $f,\, f'$ and $Z_{1}$ are estimated via sample split smoothing splines.\footnote{I use the base \textsf{R} function 
\texttt{smooth.spline} with 20 knots. In this setting $Z_2(V_\theta) = 1$ is known.} The truncation parameter $\upnu$ is set to $10^{-3}$. I additionally compute a Wald test in the style of \cite{I93}, using the same non-parametric estimates as for $\hat{g}_{n, \theta}$.\footnote{Given $\hat{f}$, $\hat\theta = \argmin_{\theta\in \Theta_\star} \meanin (Y_i - \hat{f}(V_{\theta, i}))^2$, for $\Theta_\star = [-10, 10]$. The asymptotic variance is estimated by $\hat{\sigma}^2 / \meanin \left(\widehat{f'}(V_{\hat\theta, i}) \left[X_2 - \hat{Z}(V_{\hat\theta, i})\right]\right)^2$, for $\hat{\sigma}^2 = \meanin (Y_i - \hat{f}(V_{\hat\theta, i}))^2$.}

The empirical rejection frequencies of these procedures are recorded in Table \ref{tbl:SIM-size}: $\psi_{n, \theta_0}$ rejects at close to the nominal 5\%  for all simulation designs considered whilst the Wald test over -- rejects in most of the simulation designs considered.
Figure \ref{fig:SIM-power} contains power plots of the $\psi_{n, \theta_0}$ test ($n=800$). For almost flat link functions there is very identifying information and hence very little power available. As the link function moves away from the point of identification failure ($f'=0$), the available power increases and is captured by the $\psi_{n, \theta_0}$ test.

\begin{table*}[!htbp]
	\setlength{\tabcolsep}{5pt}
	\footnotesize
	\begin{center}
		\caption{\label{tbl:SIM-size} \small ERF (\%), Single-index model}
		\begin{threeparttable}
			
\begin{tabular}{rrrrrrrrrrrrr}
\toprule
\multicolumn{1}{c}{} & \multicolumn{6}{c}{Exponential} & \multicolumn{6}{c}{Logistic} \\
\cmidrule(l{3pt}r{3pt}){2-7} \cmidrule(l{3pt}r{3pt}){8-13}
\multicolumn{1}{c}{} & \multicolumn{3}{c}{Homoskedastic} & \multicolumn{3}{c}{Heteroskedastic} & \multicolumn{3}{c}{Homoskedastic} & \multicolumn{3}{c}{Heteroskedastic} \\
\cmidrule(l{3pt}r{3pt}){2-4} \cmidrule(l{3pt}r{3pt}){5-7} \cmidrule(l{3pt}r{3pt}){8-10} \cmidrule(l{3pt}r{3pt}){11-13}
$n$ & $f_1$ & $f_2$ & $f_3$ & $f_1$ & $f_2$ & $f_3$ & $f_1$ & $f_2$ & $f_3$ & $f_1$ & $f_2$ & $f_3$\\
\midrule\multicolumn{7}{l}{$\psi_{n, \theta_0}$}\\
\midrule
400 & 6.08 & 5.90 & 5.30 & 5.66 & 5.68 & 5.12 & 6.36 & 5.94 & 4.80 & 5.66 & 6.00 & 4.68\\
600 & 6.12 & 5.68 & 5.26 & 5.44 & 4.76 & 4.32 & 6.04 & 5.74 & 4.10 & 5.26 & 5.18 & 4.62\\
800 & 6.20 & 6.04 & 5.28 & 5.46 & 5.72 & 5.22 & 6.00 & 5.90 & 4.26 & 5.62 & 5.16 & 4.46\\
\midrule\multicolumn{7}{l}{Wald}\\\midrule
\addlinespace
400 & 13.06 & 18.94 & 13.54 & 15.20 & 20.32 & 14.38 & 8.12 & 13.60 & 14.74 & 8.24 & 15.52 & 14.92\\
600 & 10.28 & 16.32 & 12.58 & 10.60 & 18.92 & 14.18 & 7.18 & 10.74 & 13.88 & 6.82 & 12.60 & 14.84\\
800 & 10.30 & 16.64 & 12.84 & 9.60 & 19.00 & 13.62 & 6.86 & 10.68 & 12.94 & 6.74 & 11.26 & 14.18\\
\bottomrule
\end{tabular}

		\end{threeparttable}
	\end{center}
\end{table*}

\begin{figure}[!htbp]
		\centering
		\caption{\label{fig:SIM-power} Single-index model, $\psi$ power}
		\begin{subfigure}[b]{0.495\textwidth}
			\centering
			\includegraphics[width=0.99\textwidth]{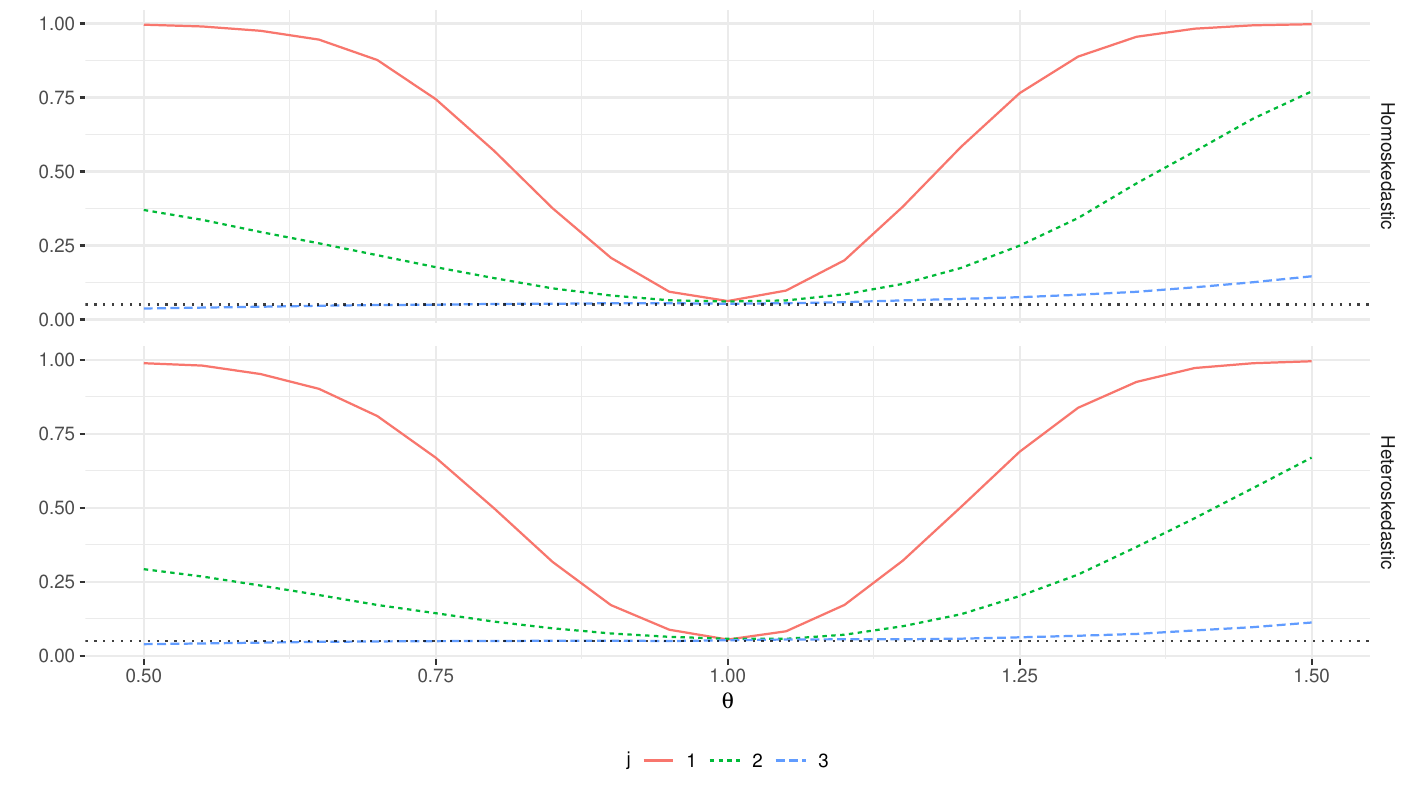}
			\caption[]%
			{{\small Exponential}}
		\end{subfigure}
		\hfill
		\begin{subfigure}[b]{0.495\textwidth}
			\centering
			\includegraphics[width=0.99\textwidth]{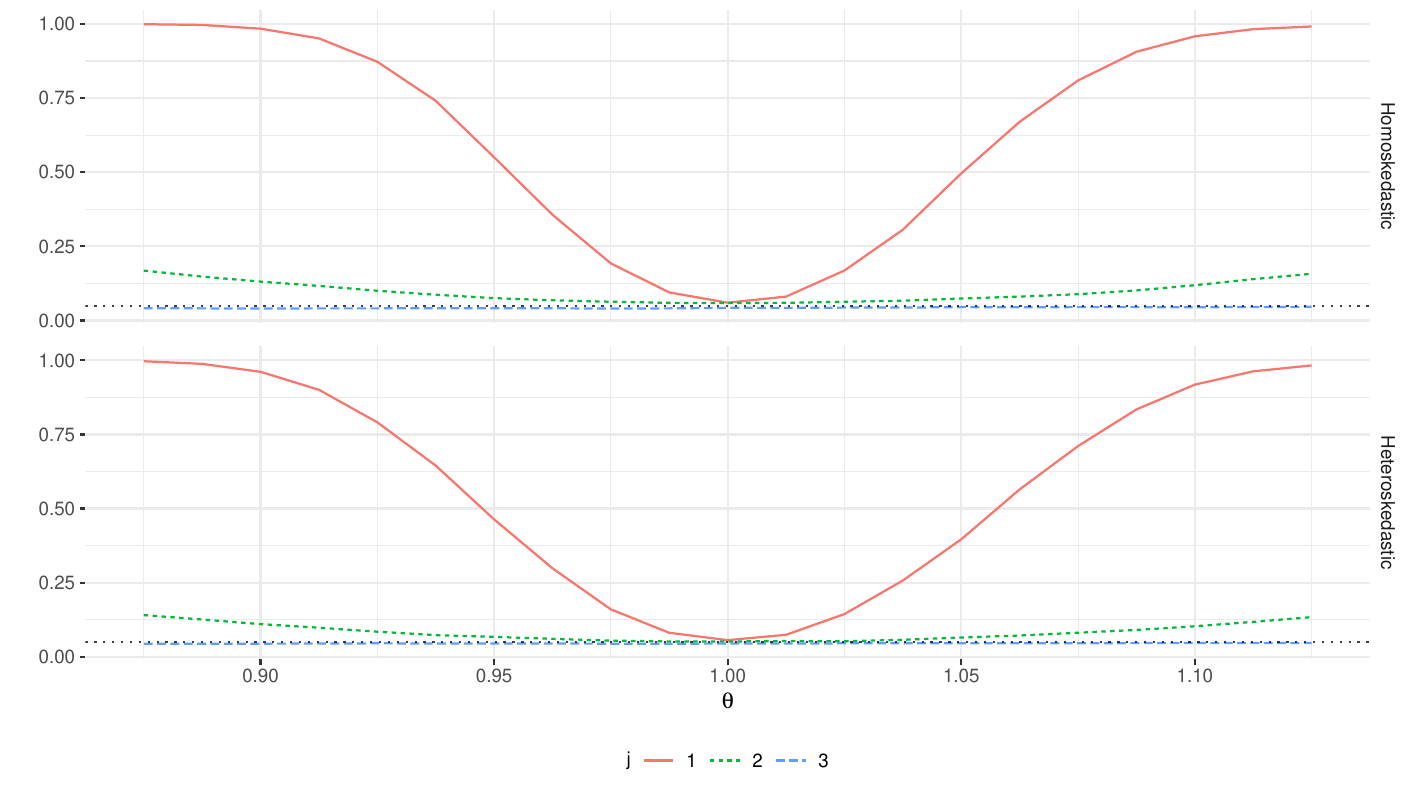}
			\caption[]%
			{{\small Logistic}}
		\end{subfigure}
\end{figure}

\subsection{IV model}\label{ssec:iv} 

In the IV model of Example \ref{ex:IV},  $n$ i.i.d. copies of $W = (Y, X, Z)$ are observed where 
\begin{equation}\label{eq:iv-mdl-0}
    Y = X\transp \theta  +Z_1\transp\beta + \epsilon, \qquad \E[\epsilon |Z] = 0, \qquad Z = (Z_1\transp, Z_2\transp)\transp.
\end{equation}
Let $d_W\define d_{\theta} + d_Z + 1$.
With $\pi(Z)\define \E[X|Z]$ and $\upsilon = X - \pi(Z)$,
\begin{equation}\label{eq:iv-mdl}
    \begin{aligned}
        Y &= X\transp \theta + Z_1\transp\beta   + \epsilon\\
        X &= \pi(Z) + \upsilon
    \end{aligned},\qquad   \E[U|Z] = 0, \quad   U \define (\epsilon, \upsilon\transp)\transp.
    \end{equation}
If $\pi(Z)$ is constant the instruments $Z$ provide no information about $\theta$. 
Weak identification in this model can be very different from in the IV model with a linear first stage: there are many data configurations in which $\Var(\E[X|Z])$ may be ``large'' whilst $\E[XZ\transp]\E[ZZ\transp]^{-1}\approx 0$.
In such situations, tests which can exploit such non-linear identifying information can provide substantially more power than tests which (implicitly) use a linear first stage. In this section I develop a $\psi_{n, \theta_0}$ test which can capture such identifying information whilst remaining robust to weak identification.\footnote{This does not contradict optimality results that are known for, e.g., the AR test \cite[][]{M09, CHJ09} as these results assume a linear first stage. 
}$^,$ \footnote{An alternative approach to capturing this non-linear identifying information (whilst remaining robust to weak instruments) is to use a large number of transformations of the instruments, $f_1(Z), \ldots, f_M(Z)$, in a linear first stage, combined with a testing procedure which remains robust in the presence of many weak instruments. In the simulation study below, I compare the $\psi_{n, \theta_0}$ test to this approach, using the test of \cite{MS21}.}

Let $\zeta$ denote the density of $\xi \define (\epsilon, \upsilon\transp, Z\transp)$ with respect to a $\sigma$-finite measure $\nu$.
The parameters of the IV model are $\gamma = (\theta, \eta)$ with the nuisance parameters collected in $\eta = (\beta, \pi, \zeta)$. The density of one observation is 
\begin{equation}\label{eq:iv-dens}
    p_{\gamma}(W) = \zeta(Y - X\transp\theta - Z_1\transp \beta, X - \pi(Z), Z),
\end{equation}
with respect to a $\sigma$-finite measure $\tilde{\nu}$ and $P_{\gamma}$ denotes the corresponding measure. The model parameters are restricted as follows.

\begin{assumption}\label{ass:iv-mdl}
    The parameters $\gamma = (\theta, \beta, \pi, \zeta)\in \Gamma = \Theta \times \mc{B}\times \ms{P}\times \ms{Z}$ where
    \begin{enumerate}
        \item $\Theta$ is an open subset of $\R^{d_\theta}$ and $\mc{B}$ is an open subset of $\R^{d_\beta}$;
        \item $\ms{Z}$ is a subset of the set of density functions on $\R^{d_W}$ with respect to $\nu$;
        \item For $(\pi, \zeta)\in \ms{P}\times \ms{Z}$, if $\xi\define (U\transp, Z\transp)\transp$, then
        \begin{equation*}
            \E[U | Z] = 0,\qquad \E\|\xi\|^4 < \infty,\qquad  \E\|\pi(Z)\|^4 < \infty,\qquad \E\|\phi(\xi)\|^4 < \infty,
        \end{equation*}
        where $\phi_1 \define \nabla_{\epsilon}\log \zeta(\epsilon, \upsilon, Z)$, $\phi_2\define \nabla_{\upsilon}\log \zeta(\epsilon, \upsilon, Z)$ and $\phi \define(\phi_1, \phi_2\transp)\transp$.        
    \end{enumerate}
    Additionally,  $p_{\gamma}$ is a probability density for each $\gamma\in \Gamma$ with respect to a $\sigma$-finite measure $\tilde{\nu}$.
\end{assumption}

Assumption \ref{ass:iv-mdl} imposes the existence of certain moments and the (IV) conditional mean restriction. That $p_{\gamma}$ is a valid probability density holds automatically (with $\nu = \tilde{\nu}$) when $U|Z$ is continuously distributed.

\paragraph*{Local Asymptotic Normality} 
Consider local perturbations $P_{\gamma + \varphi_n(h)}$ for 
\begin{equation}\label{eq:iv-varphi}
    \varphi_n(h)\define \left(\frac{\tau}{\sqrt{n}},\,  \frac{b_0}{\sqrt{n}}, \varphi_{n, 1}(b_1),\, \varphi_{n, 2}(b_2)\right)~, \quad h = (\tau, b)\in H\define \R^{d_\theta}\times B,
\end{equation}
with $ B \define \R^{d_\beta} \times B_{1}\times B_{2}$. $B_{1}$ is a subset of the bounded functions $b_1:\R^{d_Z}\to \R^{d_\theta}$ and $B_{2}$ a subset of the functions $b_2:\R^{d_W}\to \R$ which are bounded and continuously differentiable in its first $1 + d_\theta$ components with bounded derivative and such that
\begin{equation}\label{eq:iv-b2-conditions}
    \E\left[b_2(U, Z)\right] = 0, \quad \E\left[U b_2(U, Z) | Z\right] = 0, \qquad \text{ for } (U\transp, Z\transp)\transp  \sim \zeta.
\end{equation}
The precise forms of $\varphi_{n, 1}, \varphi_{n, 2}$ are left unspecified. It is required only that the local perturbations satisfy LAN.\footnote{Examples of $\varphi_{n, 1}, \varphi_{n, 2}$ and $B$ for which Assumption \ref{ass:iv-LAN} holds are given in Section \ref{app:ssec:iv-LAN}. }

\begin{assumption}\label{ass:iv-LAN}
    Suppose that $\mc{W}_n = \prod_{i=1}^n \R^{d_W}$, 
    $P_{n, h}\define P_{\gamma + \varphi_n(h)}^n \ll \nu_n$ for all $\gamma\in \Gamma$ and $h\in H$ 
    and are such that Assumption \ref{ass:LAN} holds with
    \begin{equation}
        \log \frac{p_{n, h}}{p_{n,  0}} = \fracrootn\sumin [Ah](W_i) - \frac{1}{2}\sigma(h) + o_{P_{n, 0}}(1), \quad h\in H,
    \end{equation}
    where $\sigma(h) = \int [Ah]^2\darg{P}$ and $A$ is as in equation \eqref{eq:Agamh} with 
    \begin{align*}
        \dotscr(W)&\define -\phi_1(\epsilon(\theta, \beta), \upsilon(\pi),Z) X_1 \\
        [Db](W) &\define -\phi(\epsilon(\theta, \beta),\upsilon(\pi),Z)\transp \left[\begin{smallmatrix}
            b_0\transp Z_1 & b_1(Z)
        \end{smallmatrix}\right]
     + b_2(\epsilon(\theta, \beta), \upsilon(\pi), Z),
    \end{align*} 
    where $\epsilon(\theta, \beta)\define Y - X\transp\theta - Z_1\transp\beta$ and $\upsilon(\pi)\define X - \pi(Z)$. 
\end{assumption}

\paragraph*{The moment conditions}
The test will be based on moment conditions related to the efficient score function for $\theta$, $\effscr$. This is given in the following Lemma.\footnote{The last two conditions in \eqref{eq:iv-EphiU} hold if $\lim_{|u_i|\to\infty} |u_i|\zeta(u, z)=0$ for $i=1, 
\ldots, d_\alpha$. 
}

\begin{lemma}\label{lem:iv-effscr}
    Suppose Assumptions \ref{ass:iv-mdl}, \ref{ass:iv-LAN} hold, for  $J(Z)\define \E[UU\transp|Z]$,
    \begin{equation}\label{eq:iv-EphiU}
        \begin{aligned}
        0<c \le \lambda_{\min}(J(Z))\le 
        \lambda_{\max}(J(Z))
        \le C < \infty, &\qquad \lambda_{\min}(\E[Z_1Z_1\transp] )> 0,\\
        \E\left[\phi(\epsilon,\upsilon, Z)U\transp \middle| Z\right] = -I,&\qquad  \E[\phi_1(\epsilon, \upsilon, Z)\upsilon U\transp]= 0,
    \end{aligned}
    \end{equation}
    and that $B_1$ is dense in $L_2$. Define $\upomega(Z)\define \E[\epsilon^2|Z]^{-1}$. The efficient score for $\theta$ is 
    \begin{equation}\label{eq:iv-effscr}
        \effscr(W) = \upomega(Z)(Y - X\transp\theta - Z_1\transp\beta)\left[\pi(Z) - \E[\upomega(Z)XZ_1\transp]\E[\upomega(Z)Z_1Z_1\transp]^{-1}Z_1\right].
    \end{equation}
\end{lemma}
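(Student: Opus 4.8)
The efficient score for $\theta$ is, by definition, $\effscr = \dotscr - \Pi[\dotscr \mid \mc{T}]$, the residual left after projecting the score $\dotscr$ for $\theta$ onto the closed linear span $\mc{T} \define \cllin\{Db : b\in B\}$ of the nuisance scores ($\dotscr$, $Db$ as in Assumption \ref{ass:iv-LAN}). The plan is to exhibit the right-hand side of \eqref{eq:iv-effscr} as a candidate $\effscr^{\ast}$ and verify the two properties that characterise this projection residual within $L_2^0(P)$: that $\effscr^{\ast}\perp\mc{T}$ and that $\dotscr - \effscr^{\ast}\in\mc{T}$. Uniqueness of the orthogonal decomposition $\dotscr = \Pi[\dotscr\mid\mc{T}] + \Pi[\dotscr\mid\mc{T}^{\perp}]$ then forces $\effscr = \effscr^{\ast}$.

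First I would pin down $\mc{T}$. Splitting $b = (b_0, b_1, b_2)$, write $\mc{T} = \mc{T}_\beta + \mc{T}_\pi + \mc{T}_\zeta$ with $\mc{T}_\beta = \{-\phi_1 b_0\transp Z_1 : b_0\in\R^{d_\beta}\}$, $\mc{T}_\pi = \cllin\{-\phi_2\transp b_1(Z) : b_1\in B_1\}$ and $\mc{T}_\zeta = \cllin\{b_2 : \eqref{eq:iv-b2-conditions}\text{ holds}\}$. The structural fact driving the argument is that, because of the conditional-mean restriction in \eqref{eq:iv-b2-conditions}, $\mc{T}_\zeta = \{g\in L_2^0(P) : \E[Ug\mid Z]=0\}$ (here one needs $B_2$ rich enough, e.g. dense in this set), which is closed and has orthogonal complement within $L_2^0(P)$ equal to $\cllin\{U\transp m(Z)\}$. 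Since $\effscr\perp\mc{T}_\zeta$, it must be of the form $\effscr(W) = m_0(Z)\epsilon + M_1(Z)\upsilon$. Using $\E[\phi U\transp\mid Z] = -I$ from \eqref{eq:iv-EphiU}, $\E[\effscr\phi_2\transp\mid Z] = -M_1(Z)$, so orthogonality of $\effscr$ to $\mc{T}_\pi$ reads $\E[M_1(Z)b_1(Z)]=0$ for all $b_1\in B_1$; since $B_1$ is $L_2$-dense this forces $M_1\equiv 0$, which is why the efficient score depends on $\epsilon$ alone. With $\effscr = m_0(Z)\epsilon$, orthogonality to $\mc{T}_\zeta$ is automatic ($\E[\epsilon b_2\mid Z]=0$) and orthogonality to $\mc{T}_\beta$ reduces, via $\E[\epsilon\phi_1\mid Z] = -1$, to $\E[m_0(Z)Z_1\transp]=0$.

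It then remains to impose $\dotscr - \effscr\in\mc{T}$ and solve for $m_0$, which I would do componentwise. For coordinate $k$, seek $(b_0^{(k)}, c_k, b_2^{(k)})$ with $(\dotscr - \effscr)_k = -\phi_1 (b_0^{(k)})\transp Z_1 - \phi_2\transp c_k(Z) + b_2^{(k)}$ and $\E[Ub_2^{(k)}\mid Z]=0$. Substituting the expression for $\dotscr$ and $X = \pi(Z)+\upsilon$, and using $\E[U\phi_1\mid Z] = -e_1$, $\E[\phi_1\upsilon U\transp\mid Z]=0$ (which follow from \eqref{eq:iv-EphiU} and the footnoted tail regularity together with $\E[\upsilon\mid Z]=0$), $\E[U\phi_2\transp\mid Z]=(0,-I_{d_\theta})\transp$ and $\E[U\epsilon\mid Z]=J(Z)e_1$, the constraint $\E[Ub_2^{(k)}\mid Z]=0$ separates into a $\upsilon$-block, which merely determines $c_k$ and is always solvable because $J$ is uniformly bounded and invertible by \eqref{eq:iv-EphiU}, and the scalar $\epsilon$-equation $\pi_k(Z) - m_{0,k}(Z)\E[\epsilon^2\mid Z] - (b_0^{(k)})\transp Z_1 = 0$, i.e. $m_{0,k}(Z) = \upomega(Z)\bigl(\pi_k(Z) - (b_0^{(k)})\transp Z_1\bigr)$. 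Combining this with $\E[m_{0,k}(Z)Z_1]=0$ gives $b_0^{(k)} = \E[\upomega(Z)Z_1Z_1\transp]^{-1}\E[\upomega(Z)\pi_k(Z)Z_1]$, hence $m_0(Z) = \upomega(Z)\bigl(\pi(Z) - \E[\upomega(Z)\pi(Z)Z_1\transp]\E[\upomega(Z)Z_1Z_1\transp]^{-1}Z_1\bigr)$; finally $\E[\upomega(Z)\pi(Z)Z_1\transp] = \E[\upomega(Z)XZ_1\transp]$ since $\E[\upsilon\mid Z]=0$, and $\epsilon = Y - X\transp\theta - Z_1\transp\beta$, which is exactly \eqref{eq:iv-effscr}.

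The main obstacle is the representation step $\dotscr - \effscr\in\mc{T}$: it rests on having the precise description of $\mc{T}$ — in particular that $\mc{T}_\zeta$ is the full conditional-mean-orthogonal subspace (so that the $b_2^{(k)}$ constructed above actually lies in it) and that $B_1$ is $L_2$-dense — and on the block bookkeeping with $\phi = (\phi_1,\phi_2\transp)\transp$ that lets the constraint $\E[Ub_2^{(k)}\mid Z]=0$ be satisfied simultaneously with the normalisation coming from orthogonality to $\mc{T}_\beta$. By contrast, verifying $\effscr^{\ast}\perp\mc{T}$ is routine once the moment identities in \eqref{eq:iv-EphiU} are available; and the only mildly surprising feature of the answer — that $\effscr$ involves $\epsilon$ but not $\upsilon$ — is a direct consequence of $B_1$ being dense, which allows the $\pi$-direction to absorb every $\upsilon$-component.
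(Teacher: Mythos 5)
Your proposal is correct and reaches the right formula, but it travels a genuinely different route from the paper. You characterise $\effscr$ as the unique $g\perp\mc{T}$ with $\dotscr-g\in\mc{T}$, parametrise $\mc{T}_\zeta^{\perp}$ as $\cllin\{U\transp m(Z)\}$, kill the $\upsilon$-component via denseness of $B_1$, and then solve the resulting functional equations for $m_0$ and $b_0^{(k)}$. The paper instead \emph{computes} the projection in three explicit stages: it first projects the joint $(\theta,\beta)$-score and the $\pi$-scores onto $\mc{T}_2^{\perp}$ via the conditional-expectation formula $g\mapsto\E[gU\transp\mid Z]J(Z)^{-1}U$ (giving $[\pi(Z)\transp,Z_1\transp]\transp E_1$ and $b_1(Z)\transp E_2$ with $E=J(Z)^{-1}U$), then projects out the $E_2$-span using denseness of $B_1$ and the partitioned-inverse identity $E_1-Q_{12}Q_{22}^{-1}E_2=\E[\epsilon^2\mid Z]^{-1}\epsilon$, and finally partials out $\beta$ by the block formula $\tilde\ell=\tilde l_1-\E[\tilde l_1\tilde l_2\transp]\E[\tilde l_2\tilde l_2\transp]^{-1}\tilde l_2$ (your constraint $\E[m_0(Z)Z_1\transp]=0$ plays exactly this role). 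Your route buys transparency — it makes plain that the efficient score is a function of $\epsilon$ alone precisely because $B_1$-denseness absorbs every $\upsilon$-direction — whereas the paper's iterated-projection route buys economy: it never has to exhibit $\dotscr-\effscr$ as an explicit element of $\mc{T}$, so it avoids your $c_k(Z)$ term, which involves $\E[\phi_1\upsilon\upsilon_k\mid Z]$ and whose square-integrability (needed to place $\phi_2\transp c_k$ in $\cl\{D_1b_1\}$) is not an immediate consequence of the stated fourth-moment conditions. Two shared caveats, which you flag and the paper leaves implicit: both arguments upgrade the unconditional restriction $\E[\phi_1\upsilon U\transp]=0$ in \eqref{eq:iv-EphiU} to its conditional-on-$Z$ version, and both need $B_2$ rich enough that $\mc{T}_\zeta$ is the full class $\{g\in L_2^0:\E[Ug\mid Z]=0\}$ (the paper gets this from Lemma \ref{lema:cond-exp-orthog-set-equality}, which is stated for all bounded such functions).
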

For simplicity, I will use the moment functions 
\begin{equation}\label{eq:iv-bar-effscr}
    g(W)\define  \E[\epsilon^2]^{-1}(Y - X\transp\theta - Z_1\transp\beta)\left[\pi(Z) - \E[XZ_1\transp]\E[Z_1Z_1\transp]^{-1}Z_1\right]. 
\end{equation}
$g$ belongs to the orthocomplement of $\{Db: b\in B\}$ and coincides with the efficient score function when $J(Z) = \E[UU\transp]$ a.s. (i.e. under homoskedasticity).\footnote{Nevertheless, homoskedasticity is \emph{not} assumed and the results below hold under heteroskedasticity. For full efficiency one could base the test on \eqref{eq:iv-effscr}. This is left for future work.
} 

\begin{lemma}\label{lem:iv-bar-effscr}
    Suppose that Assumptions \ref{ass:iv-mdl}, \ref{ass:iv-LAN} and equation \eqref{eq:iv-EphiU} hold. Then, the moment conditions $g \in \{Db: b\in B\}^\perp$. If $\E[\epsilon^2|Z] = \E[\epsilon^2]$ a.s., then $g = \effscr$ a.s.. 
\end{lemma}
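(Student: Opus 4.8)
The plan is to compute the $L_2(P)$ inner products $\E[g\,[Db]]$ directly, show they vanish for every $b = (b_0, b_1, b_2)\in B = \R^{d_\beta}\times B_1\times B_2$, and then obtain the homoskedastic identity by substitution. Throughout write $\epsilon = Y - X\transp\theta - Z_1\transp\beta$, $\upsilon = X - \pi(Z)$ and $U = (\epsilon, \upsilon\transp)\transp$ for the errors evaluated at the true $\gamma$, so that under $P$ we have $(U\transp, Z\transp)\transp\sim\zeta$ and, by Assumption \ref{ass:iv-mdl}, $\E[U|Z] = 0$. Since $\E[\upsilon Z_1\transp] = \E[\E[\upsilon|Z]Z_1\transp] = 0$, we have $\E[XZ_1\transp] = \E[\pi(Z)Z_1\transp]$; hence, setting
\[
    m(Z)\define \pi(Z) - \E[\pi(Z)Z_1\transp]\E[Z_1Z_1\transp]^{-1}Z_1
\]
($\E[Z_1Z_1\transp]$ is invertible by \eqref{eq:iv-EphiU}), we get $g(W) = \E[\epsilon^2]^{-1}\epsilon\, m(Z)$, with $\E[\epsilon^2]\ge c>0$ again by \eqref{eq:iv-EphiU}. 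By construction $m(Z)$ is the residual of the componentwise $L_2$-projection of $\pi(Z)$ onto $Z_1$, so $\E[m(Z)Z_1\transp] = 0$; moreover $\E[g] = \E[\epsilon^2]^{-1}\E[m(Z)\E[\epsilon|Z]] = 0$ and $\E\|g\|^2<\infty$ by Cauchy--Schwarz using $\E[\epsilon^4]<\infty$, $\E\|\pi(Z)\|^4<\infty$ and $\E\|Z\|^4<\infty$ from Assumption \ref{ass:iv-mdl}, so $g\in L_2^0(P)^{d_\theta}$ and the claim ``$g\in\{Db:b\in B\}^\perp$'' is meaningful.

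For the orthogonality, note that at the truth $[Db](W) = -\phi_1\, b_0\transp Z_1 - \phi_2\transp b_1(Z) + b_2(U, Z)$, with $\phi = (\phi_1,\phi_2\transp)\transp$. I would split $\E[g\,[Db]] = \E[\epsilon^2]^{-1}\E\big[\epsilon\, m(Z)\big(-\phi_1 b_0\transp Z_1 - \phi_2\transp b_1(Z) + b_2(U,Z)\big)\big]$ into three terms and condition on $Z$ in each. The identity $\E[\phi U\transp|Z] = -I$ from \eqref{eq:iv-EphiU} yields $\E[\epsilon\phi_1|Z] = -1$ and $\E[\epsilon\phi_2\transp|Z] = 0$ (the diagonal and off-diagonal blocks, respectively), and $\E[U b_2(U,Z)|Z] = 0$ from \eqref{eq:iv-b2-conditions} gives $\E[\epsilon b_2(U,Z)|Z] = 0$. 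Hence the $b_1$- and $b_2$-terms vanish immediately, while the $b_0$-term collapses to $\E[\epsilon^2]^{-1}\E[m(Z)Z_1\transp]b_0 = 0$. Thus $\E[g\,[Db]] = 0$ for every $b\in B$, i.e. $g\in\{Db:b\in B\}^\perp$.

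For the final assertion, under $\E[\epsilon^2|Z] = \E[\epsilon^2]$ a.s. the weight $\upomega(Z) = \E[\epsilon^2|Z]^{-1}$ equals the constant $\E[\epsilon^2]^{-1}$ a.s., so in \eqref{eq:iv-effscr} this constant cancels from the projection coefficient, $\E[\upomega(Z)XZ_1\transp]\E[\upomega(Z)Z_1Z_1\transp]^{-1} = \E[XZ_1\transp]\E[Z_1Z_1\transp]^{-1}$, and therefore $\effscr(W) = \E[\epsilon^2]^{-1}\epsilon\big[\pi(Z) - \E[XZ_1\transp]\E[Z_1Z_1\transp]^{-1}Z_1\big] = g(W)$ a.s.

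There is no genuinely hard step: the proof is a short sequence of conditional-expectation computations driven entirely by the moment conditions packaged in \eqref{eq:iv-EphiU} and \eqref{eq:iv-b2-conditions}. The only bookkeeping requiring care is tracking the block structure of $\E[\phi U\transp|Z] = -I$ to see which of the three terms vanish outright and which one reduces to the $\E[m(Z)Z_1\transp] = 0$ term arising from the $Z_1$-projection built into $g$.
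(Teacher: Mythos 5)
Your proof is correct and follows essentially the same route as the paper: both verify orthogonality by splitting $\E[g\,[Db]]$ into the $b_0$-, $b_1$-, and $b_2$-terms, killing the latter two via $\E[\epsilon\phi_2\transp|Z]=0$ and $\E[\epsilon b_2(U,Z)|Z]=0$, and reducing the first to the vanishing $Z_1$-projection residual $\E[m(Z)Z_1\transp]b_0=0$, with the homoskedastic identity obtained by the same cancellation of the constant weight. Your write-up merely spells out a few details the paper leaves implicit (that $\E[XZ_1\transp]=\E[\pi(Z)Z_1\transp]$, and that $g\in L_2^0(P)$).
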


\begin{proposition}\label{prop:iv-joint-conv}
    Suppose that Assumptions \ref{ass:iv-mdl}, \ref{ass:iv-LAN} and equation \eqref{eq:iv-EphiU} hold. Then Assumption \ref{ass:joint-conv} is satisfied with 
    $g_{n} = \G_n g$.
\end{proposition}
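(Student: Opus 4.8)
The plan is to obtain this as an immediate consequence of Lemma~\ref{lem:iid-orthocomp-joint-conv}. That lemma guarantees Assumption~\ref{ass:joint-conv} for $g_n = \G_n g$ as soon as three inputs hold: the model is i.i.d.\ in the sense of Assumption~\ref{ass:iid}; Assumption~\ref{ass:LAN} holds with $\Delta_n h = \fracrootn\sumin Ah$ for $Ah$ of the form \eqref{eq:Agamh}; and $g$ lies in the orthocomplement $\{Db : b\in B\}^\perp \subset L_2^0(P)$. So the proof reduces to checking these three inputs for the IV model of \eqref{eq:iv-mdl}.

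First I would read off the first two directly from Assumption~\ref{ass:iv-LAN}: it fixes $\mc{W}_n = \prod_{i=1}^n\R^{d_W}$, $P_{n,h} = P_{\gamma+\varphi_n(h)}^n \ll \nu_n$ with $\nu_n = \nu^n$ (so Assumption~\ref{ass:iid} holds), and it postulates exactly the LAN expansion of Assumption~\ref{ass:LAN} with $[Ah](W_i)$ in the form \eqref{eq:Agamh}, $\dotscr$ and $Db$ being the functions displayed there; since $Ah\in L_2^0(P)$ this gives $\Delta_n h = \fracrootn\sumin[Ah](W_i) = \G_n Ah$. For the third input, the orthogonality content — which is the substantive part — has already been discharged: Lemma~\ref{lem:iv-bar-effscr}, under precisely Assumptions~\ref{ass:iv-mdl}, \ref{ass:iv-LAN} and \eqref{eq:iv-EphiU}, shows $g \in \{Db : b\in B\}^\perp$ for $g$ as in \eqref{eq:iv-bar-effscr}.

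The remaining step is to confirm that $g$ is genuinely a mean-zero, square-integrable $d_\theta$-vector, i.e.\ $g \in L_2^0(P)^{d_\theta}$. Writing $\epsilon = Y - X\transp\theta - Z_1\transp\beta$, the bracket $q(Z) := \pi(Z) - \E[XZ_1\transp]\E[Z_1Z_1\transp]^{-1}Z_1$ in \eqref{eq:iv-bar-effscr} is a function of $Z$ alone, and the coefficient matrix $\E[XZ_1\transp]\E[Z_1Z_1\transp]^{-1}$ is finite by the fourth-moment bounds in Assumption~\ref{ass:iv-mdl} together with $\lambda_{\min}(\E[Z_1Z_1\transp])>0$ from \eqref{eq:iv-EphiU}; hence $\E[g] = \E[\epsilon^2]^{-1}\,\E[\E[\epsilon\mid Z]\,q(Z)] = 0$ by $\E[\epsilon\mid Z]=0$, and by Cauchy--Schwarz $\E\|g\|^2 \le \E[\epsilon^2]^{-2}(\E\epsilon^4)^{1/2}(\E\|q(Z)\|^4)^{1/2} < \infty$, the last factor finite since $\E\|\pi(Z)\|^4 < \infty$ and $\E\|Z\|^4 < \infty$ (the latter implied by $\E\|\xi\|^4<\infty$). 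With all three inputs in place, Lemma~\ref{lem:iid-orthocomp-joint-conv} yields Assumption~\ref{ass:joint-conv} for $g_n = \G_n g$, with $V = \E[gg\transp]$ and $\Sigma_{21} = \E[g\,\dotscr\transp]$. I do not expect a genuine obstacle here: the only delicate point, the orthogonality to the nuisance scores, is handled in Lemma~\ref{lem:iv-bar-effscr}, and everything else is moment bookkeeping — the one thing to be careful about is that the linear-projection coefficients are well defined (needing $\E[Z_1Z_1\transp]$ invertible) and that the available fourth moments dominate $\epsilon^2\|q(Z)\|^2$, so that $g$ is square-integrable and $\G_n g$ satisfies the multivariate CLT with the stated limiting covariance.
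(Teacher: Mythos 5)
Your proof is correct and follows the same route as the paper, whose proof is a one-line appeal to Assumptions \ref{ass:iv-mdl}, \ref{ass:iv-LAN}, equation \eqref{eq:iv-EphiU} and Lemma \ref{lem:iv-bar-effscr} to verify the hypotheses of Lemma \ref{lem:iid-orthocomp-joint-conv}. The extra moment bookkeeping you supply (mean zero and square integrability of $g$) is a slightly more explicit account of what the paper absorbs into Lemma \ref{lem:iv-bar-effscr}, but the argument is the same.
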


\paragraph*{A feasible test}

Suppose that $\hat{\beta}_n$ and $\hat{\pi}_{n, i}(Z_i)$ are estimators of $\beta$ and $\pi(Z_i)$ respectively. Let the $i$-th residual in \eqref{eq:iv-mdl-0} based on $\theta = \theta_0$ and $\hat\beta_n$ be $\hat{\epsilon}_{n, i} \define Y_i - X_i\transp \theta - Z_{1, i}\transp\hat{\beta}_n$. Let $\hat{s}_n\define \meanin \hat{\epsilon}_{n, i}^2$ and define

\begin{equation}\label{eq:iv-bar-effscr-est}
   \hat{g}_{n, \theta, i}\define \hat{s}_{n}^{-1}\hat{\epsilon}_{n, i} \left[\hat{\pi}_n(Z_i)  - \left[\meanin X_i Z_{1, i}\transp\right] \left[\meanin Z_{1,i} Z_{1, i}\transp\right]^{-1} Z_i\right],
\end{equation}
and 
\begin{equation}\label{eq:iv-bar-effinfo-est-check}
    \check{V}_{n, \theta} \define \meanin \hat{g}_{n, \theta, i}\hat{g}_{n, \theta, i}\transp.
\end{equation}

Based on $\check{V}_{n, \theta}$, form $\hat{V}_{n, \theta}$ according to the construction in Section S5 of \cite{LM21-S} using a truncation rate $\upnu_n$, set $\hat{\Lambda}_{n, \theta}\define \hat{V}_{n, \theta}^\dagger$ and $\hat{r}_{n, \theta}\define \rank(\hat{V}_{n, \theta})$.

The following assumption provides sufficient high-level conditions on the estimators $\hat\beta_n$ and $\hat{\pi}_{n, i}(Z_i)$ such that Assumption \ref{ass:consistent} holds. These conditions are compatible with $\hat{\pi}_{n, i}$ being a leave-one-out series estimator.\footnote{The discretisation of $\hat\beta_n$ is a technical device 
which permits the proof to go through under weaker conditions \cite[cf.][Chapter 6]{LCY00}. This 
can be arranged given a $\sqrt{n}$ -- consistent initial estimator, by replacing its value with the closest point in the set $\ms{S}_n$.}$^,\,$\footnote{See e.g. \cite{BCCK15} for sufficient conditions for \eqref{ass:iv-est2:eq:pi-est-bound} and Section \ref{sm:sec:IV} for a discussion of \eqref{ass:iv-est2:eq:pi-est-bound2}.}

\begin{assumption}\label{ass:iv-est2}
    Suppose that, given $\theta_0$, (i) $\hat{\beta}_n$ is an estimator valued in $\ms{S}_n\define \{CZ / \sqrt{n}: Z\in \Z^{d_{\beta}}\}$ for some $C\in \R^{d_\beta\times d_\beta}$ and satisfying  $\sqrt{n}(\hat{\beta}_n - \beta) = O_{P_{n, 0}}(1)$ and (ii) $\hat{\pi}_{n, i}(Z_i)$ are estimators such that $\hat{\pi}_{n, i}(Z_i)$ is $\sigma(Z_i, \mc{C}_{n,-i})$ measurable for $\mc{C}_{n,-i}\define \{W_j: j=1, \ldots, n,\, j\neq i\}$, and on events $F_n$ with $P_{n, 0}(F_n)\to 1$, 
    \begin{equation}\label{ass:iv-est2:eq:pi-est-bound}
        \left[\int \left\|\hat{\pi}_{n, i}(z) - \pi(z)\right\|^2 \darg{\zeta_Z(z)}\right]^{1/2} \le \delta_n = o(1),
    \end{equation}
    where $\zeta_Z$ is the marginal distribution of $Z$ and for each $k=1, \ldots, d_\theta$, $i\neq j$,
    \begin{equation}\label{ass:iv-est2:eq:pi-est-bound2}
        \E\left[\bm{1}_{F_n}\bm{1}_{G_n}(\hat{\pi}_{n, i, k}(Z_i) - \pi_k(Z_i))(\hat{\pi}_{n, j, k}(Z_j) - \pi_k(Z_j))\transp\epsilon_i\epsilon_j\right]\lesssim \delta_n^2 / n, \  P_{n, 0}(G_n)\to 1.
    \end{equation}
    Suppose also $\delta_n^2 + n^{-1/2}  = o(\upnu_n)$, \eqref{eq:iv-EphiU} holds and $\E\left[\epsilon^4(\|\pi(Z)\| + \|Z_1\|)^4\right] < \infty$.
\end{assumption}

There is no requirement on the rate $\delta_n$ in \eqref{ass:iv-est2:eq:pi-est-bound}, \eqref{ass:iv-est2:eq:pi-est-bound2} beyond $\delta_n = o(1)$.

\begin{proposition}\label{prop:iv-estimation-bar-effscr2}
    Suppose that Assumptions \ref{ass:iv-mdl}, \ref{ass:iv-LAN}, \& \ref{ass:iv-est2} hold. Then Assumption \ref{ass:consistent} holds with $\hat{g}_{n, \theta} \define \fracrootn\sumin \hat{g}_{n, \theta, i}$, $g_{n} \define \G_n g$ and $\hat{\Lambda}_{n, \theta}$ defined below equation \eqref{eq:iv-bar-effinfo-est-check}.
\end{proposition}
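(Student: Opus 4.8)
\emph{Proof proposal.} The plan is to verify the three requirements of Assumption \ref{ass:consistent} directly under $P_{n,0}$, at which the data are i.i.d.\ from $P_\gamma$ with $\gamma$ consistent with $\mathrm H_0$, so that in particular $\theta=\theta_0$ and $\E[\epsilon\mid Z]=0$. Write $\epsilon_i\define Y_i-X_i\transp\theta_0-Z_{1,i}\transp\beta$, $s\define\E[\epsilon^2]$, $A\define\E[XZ_1\transp]$, $B\define\E[Z_1Z_1\transp]$, $\Pi_i\define\pi(Z_i)-AB^{-1}Z_{1,i}$, let $\hat A_n=\meanin X_iZ_{1,i}\transp$ and $\hat B_n=\meanin Z_{1,i}Z_{1,i}\transp$ be the sample analogues, and set $\delta_\beta\define\hat\beta_n-\beta$, $r_{\pi,i}\define\hat\pi_{n,i}(Z_i)-\pi(Z_i)$, $r_A\define\hat A_n\hat B_n^{-1}-AB^{-1}$, $\delta_s\define\hat s_n^{-1}-s^{-1}$. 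Then $g(W_i)=s^{-1}\epsilon_i\Pi_i$ and, since $\E g=0$, $g_n=\G_n g=\fracrootn\sumin g(W_i)$; while $\hat g_{n,\theta_0,i}=\hat s_n^{-1}(\epsilon_i-Z_{1,i}\transp\delta_\beta)(\Pi_i+r_{\pi,i}-r_AZ_{1,i})$. Expanding this product writes $\hat g_{n,\theta_0,i}-g(W_i)$ as $s^{-1}\epsilon_ir_{\pi,i}-s^{-1}\epsilon_ir_AZ_{1,i}-s^{-1}(Z_{1,i}\transp\delta_\beta)\Pi_i+\delta_s\epsilon_i\Pi_i$ plus a remainder consisting of products of two or more of the errors $r_{\pi,i},\delta_\beta,r_A,\delta_s$.

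For part \ref{ass:consistent:itm:ghat} I would show $\fracrootn\sumin(\hat g_{n,\theta_0,i}-g(W_i))=o_{P_{n,0}}(1)$ term by term. Each parametric error is $O_{P_{n,0}}(n^{-1/2})$: $\sqrt n\,\delta_\beta=O_{P_{n,0}}(1)$ by Assumption \ref{ass:iv-est2}(i); $\lambda_{\min}(B)>0$ (from \eqref{eq:iv-EphiU}) together with the fourth-moment conditions of Assumption \ref{ass:iv-mdl} gives $r_A=O_{P_{n,0}}(n^{-1/2})$; and $\hat s_n-s=O_{P_{n,0}}(n^{-1/2})$ follows from $\E\epsilon^4<\infty$ and $\sqrt n\,\delta_\beta=O_{P_{n,0}}(1)$, hence $\delta_s=O_{P_{n,0}}(n^{-1/2})$. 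The empirical averages these multiply, namely $\fracrootn\sumin\epsilon_iZ_{1,i}$, $\fracrootn\sumin\Pi_iZ_{1,i}\transp$ and $\fracrootn\sumin\epsilon_i\Pi_i$, are all $O_{P_{n,0}}(1)$ — not $O_{P_{n,0}}(\sqrt n)$ — because $\E[\epsilon\mid Z]=0$ and $\E[\Pi Z_1\transp]=A-AB^{-1}B=0$, which is exactly the orthogonality of $g$ to the scores for $(\beta,\zeta)$ and for $\pi$ recorded in Lemma \ref{lem:iv-bar-effscr}; so each of those three terms is $O_{P_{n,0}}(n^{-1/2})$. The remainder terms are smaller still, being products of an $O_{P_{n,0}}(n^{-1/2})$ factor with an empirical average which Cauchy--Schwarz and \eqref{ass:iv-est2:eq:pi-est-bound} (giving $\meanin\|r_{\pi,i}\|^2=O_{P_{n,0}}(\delta_n^2)$ on an event of probability tending to one) or a law of large numbers bounds by $O_{P_{n,0}}(\sqrt n\,\delta_n)$ or $O_{P_{n,0}}(\sqrt n)$. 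The one piece requiring real care is $\fracrootn\sumin s^{-1}\epsilon_ir_{\pi,i}$: since $\hat\pi_{n,i}$ is leave-one-out, hence a fixed function given $\mc C_{n,-i}$, and $\E[\epsilon_i\mid Z_i,\mc C_{n,-i}]=0$, this term has mean zero, and its second moment is $s^{-2}\bigl(\meanin\E[\epsilon_i^2\|r_{\pi,i}\|^2]+\fracn\sum_{i\neq j}\E[\epsilon_i\epsilon_jr_{\pi,i}\transp r_{\pi,j}]\bigr)$; the diagonal is $O(\delta_n^2)$ because $\E[\epsilon_i^2\mid Z_i]\le C$ (from $\lambda_{\max}(J(Z))\le C$) and the event in \eqref{ass:iv-est2:eq:pi-est-bound} is $\mc C_{n,-i}$-measurable, and the off-diagonal contribution is $O(\delta_n^2)$ directly by \eqref{ass:iv-est2:eq:pi-est-bound2}; so this term is $O_{P_{n,0}}(\delta_n)=o_{P_{n,0}}(1)$. (Discretising $\hat\beta_n$ onto $\ms S_n$ is what makes these conditioning steps rigorous, by conditioning on the value of $\hat\beta_n$ and summing over the $O(1)$-radius grid it occupies with probability tending to one.) Collecting terms gives part \ref{ass:consistent:itm:ghat}.

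For parts \ref{ass:consistent:itm:lambdahat} and \ref{ass:consistent:itm:rank} I would first establish $\check V_{n,\theta_0}=\meanin\hat g_{n,\theta_0,i}\hat g_{n,\theta_0,i}\transp\xrightarrow{P_{n,0}}V=\E[gg\transp]$: the law of large numbers gives $\meanin g(W_i)g(W_i)\transp\xrightarrow{P_{n,0}}V$ using $\E\|g\|^2<\infty$ (a consequence of $\E[\epsilon^4(\|\pi(Z)\|+\|Z_1\|)^4]<\infty$), and running the decomposition above at the $\meanin$ scale together with Cauchy--Schwarz gives $\meanin\|\hat g_{n,\theta_0,i}-g(W_i)\|^2=O_{P_{n,0}}(\delta_n^2+n^{-1})=o_{P_{n,0}}(1)$. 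The key point for the rank statements is that this last bound is $o_{P_{n,0}}(\upnu_n)$ (since $\delta_n^2+n^{-1/2}=o(\upnu_n)$) and that $g(W)\in\Range(V)$ a.s.; hence, writing $P_0$ for the orthogonal projection onto $\ker V$, $P_0\hat g_{n,\theta_0,i}=P_0(\hat g_{n,\theta_0,i}-g(W_i))$, so the eigenvalues of $\check V_{n,\theta_0}$ in the directions of $\ker V$ are $O_{P_{n,0}}(\delta_n^2+n^{-1})=o_{P_{n,0}}(\upnu_n)$ and are truncated to zero by the Section~S5 construction of \cite{LM21-S}, while the eigenvalues in the directions of $\Range(V)$ are bounded below by a positive constant with probability tending to one and survive the truncation. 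This gives $\hat r_{n,\theta_0}=\rank(\hat V_{n,\theta_0})\xrightarrow{P_{n,0}}\rank(V)=r$ (and, when $r=0$, $\check V_{n,\theta_0}$ has operator norm $o_{P_{n,0}}(\upnu_n)$ so $\hat V_{n,\theta_0}=0$ and $\rank(\hat\Lambda_{n,\theta_0})\xrightarrow{P_{n,0}}0$), which is part \ref{ass:consistent:itm:rank}; and since $\hat V_{n,\theta_0}\xrightarrow{P_{n,0}}V$ with its rank stabilised at $r$, continuity of the Moore--Penrose inverse along sequences of matrices of constant rank gives $\hat\Lambda_{n,\theta_0}=\hat V_{n,\theta_0}^\dagger\xrightarrow{P_{n,0}}V^\dagger$, which is part \ref{ass:consistent:itm:lambdahat}. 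The full-rank case is the special case $\hat V_{n,\theta_0}=\check V_{n,\theta_0}$, $\hat r_{n,\theta_0}=d_\theta$, where the conclusions follow from $\check V_{n,\theta_0}\xrightarrow{P_{n,0}}V$ and continuity of matrix inversion.

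I expect the main obstacle to be the term $\fracrootn\sumin s^{-1}\epsilon_ir_{\pi,i}$: because $\hat\pi_{n,i}$ is cross-fitted leave-one-out rather than $K$-fold, the summands are not conditionally independent across $i$, so the diagonal and cross second moments must be handled separately, the former exploiting that the $L_2$ error of $\hat\pi_{n,i}$ is $\mc C_{n,-i}$-measurable and the latter relying entirely on the dedicated bound \eqref{ass:iv-est2:eq:pi-est-bound2}. Arranging the bookkeeping of the ``good events'' $F_n,G_n$ of Assumption \ref{ass:iv-est2} so that it meshes with this conditioning — and so that no rate on $\delta_n$ beyond $\delta_n=o(1)$ is needed — is the delicate part; everything else is a routine combination of the law of large numbers, the central limit theorem, Cauchy--Schwarz and the eigenvalue-truncation lemma of \cite{LM21-S}.
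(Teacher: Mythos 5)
Your proposal is correct and follows essentially the same route as the paper's proof: reduce to a deterministic $\beta$-sequence via the discretisation onto $\ms{S}_n$, decompose $\hat g_{n,\theta_0,i}-g(W_i)$ into the four first-order error terms (in $\hat\pi$, $\hat M_n$, $\hat\beta_n$, $\hat s_n$) plus cross terms, kill each leading term using the orthogonalities $\E[\epsilon\mid Z]=0$ and $\E[\Pi Z_1\transp]=0$, handle $\fracrootn\sumin\epsilon_i r_{\pi,i}$ by a conditional second-moment calculation that uses \eqref{ass:iv-est2:eq:pi-est-bound} for the diagonal and \eqref{ass:iv-est2:eq:pi-est-bound2} for the off-diagonal, and then verify parts \ref{ass:consistent:itm:lambdahat}--\ref{ass:consistent:itm:rank} by showing $\|\check V_{n,\theta_0}-V\|=o_{P_{n,0}}(\upnu_n)$ and invoking the eigenvalue-truncation construction of Section S5 of \cite{LM21-S}. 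The paper organises the same estimates as the itemised bounds $\mathsf{R}_{n,l},\mathsf{S}_{n,l}$ in Lemma \ref{lem:iv-estimation-bar-effscr-details2} and uses contiguity of the local $\beta$-perturbation for one conditional-variance step, but these are bookkeeping differences, not a different argument.
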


A consequence of Assumption \ref{ass:iv-LAN} and Propositions \ref{prop:iv-joint-conv} and \ref{prop:iv-estimation-bar-effscr2} is that the test $\psi_{n, \theta_0}$ formed as in \eqref{eq:psi-test} is locally regular by Theorem \ref{thm:psi-pwr-local-alt}.

\paragraph{Simulation study}
I test $H_0: \theta=\theta_0 = 0$ at a nominal level of 5\%. Each study reports the results of 5000 monte carlo replications with a sample size of $n\in \{200, 400, 600\}$.\footnote{The power surfaces in Design 1 are computed with 2500 replications.} Two simulation designs are considered. 

Design 1 is a bivariate, just identified design. Here $d_\theta = 2$ and $Z_2$ is drawn from a zero-mean multivariate normal distribution with covariance matrix $\left[\begin{smallmatrix}
    1 & 0.4\\
    0.4 & 1
\end{smallmatrix}\right]$. The error terms $\epsilon, \upsilon$ are drawn from a zero-mean multivariate normal such that each has variance 1 and the covariances are $\Cov(\epsilon, \upsilon_i) = 0.9$ and $\Cov(\upsilon_1,\upsilon_2) = 0.7$. $Z_1=1$ with $\beta = 1$ and $\pi(Z) = \pi(Z_2) = (\pi_1(Z_{2,1}), \pi_2(Z_{2, 2}))\transp$ with each $\pi_i$ ($i=1, 2$) being one of the exponential or logistic functions $f_j$ in Table \ref{tbl:SIM-index-fcns}. The exponential form is a prototypical function shape for which the linear projection of $X$ on $Z$ will provide essentially no identifying information; for the logistic form this linear projection should perform well.\footnote{These functions are plotted in Figures \ref{fig:SIM-gaussian-fs} and \ref{fig:SIM-logistic-fs}. The separation  $\pi(Z_2) = (\pi_1(Z_{2,1}), \pi_2(Z_{2, 2}))\transp$ is assumed unknown and is not imposed in the estimation of $\pi$.
}$^,\, $\footnote{Results for the case where $\pi(Z_2)$ is linear are very similar to the ``approximately linear'' logistic case and are therefore unreported.}

I consider the $\psi_{n, \theta_0}$ test developed above, with a leave-one-out series estimator of $\pi$ based on (tensor product) Legendre polynomials. I consider both  fixing the number of polynomials at $k=3$ in each of the univariate series which form the tensor product basis and choosing $k\in \{3, 4, 5, 6, 7\}$ using information criteria. $\upnu$ is set to $10^{-2}$.
I additionally consider the \cite{AR49} (AR) test.\footnote{The AR test is computed with $Z_2$ as instruments, after partialling out $Z_1$. 
}$^,$\footnote{I do not consider alternative weak instrument robust tests based on a linear first stage (e.g. LM, CLR) in this design as the AR test is known to be optimal when the model is just-identified.}  

The empirical rejection frequencies under the null are shown in Tables \ref{tbl:IV-size1i} and \ref{tbl:IV-size1ii}. The parameter $j$ controls the level of identification: the larger is $j$ the closer $\pi_j$ is to a constant function. In each specification all the considered tests reject at close to the nominal level. Power surfaces for the $\psi_{n, \theta_0}$ and AR tests are shown in figures \ref{fig:IV-power-1-exp-exp-AR} -- \ref{fig:IV-power-1-exp-log-psi}. As can be seen in these figures, the $\psi_{n, \theta_0}$ test is able to detect deviations from the null when $\pi_j$ has the exponential form, unlike the AR test.\footnote{
One could consider an AR test using e.g. some basis functions $f_1(Z_{2}), \ldots, f_K(Z_{2})$ however as noted in \cite[][p. 2669]{MS21}, the AR statistic is not well behaved for large $K$. The jackknife AR test of \cite{MS21} applies only to the case where $d_\theta=1$. 
} For the logistic form, the power of the two tests is similar. Unsurprisingly, neither test provides non-trivial power when identification is very weak.

\begin{table*}[!htbp]
	\footnotesize
	\setlength{\tabcolsep}{4.5pt}
	\begin{center}
		\caption{\label{tbl:IV-size1i} Empirical rejection frequencies, IV, Design 1}
		\begin{threeparttable}
			
\begin{tabular}{rrrrrrrrrrrrrr}
\toprule
\multicolumn{1}{c}{} & \multicolumn{1}{c}{} & \multicolumn{4}{c}{Exponential - Exponential} & \multicolumn{4}{c}{Logistic - Logistic} & \multicolumn{3}{c}{Exponential - Logistic} \\
\cmidrule(l{3pt}r{3pt}){3-6} \cmidrule(l{3pt}r{3pt}){7-10} \cmidrule(l{3pt}r{3pt}){11-13}
\multicolumn{1}{c}{} & \multicolumn{1}{c}{} & \multicolumn{1}{c}{AR} & \multicolumn{3}{c}{$\psi$} & \multicolumn{1}{c}{AR} & \multicolumn{3}{c}{$\psi$} & \multicolumn{1}{c}{AR} & \multicolumn{3}{c}{$\psi$} \\
\cmidrule(l{3pt}r{3pt}){3-3} \cmidrule(l{3pt}r{3pt}){4-6} \cmidrule(l{3pt}r{3pt}){7-7} \cmidrule(l{3pt}r{3pt}){8-10} \cmidrule(l{3pt}r{3pt}){11-11} \cmidrule(l{3pt}r{3pt}){12-14}
$n$ & $j$ &  & $k=3$ & AIC & BIC &  & $k=3$ & AIC & BIC &  & $k=3$ & AIC & BIC\\
\midrule
200 & 1 & 5.52 & 5.10 & 4.00 & 5.18 & 5.52 & 4.88 & 3.40 & 4.48 & 5.52 & 4.56 & 3.76 & 4.68\\
200 & 2 & 5.52 & 6.24 & 6.26 & 6.24 & 5.52 & 5.46 & 5.30 & 5.46 & 5.52 & 5.78 & 5.86 & 5.78\\
200 & 3 & 5.52 & 8.36 & 7.80 & 8.36 & 5.52 & 8.44 & 7.98 & 8.44 & 5.52 & 7.92 & 7.74 & 7.92\\
\addlinespace
400 & 1 & 5.60 & 4.96 & 4.48 & 4.80 & 5.60 & 4.96 & 4.14 & 4.78 & 5.60 & 4.88 & 4.34 & 4.80\\
400 & 2 & 5.60 & 6.12 & 6.20 & 6.12 & 5.60 & 5.68 & 5.52 & 5.68 & 5.60 & 6.08 & 6.02 & 6.08\\
400 & 3 & 5.60 & 6.76 & 6.94 & 6.76 & 5.60 & 7.72 & 7.60 & 7.72 & 5.60 & 6.46 & 6.60 & 6.46\\
\addlinespace
600 & 1 & 5.38 & 5.30 & 4.90 & 5.22 & 5.38 & 5.20 & 3.80 & 5.16 & 5.38 & 5.32 & 4.64 & 5.04\\
600 & 2 & 5.38 & 6.08 & 6.10 & 6.08 & 5.38 & 4.98 & 4.98 & 4.98 & 5.38 & 5.56 & 5.76 & 5.56\\
600 & 3 & 5.38 & 4.36 & 4.78 & 4.36 & 5.38 & 5.02 & 5.50 & 5.02 & 5.38 & 4.40 & 5.06 & 4.40\\
\bottomrule
\end{tabular}

			\begin{tablenotes}
			\footnotesize
			\item \notes{E.g. ``Exponential - Logistic'' indicates that $\pi_1, \pi_2$ have the exponential and logistic form in Table \ref{tbl:SIM-index-fcns} respectively, with $c_j$ corresponding to column $j$.
			}
			\end{tablenotes}	
		\end{threeparttable}
	\end{center}
\end{table*}

\begin{table*}[!htbp]
	\setlength{\tabcolsep}{4.5pt}
	\footnotesize
	\begin{center}
		\caption{\label{tbl:IV-size1ii} Empirical rejection frequencies, IV, Design 1}
		\begin{threeparttable}
			
\begin{tabular}{rlrrrrrrrrrrrr}
\toprule
\multicolumn{1}{c}{} & \multicolumn{1}{c}{} & \multicolumn{4}{c}{Exponential - Exponential} & \multicolumn{4}{c}{Logistic - Logistic} & \multicolumn{4}{c}{Exponential - Logistic} \\
\cmidrule(l{3pt}r{3pt}){3-6} \cmidrule(l{3pt}r{3pt}){7-10} \cmidrule(l{3pt}r{3pt}){11-14}
\multicolumn{1}{c}{} & \multicolumn{1}{c}{} & \multicolumn{1}{c}{AR} & \multicolumn{3}{c}{$\psi$} & \multicolumn{1}{c}{AR} & \multicolumn{3}{c}{$\psi$} & \multicolumn{1}{c}{AR} & \multicolumn{3}{c}{$\psi$} \\
\cmidrule(l{3pt}r{3pt}){3-3} \cmidrule(l{3pt}r{3pt}){4-6} \cmidrule(l{3pt}r{3pt}){7-7} \cmidrule(l{3pt}r{3pt}){8-10} \cmidrule(l{3pt}r{3pt}){11-11} \cmidrule(l{3pt}r{3pt}){12-14}
$n$ & $j_1-j_2$ &  & $k=3$ & AIC & BIC &  & $k=3$ & AIC & BIC &  & $k=3$ & AIC & BIC\\
\midrule
200 & 1 - 3 & 5.52 & 6.36 & 5.30 & 6.48 & 5.52 & 7.18 & 6.40 & 7.12 & 5.52 & 6.34 & 5.40 & 6.30\\
200 & 2 - 3 & 5.52 & 6.26 & 6.44 & 6.26 & 5.52 & 6.94 & 6.78 & 6.94 & 5.52 & 6.46 & 6.64 & 6.46\\
200 & 3 - 3 & 5.52 & 8.36 & 7.80 & 8.36 & 5.52 & 8.44 & 7.98 & 8.44 & 5.52 & 7.92 & 7.74 & 7.92\\
\addlinespace
400 & 1 - 3 & 5.60 & 5.70 & 5.38 & 5.68 & 5.60 & 6.24 & 5.82 & 6.20 & 5.60 & 6.14 & 5.54 & 5.98\\
400 & 2 - 3 & 5.60 & 6.22 & 6.48 & 6.22 & 5.60 & 6.24 & 6.40 & 6.24 & 5.60 & 6.74 & 6.90 & 6.74\\
400 & 3 - 3 & 5.60 & 6.76 & 6.94 & 6.76 & 5.60 & 7.72 & 7.60 & 7.72 & 5.60 & 6.46 & 6.60 & 6.46\\
\addlinespace
600 & 1 - 3 & 5.38 & 5.66 & 5.40 & 5.60 & 5.38 & 5.46 & 4.52 & 5.34 & 5.38 & 5.58 & 5.26 & 5.36\\
600 & 2 - 3 & 5.38 & 6.00 & 6.10 & 6.00 & 5.38 & 5.40 & 5.40 & 5.40 & 5.38 & 6.12 & 6.32 & 6.12\\
600 & 3 - 3 & 5.38 & 4.36 & 4.78 & 4.36 & 5.38 & 5.02 & 5.50 & 5.02 & 5.38 & 4.40 & 5.06 & 4.40\\
\bottomrule
\end{tabular}

			\begin{tablenotes}
			\footnotesize
			\item \notes{E.g. ``Exponential - Logistic'' indicates that $\pi_1, \pi_2$ have the exponential and logistic form in Table \ref{tbl:SIM-index-fcns} respectively, with $c_{j_1}$ and $c_{j_2}$ corresponding to column $j_1$ - $j_2$.
			}
			\end{tablenotes}	
		\end{threeparttable}
	\end{center}
\end{table*}

\begin{figure}[!htbp]
		\centering
		\caption{\label{fig:IV-power-1-exp-exp-AR} IV Design 1, AR power, $\pi_i$ exponential}
		\begin{subfigure}[b]{0.3\textwidth}
			\centering
			\includegraphics[width=0.99\textwidth]{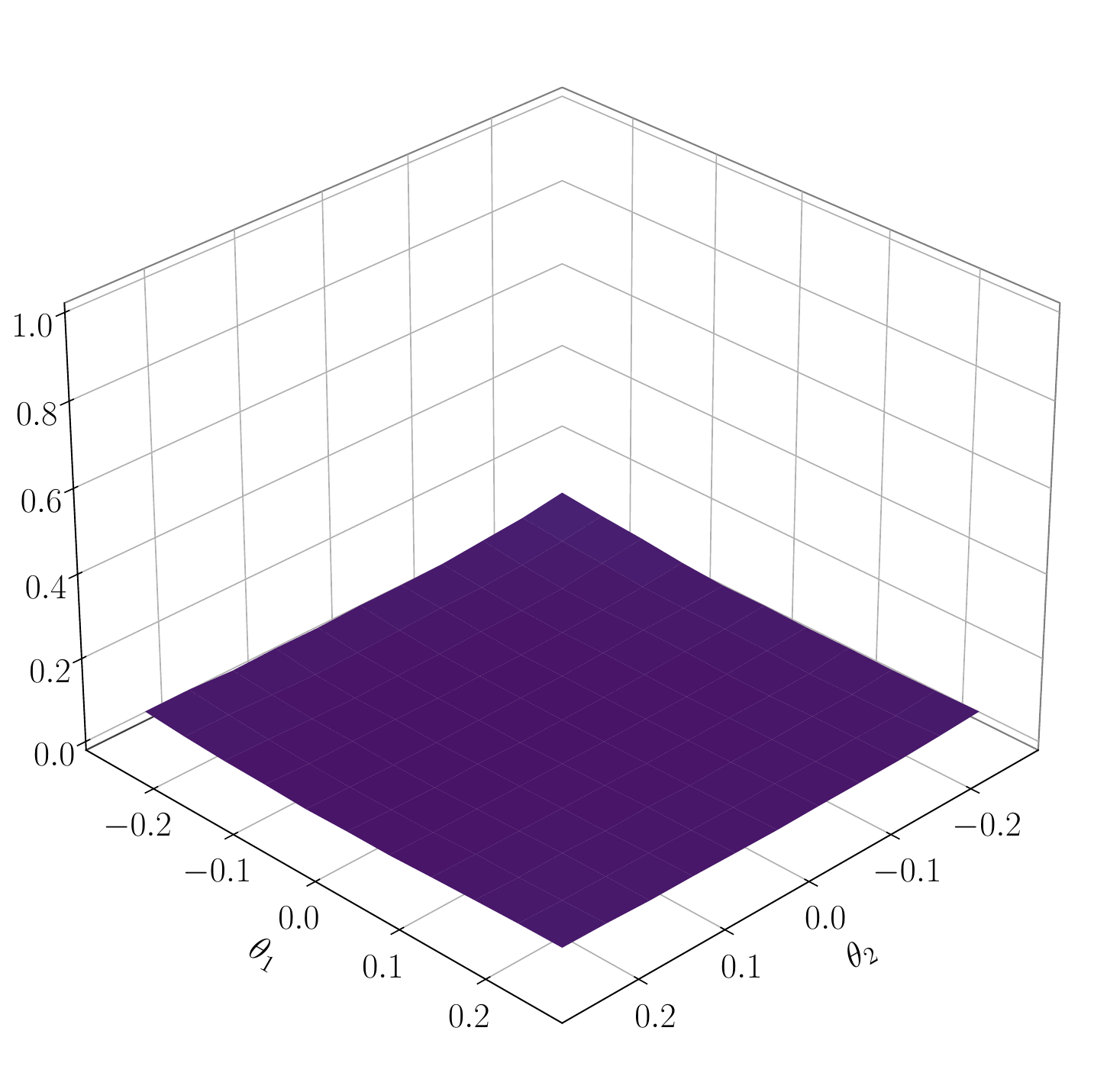}
			\caption[]%
			{{\small $j_1=1$, $j_2=1$}}
		\end{subfigure}
		\hfill
		\begin{subfigure}[b]{0.3\textwidth}
			\centering
			\includegraphics[width=0.99\textwidth]{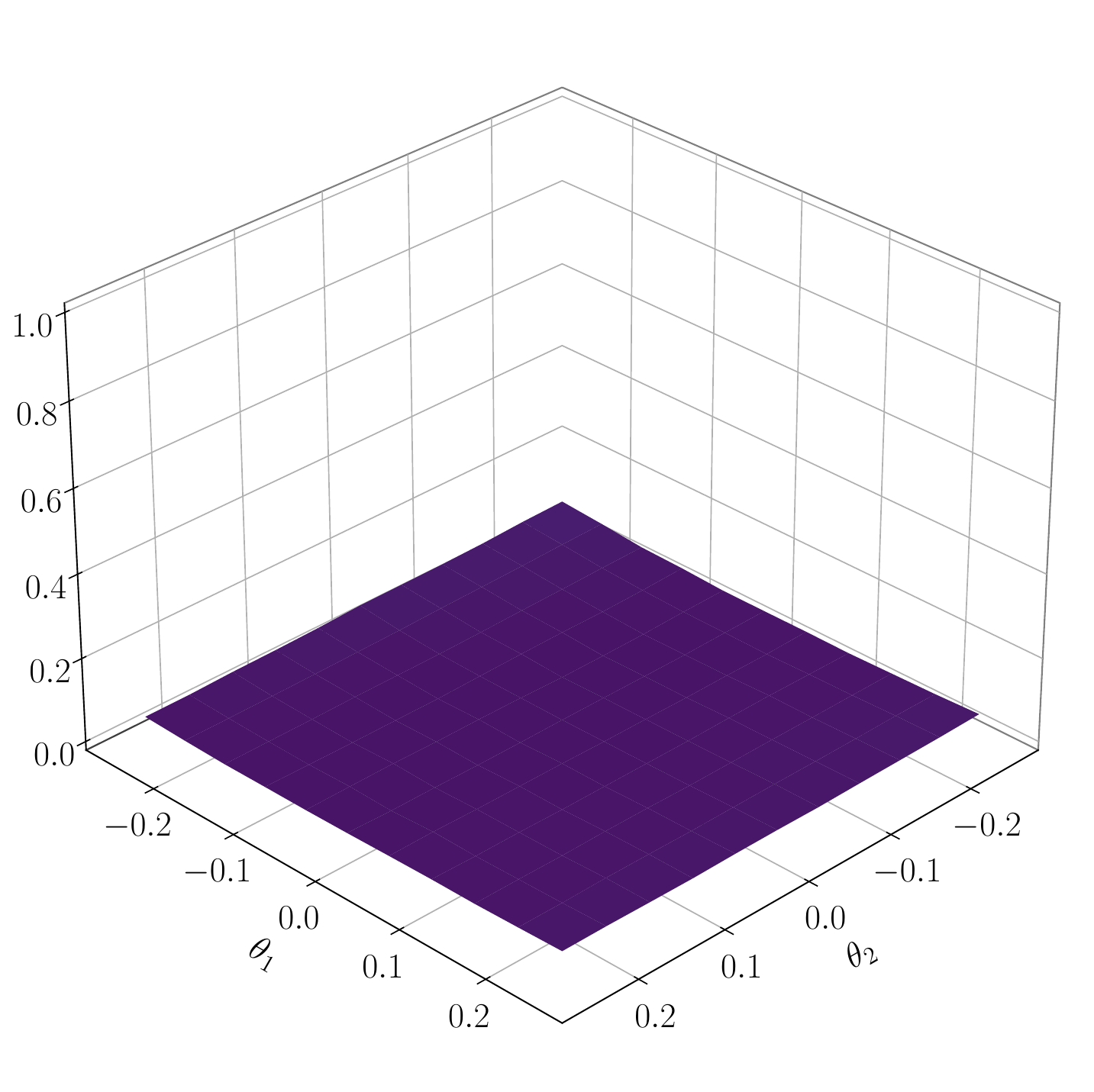}
			\caption[]%
			{{\small $j_1=1$, $j_2=3$}}
		\end{subfigure}
		\hfill
		\begin{subfigure}[b]{0.3\textwidth}
			\centering
			\includegraphics[width=0.99\textwidth]{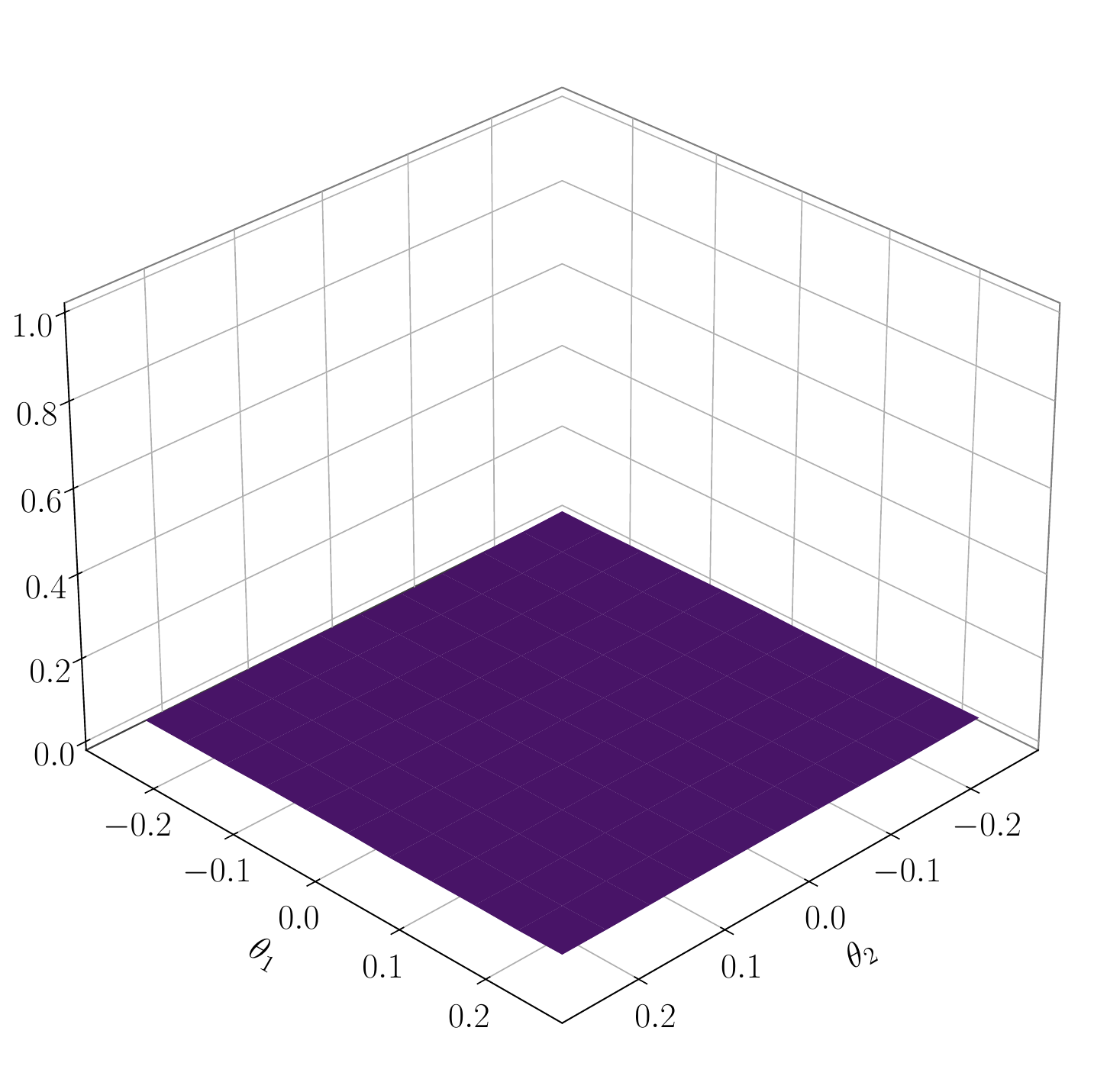}
			\caption[]%
			{{\small $j_1=3$, $j_2=3$}}
		\end{subfigure}
\end{figure}

\begin{figure}[!htbp]
		\centering
		\caption{\label{fig:IV-power-1-exp-exp-psi} IV Design 1, $\psi$ $(k=3)$ power, $\pi_i$ exponential}
		\begin{subfigure}[b]{0.3\textwidth}
			\centering
			\includegraphics[width=0.99\textwidth]{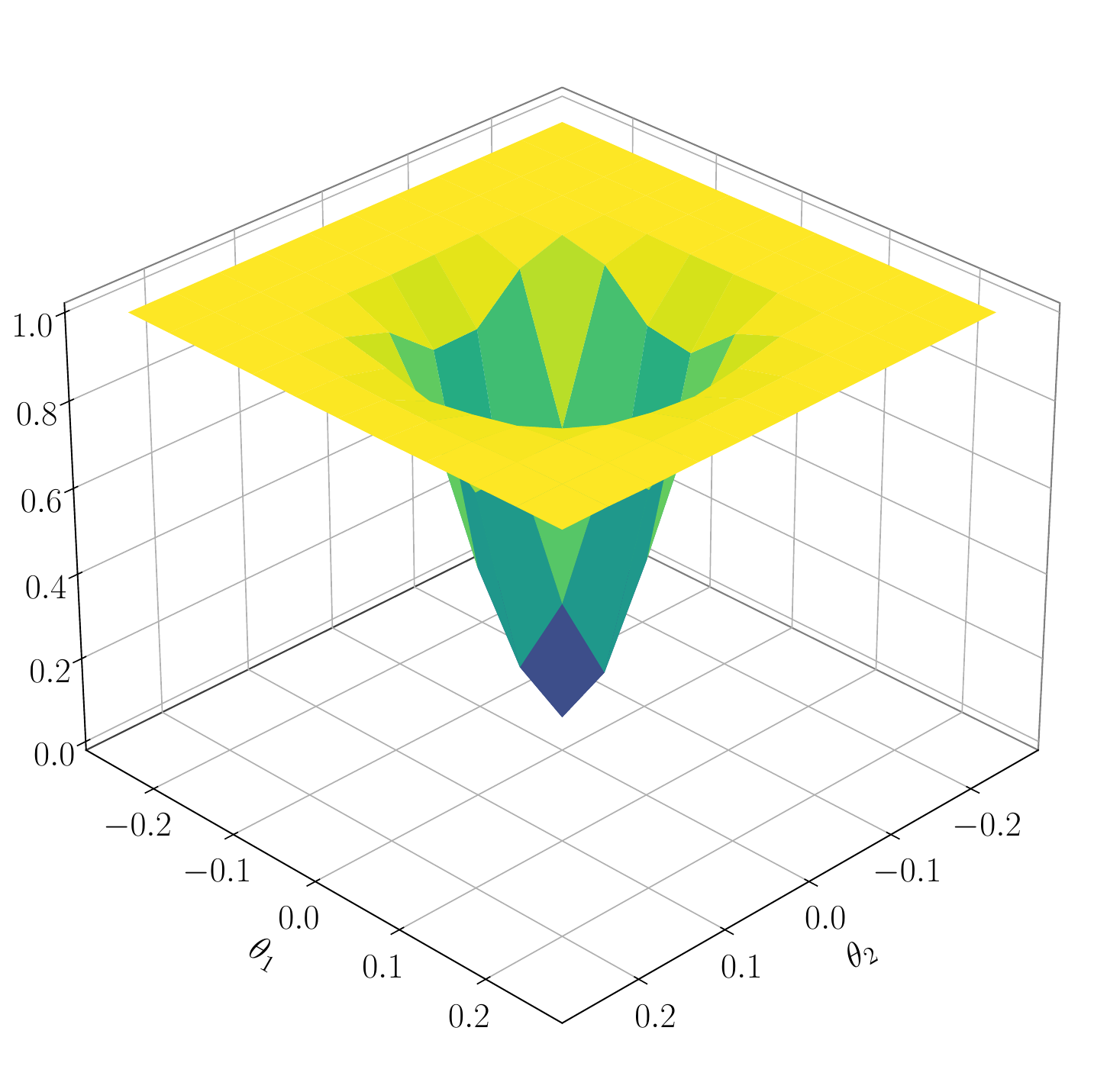}
			\caption[]%
			{{\small $j_1=1$, $j_2=1$}}
		\end{subfigure}
		\hfill
		\begin{subfigure}[b]{0.3\textwidth}
			\centering
			\includegraphics[width=0.99\textwidth]{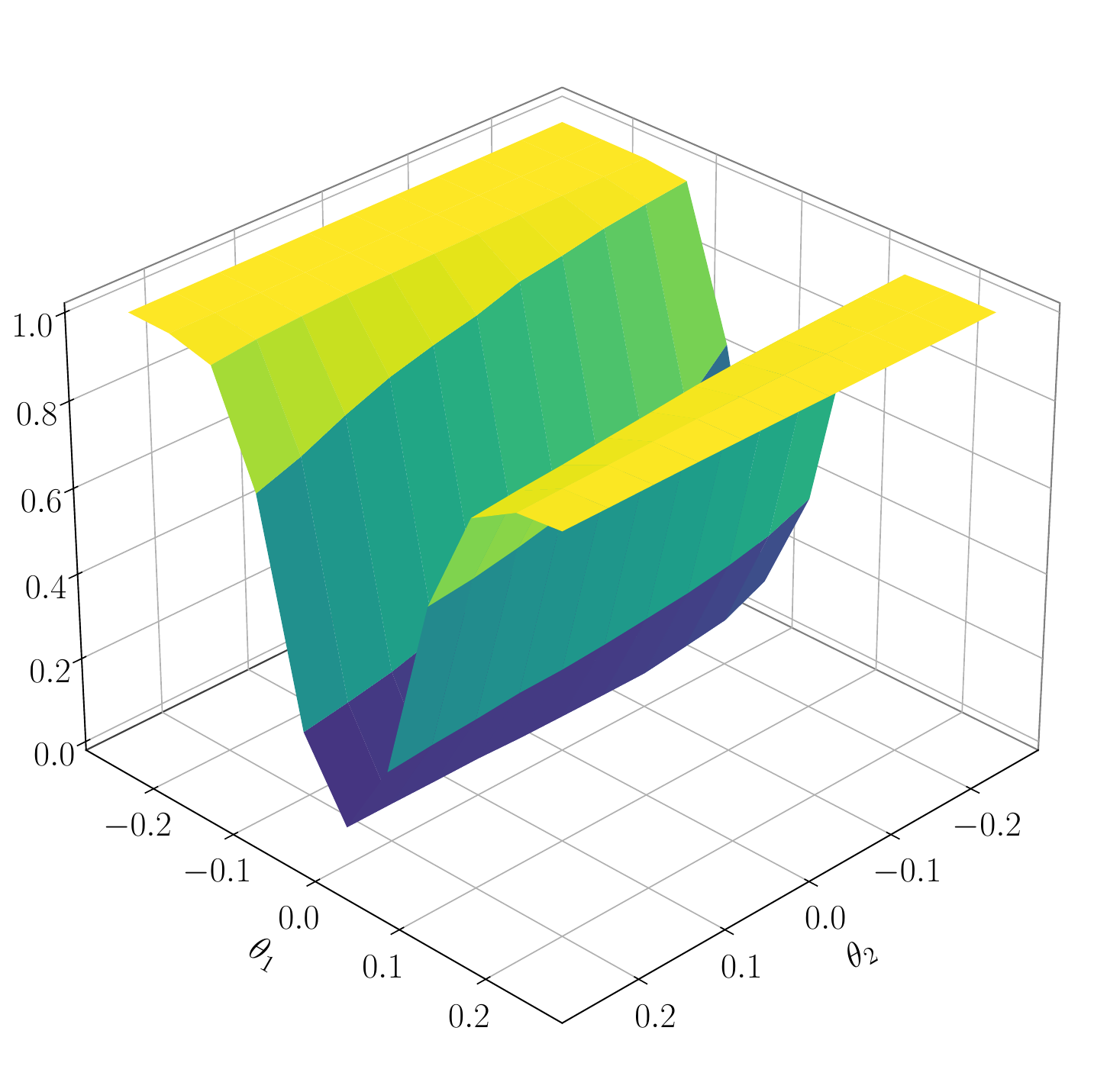}
			\caption[]%
			{{\small $j_1=1$, $j_2=3$}}
		\end{subfigure}
		\hfill
		\begin{subfigure}[b]{0.3\textwidth}
			\centering
			\includegraphics[width=0.99\textwidth]{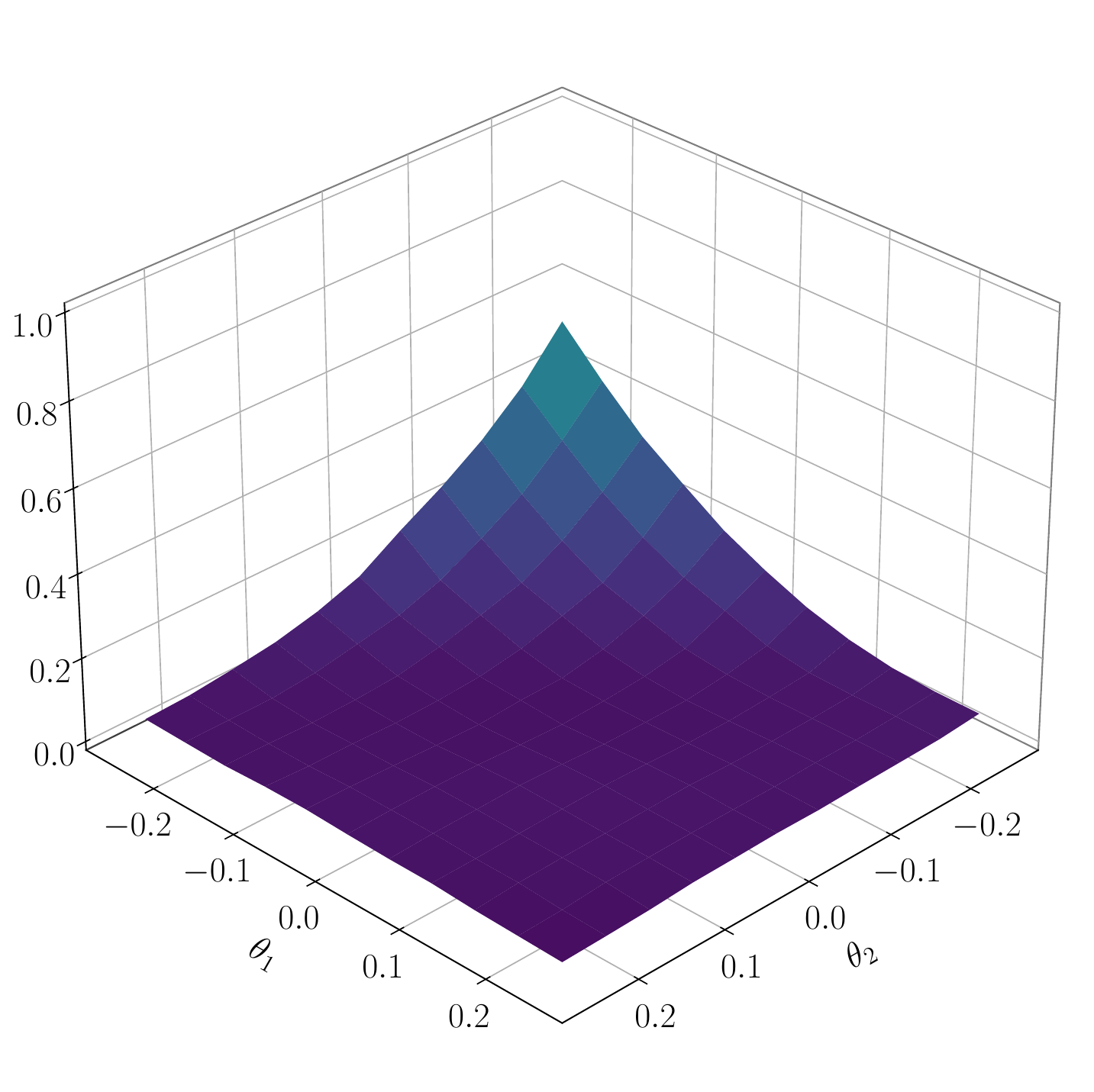}
			\caption[]%
			{{\small $j_1=3$, $j_2=3$}}
		\end{subfigure}
\end{figure}

\begin{figure}[!htbp]
		\centering
		\caption{\label{fig:IV-power-1-log-log-AR} IV Design 1, AR power, $\pi_i$ logistic}
		\begin{subfigure}[b]{0.3\textwidth}
			\centering
			\includegraphics[width=0.99\textwidth]{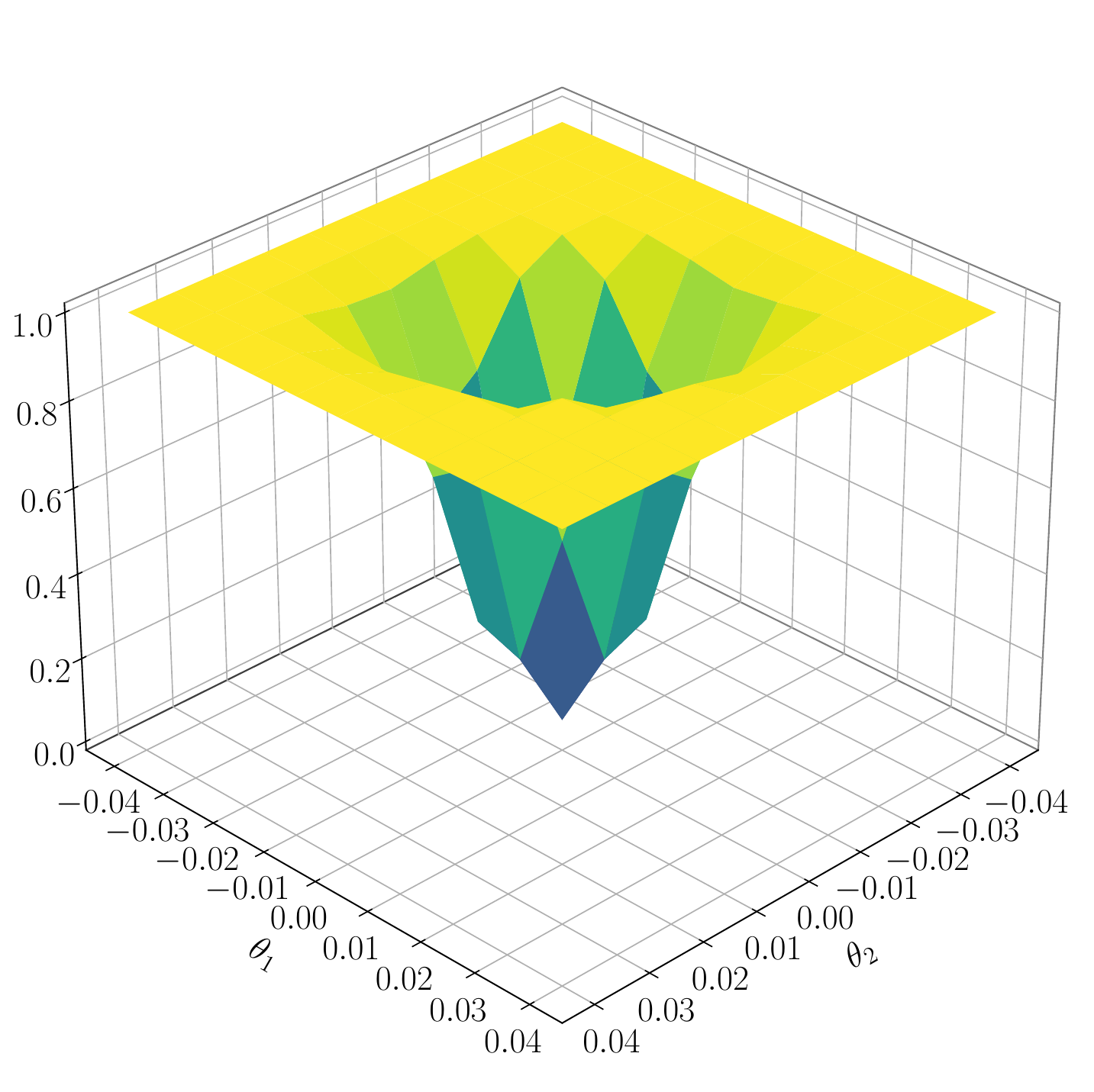}
			\caption[]%
			{{\small $j_1=1$, $j_2=1$}}
		\end{subfigure}
		\hfill
		\begin{subfigure}[b]{0.3\textwidth}
			\centering
			\includegraphics[width=0.99\textwidth]{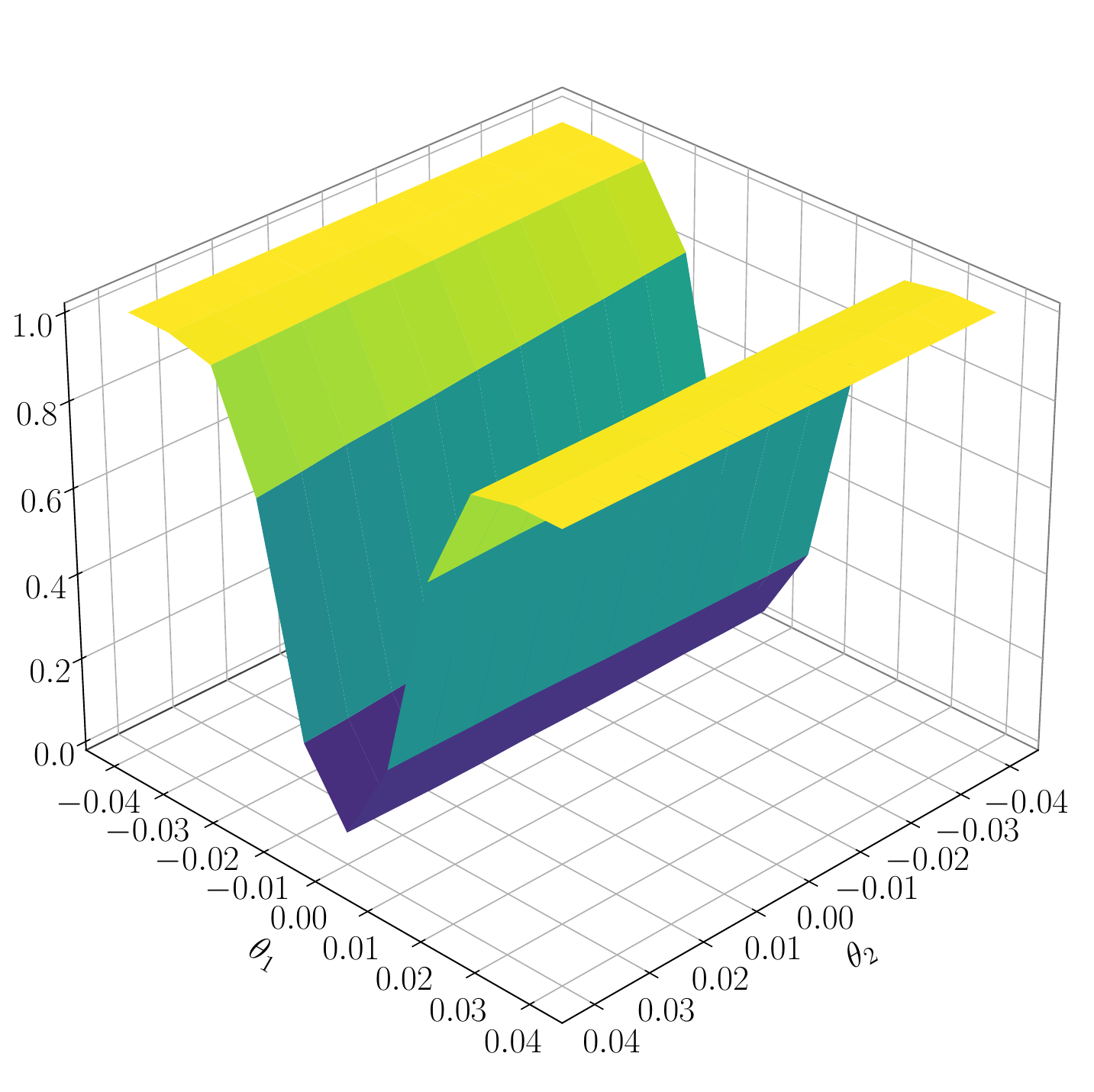}
			\caption[]%
			{{\small $j_1=1$, $j_2=3$}}
		\end{subfigure}
		\hfill
		\begin{subfigure}[b]{0.3\textwidth}
			\centering
			\includegraphics[width=0.99\textwidth]{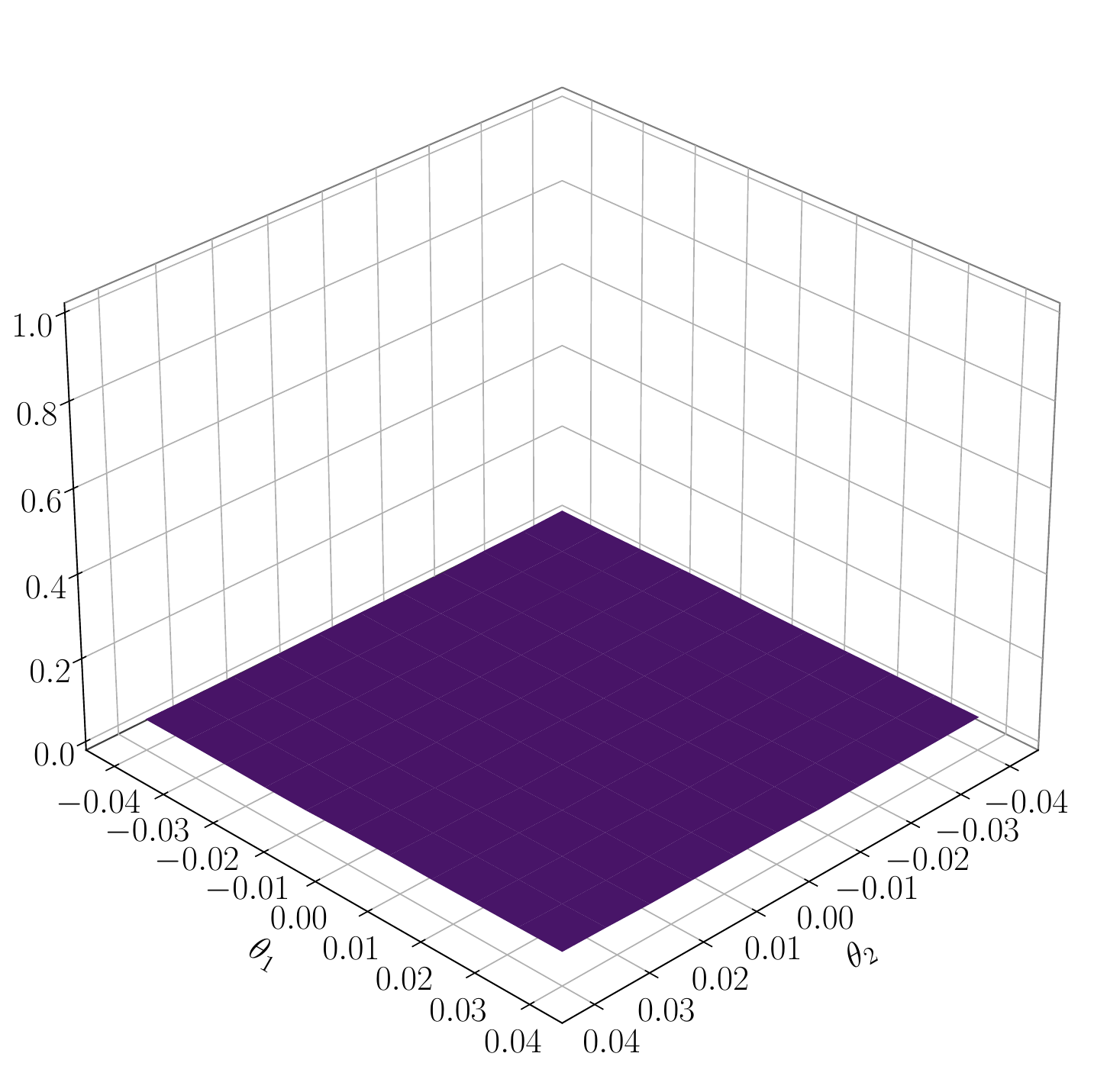}
			\caption[]%
			{{\small $j_1=3$, $j_2=3$}}
		\end{subfigure}
\end{figure}

\begin{figure}[!htbp]
		\centering
		\caption{\label{fig:IV-power-1-log-log-psi} IV Design 1, $\psi$ $(k=3)$ power, $\pi_i$ logistic}
		\begin{subfigure}[b]{0.3\textwidth}
			\centering
			\includegraphics[width=0.99\textwidth]{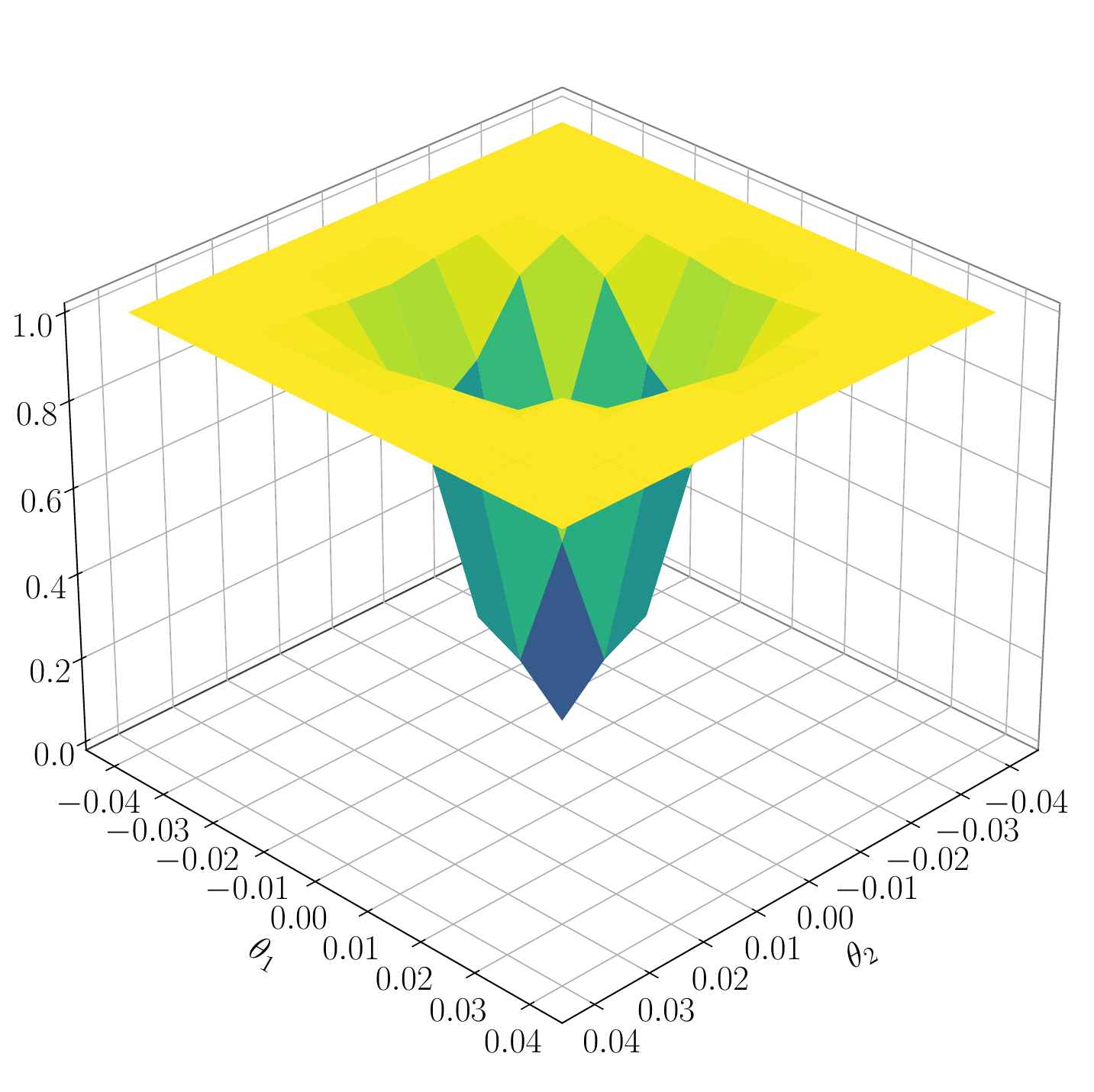}
			\caption[]%
			{{\small $j_1=1$, $j_2=1$}}
		\end{subfigure}
		\hfill
		\begin{subfigure}[b]{0.3\textwidth}
			\centering
			\includegraphics[width=0.99\textwidth]{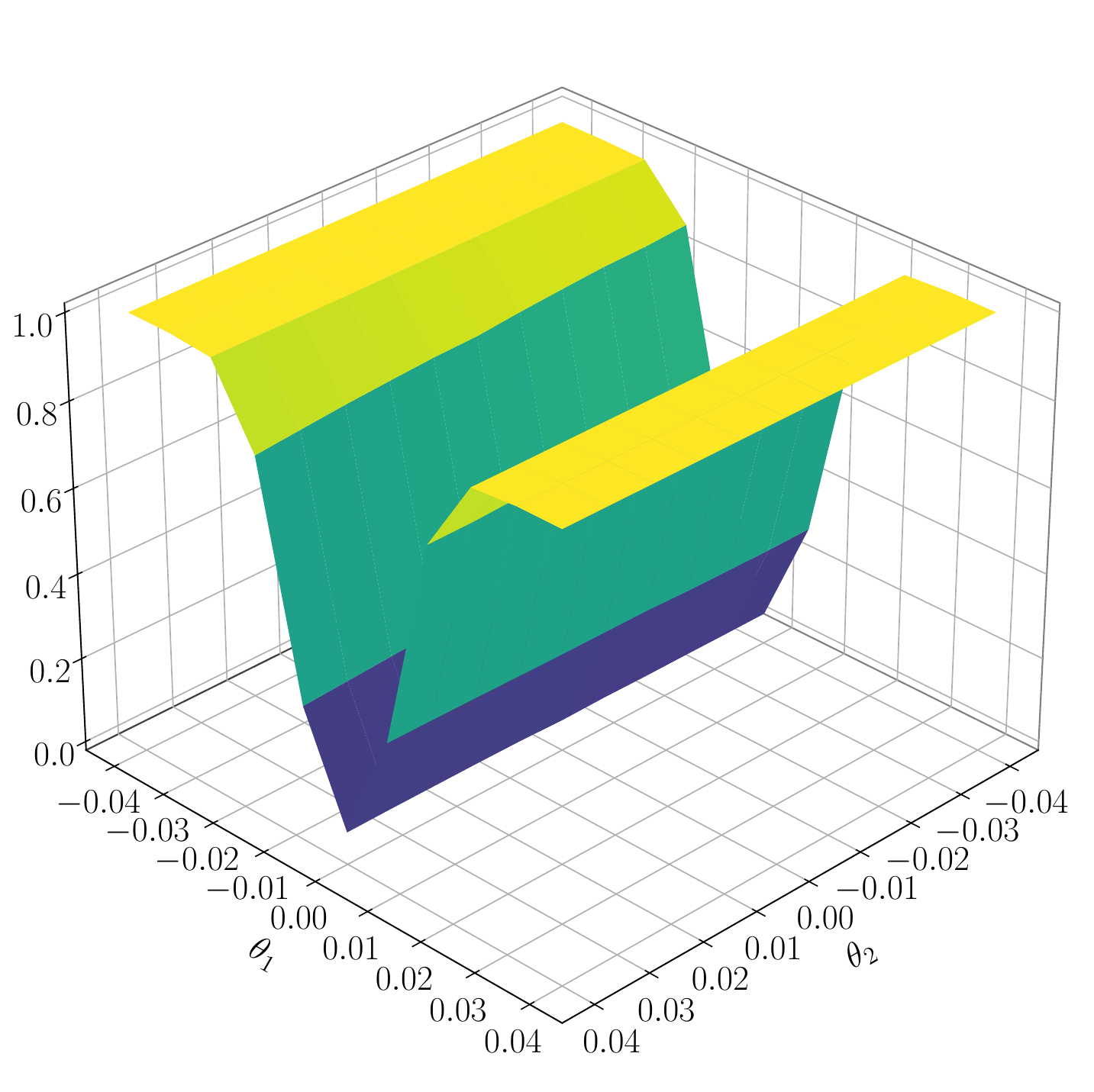}
			\caption[]%
			{{\small $j_1=1$, $j_2=3$}}
		\end{subfigure}
		\hfill
		\begin{subfigure}[b]{0.3\textwidth}
			\centering
			\includegraphics[width=0.99\textwidth]{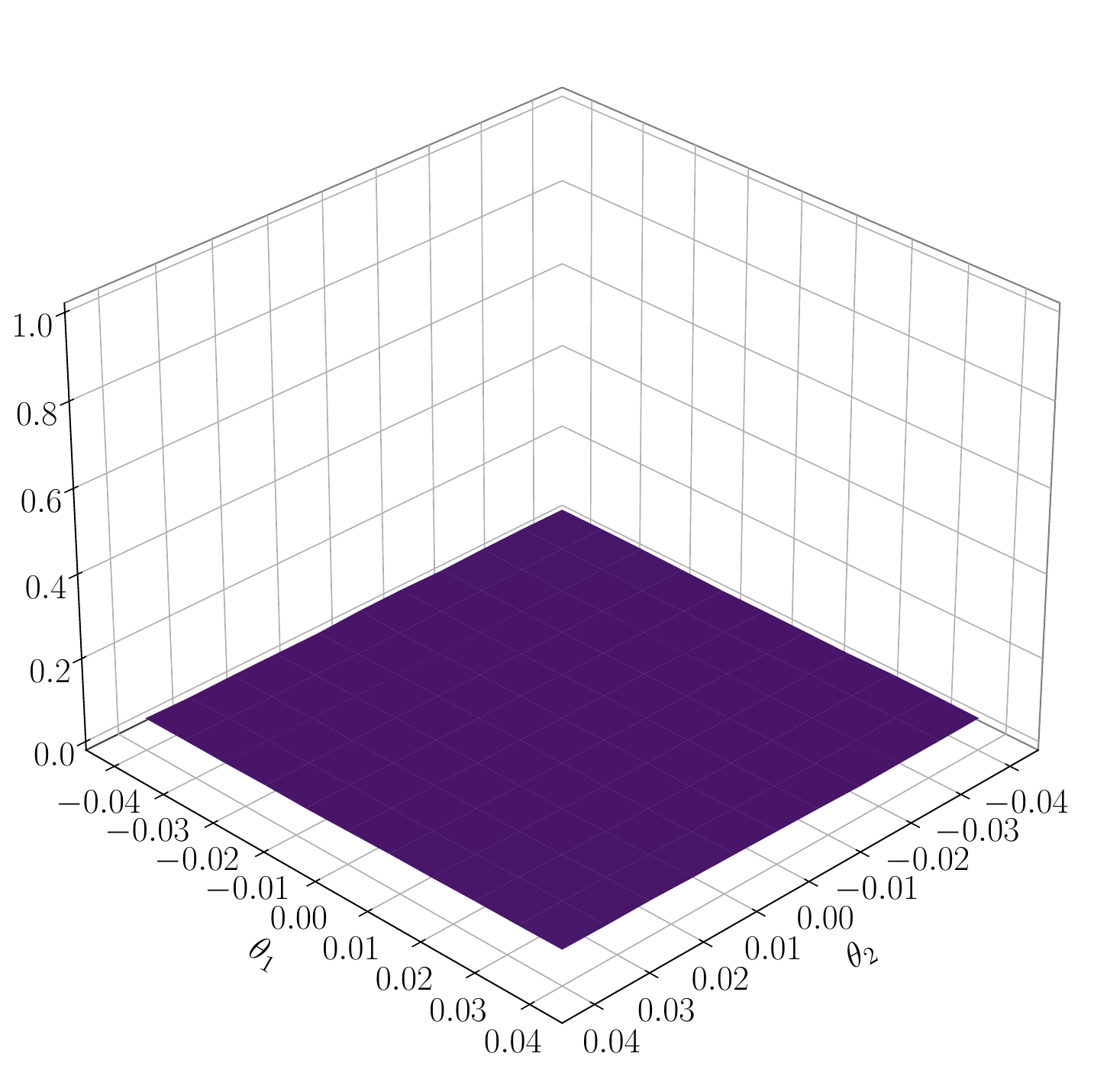}
			\caption[]%
			{{\small $j_1=3$, $j_2=3$}}
		\end{subfigure}
\end{figure}

\begin{figure}[!htbp]
		\centering
		\caption{\label{fig:IV-power-1-exp-log-AR} IV Design 1, AR power, $\pi_1$ exponential, $\pi_2$ logistic}
		\begin{subfigure}[b]{0.3\textwidth}
			\centering
			\includegraphics[width=0.99\textwidth]{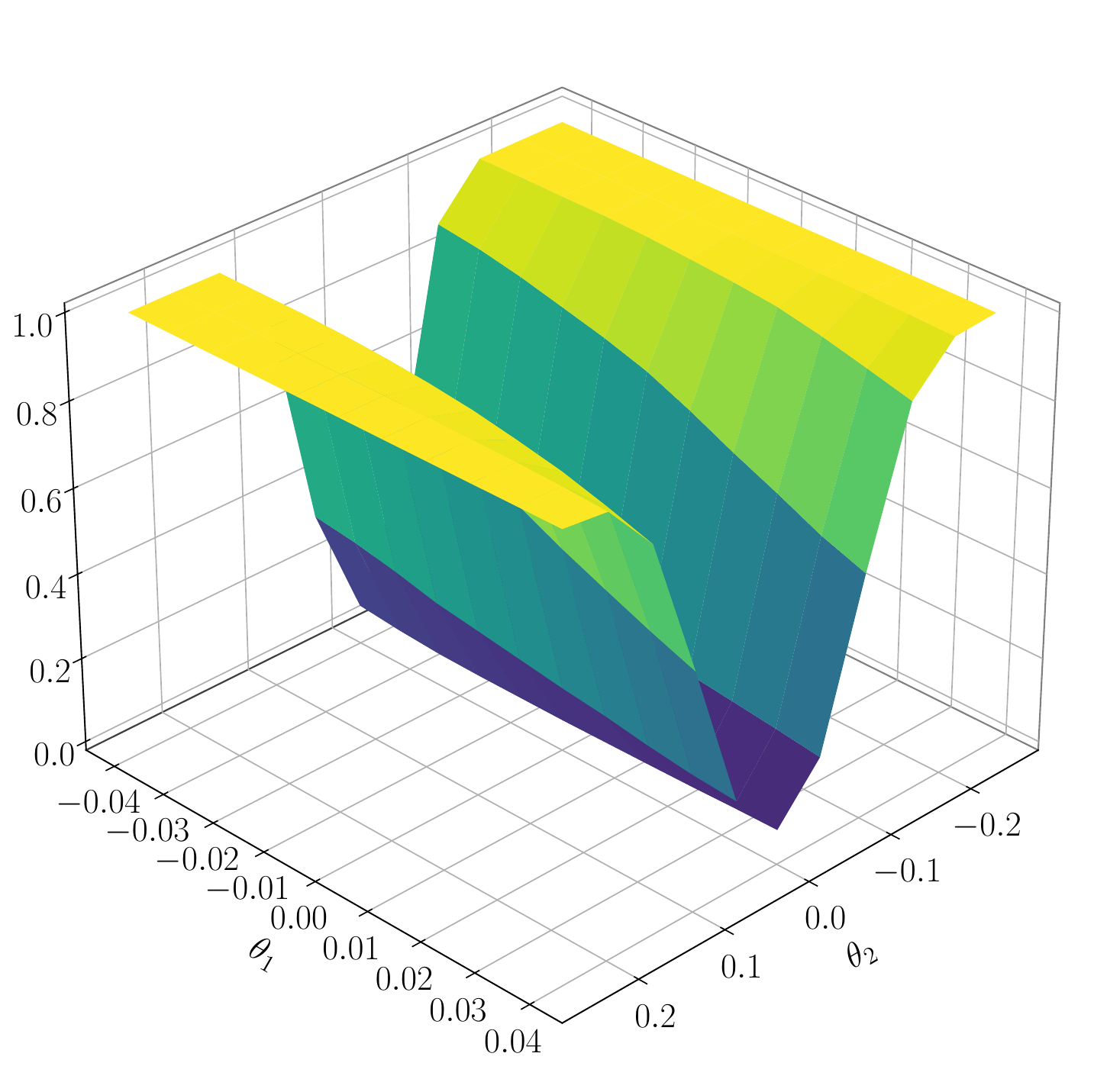}
			\caption[]%
			{{\small $j_1=1$, $j_2=1$}}
		\end{subfigure}
		\hfill
		\begin{subfigure}[b]{0.3\textwidth}
			\centering
			\includegraphics[width=0.99\textwidth]{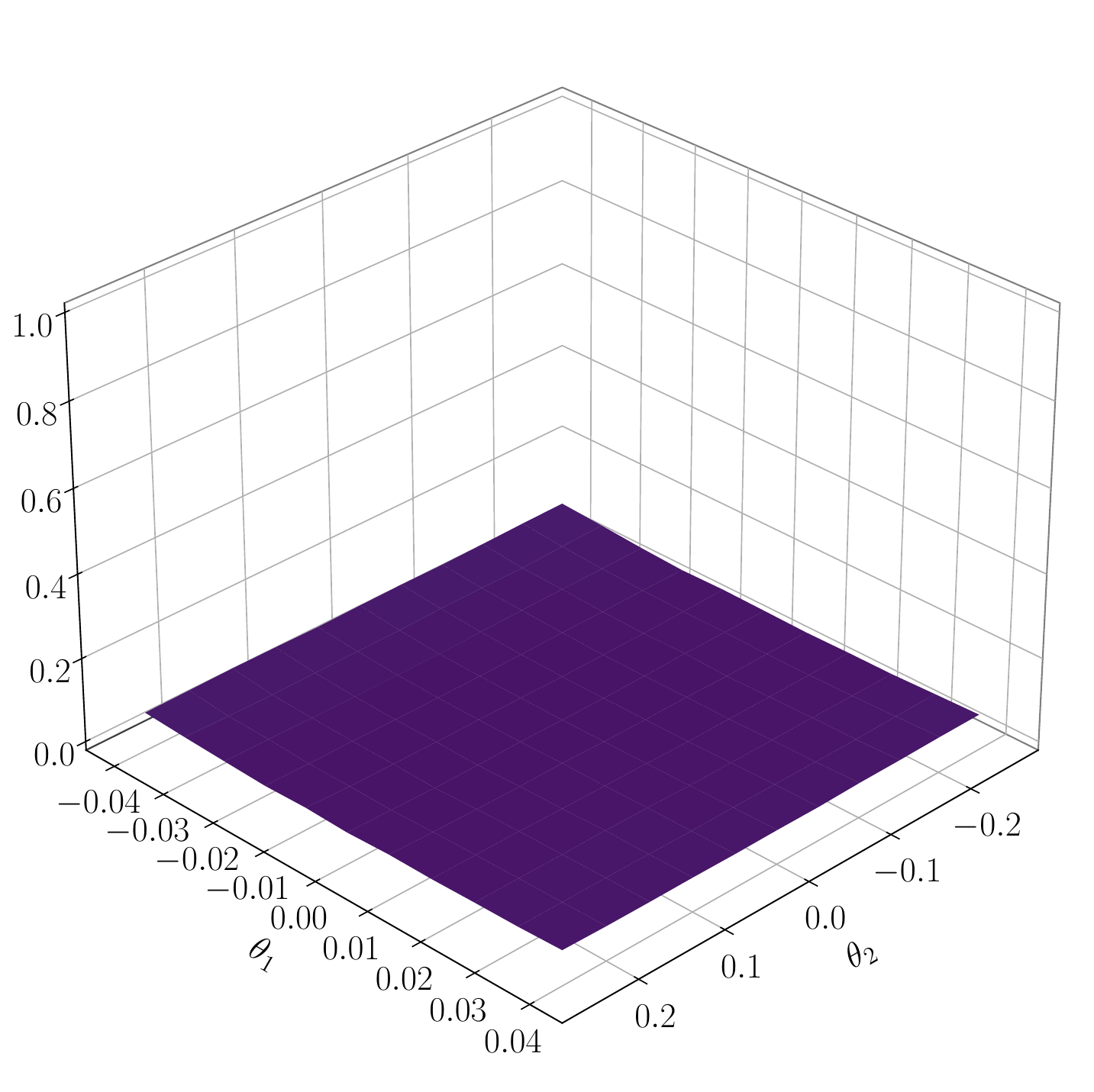}
			\caption[]%
			{{\small $j_1=1$, $j_2=3$}}
		\end{subfigure}
		\hfill
		\begin{subfigure}[b]{0.3\textwidth}
			\centering
			\includegraphics[width=0.99\textwidth]{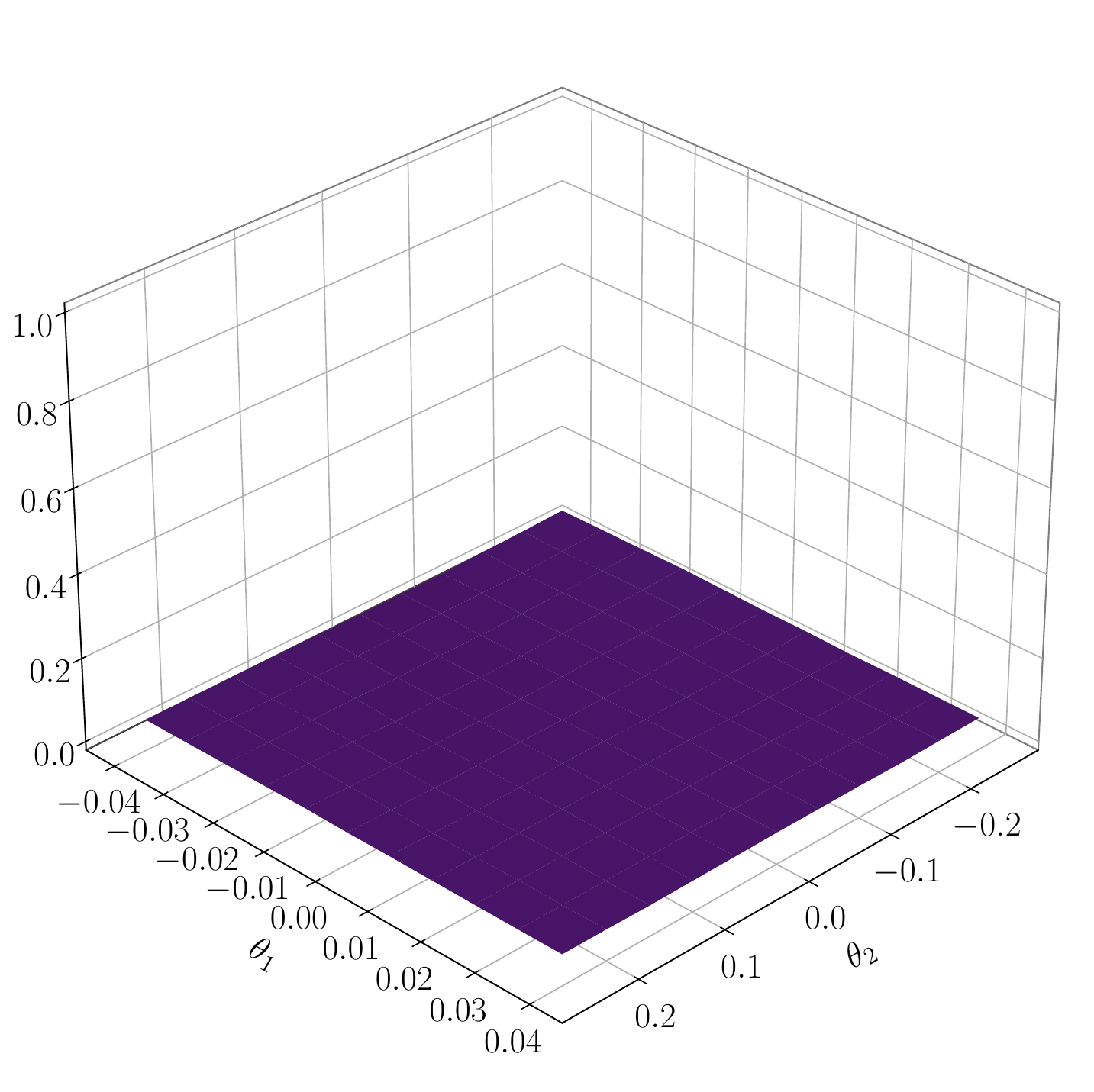}
			\caption[]%
			{{\small $j_1=3$, $j_2=3$}}
		\end{subfigure}
\end{figure}

\begin{figure}[!htbp]
		\centering
		\caption{\label{fig:IV-power-1-exp-log-psi} IV Design 1, $\psi$ $(k=3)$ power, $\pi_1$ exponential, $\pi_2$ logistic}
		\begin{subfigure}[b]{0.3\textwidth}
			\centering
			\includegraphics[width=0.99\textwidth]{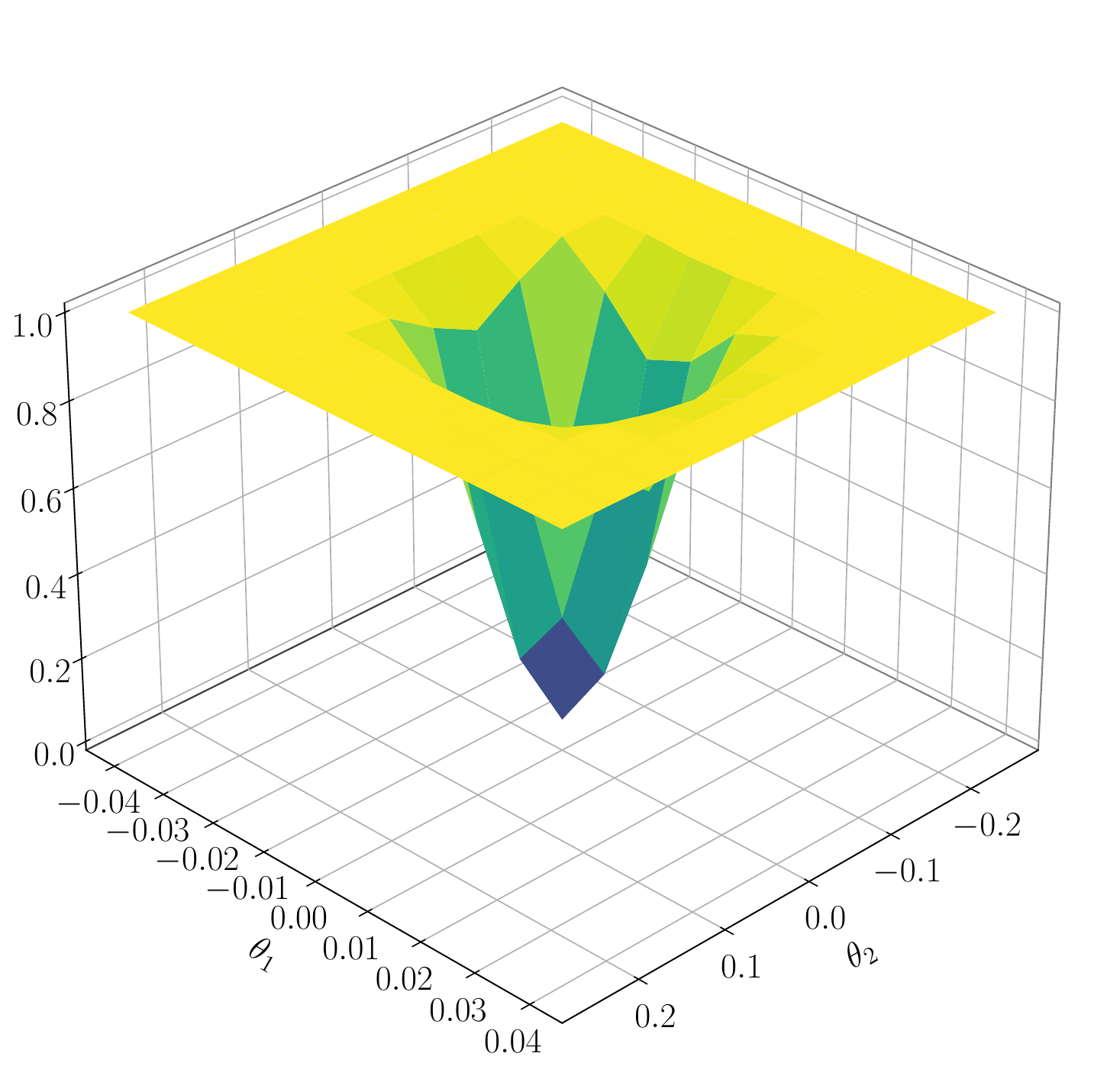}
			\caption[]%
			{{\small $j_1=1$, $j_2=1$}}
		\end{subfigure}
		\hfill
		\begin{subfigure}[b]{0.3\textwidth}
			\centering
			\includegraphics[width=0.99\textwidth]{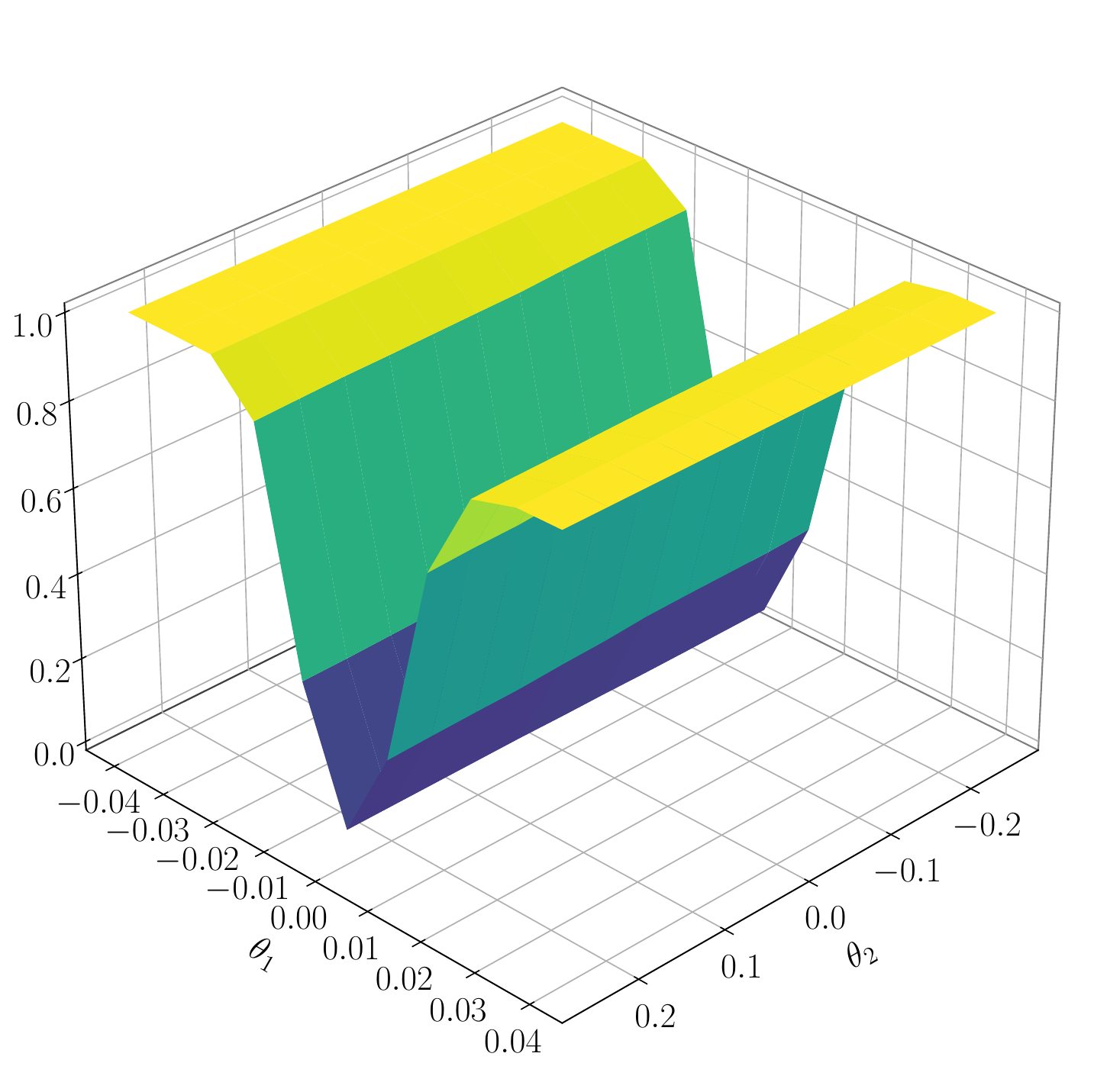}
			\caption[]%
			{{\small $j_1=1$, $j_2=3$}}
		\end{subfigure}
		\hfill
		\begin{subfigure}[b]{0.3\textwidth}
			\centering
			\includegraphics[width=0.99\textwidth]{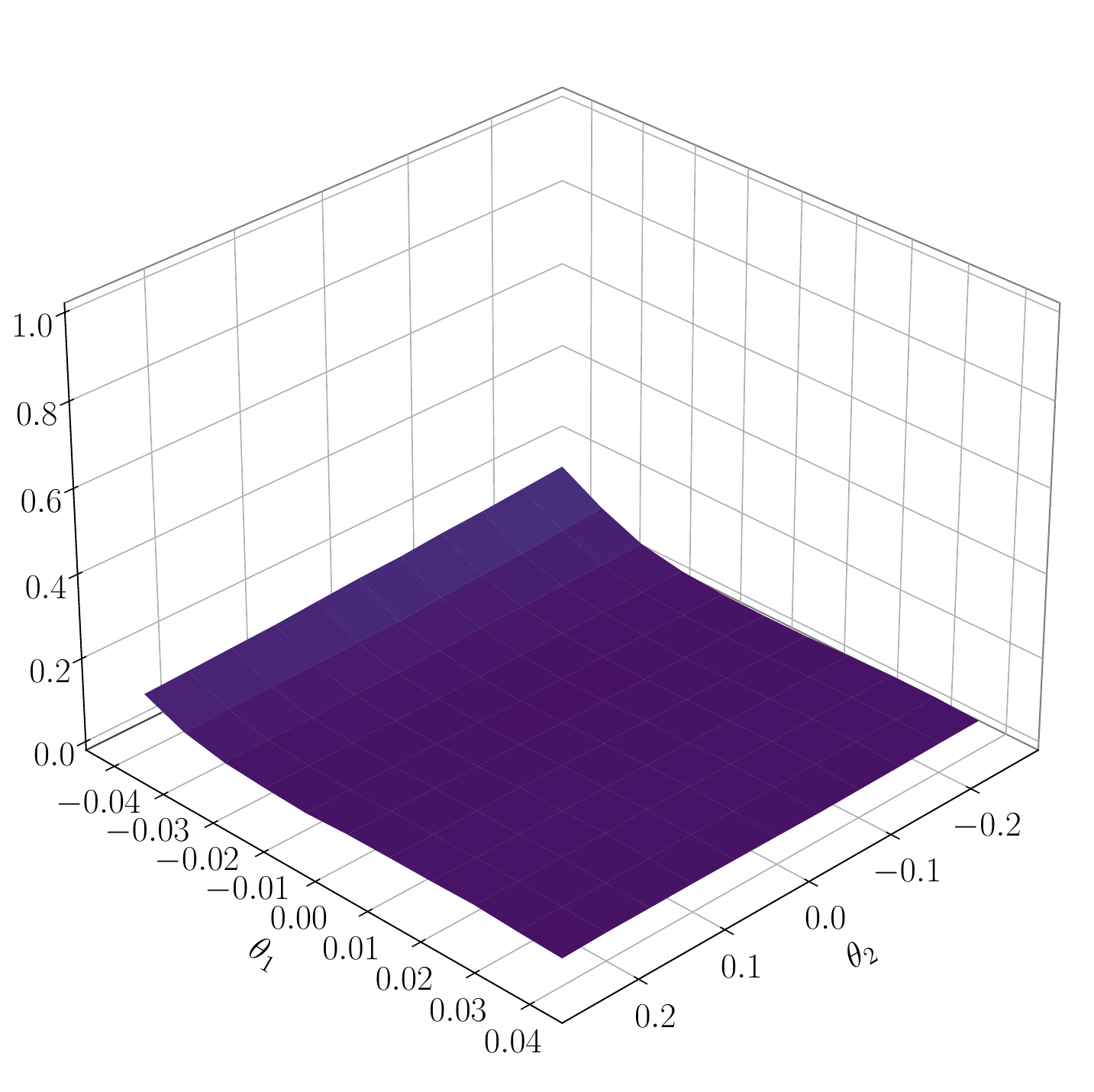}
			\caption[]%
			{{\small $j_1=3$, $j_2=3$}}
		\end{subfigure}
\end{figure}

Design 2 is a univariate, over identified model with heteroskedastic errors. $Z_1$, $\beta$ and $Z_2$ are as in Design 1, and $\pi(Z_2) = (\pi_1(Z_{2, 1}) + \pi_2(Z_{2, 2}))/2$ where the $\pi_i$ have one of the exponential or logistic forms of Table \ref{tbl:SIM-index-fcns}.\footnote{This functional form is treated as unknown and not imposed in the estimation of $\pi$.} I draw $(\tilde{\epsilon}, \tilde{\upsilon})$ from a zero-mean multivariate normal distribution with unit variances,  covariance $0.95$ and set $(\epsilon, \upsilon)\transp = \left[\begin{smallmatrix}
    \sqrt{1 + \sin(Z_{2, 1})^2} & 0\\
    0 & \sqrt{1 + \cos(Z_{2, 2})^2} 
\end{smallmatrix}\right](\tilde{\epsilon}, \tilde{\upsilon})\transp$.

The $\psi$ tests are computed in the same manner as in Design 1. I also compute the AR, LM and CLR tests based on $Z_2$ (with $Z_1$ partialled out) as well as the many weak instrument robust jackknife AR test of \cite{MS21}. MS$_1$ uses $Z_2$ as instruments; MS$_2$ uses the (tensor product) of Legendre polynomials used to estimate $\pi$ as instruments. 

The empirical rejection frequencies under the null are shown in Tables \ref{tbl:IV-size2i} -- \ref{tbl:IV-size2iii}. As in Design 1, the parameter $j$ controls the level of identification: the larger is $j$ the closer $\pi_j$ is to a constant function and hence $\theta$ unidentified. In each specification all the considered tests reject close to the nominal level; the MS$_2$ test is somewhat oversized for smaller $n$.
The power of these tests is plotted in Figures \ref{fig:IV-power-2-exp-exp} -- \ref{fig:IV-power-2-exp-log}; the $\psi_{n, \theta_0}$ tests are  denoted by $k=3$, AIC and BIC, corresponding to how $\pi$ is estimated. For the design with both $\pi_i$ exponential, the $\psi_{n, \theta_0}$ test clearly delivers the highest power whenever there is non-trivial power available; of the other tests, only MS$_2$ delivers non-trivial power in this specification. For the case with both $\pi_i$ logistic, all tests except MS$_2$ perform similarly, with MS$_2$ offering lower power. The same holds for the final specification, where $\pi_1$ is exponential and $\pi_2$ logistic.

\begin{table*}[!htbp]
	\renewcommand*{\arraystretch}{0.95}
	\footnotesize
	\begin{center}
		\caption{\label{tbl:IV-size2i} Empirical rejection frequencies, IV, Design 2, Exponential - Exponential}
		\begin{threeparttable}
			
\begin{tabular}{rrrrrrrrrr}
\toprule
\multicolumn{1}{c}{} & \multicolumn{1}{c}{} & \multicolumn{1}{c}{AR} & \multicolumn{1}{c}{LM} & \multicolumn{1}{c}{CLR} & \multicolumn{1}{c}{MS$_1$} & \multicolumn{1}{c}{MS$_2$} & \multicolumn{3}{c}{$\psi$} \\
\cmidrule(l{3pt}r{3pt}){3-3} \cmidrule(l{3pt}r{3pt}){4-4} \cmidrule(l{3pt}r{3pt}){5-5} \cmidrule(l{3pt}r{3pt}){6-6} \cmidrule(l{3pt}r{3pt}){7-7} \cmidrule(l{3pt}r{3pt}){8-10}
$n$ & $j$ &  &  &  &  &  & $k=3$ & AIC & BIC\\
\midrule
200 & 1 & 5.68 & 5.52 & 5.94 & 7.56 & 9.24 & 6.36 & 5.74 & 6.36\\
200 & 2 & 5.68 & 6.08 & 5.88 & 7.56 & 9.24 & 7.62 & 7.62 & 7.62\\
200 & 3 & 5.68 & 6.46 & 6.16 & 7.56 & 9.24 & 7.36 & 7.44 & 7.36\\
\addlinespace
400 & 1 & 5.28 & 5.48 & 5.26 & 7.18 & 8.78 & 6.32 & 5.38 & 6.36\\
400 & 2 & 5.28 & 5.72 & 5.48 & 7.18 & 8.78 & 7.98 & 7.98 & 7.98\\
400 & 3 & 5.28 & 6.06 & 5.84 & 7.18 & 8.78 & 3.42 & 3.68 & 3.42\\
\addlinespace
600 & 1 & 5.86 & 5.48 & 6.04 & 7.58 & 7.92 & 5.16 & 5.16 & 5.14\\
600 & 2 & 5.86 & 5.50 & 5.86 & 7.58 & 7.92 & 6.28 & 6.20 & 6.28\\
600 & 3 & 5.86 & 5.76 & 6.30 & 7.58 & 7.92 & 1.32 & 1.74 & 1.32\\
\bottomrule
\end{tabular}

			\begin{tablenotes}
			\footnotesize
			\item \notes{The functions $\pi_i$ have the exponential form in Table \ref{tbl:SIM-index-fcns} with $c_j$ corresponding to column $j$.
			}
			\end{tablenotes}	
		\end{threeparttable}
	\end{center}
\end{table*}

\begin{table*}[!htbp]
	\renewcommand*{\arraystretch}{0.95}
	\footnotesize
	\begin{center}
		\caption{\label{tbl:IV-size2ii} Empirical rejection frequencies, IV, Design 2, Logistic - Logistic}
		\begin{threeparttable}
			
\begin{tabular}{rrrrrrrrrr}
\toprule
\multicolumn{1}{c}{} & \multicolumn{1}{c}{} & \multicolumn{1}{c}{AR} & \multicolumn{1}{c}{LM} & \multicolumn{1}{c}{CLR} & \multicolumn{1}{c}{MS$_1$} & \multicolumn{1}{c}{MS$_2$} & \multicolumn{3}{c}{$\psi$} \\
\cmidrule(l{3pt}r{3pt}){3-3} \cmidrule(l{3pt}r{3pt}){4-4} \cmidrule(l{3pt}r{3pt}){5-5} \cmidrule(l{3pt}r{3pt}){6-6} \cmidrule(l{3pt}r{3pt}){7-7} \cmidrule(l{3pt}r{3pt}){8-10}
$n$ & $j$ &  &  &  &  &  & $k=3$ & AIC & BIC\\
\midrule
200 & 1 & 5.68 & 5.78 & 6.94 & 7.56 & 9.24 & 4.82 & 4.46 & 4.80\\
200 & 2 & 5.68 & 5.94 & 7.22 & 7.56 & 9.24 & 5.76 & 5.74 & 5.76\\
200 & 3 & 5.68 & 5.82 & 6.86 & 7.56 & 9.24 & 7.68 & 7.78 & 7.68\\
\addlinespace
400 & 1 & 5.28 & 4.76 & 5.98 & 7.18 & 8.78 & 4.32 & 4.04 & 4.18\\
400 & 2 & 5.28 & 4.80 & 6.26 & 7.18 & 8.78 & 4.90 & 4.98 & 4.90\\
400 & 3 & 5.28 & 4.94 & 5.80 & 7.18 & 8.78 & 3.18 & 3.52 & 3.18\\
\addlinespace
600 & 1 & 5.86 & 5.12 & 6.42 & 7.58 & 7.92 & 4.84 & 4.50 & 4.66\\
600 & 2 & 5.86 & 5.20 & 6.32 & 7.58 & 7.92 & 5.00 & 4.90 & 5.00\\
600 & 3 & 5.86 & 5.14 & 6.28 & 7.58 & 7.92 & 1.56 & 1.84 & 1.56\\
\bottomrule
\end{tabular}

			\begin{tablenotes}
			\footnotesize
			\item \notes{The functions $\pi_i$ have the logistic form in Table \ref{tbl:SIM-index-fcns} with $c_j$ corresponding to column $j$.
			}
			\end{tablenotes}	
		\end{threeparttable}
	\end{center}
\end{table*}

\begin{table*}[!htbp]
	\renewcommand*{\arraystretch}{0.95}
	\footnotesize
	\begin{center}
		\caption{\label{tbl:IV-size2iii} Empirical rejection frequencies, IV, Design 2, Exponential - Logistic}
		\begin{threeparttable}
			
\begin{tabular}{rrrrrrrrrr}
\toprule
\multicolumn{1}{c}{} & \multicolumn{1}{c}{} & \multicolumn{1}{c}{AR} & \multicolumn{1}{c}{LM} & \multicolumn{1}{c}{CLR} & \multicolumn{1}{c}{MS$_1$} & \multicolumn{1}{c}{MS$_2$} & \multicolumn{3}{c}{$\psi$} \\
\cmidrule(l{3pt}r{3pt}){3-3} \cmidrule(l{3pt}r{3pt}){4-4} \cmidrule(l{3pt}r{3pt}){5-5} \cmidrule(l{3pt}r{3pt}){6-6} \cmidrule(l{3pt}r{3pt}){7-7} \cmidrule(l{3pt}r{3pt}){8-10}
$n$ & $j$ &  &  &  &  &  & $k=3$ & AIC & BIC\\
\midrule
200 & 1 & 5.68 & 6.14 & 7.12 & 7.56 & 9.24 & 5.28 & 4.62 & 5.30\\
200 & 2 & 5.68 & 5.92 & 7.18 & 7.56 & 9.24 & 6.90 & 6.90 & 6.90\\
200 & 3 & 5.68 & 5.90 & 6.60 & 7.56 & 9.24 & 7.16 & 7.26 & 7.16\\
\addlinespace
400 & 1 & 5.28 & 4.94 & 6.44 & 7.18 & 8.78 & 4.74 & 4.28 & 4.70\\
400 & 2 & 5.28 & 5.24 & 6.54 & 7.18 & 8.78 & 5.52 & 5.64 & 5.52\\
400 & 3 & 5.28 & 5.00 & 5.66 & 7.18 & 8.78 & 3.10 & 3.42 & 3.10\\
\addlinespace
600 & 1 & 5.86 & 5.20 & 6.30 & 7.58 & 7.92 & 4.86 & 4.56 & 5.00\\
600 & 2 & 5.86 & 5.14 & 6.48 & 7.58 & 7.92 & 5.62 & 5.50 & 5.62\\
600 & 3 & 5.86 & 5.22 & 6.24 & 7.58 & 7.92 & 1.20 & 1.58 & 1.20\\
\bottomrule
\end{tabular}

			\begin{tablenotes}
			\footnotesize
			\item \notes{$\pi_1$, $\pi_2$ have the exponential and logistic form in Table \ref{tbl:SIM-index-fcns} respectively with $c_{j_i}$ corresponding to column $j$.
			}
			\end{tablenotes}	
		\end{threeparttable}
	\end{center}
\end{table*}

\begin{figure}[htbp]
	\begin{minipage}{.99\textwidth}
		\centering
		\caption{\label{fig:IV-power-2-exp-exp} \small IV design 2, Power curves, exponential $\pi_i$ }
		\includegraphics[scale = 0.45]{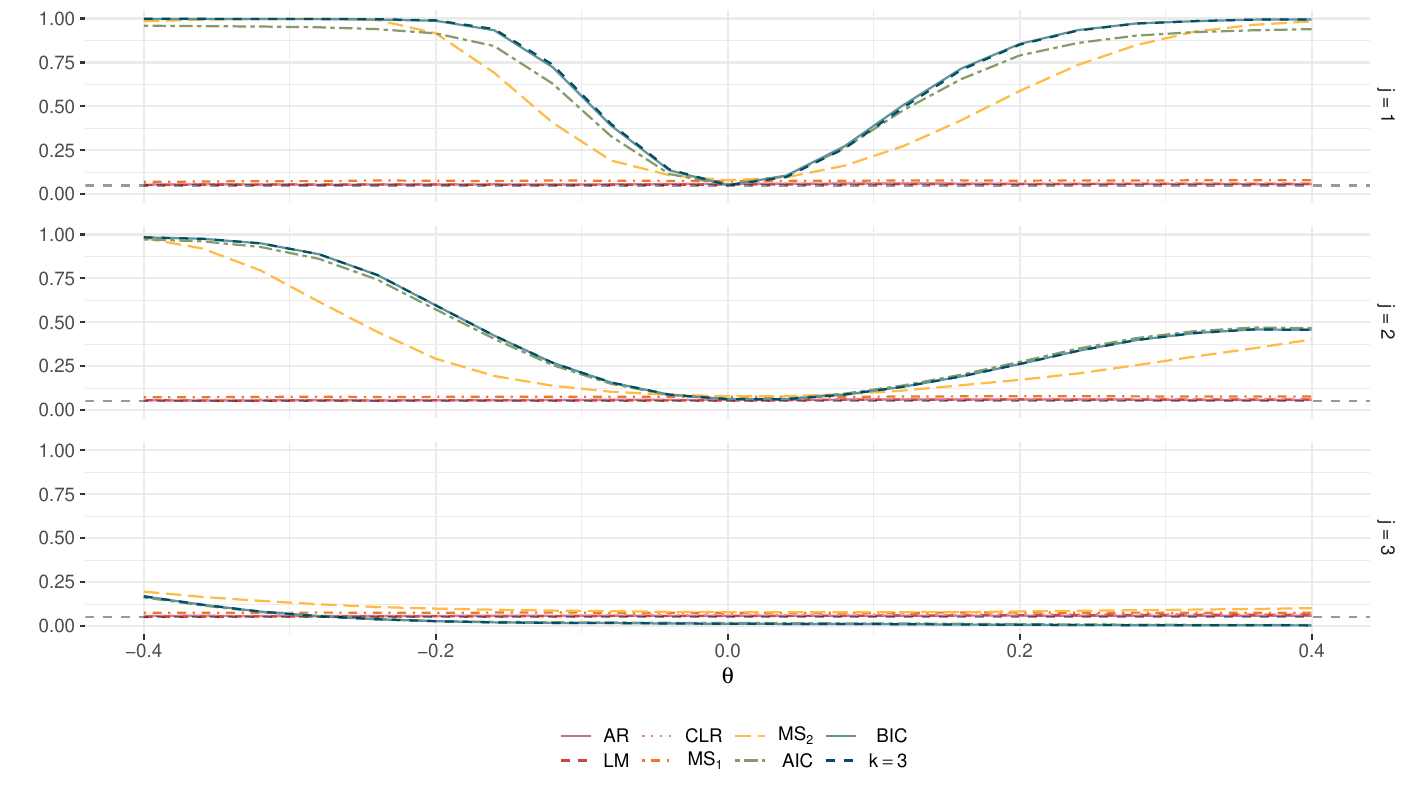}
		\end{minipage}
\end{figure}
\begin{figure}[htbp]
	\begin{minipage}{.99\textwidth}
		\centering
		\caption{\label{fig:IV-power-2-log-log} \small IV Design 2, Power curves, logistic $\pi_i$ }
		\includegraphics[scale = 0.45]{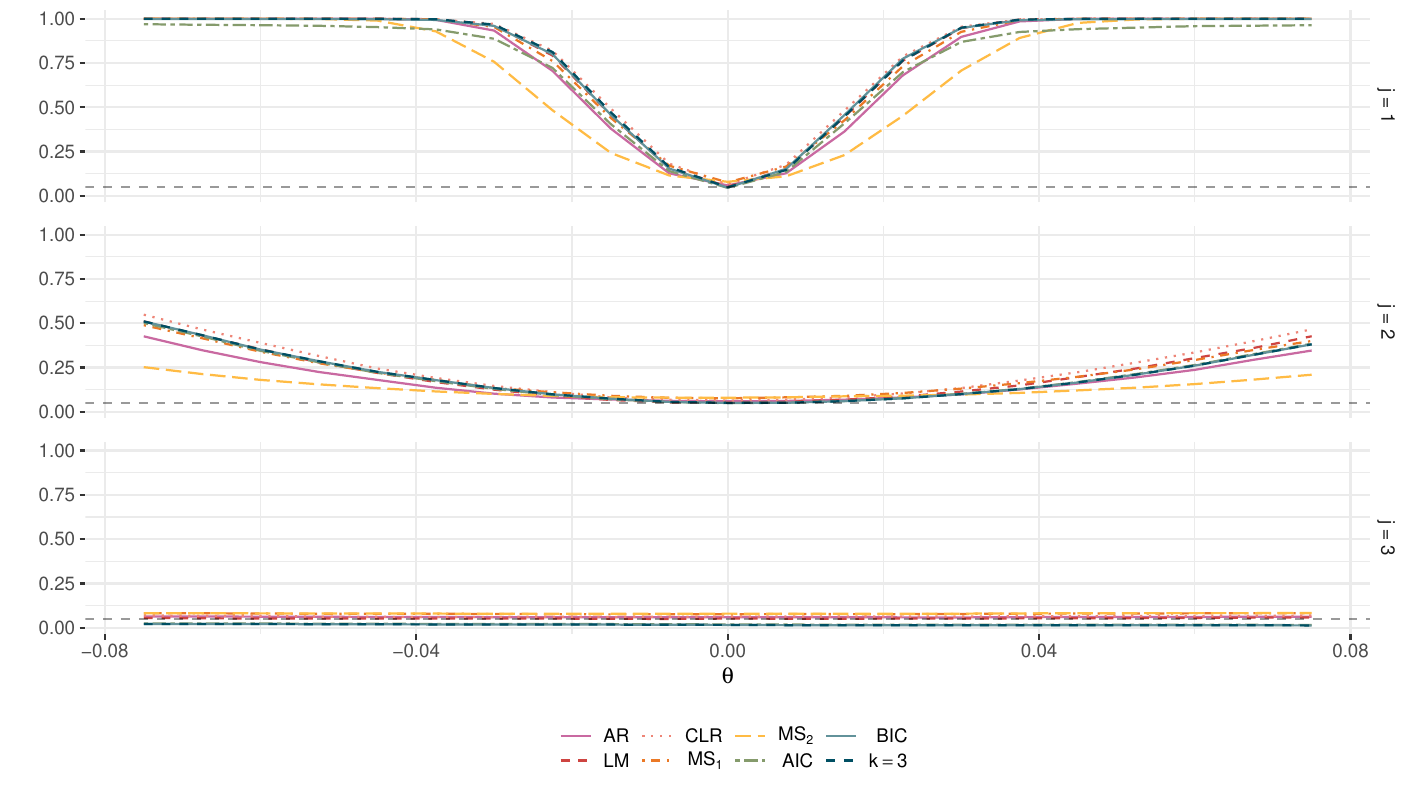}
		\end{minipage}
\end{figure}

\begin{figure}[htbp]
	\begin{minipage}{.99\textwidth}
		\centering
		\caption{\label{fig:IV-power-2-exp-log} \small IV Design 2, Power curves, exponential $\pi_1$, logistic $\pi_2$ }
		\includegraphics[scale = 0.45]{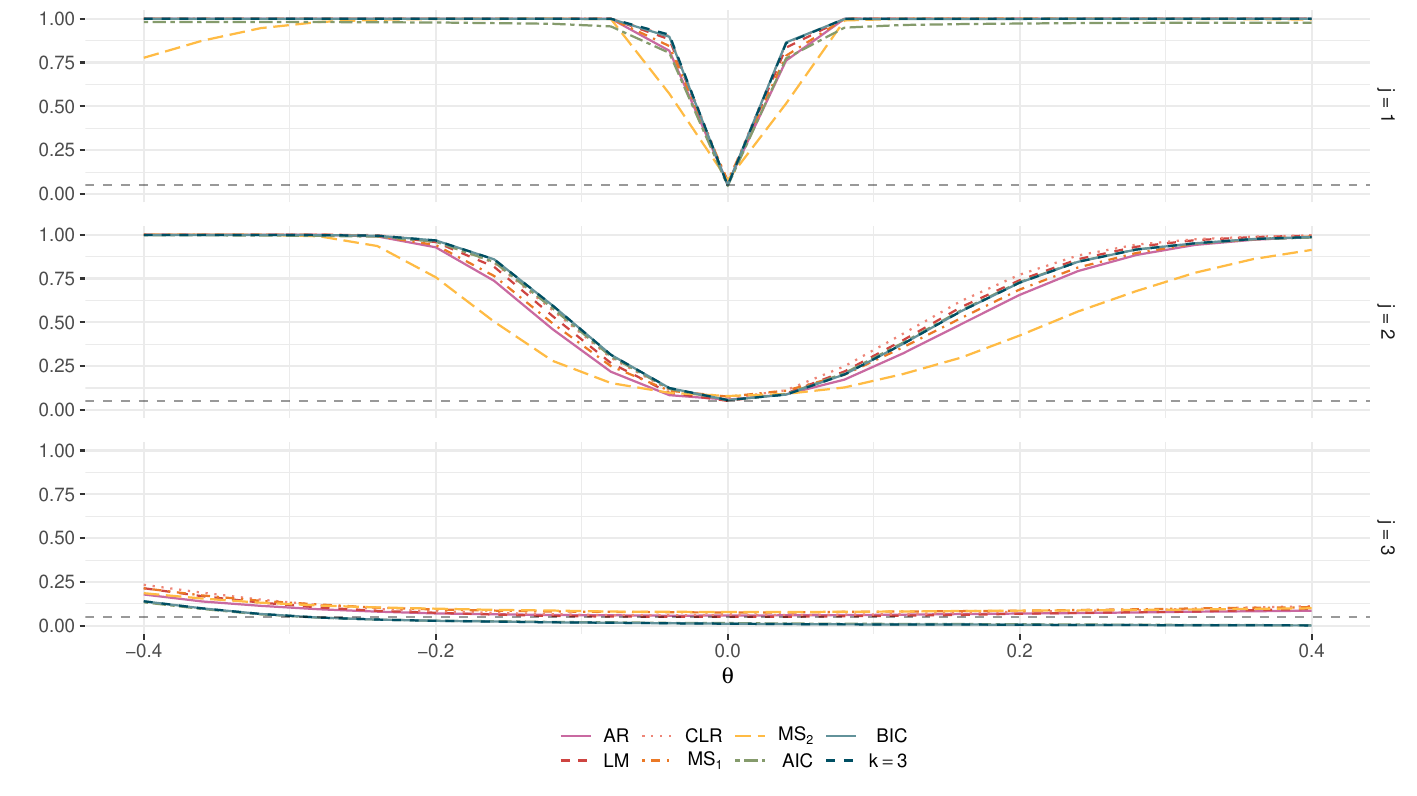}
		\end{minipage}
\end{figure}

\section{Empirical applications}\label{sec:ea}

In this section I re-analyse two IV studies with potentially weak instruments by inverting the $\psi_{n, \theta_0}$ test developed in section \ref{ssec:iv} to construct weak-instrument robust confidence intervals (CIs). In each case the $\psi_{n, \theta_0}$ test is able to exploit non-linearities to yield substantial reductions in CI length relative to AR CIs.

\subsection{The effect of skilled immigration on productivity}
\cite{H14} studies the long term effect of skilled immigration on productivity using a natural experiment in which the skilled but religiously persecuted French protestants (Hugenots) fled and settled in Prussia.\footnote{That the Hugenots were more skilled (on average) than the Prussian population seems to be broadly accepted cf. pp. 85-86, 93-95 in \cite{H14}.} 
In the notation of Example \ref{ex:IV}, $Y$ is log output in textile manufacturing, $X$ is the proportion of Hugenots in each town and $Z_1$ contains various control variables, see \cite{H14} for details. \cite{H14} argues that ``by the order of centralized ruling by the king and his agents Huguenots were channeled into Prussian towns in order to compensate for severe population losses during the Thirty Years' War'', motivating the instrument $Z_2$: the percentage population losses during the war. In particular, three different measurements of this population loss are used and refered to as specifications (1), (2) and (3) hereafter.\footnote{Specifications (1) \& (2) are those considered in the left and and right hand parts of Table 4 of \cite{H14}; Specification (3) is that considered in the left hand part of Table 5.}

As noted in \cite{H14}, this instrument may be weak: in each case the first stage F statistic is ``small''.\footnote{It is less than the cutoff of 10 suggested by \cite{SS97} for homoskedastic IV.}
I implement the $\psi_{n, \theta_0}$ test using a leave-one-out series estimator of $\pi$ of the form 
\begin{equation}\label{eq:pi-hat-ea}
    \hat{\pi}_{i}(Z_i) = \hat{\pi}_{i}\transp p_{K}(Z_{2,i}) + \hat{\beta}_i Z_{1, i},
\end{equation}
where the $i$ subscript on the estimated coefficients indicates they have been estimated on all observations except for the $i$-th. $p_{K}$ is a vector of a constant and the first $K$ Legendre polynomials. I choose $K\in \{1, 2, 3, 4\}$ and whether to include $Z_1$ in the model for $\pi$ by using BIC: all specifications include $Z_1$ and $K=4$. 

Table \ref{tbl:H14-1} reports 2SLS estimates of $\theta$ along with 2SLS (Wald) CIs, AR CIs and CIs found by inverting the $\psi_{n, \theta_0}$ test.\footnote{The inversion is performed over a grid of 5000 equally spaced points from -1 to 7.}  The resulting CIs provide a similar interpretation as that based on the AR CIs: the effect of (skilled) Hugenot immigration was positive on textile output. However, the $\psi_{n, \theta_0}$ based CIs are smaller than the AR CIs, achieving approximately a 40\% - 50\% reduction in length.

\begin{table*}[!htbp]
    \small
    \renewcommand{\arraystretch}{1.05}
	\begin{center}
		\caption{\label{tbl:H14-1} Point estimates and confidence intervals}
		\begin{threeparttable}
			
\begin{tabular}{lccc}
\toprule
 & (1) & (2) & (3)\\
\midrule
n & 150 & 150 & 186\\
F & 3.668 & 4.791 & 5.736\\
\midrule\multicolumn{4}{l}{\textit{Point estimate}}\\
2SLS & 3.475 & 3.38 & 1.671\\
\midrule\multicolumn{4}{l}{\textit{Confidence intervals}}\\
2SLS & {}[1.27, 5.68] & {}[1.294, 5.467] & {}[0.032, 3.31]\\
AR & {}[1.427, 6.303] & {}[1.43, 5.985] & {}[-0.022, 3.379]\\
$\psi$ & {}[1.626, 4.099] & {}[1.637, 4.073] & {}[1.136, 3.228]\\
\midrule\multicolumn{4}{l}{\textit{Relative length of confidence intervals to AR}}\\
2SLS & 0.904 & 0.916 & 0.964\\
$\psi$ & 0.507 & 0.535 & 0.615\\
\bottomrule
\end{tabular}

            \begin{tablenotes}
                \scriptsize
                \item \notes{F is the first stage $F$ statistic. All confidence intervals have nominal coverage of 95\%.
                }
                \end{tablenotes}	
		\end{threeparttable}
	\end{center}
\end{table*}

\subsection{The effect of racial segregation on inequality}
\cite{A11} estimates the effect of racial segregation ($X$) on poverty and inequality ($Y$, measured respectively by the poverty rate and log gini coefficient for black / white city residents), instrumenting segregation by a ``railroad division index'' (RDI, $Z_2$), ``a variation on a Herfindahl index that measures the dispersion of a city's land into subunits'' via the layout of railroad tracks.\footnote{Section 3 and Appendix A of \cite{A11} provides evidence that the choice of railroad placement was not related to local social or economic concerns.} The first and second stages also include an intercept and control for railroad track length ($Z_1$).  

The instrument may be weak: I calculate the first stage F statistic to be 2.307.\footnote{\cite{A11} refers to Column 1 of Table 1 when discussing the first stage F statistic. The values in this table imply a first stage F of $(0.357 / 0.088)^2 \approx 16.458$. This ``discrepancy'' arises from different default choices of robust covariance estimate in \textsf{R}'s \texttt{sandwich} package (HC3; my calculation) and STATA's \texttt{robust} command (HC1; \citealp{A11}).} I implement the $\psi_{n, \theta_0}$ test using a leave-one-out series estimator of $\pi$ of the form \eqref{eq:pi-hat-ea}. I choose $K\in \{1, 2, 3, 4\}$ and whether to include $Z_1$ in the model for $\pi$ by using BIC: this excludes $Z_1$ and chooses $K=2$.  

Table \ref{tbl:A11-1} reports 2SLS estimates of $\theta$ along with 2SLS (Wald) CIs, AR CIs and CIs found by inverting the $\psi_{n, \theta_0}$ test.\footnote{The inversion is performed over a grid of 5000 equally spaced points from -1 to 2.}  The resulting CIs provide a similar interpretation as that based on the AR CIs: racial segregation increases poverty and inequality within the Black community and decreases poverty and inequality within the White community. The $\psi_{n, \theta_0}$ based CIs are shorter than the AR CIs, achieving a reduction in length varying from 5\% to around 38\%.

\begin{table*}[!htbp]
    \small
    \renewcommand{\arraystretch}{1.05}
	\begin{center}
		\caption{\label{tbl:A11-1} Point estimates and confidence intervals}
		\begin{threeparttable}
			
\begin{tabular}{lcccc}
\toprule
\multicolumn{1}{c}{ } & \multicolumn{2}{c}{Poverty rate} & \multicolumn{2}{c}{Gini coefficient} \\
\cmidrule(l{3pt}r{3pt}){2-3} \cmidrule(l{3pt}r{3pt}){4-5}
  & White & Black & White & Black\\
\midrule\multicolumn{5}{l}{\textit{Point estimate}}\\
\midrule
2SLS & 0.258 & -0.196 & 0.875 & -0.334\\
\midrule\multicolumn{5}{l}{\textit{Confidence intervals}}\\
2SLS & {}[-0.026, 0.543] & {}[-0.334, -0.058] & {}[0.277, 1.474] & {}[-0.558, -0.111]\\
AR & {}[-0.04, 0.598] & {}[-0.394, -0.075] & {}[0.319, 1.684] & {}[-0.674, -0.149]\\
$\psi$ & {}[0.072, 0.568] & {}[-0.376, -0.097] & {}[0.29, 1.138] & {}[-0.608, -0.106]\\
\midrule\multicolumn{5}{l}{\textit{Relative length of confidence intervals to AR}}\\
2SLS & 0.891 & 0.866 & 0.877 & 0.853\\
$\psi$ & 0.775 & 0.877 & 0.622 & 0.955\\
\bottomrule
\end{tabular}

            \begin{tablenotes}
                \scriptsize
                \item \notes{All confidence intervals have nominal coverage of 95\%.
                }
                \end{tablenotes}	
		\end{threeparttable}
	\end{center}
\end{table*}

\section{Conclusion}\label{sec:concl}

In this paper I establish that C($\alpha$)-style tests are locally regular under mild conditions, including in non-regular cases where locally regular estimators do not exist. As a consequence, these tests do not overreject under semiparametric weak identification asymptotics. Additionally I generalise the classical local asymptotic power bounds for LAN models to the case where the efficient information matrix has positive, but potentially deficient, rank, such that these results also apply in cases of underidentification (or weak underidentification). I show that, if the C($\alpha$) test is based on the efficient score function, it attains these power bounds. This (attainment) result improves on results known in the literature in two ways: (i) it applies also to non-regular models and (ii) it does not require the data to be i.i.d. nor the information operator to be boundedly invertible. A simulation study based on two examples shows that the asymptotic theory provides an accurate approximation to the finite sample performance of the proposed tests.

\bibliographystyle{asa}
\bibliography{Bib}
\appendix

\section{Proofs of the main results}\label{sec:proofs}

\begin{proof}[Proof of Proposition \ref{prop:asymp-dist}]
	Combination of Assumptions \ref{ass:LAN} and \ref{ass:joint-conv} yields
	 \begin{equation*}
		 \left(g\transp_{n}, L_{n}(h)\right)\transp \overset{P_{n}}{\weakconv }\mc{N}\left(
				 \begin{pmatrix}
					 0 \\
					 -\frac{1}{2}\sigma(h)
				 \end{pmatrix}
					 , \begin{pmatrix}
						 V & \tau\transp\Sigma_{21}\transp\\
						  \Sigma_{21}\tau & \sigma(h)
				 \end{pmatrix}
		 \right).
	 \end{equation*}
	 By Le Cam's third Lemma $ g_{n} \overset{P_{n, h}}{\weakconv} Z_{\tau}\sim \mc{N}\left(\Sigma_{21}\tau, V\right)$.
	 The second claim follows from the first with Assumption \ref{ass:consistent}\ref{ass:consistent:itm:ghat}, Remark \ref{rem:mutual-contiguity} and Slutsky's Theorem. By the second claim, Assumption \ref{ass:consistent}\ref{ass:consistent:itm:lambdahat} and standard arguments, $\hat{S}_{n, \theta_0}\overset{P_{n, h}}{\weakconv} Z_{\tau}\transp V^\dagger Z_{\tau}$.
 \end{proof}

 \begin{proof}[Proof of Theorem \ref{thm:psi-pwr-local-alt}]
    If $r\ge 1$, as $\hat{r}_n\xrightarrow{P_{n}}r$, $P_{n}\{c_n = c_\alpha\} \to 1$. By Proposition \ref{prop:asymp-dist}, Remark \ref{rem:mutual-contiguity} and Slutsky's Theorem, $\hat{S}_{n, \theta_0} - c_n \weakconv S - c$ under $P_{n, h}$ where $S \sim \chi^2_r\left(a\right)$.
    By the Portmanteau Theorem, 
    \begin{equation*}
        \limn P_{n, h} \psi_{n, \theta_0} = \limn P_{n, h} \left(\hat{S}_{n, \theta_0} > c_n\right) = L\{S-c > 0\} = 1 - \P(\chi_r^2(a)\le c_\alpha),
    \end{equation*}
    for $L$ the law of $S$.
    If $r=0$, $\rank(\hat{\Lambda}_{n, \theta_0}) \xrightarrow{P_{n}}0$ $\implies$  $P_{n}R_n \to 1$ for $R_n\define \{\hat{\Lambda}_{n, \theta_0} = 0\}$. On $R_n$ $\hat{S}_{n, \theta_0} = 0$ $\implies$ $\psi_{n, \theta_0} = 0$. By Remark \ref{rem:mutual-contiguity}, $P_{n,  h} \psi_{n, \theta_0}  \le 1 - P_{n, h}R_n \to 0$.
\end{proof}

\begin{proof}[Proof of Corollary \ref{cor:psi-locally-uniformly-regular}]
    By Theorem \ref{thm:psi-pwr-local-alt}, $\uppi_n(h) \to \uppi(h)$ ($h\in H$), pointwise. Since the $\uppi_n$ are asymptotically equicontinuous on $K$, the convergence is uniform on $K$.
\end{proof}

\begin{proof}[Proof of Lemma \ref{lem:level-alpha-unif-equicontinuity-TV}]
    Immediate from $  | P_{n, h} \psi_{n, \theta_0} - P_{n, h'} \psi_{n, \theta_0}| \le d_{TV}(P_{n, h}, P_{n, h'})$.
\end{proof}

\begin{proof}[Proof of Lemma \ref{lem:IP-GP}]
    For $a, b\in \R$, $h_1, h_2\in H$, $\Delta_{n}(a_1 h_1 + a_2 h_2) = a_1 \Delta_{n}h_1 + a_2  \Delta_{n}h_2$
    and so $\Delta(a_1 h_1 + a_2 h_2) = a_1 \Delta(h_1) + a_2\Delta(h_2)$, hence $\Delta$ is linear.
    We now establish $K$ is a well-defined covariance kernel. 
    For $h\in H$, 
    $(\|\Delta_{n}h\|^2)_{n\in \N}$ is Cauchy.
     Letting $K_{n}(h, g)\define P_{n}\left[\Delta_{n}h\Delta_{n}g\right]$ and using  Cauchy -- Schwarz
    \begin{align*}
        \left|
            K_{n} (h, g)  - K_{m} (h, g)
         \right| 
        \le \|\Delta_{n} h - \Delta_{m}h\|\|\Delta_{n}g\| + \|\Delta_{m}h\| \|\Delta_{n}g - \Delta_{m} g\|,
    \end{align*}  
    hence $(K_{n}(h, g))_{n\in \N}$ is also Cauchy and thus has a limit.
     Bilinearity and symmetry are straightforward to check.
    For positive semi-definiteness, let $h_1, \ldots, h_K \in H$, $a\in \R^K$. As $\Delta_{n}h\in L_2^0(P_{n})$, $\mc{K}_n\define [K_{n}(h_k, h_j)]_{k, j=1}^K$ is a covariance matrix, hence $ \sum_{k=1}^K\sum_{j=1}^K a_ka_j K_{n}(h_k, h_j) = a\transp \mc{K}_n a \ge 0$ for each $n\in \N$ and 
    hence the same holds with $K_{n}$ and $\mc{K}_n$ replaced by $K$ and $\mc{K}\define [K(h_k, h_j)]_{k, j=1}^K$.
    
    By Assumption \ref{ass:LAN} and the fact that $K(h, h) = \sigma(h)$, $\Delta h\sim \mc{N}(0, K(h, h))$. That $\Delta$ is a mean-zero Gaussian process with covariance kernel $K$ then follows from the Cram\'{e}r -- Wold Theorem as $\sum_{k=1}^K a_k \Delta h_k  \sim \mc{N}\left(0, a\transp \mc{K}a\right)$ and 
    \begin{equation*}
        \sum_{k=1}^K a_k \Delta_{n}(h_k) = \Delta_{n}\left(\sum_{k=1}^K a_k  h_k\right) \overset{P_{n}}{\weakconv} \Delta\left(\sum_{k=1}^K a_k  h_k\right) = \sum_{k=1}^K a_k \Delta h_k. \qedhere
    \end{equation*}
\end{proof}

\begin{proof}[Proof of Proposition \ref{prop:conv-exp}]
    Remark \ref{rem:mutual-contiguity} and the transitivity of (mutual) contiguity ensures that the experiments $\ms{E}_{n}$ are contiguous.
    By Theorem 61.6 of \cite{S85} it suffices to show that the finite dimensional marginal distributions (fdds) of $L_{n}$ converge (under $P_{n}$) to those of $L$, where $L(h) \define\Delta h - \frac{1}{2}\|h\|^2$. This follows as the fdds of $\Delta_{n}$ converge to those of $\Delta$ (under $P_n$), by the Cram\'{e}r -- Wold Theorem.
\end{proof}

\begin{proof}[Proof of Proposition \ref{prop:exp-equiv-shift}]
    Let $G_{[0]}\define P_{0}$. Define $Z:\mb{H} \to L_2(\Omega, \F, G_{[0]})$ by $Z[h] \define \Delta(h)$ for an arbitrary $h \in \pi_V^{-1}([h])$, where $\pi_V$ is the quotient map from $H\to \mb{H}$.
    This is a standard Gaussian process for $\mb{H}$.
    Define $G_{[h]}$ by $\deriv{G_{[h]}}{G_{[0]}} = \exp\left(Z[h] - \frac{1}{2}\|[h]\|_K^2\right)$.
    $\ms{G}$ is a Gaussian shift on $(\mb{H}, \IP{\cdot}{\cdot}_K)$ \cite[][Theorem 69.4]{S85}. 
    For any $h\in H$ we have that $Z[h] = \Delta g$ for some $g\in \pi_V^{-1}([h])$ and $\Delta h = \Delta g$ $P_{ 0}$--almost surely.
     Since $\|h\|_K = \|[h]\|_K$, $P_0$-a.s., 
    $\deriv{G_{[h]}}{G_{[0]}} =  \exp\left(Z[h] - \frac{1}{2}\|[h]\|_K^2\right) = \exp\left(\Delta h - \frac{1}{2}\|h\|_K^2\right) = \deriv{P_{ h}}{P_{0}}$.
    As each $P_{ h} \ll P_0$ and $G_{[h]}\ll G_{[0]}$, and $P_0 = G_{[0]}$,
    $ d_{TV}(P_{ h}, G_{[h]}) = \frac{1}{2}\int \left| \deriv{P_{ h}}{P_{0}}- \deriv{G_{[h]}}{P_{0}}\right| \darg{P_{ 0}} = \frac{1}{2}\int \left| \deriv{P_{ h}}{P_{0}} - \deriv{G_{[h]}}{G_{[0]}}\right| \darg{P_{ 0}} = 0$.
\end{proof}

\begin{proof}[Proof of Lemma \ref{lem:ker-effinfo-is-fzero}]
    By straightforward calculation
    \begin{equation}\label{lem:ker-effinfo-is-fzero:eq:IP-decomp}
    \IP{[\tau, b]}{[t, g]}_K = \tau\transp \effinfo t + \IP{\Pi [\tau, 0] + [0, b]}{\Pi [t, 0] +[ 0, g]}_K
    \end{equation}
    This and $\Pi[(\tau, 0)]\in \ker \pi_1$ imply
    $ \|[\tau]\|^2 = \inf_{b\in B}\|[\tau, b]\|^2_{K} 
    = \tau\transp \effinfo\tau + \inf_{[h]\in \ker \pi_1} \| \Pi[\tau, 0] - [h]\|^2_{K} = \tau\transp \effinfoarg{\gamma}\tau$.
    Hence, $\|\tau\| = \|[\tau]\| = 0$ $\implies$ $\tau\transp \effinfoarg{\gamma}\tau = 0$ $\implies$ $\effinfoarg{\gamma}^{1/2}\tau=0$, and so
    $\effinfoarg{\gamma} \tau =  0$.
    Conversely $\tau\in \ker \effinfoarg{\gamma}$ $\implies$ $\|\tau\|^2 = 0$ $\implies$ $\|\tau\| = 0$.
\end{proof}

\begin{proof}[Proof of Theorem \ref{thm:two-sided-pwr-bound}]
    Define the bounded linear map $T: \overline{\mb{H}} \to \R$ according to 
    $T[h] \define \IP{\Pi^\perp [1, 0]}{\Pi^\perp [h]}_{K} = \IP{\Pi^\perp [1,0]}{[h]}_{K}$.
    For any $[h] = [\tau, b]\in \mb{H}$, 
    \begin{equation}\label{thm:two-sided-pwr-bound-contrast:eq:T-equiv-effinfo-IP}
        T[h]  = \IP{\Pi^\perp [1, 0]}{\Pi^\perp [\tau, b]}_{K} = \effinfo\tau.
    \end{equation}

    If $\effinfo = 0$, \eqref{thm:two-sided-pwr-bound:eq:power-bound} follows from Proposition \ref{prop:asy-size-alpha-and-r-0-imply-no-power}, so assume $\effinfo \neq 0$.
    Any unbiased level $\alpha$ test $\phi$ of $T[h] = 0$ against $T[h] \neq 0$ in the (restricted) Gaussian shift $\ms{G}$ satisfies 
      \begin{equation}\label{thm:two-sided-pwr-bound-contrast:eq:limit-exp-pwr-bound}
        G_{[h]} \phi\le 1 - \Phi\left(z_{\alpha/2} - \effinfo^{1/2}\tau\right)  + 1- \Phi\left(z_{\alpha/2} +  \effinfo^{1/2}\tau\right),   
    \end{equation}
    \cite[][Lemma 71.5]{S85}.
    By Proposition \ref{prop:conv-exp}, $\ms{E}_{n} \weakconv \ms{E}$; $\ms{E}$ is dominated.
    Let $\uppi_n(h)\define P_{n, h}\phi_n$ and fix an arbitrary $h^\star$.
    There is a subsequence $(\uppi_{n_m})_{m\in \N}$ such that $\lim_{m\to\infty} \uppi_{n_m}(h^\star) = \limsupn \uppi_n(h^\star)$. Since $[0, 1]^{H}$ is compact in the product topology, there is a subnet $(\uppi_{n_{m(s)}})_{s\in S}$ and a function $\uppi: H \to [0, 1]$ such that $\lim_{s\in S} \uppi_{n_{m(s)}}(h) = \uppi(h)$ for all $h\in H$. 
    By our hypotheses and equation \eqref{thm:two-sided-pwr-bound-contrast:eq:T-equiv-effinfo-IP} for any $h_0$ such that $[h_0]\in \ker T\cap \mb{H}$ and any $h_1$ such that $[h_1] \in \mb{H}\setminus (\ker T\cap \mb{H})$
    \begin{equation}\label{thm:two-sided-pwr-bound-contrast:eq:unbiased}
        \uppi(h_0) = \lim_{s\in S}\uppi_{n_{m(s)}}(h_0) \le \alpha \le \lim_{s\in S}\uppi_{n_{m(s)}}(h_1) = \uppi(h_1).
    \end{equation}
    There exists a test $\phi$ in $\ms{E}$ with power function $\uppi$ \cite[][Theorem 7.1]{vdV91b}. \eqref{thm:two-sided-pwr-bound-contrast:eq:unbiased} and Proposition \ref{prop:exp-equiv-shift} ensure that $\phi$ is an unbiased, level $\alpha$ test of $\ker T\cap \mb{H}$ against $\mb{H} \setminus (\ker T\cap \mb{H})$ in $\ms{G}$. Conclude by combining \eqref{thm:two-sided-pwr-bound-contrast:eq:limit-exp-pwr-bound} and (by Proposition \ref{prop:exp-equiv-shift})
    \begin{equation*}
        \limsupn P_{n, h^\star} \phi_n = \lim_{m\to\infty} \uppi_{n_m}(h^\star) = \uppi(h^\star) = P_{h^\star} \phi =  G_{[h^\star]} \phi.\qedhere
    \end{equation*}

\end{proof}

\begin{proof}[Proof of Corollary \ref{cor:psi-two-sided}]
    By Theorem \ref{thm:psi-pwr-local-alt},  $   \limn P_{n, h_n}\psi_{n, \theta} = 1 - \P(Z^2> c_\alpha)$ where $Z\sim \mc{N}\left(\effinfo^{1/2}\tau,1\right)$.
    $1 - \P(Z^2 > c_\alpha)$ is equal to the RHS of \eqref{eq:two-sided-power-bound}.
\end{proof}

\begin{proof}[Proof of Theorem \ref{thm:maximin-pwr-bound}]
    By Lemma \ref{lem:ker-effinfo-is-fzero}, $\mb{H}_1 = \quotient{\R^{d_\theta}}{\ker \effinfo}$. $\pi_1:\mb{H}\to \mb{H}_1$ is surjective: for any $[\tau]\in\mb{H}_1$ let $t\in \pi_{\ker \effinfo}^{-1}(\{[\tau]\})$ where $\pi_{\ker \effinfo}$ is the quotient map from $\R^{d_\theta}$ to $\mb{H}_1$. Then $\pi_1[t, 0] = [t] = [\tau]$. It follows that $\dim \ran \pi_1 = \operatorname{codim}\ker \pi_1 = r$.
    By linearity and $[0, b]\in \ker \pi_1$, $\Pi[\tau, b] = \Pi[\tau, 0] + [0, b]$. This with Lemma \ref{lem:ker-effinfo-is-fzero}
     yields $\|[\tau, b] - \Pi[\tau, b]\|_{K}^2 = \|[\tau, 0] - \Pi[\tau, 0]\|_{K}^2 = \|[\tau]\|^2 = \tau\transp \effinfo\tau$. Define the sets
    \begin{equation*}
        M_a\define \left\{[\tau, b]\in \mb{H}: \tau\transp \effinfo\tau = a\right\}, \qquad
        \overline{M}_a\define \left\{[\tau, b]\in \overline{\mb{H}}: \tau\transp \effinfo\tau = a\right\}.
        \end{equation*}  
        It is straightforward to check that $\cl M_a = \overline{M}_a$.
    Suppose that $\phi$ is a test on $\mathscr{G}$ with $G_{[0]}\phi \le \alpha$.
    First suppose $a>0$. $\phi$ is a level $\alpha$ test of  $K_0:\{[0]\}$ against $K_1:[\ker \pi_1]^\perp \setminus \{[0]\}$ in the restriction of the standard Gaussian shift experiment on $[\ker \pi_1]^\perp$.
    By Theorem 30.2 in \cite{S85} 
    \begin{equation} \label{thm:maximin-pwr-bound:eq:inf-bound-1}
        \inf_{[h]\in M_a}  G_{[h]}\phi = \inf_{[h]\in \bar{M}_a}  G_{[h]}\phi \le  \inf_{[h]\in \bar{M}_a \cap [\ker \pi_1]^\perp}  G_{[h]}\phi \le 1 - \P(\chi^2_r(a) \le c_r),
    \end{equation}
    since $[h]\mapsto G_{[h]} \phi$ is continuous.
    If, instead, $a =0$, 
    note that $[0]\in M_0$ and so, 
     \begin{equation}\label{thm:maximin-pwr-bound:eq:inf-bound-2}
        \inf_{[h]\in M_0} G_{[h]} \phi \le G_{[0]} \phi \le \alpha = 1 - \P(\chi_r^2(0)\le c_r).
    \end{equation}
    
    Fix $a\ge 0$ and let $\mc{R}\define 1 - \P(\chi^2_r(a) \le c_r)$. 
    Let $\uppi_n(h)\define P_{n,h} \phi_n$ and define $\beta_{n}\define \inf \left\{P_{n, h} \phi_n: h = (\tau, b) \in H,\ \tau\transp \effinfo\tau = a \right\}$. Suppose that $\limsup_{n\to\infty} \beta_n \ge \mc{R}+ \varepsilon$ for some $\varepsilon>0$.
    Hence, for some subsequence $(n_m)_{m\in \N}$, $\lim_{m\to\infty} \beta_{n_m} \ge \mc{R} + \varepsilon$.
    Since $[0, 1]^{H}$ is compact in the product topology, there is a subnet $(\uppi_{n_{m(s)}})_{s\in S}$ and a function $\uppi: H \to [0, 1]$ such that $\lim_{s\in S} \uppi_{n_{m(s)}}(h) = \uppi(h)$ for all $h\in H$. Take any $h$ such that     
   $[h]\in M_a$. The preceding display implies 
   \begin{equation}\label{thm:maximin-pwr-bound:eq:contradict}
    \uppi(h) = \lim_{s\in S} \uppi_{n_{m(s)}}(h) \ge \lim_{s\in S} \inf \left\{\uppi_{{n_{m(s)}}}(h): h = (\tau, b) \in H,\ \tau\transp \effinfo\tau =a \right\}\ge \mc{R} + \varepsilon.
   \end{equation} 
   
    By Proposition \ref{prop:conv-exp}, $\ms{E}_{n} \weakconv \ms{E}$; $\ms{E}$ is dominated. There is a test $\phi$ in $\ms{E}$ with power function $\uppi$ \cite[][Theorem 7.1]{vdV91b}.
   Consider the restriction of $\ms{G}$ to $[\ker \pi_1]^\perp$. 
    By hypothesis, Corollary \ref{cor:h-eqiv-g-no-pwr-diff} and Proposition \ref{prop:exp-equiv-shift} $ G_{[0]}\phi  =  P_{0}\phi = \uppi(0) = \lim_{s\in S}\uppi_{n_m(s)}(0)\le \limsup\uppi_{n}(0)\le \alpha$, 
    hence $\phi$ is a test of level $\alpha$ of $K_0$ against $K_1$ in this experiment, and $ \inf_{[h]\in M_a}G_{[h]} \phi  =\inf_{h: [h]\in M_a}P_{h} \phi  =\inf_{h: [h]\in M_a}\uppi(h) \ge \mc{R} + \varepsilon$,
    by \eqref{thm:maximin-pwr-bound:eq:contradict} and Proposition \ref{prop:exp-equiv-shift}, but this contradicts  $\eqref{thm:maximin-pwr-bound:eq:inf-bound-1}$ if $a>0$ or $\eqref{thm:maximin-pwr-bound:eq:inf-bound-2}$ if $a=0$.
\end{proof}

\begin{proof}[Proof of Corollary \ref{cor:psi-maximin}]
    Equation \eqref{cor:psi-maximin:eq-1} follows from Theorem \ref{thm:psi-pwr-local-alt}.
    For equation \eqref{cor:psi-maximin:eq-2}, let $f_n(h)\define P_{n, h}\psi_{n, \theta_0}$. By \eqref{cor:psi-maximin:eq-1} and the asymptotic equicontinuity,
    $\limn f_n(h) = 1 - \P\left(\chi^2_r\left(\tau\transp \effinfo \tau\right) \le c_r\right) \eqqcolon f(h)$, 
    uniformly on $K_a$.
    Conclude that if $h_n \to h \in K_a$,
    \begin{equation}\label{cor:psi-maximin:eq-3} 
        \limn f_n(h_n) =  f(h) \ge f_\star \define 1 - \P\left(\chi^2_r\left(a\right) \le c_r\right).
    \end{equation}
    If \eqref{cor:psi-maximin:eq-2} fails there is a sequence $h_n\in K_a$ with $\limsupn f_n(h_n) < f_\star$. Extract a subsequence $h_{n_m}\to h\in K_a$.
     Let $h_m^*\define h_{n_1}$ for $m =1, \ldots, n_1$ and $h_m^*\define h_{n_k}$ for $n_{k} \le m < n_{k+1}$. $f_{n_m}(h_{n_m})$ is a subsequence of $f_{m}(h_m^*)$  and $h_m^*\to h$, so by \eqref{cor:psi-maximin:eq-3} $ \lim_{m\to\infty} f_{n_m}(h_{n_m}) = \lim_{m\to\infty} f_m(h_m^*) = f(h) \ge f_\star > \limsupn f_n(h_n)$. 
\end{proof}

\begin{proof}[Proof of Proposition \ref{prop:asy-size-alpha-and-r-0-imply-no-power}]
    By \eqref{lem:ker-effinfo-is-fzero:eq:IP-decomp}, $r =0$ implies $\|[h] - \Pi [h]\|_{K} = 0$ and so $[h] = \Pi[h] \in \ker \pi_1$.
    Hence there is a $h^* \in H_{0}$ with $\|h - h^*\|_{K} = 0$. By Corollary \ref{cor:h-eqiv-g-no-pwr-diff},
    \begin{equation*}
        \limsup P_{n, \gamma, h}\phi_n \le \limsupn P_{n, \gamma, h^*} \phi_n + \limsupn | P_{n, \gamma, h^*} \phi_n - P_{n, \gamma, h} \phi_n | \le \alpha.\qedhere
    \end{equation*}
\end{proof}

\begin{proof}[Proof of Theorem \ref{thm:pwr-bounds-effscr}]
    Since $\limn P_n[\dotscrarg{n}g_n\transp] = \limn P_n[\dotscrarg{n}\effscrarg{n}\transp]$ 
    we may assume $g_n = \effscrarg{n}$. By Theorem 12.14 in \cite{R91},
    $\effinfoarg{n}\define P_n[\effscrarg{n}\effscrarg{n}\transp] = P_n[\dotscrarg{n}\effscrarg{n}\transp]$.
    Set $K_n(h, g)\define P_{n}[\Delta_{n}h\Delta_{n} g]$ and let $\mathsf{G}_{n}$ be a zero-mean Gaussian process with covariance kernel $K_n$. There exists a Hilbert space isomorphism, $Z_n:\cl \{\Delta_{n}h : h\in H\} \to \cl \{\mathsf{G}_nh: h\in H\}$ \cite[][Theorem 1.23]{J97}.
    Let $R\define \Pi\left[\cdot\middle|\left\{\mathsf{G}_{n}(0, b):b \in B\right\}^\perp\right]$
    and $Q\define \Pi\left[\cdot\middle|\{\Delta_{n}(0, b): b\in B\}^\perp\right]$.
    $ R\mathsf{G}_nh = R Z_n(\Delta_{n} h) = Z_n Q Z_n^{-1}Z_n(\Delta_{n} h) = Z_n Q\Delta_{n} h$ for $h\in H$ and 
    extends to elements in the closure by continuity. Hence 
    \begin{equation}\label{thm:pwr-bounds-effscr:eq:effinfo-ij-alt}
        \effinfoarg{n,ij}
         = P_n\left[\Delta_{n}(e_i, 0) Q\Delta_{n}(e_j, 0) \right]
          = \E\left[\mathsf{G}_n(e_i, 0)R \mathsf{G}_n(e_j, 0)\right].
    \end{equation}
   By Theorem 9.1 in \cite{J97}, 
    \begin{equation}\label{thm:pwr-bounds-effscr:eq:proj-J97}
        \E\left[\mathsf{G}_n(e_j, 0)\middle | \left\{\mathsf{G}_{n}(0, b):b \in B\right\}\right] = \Pi\left[\mathsf{G}_n(e_j, 0) \middle| \cl\left\{\mathsf{G}_{n}(0, b):b \in B\right\}\right],
    \end{equation}
    and so $\tilde{\mathsf{G}}_n(e_j, 0) \define \mathsf{G}_n(e_j, 0)-\E\left[\mathsf{G}_n(e_j, 0)\middle | \left\{\mathsf{G}_{n}(0, b):b \in B\right\}\right] =  R\mathsf{G}_n(e_j, 0)$. Then, 
    $\effinfoarg{n, ij} = \E\left[\mathsf{G}_n(e_i, 0)\tilde{\mathsf{G}}_n(e_j, 0)\right]$ 
    by \eqref{thm:pwr-bounds-effscr:eq:effinfo-ij-alt}. Set $\ms{G}_{n}\define \sigma(\{\mathsf{G}_n(0, b): b\in B\})$, $\ms{G}_{n}\define \sigma(\{\mathsf{G}(0, b): b\in B\})$ and define $X_n \define  \left(
        \mathsf{G}_n (e_i, 0),
        \E[\mathsf{G}_n(e_j, 0)  | \ms{G}_n]\right)$
        and $X\define \left(
            \mathsf{G} (e_i, 0),
            \E[\mathsf{G}(e_j, 0)| \ms{G}]
        \right)$,
    where $\mathsf{G}\define \Delta$. By \eqref{thm:pwr-bounds-effscr:eq:proj-J97} and $K_n(h, h)\to K(h, h)$ (Lemma \ref{lem:IP-GP}), 
    $(X_n)_{n\in \N}$ are uniformly square integrable Gaussian random vectors and $X_n\weakconv X$ (by Theorem \ref{sm:thm:cond-exp-weak-conv}). Combine with Lemma \ref{lem:calculation-of-effinfo} and Theorem 9.1 of \cite{J97}.
\end{proof}

\begin{proof}[Proof of Lemma \ref{lem:iid-LAN-DQM}]
    $R_n(h)\xrightarrow{P_n}0$ in \eqref{eq:LAN} and $Ah\in L_2^0(P)$ follows from \eqref{eq:dqm} \cite[][Lemma 3.10.11]{vdVW96}. Hence $\Delta_n h$ is uniformly square integrable (i.i.d) and $[\Delta_{n} h](W^{(n)}) = \G_n A h \weakconv \mc{N}(0, \int( A h)^2\dP)$ (CLT).
\end{proof}

\begin{proof}[Proof of Lemma \ref{lem:iid-orthocomp-joint-conv}]
    $P^n\left(\Delta_{n}h, g_{n}\transp\right) = 0$. By $g\in \{Db: b\in B\}^\perp$ and Assumption \ref{ass:iid}, the covariance matrix of $ \left(\Delta_{n}h, g_{n}\transp\right) $ (under $P^n$) is $\Sigma(h) = P\left[\begin{smallmatrix}
        [Ah]^2 & \tau\transp\dotscr g\transp\\
        g\dotscr\transp \tau & gg\transp
    \end{smallmatrix}\right]$.
    For each $h\in H$, the central limit theorem gives $  \left(\Delta_{n}h, g_{n}\transp\right) \overset{P^n}{\weakconv} \mc{N}(0, \Sigma(h))$.
\end{proof}

\begin{proof}[Proof of Corollary \ref{cor:iid-orth-projection-joint-conv}]
    $g\in \{D b: b\in B\}^\perp$. Apply Lemma \ref{lem:iid-orthocomp-joint-conv}.
\end{proof}

\clearpage


\renewcommand{\thesection}{S\arabic{section}}   
\renewcommand{\thetable}{S\arabic{table}}  
\renewcommand{\thefigure}{S\arabic{figure}} 
\renewcommand{\thelemma}{S\arabic{lemma}}
\renewcommand{\theexample}{S\arabic{example}}
\renewcommand{\theproposition}{S\arabic{proposition}}
\renewcommand{\theequation}{S\arabic{equation}}
\renewcommand{\theassumption}{S\arabic{assumption}}
\renewcommand{\thetheorem}{S\arabic{theorem}}
\renewcommand{\thecorollary}{S\arabic{corollary}}
\renewcommand{\thefootnote}{S\arabic{footnote}}
\renewcommand{\thepage}{S\arabic{page}}

\setcounter{footnote}{0}
\setcounter{section}{0}
\setcounter{table}{0}
\setcounter{figure}{0}
\setcounter{example}{0}
\setcounter{lemma}{0}
\setcounter{proposition}{0}
\setcounter{theorem}{0}
\setcounter{assumption}{0}
\setcounter{remark}{0}
\setcounter{corollary}{0}
\setcounter{equation}{0}

\title{\sc \Large Supplementary material for ``Locally regular and efficient tests in non-regular semiparametric models''}

\author{{Adam Lee\thanks{BI Norwegian Business School, \href{mailto:adam.lee@bi.no}{adam.lee@bi.no}.}}}

\date{\today}

\maketitle
\thispagestyle{empty}

\renewcommand{\abstractname}{}    

\begin{abstract}
    \singlespacing\noindent 
    This Supplementary Material contains the following sections: 
    \begin{itemize}
        \item[\ref{sm:sec:notation}:] Details on notation
        \item[\ref{sm:sec:additional_results}:] Additional results \& discussion
        \item[\ref{sm:sec:technicalities}:] Technicalities 
        \item[\ref{sm:sec:examples-detail}:] Additional details and proofs for the examples
        \item[\ref{sm:sec:extra-sim-results}:] Additional simulation details \& results
        \item[\ref{sm:sec:tbls-figs}:] Tables and Figures 
    \end{itemize}
\end{abstract}
\clearpage

\setcounter{page}{1}

\onehalfspacing
\section{Notation}\label{sm:sec:notation}

$x\define y$ means that $x$ is defined to be $y$. $d_\theta$ is the length of the vector $\theta$. The Lebesgue measure on $\R^K$ is denoted by $\lambda_{K}$ or $\lambda$ if the dimension is clear from context. $k$-times continuously differentiable functions belong to $\mc{C}^k$. $L_p(A, w)\define L_p(A, \mc{A}, \mu, w)$ for a measure space $(A, \mc{A}, \mu)$ and a weight function $w:A\to [0, \infty)$ is the weighted $L_p$ space consisting of (equivalence classes of) measurable functions $f:A\to \R$ such that $\int |f|^p w\dmu < \infty$. $L_p^0(\Omega, \mc{F}, P)$ is the subspace of $L_p(\Omega, \mc{F}, P)$ whose elements have zero-mean. 
 The standard basis vectors in $\R^K$ are $e_1, \ldots, e_K$. $M^\dagger$ is the Moore -- Penrose pseudoinverse of $M$. $Pf \define \int f \darg{P}$, $\EP f\define  \frac{1}{n}\sum_{i=1}^n f(Y_i)$ and $\G_n f \define  \sqrt{n}(\EP - P)f$. For sequences of probability measures $(Q_n)_{n\in \N}$ and $(P_n)_{n\in \N}$ where $Q_n$ and $P_n$ are defined on a common measurable space for each $n\in \N$, $Q_n \triangleleft P_n$ indicates that $(Q_n)_{n\in \N}$ is contiguous with respect to $(P_n)_{n\in \N}$. $Q_n \triangleleft \triangleright \;P_n$ indicates that $Q_n \triangleleft P_n$ and $P_n \triangleleft Q_n$. $X \independent Y$ indicates that random vectors $X$ and $Y$ are independent; $X\simeq Y$ indicates that they have the same distribution. $a\lesssim b$ means that $a\le Cb$ for some constant $C\in (0, \infty)$; $C$ may change from line to line. If $X$ is a topological space, $\cl X$ means the (topological) closure of $X$. $\mc{B}(X)$ are the Borel subsets of $X$. If $S$ is a subset of a vector space, $\lin S$ or $\Span S$ means the linear span of $S$. If $S$ is a subset of a topological vector space, $\cllin S$ or $\cl \Span S$ means the closure of the linear span of $S$. 
 If $S$ is a subset of an inner product space $(V, \IP{\cdot}{\cdot})$, $S^\perp$ is its orthogonal complement, i.e. $S^\perp = \{x\in V: \IP{x}{s} = 0 \text{ for all }s \in S\}$. If $S \subset V$ is complete the orthogonal projection of $x\in V$ onto $S$ is $\Pi(x | S)$. The total variation distance between measures $P$ and $Q$ defined on the measurable space $(\Omega, \mathcal F)$ is $d_{TV}(P,Q)$. $d_2$ is the Mallows-2 metric \cite[e.g.][Appendix 6]{BKRW98}.
 $\overset{P_n}{\weakconv}$ denotes weak convergence under the sequence of measures $(P_n)_{n\in \N}$. If the sequence of measures is clear from context, I write just $\weakconv$.

\section{Additional results \& discussion}\label{sm:sec:additional_results}

\subsection{Inference under shape constraints}\label{sm:shape-constraints}

A non-standard inference problem which has attracted substantial attention in statistics \& econometrics is inference when (finite-dimensional) nuisance parameters $\eta$ may be at, or close to, the boundary. See, amongst others, \cite{G94, A99, A01, K18}. In this scenario, as explained in detail in the aforementioned papers, the limiting distributions of extremum estimators are non-normal when the true parameter is at the boundary of the parameter space. In otherwise regular models, the same is true when the true parameter is ``close'' to this boundary, i.e. along local (contiguous) alternatives to such a boundary point, by virtue of Le Cam's third Lemma.

In consequence, the normal approximations which usually obtain for extremum estimators \cite[cf. ][]{NMcF94} can lead to either misleading inference or poor power. The literature contains examples of boundary problems where ``standard'' tests over-reject \cite[e.g.][]{AG10b} as well as examples where they are conservative and exhibit poor power \cite[e.g.][]{K18}.

Under regularity conditions, boundary - constrained estimators of the nuisance parameters typically remain $\sqrt{n}$ - consistent (albeit not asymptotically normal). Due to the approximate orthogonalisation \eqref{rem:orth:eq:orth}, plugging in any $\sqrt{n}$ - consistent estimator $\hat\eta_n$ of $\eta$ is typically sufficient to ensure that the resulting feasible moment function (i.e. $\hat{g}_{n, \theta} = g_{\theta, \hat\eta_n}$) achieves the same normal limit as in Proposition \ref{prop:asymp-dist}.

In the semiparametric setting a natural generalisation of this boundary - constrained phenomenon is that of inference when nuisance functions are estimated under shape restrictions which may be close to binding. 

\begin{example}[continues=ex:SIM-running-example, name = Single-index model]
    Suppose that the class $\ms{F}$ of permitted link functions $f$ in equation \eqref{eq:SIM-running-example-mdl} imposes a shape restriction. For instance, $\ms{F}$ may contain only monotonically increasing functions or convex functions.
\end{example}

Analogously to the parametric case, plugging in nuisance functions estimated under shape constraints causes no problems for C$(\alpha)$ style tests, which retain the same asymptotic distribution whether or not the constraints are (close to) binding. This phenomenon is explored in simulation (based on Example \ref{ex:SIM-running-example}) in Section \ref{sm:ssec:sim-extra-sim-results}. 

Note that the power results of Section \ref{sec:theory} typically do not apply to models with shape-constraints as -- like in the parametric boundary case -- the set $B$ of possible perturbations to $\eta$ will typically be a (linear) cone rather than a linear space.

\subsection{Uniform Local Asymptotic Normality}\label{ssec:ULAN}

$H$ is assumed to be a subset (containing 0) of a linear space equipped with a pseudometric.\footnote{Proposition \ref{prop:ULAN-LAN-equicontinuity} below is an adaptation of Theorem 80.13 in \cite{S85}.} 

\begin{assumption}[Uniform local asymptotic normality]\label{ass:ULAN}
	Equation \eqref{eq:LAN} holds and $R_n(h_n) \xrightarrow{P_{n}} 0$ for any $h_n\to  h$ in $H$. 
	Additionally, for each $h_n \to h$ in $H$, $(\Delta_{n}h_n)_{n\in \N}$ is uniformly square $P_{n}$-integrable and $\left(\Delta_{n}h_n, \Delta_{n}h
	\right)\transp
	\overset{P_{n}}{\weakconv} \mc{N}\left(0,  \sigma(h) \left[\begin{smallmatrix}
		1 & 1\\
	1 & 1
	\end{smallmatrix}\right]\right)$ where $ \sigma(h)\define \limn \|\Delta_{n}h\|^2$.
\end{assumption}

\begin{remark}
	The joint convergence of $(\Delta_n h_n, \Delta_n h)\transp$ in Assumption \ref{ass:ULAN} is needed because $H$ is not required to be linear. If $H$ is a linear space this follows from $d_2$ convergence of (the law of) $\Delta_n h_n$ to $\mc{N}(0, \sigma(h))$ and the Cram\'{e}r -- Wold Theorem.
\end{remark}

\begin{remark}\label{rem:Delta-unif-bounded}
	If $(\Delta_{n})_{n\in \N}$ is asymptotically equicontinuous on compact subsets $K\subset H$, then $h_n\to h$ in $H$ implies $\|\Delta_{n}(h_n - h)\| \to 0$. In consequence $(\Delta_{n} h)_{n\in \N}$ being uniformly square $P_{n}$-integrable and $\Delta_{n} h \overset{P_{n}}{\weakconv} \mc{N}(0, \sigma(h))$ for each $h\in H$, suffices for $(\Delta_{n} h_n)_{n\in \N}$ being uniformly square $P_{n}$-integrable and for any $h_n\to h\in H$
	\begin{equation*}
		\left(\Delta_{n} h_n,\  \Delta_{n}h\right)\transp = \begin{pmatrix}
			1 & 1\\
			0 & 1
		\end{pmatrix}\left(\Delta_{n} h_n - \Delta_{n}h,\  \Delta_{n}h\right)\transp
		 \overset{P_{n}}{\weakconv} \mc{N}\left(0, \sigma(h)\left[\begin{smallmatrix}
			1 & 1\\
			1 & 1
		\end{smallmatrix}\right]\right).
	\end{equation*}

    If $H$ is a Banach space (metrised by its norm), the equicontinuity of $(\Delta_{n})_{n\in \N}$ is guaranteed as uniform boundedness of $(\Delta_{n})_{n\in \N}$ (hence equicontinuity on $H$) is implied by uniform square $P_{n}$-integrability of $(\Delta_{n} h)_{n\in \N}$ for $h\in H$.
\end{remark}

\begin{proposition}\label{prop:ULAN-LAN-equicontinuity}
	Assumption \ref{ass:ULAN} is equivalent to Assumption \ref{ass:LAN} plus asymptotic equicontinuity on compact subsets $K\subset H$ of $(\Delta_{n})_{n\in \N}$ and $(h\mapsto P_{n, h})_{n\in \N}$ (in $d_{TV}$).
\end{proposition}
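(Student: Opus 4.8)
The plan is to establish the equivalence by proving both implications, treating the LAN expansion (the deterministic/stochastic quadratic expansion with remainder $R_n$) and the distributional requirements separately, since the remainder condition $R_n(h_n)\to 0$ along sequences $h_n\to h$ is essentially common to both formulations (Assumption \ref{ass:ULAN} states it directly, and it is implied by Assumption \ref{ass:LAN} together with asymptotic equicontinuity of $(\Delta_n)$ — see below). The real content is the translation between, on the one hand, the ``$d_2$-convergence of $\Delta_n h$ for each fixed $h$'' plus ``asymptotic equicontinuity of $(\Delta_n)$ and of $(h\mapsto P_{n,h})$ in $d_{TV}$'', and on the other hand, the uniform-in-sequences statements of Assumption \ref{ass:ULAN}. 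Throughout I will lean on Remark \ref{rem:Delta-unif-bounded}, which already packages the key observation that asymptotic equicontinuity of $(\Delta_n)$ on compacts lets one pass from fixed-$h$ uniform square integrability and weak convergence to the joint convergence $\bigl(\Delta_n h_n, \Delta_n h\bigr)\transp \weakconv \mathcal N\bigl(0, \sigma(h)\left[\begin{smallmatrix}1&1\\1&1\end{smallmatrix}\right]\bigr)$ along $h_n\to h$.

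First I would prove the direction (LAN + equicontinuity) $\Rightarrow$ ULAN. The distributional part of Assumption \ref{ass:ULAN} is exactly the conclusion of Remark \ref{rem:Delta-unif-bounded}, given asymptotic equicontinuity of $(\Delta_n)$ on compacts and the fixed-$h$ part of Assumption \ref{ass:LAN} (recall $d_2$-convergence $\equiv$ weak convergence $+$ uniform square integrability). It remains to show $R_n(h_n)\xrightarrow{P_n}0$ whenever $h_n\to h$. Here I would write $R_n(h_n) = L_n(h_n) - \Delta_n h_n + \tfrac12\|\Delta_n h_n\|^2$ and compare it with $R_n(h)$: the difference $\|\Delta_n h_n\|^2 - \|\Delta_n h\|^2 \to 0$ since $\|\Delta_n(h_n-h)\|\to 0$ (equicontinuity) and $(\Delta_n h)$ is uniformly square integrable; the difference $\Delta_n h_n - \Delta_n h\xrightarrow{P_n}0$ for the same reason; and $L_n(h_n) - L_n(h)\xrightarrow{P_n}0$ because $(h\mapsto P_{n,h})$ is asymptotically equicontinuous in $d_{TV}$ (contiguity plus the total-variation control let one transfer convergence of log-likelihood ratios along the sequence — this is the step that needs a small argument using that $L_n(h)\xrightarrow{P_n} \Delta h - \tfrac12\|h\|^2$ is tight and $d_{TV}(P_{n,h_n},P_{n,h})\to0$). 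Combining with $R_n(h)\xrightarrow{P_n}0$ (Assumption \ref{ass:LAN}) gives $R_n(h_n)\xrightarrow{P_n}0$.

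Then I would prove the converse, ULAN $\Rightarrow$ (LAN + equicontinuity). Assumption \ref{ass:LAN} is immediate: take $h_n = h$ constant. Asymptotic equicontinuity of $(\Delta_n)$ on a compact $K$: if it failed, there would be $h_n, h_n'\in K$ with $d(h_n,h_n')\to 0$ but $\|\Delta_n(h_n-h_n')\|\not\to 0$ along a subsequence; by compactness pass to a further subsequence with $h_n\to h$, hence also $h_n'\to h$, and use the ULAN joint-convergence statement applied to each of $h_n, h_n'$ against $h$ — the joint limit forces $\Delta_n h_n - \Delta_n h_n'\xrightarrow{P_n}0$ and, with uniform square integrability, $\|\Delta_n(h_n-h_n')\|\to0$, a contradiction. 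For asymptotic equicontinuity of $(h\mapsto P_{n,h})$ in $d_{TV}$: argue similarly by contradiction, reducing to a sequence $h_n\to h$ with $d_{TV}(P_{n,h_n},P_{n,h})\not\to0$; express this total-variation distance via the log-likelihood ratios $L_n(h_n) - L_n(h)$ and use the ULAN expansion (valid along both $h_n\to h$ and the constant sequence $h$) together with the joint convergence $\bigl(\Delta_n h_n, \Delta_n h\bigr)\transp \weakconv \mathcal N\bigl(0,\sigma(h)\left[\begin{smallmatrix}1&1\\1&1\end{smallmatrix}\right]\bigr)$, which is degenerate and forces $\Delta_n h_n - \Delta_n h\xrightarrow{P_n}0$; hence $L_n(h_n) - L_n(h)\xrightarrow{P_n}0$, so the measures merge in total variation (Scheffé/standard LeCam merging, using contiguity from Remark \ref{rem:mutual-contiguity}), contradicting the assumption.

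The main obstacle, in both directions, is the rigorous handling of the total-variation equicontinuity of $(h\mapsto P_{n,h})$ — i.e. cleanly passing between ``$L_n(h_n)-L_n(h)\xrightarrow{P_n}0$'' and ``$d_{TV}(P_{n,h_n},P_{n,h})\to0$''. This requires a uniform-integrability/merging argument (the likelihood ratios $dP_{n,h_n}/dP_{n,h}$ must be shown to converge in $P_{n,h}$-probability to $1$, with enough uniform integrability to upgrade to $L_1$-convergence and hence total variation), and is exactly the place where the stochastic LAN expansion, the degeneracy of the joint Gaussian limit, and contiguity must all be combined. The rest is bookkeeping along subsequences, for which I would cite Theorem 80.13 in \cite{S85} as the template (as the footnote to Assumption \ref{ass:ULAN} indicates).
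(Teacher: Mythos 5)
Your proposal follows essentially the same route as the paper's proof: the forward direction combines Remark \ref{rem:Delta-unif-bounded} for the distributional part with the decomposition $R_n(h_n)-R_n(h)$ controlled via the $d_{TV}$-equicontinuity and the $L_2$-equicontinuity of $\Delta_n$, and the converse uses the degenerate joint Gaussian limit plus uniform square integrability to get $\|\Delta_n(h_n-h)\|\to 0$, then deduces $L_n(h_n)-L_n(h)=o_{P_n}(1)$ and invokes the contiguity/merging lemma (the paper's Lemma \ref{lem:seq-w-LR-to-0-no-pwr-diff}) to conclude $d_{TV}(P_{n,h_n},P_{n,h})\to 0$. The step you flag as the main obstacle is exactly the one the paper outsources to that lemma, so your plan is correct and matches the paper's argument.
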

\begin{proof}

	Suppose Assumption \ref{ass:LAN} and the asymptotic equicontinuity conditions hold. Let $h_n\to h$ in $H$. By asymptotic equicontinuity of $(h\mapsto P_{n, h})_{n\in \N}$,
	\begin{equation*}
		\limn d_{TV}(P_{n, h_n}, P_{n, h}) =0 \quad \Longrightarrow\quad  \limn \int \left|
			\frac{p_{n, h_n}}{p_{n, 0}} - \frac{p_{n, h}}{p_{n, 0}}
		\right| \darg{P_{n, 0}} = 0.
	\end{equation*}
	In combination with (compact) asymptotic equicontinuity of $(\Delta_{n})_{n\in \N}$, this implies $R_{n}(h_n) - R_{n}(h) =  o_{P_{n}}(1)$. 
	That $(\Delta_{n}h_n)_{n\in \N}$ is uniformly square $P_{n}$-integrable and the joint weak convergence under $P_{n}$ follows from the
	argument in Remark \ref{rem:Delta-unif-bounded}.
	Conversely, suppose Assumption \ref{ass:ULAN} holds and let $h_n\to h$ in $H$.
	Then $\Delta_{n}(h_n-h) \xrightarrow{P_{n}}0$ and so $\|\Delta_{n} (h_n-h)\|^2\to 0$ by uniform square integrability \cite[][Theorem 2.7]{S82}.
	$d_{TV}(P_{n, h_n}, P_{n, h})\to 0$ holds by Lemma \ref{lem:seq-w-LR-to-0-no-pwr-diff} since
	\begin{equation*}
		L_{n}(h_n) - L_{n}(h) = \Delta_{n}h_n - \frac{1}{2} \|\Delta_{n}h_n\|^2 + R_{n}(h_n) - \left[\Delta_{n}h - \frac{1}{2} \|\Delta_{n}h\|^2 + R_{n}(h)\right],
	\end{equation*}
	and $R_{n}(h_n) =  o_{P_{n}}(1)$,  $R_{n}(h) =  o_{P_{n}}(1)$, $\|\Delta_{n} (h_n-h)\|^2\to 0$.
\end{proof}

In the i.i.d. case, Lemma \ref{lem:iid-LAN-DQM} recorded sufficient conditions for LAN (Assumption \ref{ass:LAN}). Similar sufficient conditions are available for ULAN (Assumption \ref{ass:ULAN}).

\begin{lemma}\label{lem:iid-ULAN-DQM}
	Suppose that Assumption \ref{ass:iid} holds and for each $h\in H$ equation \eqref{eq:dqm} holds with $A: \cllin H \to L_2(P)$ a bounded linear map. Then Assumption \ref{ass:ULAN} holds with $P_{n, h} = P_{h/\sqrt{n}}^n$ and $[\Delta_{n} h](W^{(n)}) = \G_n A h$.
\end{lemma}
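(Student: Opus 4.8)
The plan is to follow the proof of Lemma~\ref{lem:iid-LAN-DQM} step for step, replacing the fixed direction $h$ by an arbitrary local sequence $h_n\to h$ in $H$, and then to read off the two remaining clauses of Assumption~\ref{ass:ULAN} from Remark~\ref{rem:Delta-unif-bounded}. First I would note that \eqref{eq:dqm} holds in particular along the constant sequence, so Lemma~\ref{lem:iid-LAN-DQM} already gives that \eqref{eq:LAN} holds with $[\Delta_n h](W^{(n)})=\G_n Ah$, where $Ah\in L_2^0(P)$ (the zero-mean property being forced by \eqref{eq:dqm}). Consequently $\|\Delta_n h\|^2=\Var_P(Ah)=\int(Ah)^2\dP$ for every $n$, so $\sigma(h)=\limn\|\Delta_n h\|^2=\int(Ah)^2\dP$; the same holds with $h$ replaced by any $h_n\in H$.

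Next, for $h_n\to h$ I would invoke the standard differentiability-in-quadratic-mean expansion \cite[][Lemma~3.10.11]{vdVW96} (exactly as in the proof of Lemma~\ref{lem:iid-LAN-DQM}, now along the sequence $h_n$, whose limiting direction is $Ah$): this yields $L_n(h_n)=\G_n Ah-\tfrac12\int(Ah)^2\dP+o_{P_n}(1)=\Delta_n h-\tfrac12\|\Delta_n h\|^2+o_{P_n}(1)$. To recast this in the form demanded by Assumption~\ref{ass:ULAN}, namely $L_n(h_n)=\Delta_n h_n-\tfrac12\|\Delta_n h_n\|^2+R_n(h_n)$ with $R_n(h_n)\xrightarrow{P_n}0$, I would use boundedness of $A$: the difference $\Delta_n h_n-\Delta_n h=\G_n A(h_n-h)$ has mean zero under $P_n$ and $L_2(P_n)$-norm $\|A(h_n-h)\|_{L_2(P)}\le\|A\|\,\|h_n-h\|\to0$, hence tends to $0$ in $P_n$-probability; and $\|\Delta_n h_n\|^2=\|Ah_n\|_{L_2(P)}^2\to\|Ah\|_{L_2(P)}^2=\|\Delta_n h\|^2$ by continuity of $A$ and of the norm. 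Therefore $R_n(h_n)=(\Delta_n h-\Delta_n h_n)+\tfrac12(\|\Delta_n h_n\|^2-\|\Delta_n h\|^2)+o_{P_n}(1)=o_{P_n}(1)$.

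It then remains to verify uniform square $P_n$-integrability of $(\Delta_n h_n)_n$ and the joint convergence $(\Delta_n h_n,\Delta_n h)\transp\overset{P_n}{\weakconv}\mc{N}(0,\sigma(h)\left[\begin{smallmatrix}1 & 1\\ 1 & 1\end{smallmatrix}\right])$, and for this I would appeal to Remark~\ref{rem:Delta-unif-bounded} after checking its two hypotheses. Asymptotic equicontinuity of $(\Delta_n)_n$ on compact subsets of $H$ is immediate, since $\|\Delta_n g-\Delta_n h\|=\|\G_n A(g-h)\|_{L_2(P_n)}\le\|A(g-h)\|_{L_2(P)}\le\|A\|\,\|g-h\|$, so the sequence is uniformly equi-Lipschitz. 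And for fixed $h$, $\Delta_n h=\G_n Ah$ is a normalised i.i.d.\ sum of the mean-zero, square-$P$-integrable variable $Ah$, so the classical central limit theorem gives $\Delta_n h\overset{P_n}{\weakconv}\mc{N}(0,\sigma(h))$ while $\|\Delta_n h\|^2=\sigma(h)$ for all $n$; weak convergence together with convergence of the second moments to that of the (square-integrable) limit yields uniform square $P_n$-integrability of $(\Delta_n h)_n$. Remark~\ref{rem:Delta-unif-bounded} then delivers exactly the two outstanding assertions of Assumption~\ref{ass:ULAN}, completing the argument.

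I would expect the only genuinely delicate point to be the bookkeeping in the second paragraph: the DQM expansion along $h_n$ produces the linear term $\Delta_n h$ with the \emph{limit} direction, so one must use boundedness (hence continuity) of $A$ to absorb both $\Delta_n h_n-\Delta_n h$ and $\|\Delta_n h_n\|^2-\|\Delta_n h\|^2$ into the remainder. Everything else is a routine reassembly of the i.i.d.\ central limit theorem and of machinery already recorded in Lemma~\ref{lem:iid-LAN-DQM} and Remark~\ref{rem:Delta-unif-bounded}.
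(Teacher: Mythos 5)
Your proposal is correct and follows essentially the same route as the paper's proof: invoke Lemma 3.10.11 of van der Vaart and Wellner for the expansion along $h_n\to h$, obtain uniform square integrability and the marginal normal limit of $\G_n Ah$ from the i.i.d.\ CLT, and reduce the remaining clauses of Assumption \ref{ass:ULAN} to asymptotic equicontinuity of $\Delta_n$ via Remark \ref{rem:Delta-unif-bounded}, which boundedness of $A$ delivers through $\|\Delta_n(h_n-h)\|=\|A(h_n-h)\|\le\|A\|\,\|h_n-h\|$. Your second paragraph merely makes explicit the bookkeeping (absorbing $\Delta_n h_n-\Delta_n h$ and $\|\Delta_n h_n\|^2-\|\Delta_n h\|^2$ into the remainder) that the paper leaves implicit in the same equicontinuity bound.
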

\begin{proof}
	That $R_n(h_n)\xrightarrow{P_n}0$ in \eqref{eq:LAN} and that $Ah\in L_2^0(P)$ follows from \eqref{eq:dqm} by e.g. Lemma 3.10.11 in \cite{vdVW96}. This immediately implies that $\Delta_n h$ is uniformly square integrable by the i.i.d assumption and that $[\Delta_{n} h](W^{(n)}) = \G_n A h \weakconv \mc{N}(0, \sigma(h))$ for $\sigma(h)\define \int( A h)^2\dP$ by the central limit theorem. In view of Remark \ref{rem:Delta-unif-bounded} it remains to show that $\Delta_n$ is asymptotically equicontinuous on compact subsets $K\subset H$. This follows since $A$ is bounded: for $h_n \to h$, $\|\Delta_{n}(h_n - h)\| = \left\|\G_n A(h_n - h)\right\| = \|A (h_n - h)\| \le \|A\|\|h_n - h\| \to 0$.
\end{proof}

\subsection{Additional results on uniform local regularity}

\subsubsection{Asymptotic equicontinuity of power functions}

\begin{lemma}\label{lem:level-alpha-unif-equicontinuity-weak}
    Suppose the conditions of Theorem \ref{thm:psi-pwr-local-alt} hold and that $(H, d)$ is a pseudometric space. Let $\delta$ metrise weak convergence on the space of probability measures on $(\R, \mc{B}(\R))$ and let $Q_{n, h} = P_{n, h}\circ \hat{S}_{n, \theta_0}^{-1}$. 
    Suppose that on a subset $K\subset H$,
    \begin{enumerate}
        \item $(h\mapsto Q_{n, h})_{n\in \N}$ is asymptotically equicontinuous in $\delta$;\label{lem:level-alpha-unif-equicontinuity-weak.itm.Q}
        \item $(h\mapsto P_{n, h}(\hat{r}_{n, \theta_0} = r))_{n\in \N}$ is asymptotically equicontinuous;\label{lem:level-alpha-unif-equicontinuity-weak.itm.r}
        \item $(h\mapsto P_{n,  h}(\hat{\Lambda}_{n, \theta_0} = 0))_{n\in \N}$ is asymptotically equicontinuous;\label{lem:level-alpha-unif-equicontinuity-weak.itm.Lambda}
    \end{enumerate}
    then $(h\mapsto P_{n, h} \psi_{n, \theta_0})_{n\in \N}$ is asymptotically equicontinuous on $K$.
\end{lemma}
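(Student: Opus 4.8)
The plan is to verify asymptotic equicontinuity pointwise: fix an arbitrary $h_0 = (\tau_0, b_0)\in K$ and $\epsilon>0$, and produce a radius $\eta>0$ with $\limsup_n \sup\{|P_{n,h}\psi_{n,\theta_0} - P_{n,h_0}\psi_{n,\theta_0}| : h\in K,\ d(h,h_0)<\eta\}\le\epsilon$. Two preliminaries will be used throughout. First, by Remark \ref{rem:mutual-contiguity} the sequences $(P_{n,h})_n$ and $(P_{n,0})_n$ are mutually contiguous for every $h\in H$, so the probability limits of Assumption \ref{ass:consistent} (stated under $P_{n,0}$) hold under every $P_{n,h}$; in particular $P_{n,h_0}(\hat{r}_{n,\theta_0}=r)\to1$ when $r\define\rank(V)\ge1$, and $P_{n,h_0}(\hat{\Lambda}_{n,\theta_0}=0)\to1$ when $r=0$. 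Second, writing $a_h\define\tau\transp\Sigma_{21}\transp V^{\dagger}\Sigma_{21}\tau\ge0$ for $h=(\tau,b)$, Proposition \ref{prop:asymp-dist} gives $Q_{n,h}\weakconv\chi^2_r(a_h)$ for each $h\in H$.

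I would dispose of the degenerate case $r=0$ first. On the event $\{\hat{\Lambda}_{n,\theta_0}=0\}$ one has $\hat{S}_{n,\theta_0}=\hat{g}_{n,\theta_0}\transp\hat{\Lambda}_{n,\theta_0}\hat{g}_{n,\theta_0}=0$, while $c_n\ge0$ (a quantile of a law on $[0,\infty)$), so $\psi_{n,\theta_0}=\bm{1}\{0>c_n\}=0$; hence $P_{n,h}\psi_{n,\theta_0}\le 1-P_{n,h}(\hat{\Lambda}_{n,\theta_0}=0)$. Combining the asymptotic equicontinuity of $h\mapsto P_{n,h}(\hat{\Lambda}_{n,\theta_0}=0)$ from hypothesis \ref{lem:level-alpha-unif-equicontinuity-weak.itm.Lambda} with $P_{n,h_0}(\hat{\Lambda}_{n,\theta_0}=0)\to1$, one makes $\limsup_n\sup\{P_{n,h}\psi_{n,\theta_0}: h\in K,\ d(h,h_0)<\eta\}$ arbitrarily small by shrinking $\eta$; the claim follows from $|P_{n,h}\psi_{n,\theta_0}-P_{n,h_0}\psi_{n,\theta_0}|\le P_{n,h}\psi_{n,\theta_0}+P_{n,h_0}\psi_{n,\theta_0}$, which also recovers $\uppi\equiv0$ from Theorem \ref{thm:psi-pwr-local-alt}.

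For the main case $r\ge1$, note that on $\{\hat{r}_{n,\theta_0}=r\}$ one has $c_n=c_r$, the deterministic $1-\alpha$ quantile of $\chi^2_r$, which yields the bound $|P_{n,h}\psi_{n,\theta_0}-Q_{n,h}((c_r,\infty))|\le P_{n,h}(\hat{r}_{n,\theta_0}\neq r)$. Using hypothesis \ref{lem:level-alpha-unif-equicontinuity-weak.itm.r}, $P_{n,h_0}(\hat{r}_{n,\theta_0}\neq r)\to0$ and contiguity, the quantity $\limsup_n\sup\{P_{n,h}(\hat{r}_{n,\theta_0}\neq r): h\in K,\ d(h,h_0)<\eta\}$ can be made arbitrarily small by shrinking $\eta$, so it remains to control $|Q_{n,h}((c_r,\infty))-Q_{n,h_0}((c_r,\infty))|$ uniformly near $h_0$. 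Here hypothesis \ref{lem:level-alpha-unif-equicontinuity-weak.itm.Q} is decisive: since $r\ge1$, the limit $\mu_0\define\chi^2_r(a_{h_0})$ is absolutely continuous, so $c_r$ is a continuity point of its distribution function, and hence, by the Portmanteau theorem, the functional $\mu\mapsto\mu((c_r,\infty))$ is $\delta$-continuous \emph{at} $\mu_0$. A triangle inequality combining (i) this local continuity, (ii) $Q_{n,h_0}\weakconv\mu_0$, and (iii) the asymptotic $\delta$-equicontinuity of $h\mapsto Q_{n,h}$ at $h_0$ then shows that $\limsup_n\sup\{|Q_{n,h}((c_r,\infty))-\mu_0((c_r,\infty))|: h\in K,\ d(h,h_0)<\eta\}$ is arbitrarily small for $\eta$ small, hence so is the same expression with $\mu_0((c_r,\infty))$ replaced by $Q_{n,h_0}((c_r,\infty))$. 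Taking $\eta$ to be the minimum of the two radii so produced and assembling the bounds completes this case.

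I expect the last step to be the only delicate point: transferring $\delta$-closeness of $Q_{n,h}$ to $Q_{n,h_0}$ into closeness of their tail probabilities at the \emph{fixed} level $c_r$ is not automatic, because for finite $n$ the laws $Q_{n,h}$ may carry atoms and $\mu\mapsto\mu((c_r,\infty))$ is not globally $\delta$-continuous; the argument genuinely uses that the common weak limit $\chi^2_r(a_{h_0})$ is atomless at $c_r$, which is exactly where $r\ge1$ enters. An equivalent way to present this step, avoiding pointwise continuity of the tail functional, is by subsequence contradiction: any offending sequence $(n_k,h_k)$ would, by hypothesis \ref{lem:level-alpha-unif-equicontinuity-weak.itm.Q} and $Q_{n_k,h_0}\weakconv\mu_0$, satisfy $\delta(Q_{n_k,h_k},\mu_0)\to0$, whence $Q_{n_k,h_k}((c_r,\infty))\to\mu_0((c_r,\infty))$ by Portmanteau, a contradiction.
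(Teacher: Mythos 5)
Your proof is correct and follows essentially the same route as the paper: split on $r=0$ versus $r\ge 1$, use hypothesis (iii) with contiguity to kill the degenerate case, and for $r\ge 1$ combine hypotheses (i)--(ii) with the weak limit $\chi^2_r(a_{h_0})$ and its atomlessness at $c_r$ via Portmanteau. If anything you are more explicit than the paper about why $\delta$-closeness transfers to closeness of the tail probabilities at the fixed threshold $c_r$, which the paper's proof passes over quickly.
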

\begin{proof}
    First suppose $r\ge 1$. By asymptotic equicontinuity of $h\mapsto Q_{n, h}$ and $h\mapsto P_{n, h}(\hat{r}_{n, \theta_0} = r)$ on $K$, for any $h_n \to h$ (through $K$), $\delta(Q_{n, h_n}, Q_{n, h}) \to 0$ and $|P_{n, h}(\hat{r}_{n, \theta_0} = r) - P_{n, h_n}(\hat{r}_{n, \theta_0} = r )| \to0$. 
    Since $\hat{r}_{n, \theta_0} \xrightarrow{P_{n, h}}r$ (Assumption \ref{ass:consistent} \ref{ass:consistent:itm:rank} and Remark \ref{rem:mutual-contiguity}), $\hat{r}_{n, \theta_0} \xrightarrow{P_{n, h_n}}r$. Hence, under $P_{n, h_n}$,
    \begin{equation*}
        \hat{S}_{n, \theta_0} - c_n  \weakconv S - c_r, \quad S\sim \chi^2_r(a) \implies  P_{n, h_n}\psi_{n, \theta_0} \to 1 - \P(\chi^2_r(a) \le c_r)\eqqcolon \uppi(\tau),
    \end{equation*}
    by Proposition \ref{prop:asymp-dist} where $c_r$ and $a$ are as in Theorem \ref{thm:psi-pwr-local-alt}. Thus, by Theorem \ref{thm:psi-pwr-local-alt},    
    \begin{equation*}
        |P_{n, h_n} \psi_{n, \theta_0} - P_{n, h} \psi_{n, \theta_0}|\le  |P_{n, h_n} \psi_{n, \theta_0} - \uppi(\tau) |+ | P_{n, h} \psi_{n, \theta_0} - \uppi(\tau)| \to 0.
    \end{equation*}

    In the case where $r = 0$, the asymptotic equicontinuity on $K$ of $h\mapsto P_{n, h}(\hat{\Lambda}_{n, \theta_0} = 0)$ implies that if $h_n \to h$ (through $K$), $|P_{n, h_n}(\hat{\Lambda}_{n, \theta_0} = 0) - P_{n, h}(\hat{\Lambda}_{n, \theta_0} = 0) | \to 0$. In combination with $\rank(\hat{\Lambda}_{n, \theta_0}) \xrightarrow{P_{n, h}} 0$ (Assumption \ref{ass:consistent} \ref{ass:consistent:itm:rank} and Remark \ref{rem:mutual-contiguity}), this implies that $P_{n, h_n}(\hat{\Lambda}_{n, \theta_0} = 0) \to 1$ and thus $P_{n, h_n}\psi_{n, \theta_0} \to 0$. Thus, by Theorem \ref{thm:psi-pwr-local-alt}
    \begin{equation*}
        |P_{n, h_n} \psi_{n, \theta_0} - P_{n, h} \psi_{n, \theta_0}|\le  |P_{n, h_n} \psi_{n, \theta_0} |+ | P_{n, h} \psi_{n, \theta_0} | \to 0.\qedhere
    \end{equation*}
\end{proof}

\begin{remark}\label{rem:level-alpha-unif-equicontinuity-weak-condition-split}
    In Lemma \ref{lem:level-alpha-unif-equicontinuity-weak}, Conditions \ref{lem:level-alpha-unif-equicontinuity-weak.itm.Q} and \ref{lem:level-alpha-unif-equicontinuity-weak.itm.r} are required only in the case where $r\ge 1$ whilst Condition \ref{lem:level-alpha-unif-equicontinuity-weak.itm.Lambda} is required only in the case where $r=0$.
\end{remark}

\subsubsection{Uniform results under a measure structure}\label{ssec:unif-measure-structure}

Let $\uppi$ and $\uppi_n$ be as defined in Theorem \ref{thm:psi-pwr-local-alt} and Corollary \ref{cor:psi-locally-uniformly-regular} respectively.

\begin{corollary}\label{cor:psi-is-locally-uniformly-regular-measure}
	Suppose the conditions of Theorem \ref{thm:psi-pwr-local-alt} hold, $(H, \mc{S}, Q)$ is a finite measure space and the functions $h = (\tau, b)\mapsto \uppi_n(\tau, b)$ are measurable. Then, for any $\varepsilon>0$ there is a $K\in \mc{S}$ such that $Q(H\setminus K) < \varepsilon$ and 
	\begin{equation*}
		\limn \sup_{(\tau, b)\in K}|\uppi_n(\tau, b) - \uppi(\tau)| = 0.
	\end{equation*}
\end{corollary}
\begin{proof}
	The pointwise converge is the result of Theorem \ref{thm:psi-pwr-local-alt}.
	$\uppi$ is measurable as the pointwise limit of measurable functions.
	 By Egorov's theorem,
	   $\uppi_n(h) \to \uppi(h)$ uniformly on a $K$ satisfying the given requirements.
\end{proof}

\subsection{Alternative representations of \texorpdfstring{$\effinfo$}{the efficient information matrix}}

\begin{lemma}\label{lem:calculation-of-effinfo}
    Suppose Assumption \ref{ass:LAN} holds, $B$ is a linear space and let $(\Omega, \F, \P)$ be the probability space on which the Gaussian process $\Delta$ of Lemma \ref{lem:IP-GP} is defined.    
    If $\mc{T}\define \{\Delta(h): h = (0, b)\in H \}\subset L_2(\P)$ and $\tilde\Delta(e_i, 0) \define \Pi\left[\Delta(e_i, 0)\middle| \mc{T}^\perp\right]$, then, 
    \begin{equation*}
        \E\left[\Delta(e_i, 0)\tilde\Delta(e_j, 0)\right] =  \E\left[\tilde\Delta(e_i, 0)\tilde\Delta(e_j, 0)\right] = \effinfoarg{ij}.
    \end{equation*}
\end{lemma}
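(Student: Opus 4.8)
The plan is to realise the abstract formula \eqref{eq:effinfo} for $\effinfo$ inside $L_2(\P)$ by means of the canonical isometry between the completion $\overline{\mb{H}}$ and the closed linear span of $\{\Delta(h):h\in H\}$, and then to match the subspaces appearing in the two pictures.

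First I would observe that, by Lemma \ref{lem:IP-GP}, the map $h\mapsto\Delta(h)$ is linear and satisfies $\E[\Delta(h)\Delta(g)]=K(h,g)=\IP{h}{g}_K$; in particular $\E[\Delta(h)^2]=\|h\|_K^2$, so it kills exactly $\{h\in H:\|h\|_K=0\}$ and descends to a linear isometry $[h]\mapsto\Delta(h)$ from $(\mb{H},\IP{\cdot}{\cdot}_K)$ into $L_2(\P)$. Since $\mb{H}$ is dense in $\overline{\mb{H}}$ and $L_2(\P)$ is complete, this extends uniquely to a linear isometry $\iota:\overline{\mb{H}}\to L_2(\P)$ with range $\mc{D}\define\cl\{\Delta(h):h\in H\}$, a closed subspace of $L_2(\P)$. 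Thus $\iota:\overline{\mb{H}}\to\mc{D}$ is a Hilbert space isomorphism, and for every closed subspace $S\subset\overline{\mb{H}}$ it intertwines the orthogonal projection onto $S$ with the orthogonal projection onto $\iota(S)$; when the vector being projected lies in $\mc{D}$, the latter coincides with the orthogonal projection in $L_2(\P)$.

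Next I would pin down the subspaces. Because $B$ (hence $H$) is linear and $\Delta$ is linear, $\mc{T}=\{\Delta(0,b):b\in B\}$ is a linear subspace of $L_2(\P)$, and $\mc{T}=\iota(\mb{H}_0')$ for $\mb{H}_0'\define\{[0,b]:b\in B\}\subset\mb{H}$; consequently $\mc{T}^\perp=(\cl\mc{T})^\perp$, so $\Pi[x\mid\mc{T}^\perp]=x-\Pi[x\mid\cl\mc{T}]$ for $x\in L_2(\P)$, and $\iota\big(\cl_{\overline{\mb{H}}}\mb{H}_0'\big)=\cl\mc{T}$. The crucial identification is $\cl_{\overline{\mb{H}}}\mb{H}_0'=\ker\pi_1$. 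The inclusion $\subseteq$ is immediate, since $\mb{H}_0'\subset\ker\pi_1$ and $\ker\pi_1$ is closed in $\overline{\mb{H}}$. For $\supseteq$, take $x\in\ker\pi_1$ and a sequence $[\tau_n,b_n]\in\mb{H}$ with $[\tau_n,b_n]\to x$; continuity of $\pi_1$ on $\overline{\mb{H}}$ gives $\|\tau_n\|=\|\pi_1([\tau_n,b_n])\|\to\|\pi_1(x)\|=0$, where $\|\tau_n\|=\inf_{b\in B}\|(\tau_n,b)\|_K$; choosing $\beta_n\in B$ with $\|(\tau_n,\beta_n)\|_K\le\|\tau_n\|+1/n$ and setting $c_n\define b_n-\beta_n\in B$ (using linearity of $B$) yields $[0,c_n]\in\mb{H}_0'$ with $\|[\tau_n,b_n]-[0,c_n]\|_K=\|(\tau_n,\beta_n)\|_K\to0$, so $x\in\cl\mb{H}_0'$.

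Finally I would assemble everything. Applying $\iota$ to $\Pi^\perp[e_i,0]=[e_i,0]-\Pi[e_i,0]$, where $\Pi$ is the projection onto $\ker\pi_1$, and using the two previous paragraphs gives $\iota(\Pi^\perp[e_i,0])=\Delta(e_i,0)-\Pi[\Delta(e_i,0)\mid\cl\mc{T}]=\Pi[\Delta(e_i,0)\mid\mc{T}^\perp]=\tilde\Delta(e_i,0)$. Since $\iota$ is an isometry, $\E[\tilde\Delta(e_i,0)\tilde\Delta(e_j,0)]=\IP{\Pi^\perp[e_i,0]}{\Pi^\perp[e_j,0]}_K=\effinfoarg{ij}$, which is the second equality; the first follows because $\Delta(e_i,0)-\tilde\Delta(e_i,0)=\Pi[\Delta(e_i,0)\mid\cl\mc{T}]\in\cl\mc{T}$ is orthogonal to $\tilde\Delta(e_j,0)\in\mc{T}^\perp$, whence $\E[\Delta(e_i,0)\tilde\Delta(e_j,0)]=\E[\tilde\Delta(e_i,0)\tilde\Delta(e_j,0)]$. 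The step I expect to be the main obstacle is the identification $\cl_{\overline{\mb{H}}}\mb{H}_0'=\ker\pi_1$ --- i.e.\ showing that passing to the completion does not enlarge the kernel of $\pi_1$ beyond the closure of $\{[0,b]:b\in B\}$ --- which is exactly where linearity of $B$ and the definition $\|\tau\|=\inf_b\|(\tau,b)\|_K$ enter; the remaining steps are routine bookkeeping with projections and the isometry $\iota$.
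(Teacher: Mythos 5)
Your proof is correct and follows essentially the same route as the paper's: both realise $\effinfo$ via the Hilbert-space isomorphism $[h]\mapsto\Delta(h)$ from $\overline{\mb{H}}$ onto the closed linear span of $\{\Delta(h):h\in H\}$ in $L_2(\P)$, identify $\cl\mc{T}$ with the image of $\ker\pi_1$, and transfer the orthogonal projection through the isometry. The only difference is that you prove the key identification $\cl\{[0,b]:b\in B\}=\ker\pi_1$ inline (via the infimum definition of $\|\tau\|$ and linearity of $B$), whereas the paper delegates this to its separate Lemma \ref{lem:closure-pi1-restriction}; your direct approximation argument is a clean, self-contained substitute for that lemma.
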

\begin{proof}
    Define $Z:\overline{\mb{H}}\to L_2(\P)$ as $Z[h] = \Delta(h)$. $Z$ is a mean-zero linear Gaussian process with covariance kernel 
    $K([h], [g]) = K(h, g) = \IP{[h]}{[g]}_{K}$.
    $Y\define  \ran Z\subset L_2(\P)$ is a Hilbert space since for $[h], [g]\in \overline{\mb{H}}$,
    $ \E\left[Z[h]Z[g]\right] = K([h], [g]) = \IP{[h]}{[g]}_{K}$, which along with the 
    completeness of $\overline{\mb{H}}$ yields the closedness of $\ran Z$. Hence $Z$ is a Hilbert space isomorphism from $\overline{\mb{H}}$ to $Y$. If $\pi_1' \define \left.\pi\right|_{\mb{H}}$,
    \begin{equation*}
        \mc{T}= \{\Delta(h): h = (0, b)\in H \} =  \{Z[h]: h = (0, b)\in H\} = \{ Z[h]: [h]\in \ker \pi_1'\}.
    \end{equation*}
    We next show that $\mc{T}^\perp = \{Z[h]: [h]\in (\ker \pi_1)^\perp\}$. For the first inclusion suppose that $Z[g]\in \mc{T}^\perp$. Then, for any $[h]\in \ker \pi_1'$, 
    \begin{equation}\label{eq:gh-isomorphism-IP-equality}
        \IP{[g]}{[h]}_{K} = \IP{Z[g]}{Z[h]}_{L_2(\P)} = 0,
    \end{equation}
    and the inclusion follows by taking limits as $\ker \pi_1 = \cl \ker \pi_1'$ by Lemma \ref{lem:closure-pi1-restriction}.
     For the other inclusion note that a corollary of Lemma \ref{lem:closure-pi1-restriction} is that $(\ker \pi_1)^\perp = (\ker \pi_1')^\perp$. Hence, if $[g]\in (\ker\pi_1)^\perp$, for any $[h]\in \ker \pi_1'$ \eqref{eq:gh-isomorphism-IP-equality} holds.  
    Finally, let $Q$ denote the orthogonal projection on $(\ker \pi_1)^\perp \subset \overline{\mb{H}}$ and $R$ that on $\mc{T}^\perp \subset Y$. Then for $[h]\in \overline{\mb{H}}$, $R \Delta(h) = R Z[h] = Z Q Z^* Z[h] =  Z Q Z^{-1} Z[h] = Z Q [h]$,
    since $Z$ is a Hilbert space isomorphism.  Hence $R Z[e_i, 0] = Z Q [e_i, 0]$ implying $   \E\left[\tilde\Delta(e_i,0)\tilde\Delta(e_j,0)\right] = 
     \IP{\Pi^\perp[e_i, 0]}{\Pi^\perp[e_j, 0]}_{K} = \effinfoarg{ij}$.
\end{proof}

In the i.i.d. setting the efficient information matrix $\effinfo$ coincides with the variance matrix of the efficient score function $\effscr = \Pi[\dotscr | \{Db: b\in B\}^\perp]$.

\begin{lemma}\label{lem:effscr-coincides-with-usual-defn}
    If Assumptions \ref{ass:LAN} and \ref{ass:iid} hold and $B$ is a linear space then $\effinfo = \int \effscr\effscr\transp \dP$. 
\end{lemma}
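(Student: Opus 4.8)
The plan is to reduce the claim to Lemma~\ref{lem:calculation-of-effinfo}, which already rewrites the $(i,j)$ entry $\effinfoarg{ij}$ as $\E[\tilde\Delta(e_i,0)\tilde\Delta(e_j,0)]$, where $\tilde\Delta(e_i,0)$ is the $L_2(\P)$-orthogonal projection of the coordinate $\Delta(e_i,0)$ of the limiting Gaussian process of Lemma~\ref{lem:IP-GP} onto $\mc{T}^\perp$ with $\mc{T}=\{\Delta(0,b):b\in B\}$. It then suffices to transport this Gaussian computation back into $L_2^0(P)$ and identify the result with $\int\effscrarg{\gamma,i}\effscrarg{\gamma,j}\dP$.

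First I would record the i.i.d.\ LAN structure. With $\Delta_n h=\G_n Ah$ and $Ah=\tau\transp\dotscr+Db$ as in \eqref{eq:Agamh} (the standing smooth-i.i.d.\ setup of Section~\ref{ssec:smooth-iid}; cf.\ Lemma~\ref{lem:iid-LAN-DQM}), and because each $Ah\in L_2^0(P)$ while the $W_i$ are i.i.d.\ under $P$, the quantity $P_n[\Delta_n h\,\Delta_n g]$ equals $\IP{Ah}{Ag}_{L_2(P)}$ for \emph{every} $n$. Hence the covariance kernel of Lemma~\ref{lem:IP-GP} is $K(h,g)=\IP{Ah}{Ag}_{L_2(P)}$, so the seminorm $\|\cdot\|_K$ on $H$ is exactly the pullback under $A$ of the $L_2(P)$ norm.

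Next I would build the relevant isometry. Since $\|h\|_K=0\Rightarrow Ah=0$ in $L_2(P)$, the map $h\mapsto Ah$ descends to a well-defined linear isometry $(\mb{H},\IP{\cdot}{\cdot}_K)\to L_2^0(P)$, which extends by continuity to an isometry $\mathsf{A}:\overline{\mb{H}}\to L_2^0(P)$ with range $M\define\cllin(\{\dotscrarg{\gamma,i}\}_{i=1}^{d_\theta}\cup\{Db:b\in B\})$. Composing with the inverse of the Hilbert-space isomorphism $Z[h]=\Delta(h)$ of $\overline{\mb{H}}$ onto $\cl\{\Delta h:h\in H\}$ (constructed in the proof of Lemma~\ref{lem:calculation-of-effinfo}) gives an inner-product-preserving bijection $U\define\mathsf{A}\circ Z^{-1}:\cl\{\Delta h:h\in H\}\to M$ with $U(\Delta h)=Ah$. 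Now $U$ carries $\mc{T}$ onto $\{Db:b\in B\}$, hence $\cl\mc{T}$ onto $\cl\{Db:b\in B\}$, and therefore the orthocomplement of $\cl\mc{T}$ within $\cl\{\Delta h\}$ onto the orthocomplement of $\cl\{Db:b\in B\}$ within $M$. Because $\Delta(e_i,0)$ and $\mc{T}$ both lie in $\cl\{\Delta h\}$, the projection $\tilde\Delta(e_i,0)=\Pi[\Delta(e_i,0)\,|\,\mc{T}^\perp]$ may be computed within $\cl\{\Delta h\}$, and is then sent by $U$ to $\Pi[\dotscrarg{\gamma,i}\,|\,\{Db:b\in B\}^\perp\cap M]$; since $\cl\{Db:b\in B\}\subseteq M\subseteq L_2^0(P)$ are nested and $\dotscrarg{\gamma,i}\in M$, this equals $\dotscrarg{\gamma,i}-\Pi[\dotscrarg{\gamma,i}\,|\,\cl\{Db:b\in B\}]=\Pi[\dotscrarg{\gamma,i}\,|\,\{Db:b\in B\}^\perp]=\effscrarg{\gamma,i}$. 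As $U$ preserves inner products, Lemma~\ref{lem:calculation-of-effinfo} yields $\effinfoarg{ij}=\IP{\effscrarg{\gamma,i}}{\effscrarg{\gamma,j}}_{L_2(P)}=\int\effscrarg{\gamma,i}\effscrarg{\gamma,j}\dP$, i.e.\ $\effinfo=\int\effscr\effscr\transp\dP$.

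The step I expect to be the main obstacle is not any deep estimate but the careful bookkeeping of ambient spaces: one must verify that every orthogonal complement and projection is taken in the correct Hilbert space, that the nested-subspace identity $\Pi[\dotscrarg{\gamma,i}\,|\,\{Db:b\in B\}^\perp\cap M]=\Pi[\dotscrarg{\gamma,i}\,|\,\{Db:b\in B\}^\perp]$ is legitimate (it is, precisely because $\dotscrarg{\gamma,i}\in M$), and that $U$ indeed maps the Gaussian nuisance tangent set $\mc{T}$ onto $\{Db:b\in B\}$. A lesser point is to make explicit that the representation $\Delta_n h=\G_n Ah$ with $Ah$ of the form \eqref{eq:Agamh} is available, which is part of the i.i.d.\ smooth-model setup underlying this lemma.
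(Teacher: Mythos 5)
Your proposal is correct and follows essentially the same route as the paper's proof: both identify $K(h,g)=P[Ah\,Ag]$ in the i.i.d.\ case, construct the Hilbert-space isomorphism $U$ with $U(Ah)=\Delta h$ between $\cl\ran A\subset L_2(P)$ and $\cl\ran\Delta\subset L_2(\P)$, use that $U$ intertwines the two orthogonal projections, and conclude via Lemma~\ref{lem:calculation-of-effinfo} together with $A(e_i,0)=e_i\transp\dotscr$. Your explicit check that the projection onto $\{Db:b\in B\}^\perp\cap M$ agrees with the projection onto $\{Db:b\in B\}^\perp$ in all of $L_2^0(P)$ (because $\dotscrarg{\gamma,i}\in M$) is a point the paper's proof leaves implicit, and is worth spelling out.
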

\begin{proof}
    For $h_1,h_2 \in H$, by the i.i.d. assumption and Lemma \ref{lem:IP-GP}, $ P_{n}[\Delta_{n}h_1 \Delta_{n} h_2]= P\left[Ah_1Ah_2\right] = \P[\Delta h_1\Delta h_2]$.
    $X \define \cl \ran A \subset L_2(P)$ and $Y \define \cl \ran \Delta \subset L_2(\P)$ are Hilbert spaces when equipped with the inner products given by $(h_1, h_2)\mapsto P[Ah_1Ah_2]$ and $(h_1, h_2)\mapsto\P[\Delta h_1\Delta h_2]$ respectively.
    Define $U: \ran A \to \ran \Delta$ by $U A h \define \Delta h$ for $h\in H$. $U$ is a bounded, linear, surjective isometry and can be uniquely extended to a Hilbert space isomorphism
      $U:X\to Y$.
    Let $R\define \Pi\left[\cdot| \mc{T}^\perp\right]$ ($\mc{T}^\perp$ as in Lemma \ref{lem:calculation-of-effinfo}) and $Q\define \Pi\left[\cdot | \{D b: b\in B\}^\perp\right]$. Then  $  R\Delta h = RUAh = UQU^* U  Ah  = U Q A h$,
    which implies the conclusion as $e_i\transp \dotscrarg{\gamma} = A(e_i, 0)$.
\end{proof}

\subsection{Most stringent tests}\label{sm:ssec:stringent}

Here I consider most stringent tests; this delivers a similar message to the maximin analysis in the main text.\footnote{The development here is based on Section 9, Chapter 11 in \cite{LC86}; in particular compare Theorem \ref{thm:most-stringent} with Corollary 2 of \citealp[Section 9, Chapter 11]{LC86} which treats the case of a Gausian shift experiment indexed by a Euclidean space.} 
Let $\mc{C}$ be the class of all tests of level $\alpha$ for the hypothesis $\mathrm{K}_0: h\in H_0$ against $\mathrm{K}_1: h\in H_1$ in the experiment $\mathscr{E}$. 
Define $\uppi^\star(h)\define \sup_{\phi\in \mc{C}} P_{h}\phi$ for all $h\in H_{1} $. The \emph{regret} of a test $\phi \in \mc{C}$ is 
\begin{equation}\label{eq:regret}
    R(\phi)\define \sup\left\{\uppi^\star(h) - P_{h}\phi: h\in H_{ 1}\right\}.
\end{equation}
A test $\phi\in \mc{C}$ is called \emph{most stringent at level $\alpha$} if it minimises $R(\phi)$ over $\mc{C}$.

\begin{theorem}\label{thm:most-stringent}
    Suppose Assumptions \ref{ass:LAN} and \ref{ass:IP} hold and $r = \rank(\effinfo)\ge 1$. The most stringent level $\alpha$ test of $\mathrm{K}_0: h\in H_0$ against $\mathrm{K}_1: h\in H_1$ in $\mathscr{E}$
    has power function 
    \begin{equation}\label{thm:most-stringent:eq:pwr-fcn}
        \uppi(h)
         =  1 - \P(\chi_r^2(a) \le c_r), \qquad a = \tau\transp \effinfo \tau,\quad h = (\tau, b)\in H.
    \end{equation}
\end{theorem}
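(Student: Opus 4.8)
The plan is to mirror the maximin argument from Theorem~\ref{thm:maximin-pwr-bound} and its proof, but now identifying the \emph{lower} boundary (the stringency bound) rather than just an upper maximin bound. The key observation is that in the Gaussian shift experiment $\ms{G}$ on $(\mb{H}, \IP{\cdot}{\cdot}_K)$, the testing problem $\mathrm{K}_0: h\in H_0$ against $\mathrm{K}_1: h\in H_1$ reduces, via the decomposition \eqref{lem:ker-effinfo-is-fzero:eq:IP-decomp} and the fact that $[0,b]\in\ker\pi_1$, to a testing problem about the parameter $[\tau]\in\mb{H}_1 = \quotient{\R^{d_\theta}}{\ker\effinfo}$, which is a standard Gaussian shift on an $r$-dimensional Euclidean space. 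First I would record that $H_0$ corresponds exactly to $\{[h]: \pi_1[h]=[0]\} = \ker\pi_1\cap\mb{H}$ (so that $\uppi^\star(h)$ and the regret depend on $h=(\tau,b)$ only through $a=\tau\transp\effinfo\tau = \|[\tau]\|^2$), and that the power envelope $\uppi^\star$ over the alternative is rotationally symmetric in $[\tau]$; this is the content of the classical Gaussian-shift stringency theory (cf.\ the reference to \citealp[Section 9, Chapter 11]{LC86}).

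Next I would compute the envelope and the most stringent test explicitly in the finite-dimensional Gaussian shift. After reducing to $[\ker\pi_1]^\perp\subset\overline{\mb H}$, which is isometric to $\R^r$ with the Euclidean inner product, the problem is: observe $X\sim\mc N(\mu,I_r)$, test $\mu=0$ versus $\mu\neq 0$; for fixed $\|\mu\|^2=a$ the best level-$\alpha$ power is $1-\P(\chi^2_r(a)\le c_r)$, attained (only) by the rotation-invariant test $\bm 1\{\|X\|^2>c_r\}$. Hence $\uppi^\star(h)=1-\P(\chi^2_r(\tau\transp\effinfo\tau)\le c_r)$, and the $\chi^2_r$ test has \emph{zero} regret: $\uppi^\star(h)-P_h\phi=0$ for every $h\in H_1$. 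Since regret is nonnegative, any zero-regret test is most stringent, giving \eqref{thm:most-stringent:eq:pwr-fcn}. I would invoke Proposition~\ref{prop:exp-equiv-shift} to transfer this from $\ms G$ to $\ms E$, and Lemma~\ref{lem:ker-effinfo-is-fzero} together with \eqref{lem:ker-effinfo-is-fzero:eq:IP-decomp} to justify that the power function in $\ms E$ depends on $h$ only through $\tau\transp\effinfo\tau$.

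The main obstacle I anticipate is the careful bookkeeping of the quotient/completion structure: one must check that the restriction of the Gaussian shift $\ms G$ to the alternative is genuinely isometric to the $r$-dimensional standard Gaussian shift, that $H_1 = \mb H\setminus(\ker\pi_1\cap\mb H)$ maps onto (a dense subset of) the non-zero parameters, and that the classical stringency result—stated for a shift indexed by all of $\R^r$—applies after this identification and after restricting the alternative parameter set to the image of $H$. The handling of the restricted (rather than full) parameter space is exactly the technical point already navigated in the proof of Theorem~\ref{thm:maximin-pwr-bound} via $\cl M_a=\overline M_a$ and continuity of $[h]\mapsto G_{[h]}\phi$, so I would reuse that machinery: the envelope and the $\chi^2_r$ power are continuous in $a$, hence taking closures does not change the supremum defining the regret. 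Beyond this, the argument is essentially a citation of the finite-dimensional Gaussian-shift stringency theorem plus the experiment-equivalence results already established, so I do not expect substantial new difficulties.
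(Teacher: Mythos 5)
There is a genuine error at the heart of your argument: the claim that for a fixed alternative with $\|\mu\|^2=a$ the best level-$\alpha$ power is $1-\P(\chi^2_r(a)\le c_r)$, so that the rotation-invariant $\chi^2_r$ test has zero regret. This confuses the pointwise power envelope $\uppi^\star(h)=\sup_{\phi\in\mc{C}}P_h\phi$ with the maximin power over the sphere $\{\|\mu\|^2=a\}$. Against a \emph{fixed} alternative $[h]$ with $[h]-\Pi[h]=\Pi^\perp[h]\neq 0$, the most powerful level-$\alpha$ test of the composite null $\ker\pi_1$ is the Neyman--Pearson test based on $Z\Pi^\perp[h]$ (which, as in the paper's proof, has critical value independent of the null point $[g]\in\ker\pi_1$), and its power is $1-\Phi\bigl(z_\alpha-\sqrt{a}\bigr)$ with $a=\tau\transp\effinfo\tau$. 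This strictly exceeds $1-\P(\chi^2_r(a)\le c_r)$ for every $a>0$ and $r\ge 1$ (already for $r=1$ the one-sided envelope beats the two-sided $\chi^2_1$ power). Hence the regret of the $\chi^2_r$ test, $\sup_a\bigl[(1-\Phi(z_\alpha-\sqrt a))-(1-\P(\chi^2_r(a)\le c_r))\bigr]$, is strictly positive, and your "zero regret, therefore most stringent" conclusion collapses.

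The missing idea is precisely why the theorem is nontrivial: one must show that no test can achieve \emph{smaller} (positive) regret than the $\chi^2_r$ test. The paper does this by (i) observing that both the envelope $\tilde{\uppi}^\star([h])$ and the $\chi^2_r$ power depend on $[h]$ only through $a=\|\Pi^\perp[h]\|_K^2$, so the regret of the $\chi^2_r$ test is a supremum over $a$ of a function of $a$ alone; (ii) supposing a competitor $\phi$ had regret smaller by $2\varepsilon$, picking an $a>0$ at which the $\chi^2_r$ test's regret is within $\varepsilon$ of its supremum, and deducing that $\phi$ would then beat the $\chi^2_r$ test uniformly on the corresponding sphere; and (iii) contradicting the maximin optimality of the $\chi^2_r$ test on that sphere (Theorem 30.2 of Strasser, the same ingredient as in Theorem \ref{thm:maximin-pwr-bound}). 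Your reduction to the $r$-dimensional shift, the use of Proposition \ref{prop:exp-equiv-shift}, and the closure/continuity bookkeeping are all fine and match the paper, but without the envelope computed correctly and the contradiction step, the proof does not go through.
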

\begin{proof}
    Denote by $\overline{\mathscr{G}}$ the Gaussian shift on $\overline{\mb{H}}$ and $\tilde{R}$ the regret 
    \begin{equation*}
        \tilde{R}(\phi)\define \sup\left\{\tilde{\uppi}^\star([h]) - G_{[h]}\phi: [h]\in \mb{H} \setminus \ker \pi_1\right\}, \quad \tilde{\uppi}^\star([h])\define \sup \left\{ G_{[h]}\varphi: \varphi\in \tilde{\mc{C}}\right\},
    \end{equation*}
    where $\tilde{\mc{C}}$ is the class of level--$\alpha$ tests of $\ker \pi_1$ against $\mb{H} \setminus \ker \pi_1$  in $\ms{G}$.
    The Neyman -- Pearson test, $\psi^\star$, of $[g] \in \ker \pi_1$ against $[g] + [h]$, with $[h]\in [\ker \pi_1]^\perp$ 
     rejects when
    \begin{equation*}
        \exp\left(Z[g + h] - Z [g] - \frac{1}{2}\|[g  +h]\|_{K}^2 + \frac{1}{2}\|[g]\|_{K}^2\right) = \exp\left(Z \Pi^\perp [h]- \frac{1}{2}\| \Pi^\perp[h]\|_{K}^2\right)
    \end{equation*}
    exceeds a $k$ chosen such that the test is of level $\alpha$, for $Z$ the central process of $\overline{\mathscr{G}}$.
    $k$ does not depend on the $[g]\in \ker \pi_1$ and the power of this test depends only on $\|\Pi^\perp [h]\|_{K}^2 = \tau\transp \effinfo \tau$ where $\pi_1 [h] = [\tau]$. Now let $[h]\in \mb{H} \setminus \ker\pi_1$ and consider testing $K_1: [h]$ against $K_0 : [h]\in \ker \pi_1$. One has $[h] = [g] + \Pi^\perp[h]$ where $[g] = \Pi [h] \in \ker \pi_1$. By the preceding observations, $\psi^\star$ is a most powerful level-$\alpha$ test for this hypothesis.\footnote{Suppose there were another level $\alpha$ test $\phi$ of $K_0$ against $K_1$, with strictly higher power than $\psi^\star$. Then, this would also be a test of level $\alpha$ for $[g]$ against $\Pi^\perp [h]$. But this would contradict the Neyman -- Pearson Lemma \citep[e.g.][Theorem 3.2.1]{LR05}.
    } Thus $\psi^\star\in \tilde{\mc{C}}$ and
    \begin{equation}\label{thm:most-stringent:eq:best-test}
        \tilde{\uppi}^\star([h]) \define \sup_{\phi \in \tilde{\mc{C}}} G_{[h]}\phi  = G_{[h]} \psi^\star.    
    \end{equation}
    For $i=1, \ldots, d_\theta$, let $u_i\define \Pi^\perp[(e_i, 0)]$ and let $X \define (Zu_1, \ldots, Zu_{d_\theta})\transp$.
    Let $\psi$ be the test which rejects when $(X\transp \effinfo^\dagger X)^2 > c_r$, 
    for $c_r$ the $1-\alpha$ quantile of the $\chi^2_r$. By Theorem 69.10 in \cite{S85} and Theorem 9.2.3 in \cite{RM71}, $ G_{[h]}\psi = 1 - \P(\chi^2_r(a) \le c_r)$, $a = \tau\transp \effinfo \tau = \|\Pi^\perp [h]\|_{K}^2$ where $[\tau] = \pi_1[h]$.
    $G_{[h]}\psi^\star - G_{[h]}\psi$ depends only on $\|\Pi^\perp [h]\|_{K}^2$.
    Fix a $\varepsilon>0$ and suppose that for some  $\phi\in \tilde{\mc{C}}$, $\tilde{R}(\phi) < \tilde{R}(\psi) - 2\varepsilon$. There is an $a > 0$ such that 
    \begin{equation*}
        \sup\left\{G_{[g + h]}\psi^\star - G_{[g+h]}\psi : [h]\in [\ker \pi_1]^\perp, \|[h]\|_{K}^2 = a\right\} \ge \tilde{R}(\psi) - \varepsilon, \text{ for all } \ [g]\in \ker \pi_1.
    \end{equation*}
    In consequence, for all $[g]\in \ker \pi_1$, all  $[h]\in S_a \define \{[h]\in [\ker\pi_1]^\perp : \|[h]\|_{K}^2 = a\}$,
    \begin{equation*}
        G_{[g + h]}\psi^\star -  G_{[g + h]}\phi \le \tilde{R}(\psi) - 2\varepsilon \le G_{[g + h]}\psi^\star -  G_{[g + h]}\psi - \varepsilon,
    \end{equation*}
    which produces a contradiction to Theorem 30.2 in \cite{S85}:
    \begin{equation*}
         \inf_{[h]\in S_a} G_{[h]} \phi \ge \inf_{[h]\in S_a} G_{[h]}\psi + \varepsilon = 1 - \P(\chi^2_r(a) \le c_r) + \varepsilon. 
    \end{equation*}

    To complete the proof, it suffices to show that $\varphi:\Omega\to [0, 1]$ is in $\mc{C}$ if and only if $\varphi \in \tilde{\mc{C}}$ and $R(\varphi) = \tilde{R}(\varphi)$. The first part follows from $h \in H_{0}$ if and only if $[h]\in \ker \pi_1$ and Proposition \ref{prop:exp-equiv-shift}.
    For the second part, \eqref{thm:most-stringent:eq:best-test}, Proposition \ref{prop:exp-equiv-shift} and the first part together imply that $\tilde{\uppi}^\star([h]) = \uppi^\star(h)$ for all $h\in H$. Therefore, 
    \begin{equation*}
        \tilde{\uppi}^\star([h]) - G_{[h]}\varphi =  \tilde{\uppi}^\star(\pi_V(h)) - G_{\pi_V(h)}\varphi = \uppi^\star(h) - P_{h}\varphi,\quad h\in H.
    \end{equation*}
    As $h\in H_{1}$ $\Longleftrightarrow$ $[h]\in \mb{H} \setminus \ker \pi_1$, $\tilde {R}(\varphi)   = \sup\{\uppi^\star(h) - P_{h}\varphi : h\in H_{1}\} = R(\varphi)$.
\end{proof}

The first part of Corollary \ref{cor:psi-maximin} provides conditions under which \eqref{thm:most-stringent:eq:pwr-fcn} is the asymptotic power function of $\psi_{n, \theta_0}$ under $P_{n, h}$. The following Proposition demonstrates that if $\uppi_n:H \to [0, 1]$ is a sequence of power functions corresponding to tests in the experiments $\mathscr{E}_{n}$ of asymptotic size $\alpha$, then each cluster point of $\uppi_n$ corresponds to a test $\phi$ in the limit experiment $\mathscr{E}$ whose regret is bounded below by that of the most stringent test, $\psi$.

\begin{proposition}\label{prop:most-stringent}
    Suppose Assumptions \ref{ass:LAN} and \ref{ass:IP} hold and that $r = \rank(\effinfo)\ge 1$. Let $\phi_n:\mc{W}_n \to [0, 1]$ be a sequence of tests such that for each $h = (0, b)\in H$,
        \begin{equation}\label{prop:most-stringent:eq-level-alpha}
            \limsupn P_{n,h} \phi_n \le \alpha.
        \end{equation}
    For each $h\in H$, let $\uppi_n(h)\define P_{n, h} \phi_n$. If $\uppi$ is a cluster point of $\uppi_n$ (with respect to the topology of pointwise convergence on $[0, 1]^{H}$), then $\uppi$ is the power function of a test $\phi$ in $\mathscr{E}$ and $R(\phi) \ge R(\psi)$.
\end{proposition}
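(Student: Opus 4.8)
The plan is to combine three tools already available in the excerpt: (i) weak convergence of the local experiments $\ms{E}_n \weakconv \ms{E}$ from Proposition \ref{prop:conv-exp}; (ii) the representation theorem for limits of experiments (van der Vaart, cited as Theorem 7.1 in \cite{vdV91b}) which produces a genuine test in $\ms{E}$ out of any cluster point of the power functions $\uppi_n$; and (iii) the equivalence of $\ms{E}$ with the Gaussian shift $\ms{G}$ (Proposition \ref{prop:exp-equiv-shift}) together with the identification of the most stringent test and its power function in Theorem \ref{thm:most-stringent}. The argument mirrors almost verbatim the subsequence/subnet device already used in the proofs of Theorems \ref{thm:two-sided-pwr-bound} and \ref{thm:maximin-pwr-bound}.

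First I would note that $[0,1]^{H}$ is compact in the product topology, so $\uppi_n$ has at least one cluster point, and by definition a cluster point $\uppi$ is the pointwise limit of some subnet $(\uppi_{n(s)})_{s\in S}$. Since $\ms{E}_n \weakconv \ms{E}$, passing to this subnet (weak convergence of experiments is preserved along subnets/subsequences, and the relevant finite-dimensional likelihood-ratio convergence is inherited), and since $\ms{E}$ is dominated, the representation theorem (Theorem 7.1 in \cite{vdV91b}) yields a test $\phi$ in $\ms{E}$ with power function $\uppi$, i.e. $P_h\phi = \uppi(h)$ for all $h\in H$. The level constraint transfers: for $h = (0,b)\in H_0$,
\begin{equation*}
    P_h\phi = \uppi(h) = \lim_{s\in S}\uppi_{n(s)}(h) \le \limsupn \uppi_n(h) = \limsupn P_{n,h}\phi_n \le \alpha,
\end{equation*}
using \eqref{prop:most-stringent:eq-level-alpha}. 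Hence $\phi\in\mc{C}$, the class of level-$\alpha$ tests in $\ms{E}$ appearing in the definition of regret \eqref{eq:regret}, so $R(\phi)$ is well-defined.

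It then remains to show $R(\phi)\ge R(\psi)$ where $\psi$ is the most stringent test of Theorem \ref{thm:most-stringent}. Since $\psi$ is most stringent \emph{by definition} it minimises $R(\cdot)$ over $\mc{C}$; because $\phi\in\mc{C}$, we immediately get $R(\phi)\ge R(\psi)$. Strictly speaking one should check that the most stringent test exists and lies in $\mc{C}$ — but Theorem \ref{thm:most-stringent} already asserts this and exhibits its power function \eqref{thm:most-stringent:eq:pwr-fcn}, so this is covered. A cleaner phrasing that avoids invoking minimality as a black box: the function $\uppi^\star(h)$ in \eqref{eq:regret} is an \emph{upper envelope} over $\mc{C}$, hence for any $\phi\in\mc{C}$ the regret $R(\phi) = \sup_{h\in H_1}\{\uppi^\star(h) - P_h\phi\}$ is at least the regret of the test whose power function is the pointwise-optimal one, which Theorem \ref{thm:most-stringent} identifies as $\psi$; so $R(\phi)\ge R(\psi)$.

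\textbf{Main obstacle.} The only genuinely delicate point is the application of the representation theorem along a \emph{subnet} rather than a subsequence, and the fact that $\uppi_n$ need not converge — one only has a cluster point. But this is exactly the subnet construction executed in the proofs of Theorems \ref{thm:two-sided-pwr-bound} and \ref{thm:maximin-pwr-bound}, so I would simply reproduce it: take the subnet $(n(s))_{s\in S}$ realising the cluster point, observe $\ms{E}_{n(s)}\weakconv\ms{E}$ still holds, apply Theorem 7.1 of \cite{vdV91b} to obtain $\phi$, transfer the level constraint as above, and conclude via Theorem \ref{thm:most-stringent}. No new estimates are needed beyond what is already in the paper; the proof is essentially a bookkeeping exercise assembling Propositions \ref{prop:conv-exp} and \ref{prop:exp-equiv-shift} with Theorem \ref{thm:most-stringent}.
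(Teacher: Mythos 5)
Your proof is correct and takes essentially the same route as the paper's: extract the subnet realising the cluster point, invoke Proposition \ref{prop:conv-exp} and Theorem 7.1 of \cite{vdV91b} to produce a test $\phi$ in $\ms{E}$ with power function $\uppi$, transfer the level constraint so that $\phi\in\mc{C}$, and conclude from Theorem \ref{thm:most-stringent} since $\psi$ minimises regret over $\mc{C}$. The only blemish is your closing ``cleaner phrasing,'' which is actually the murkier one ($\uppi^\star$ is an upper envelope not attained by any single test, so ``the test whose power function is the pointwise-optimal one'' is not well defined); your first argument is the right one and suffices.
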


\begin{proof}
    By \eqref{prop:most-stringent:eq-level-alpha}, Proposition \ref{prop:conv-exp} and Theorem 7.1 in \cite{vdV91b} there is a level $\alpha$ test $\phi$ in $\mathscr{E}$ with $P_{h}\phi = \uppi(h)$. Apply Theorem \ref{thm:most-stringent}.
\end{proof}

\section{Technicalities}\label{sm:sec:technicalities}
\begin{lemma}\label{lem:closure-pi1-restriction}
	Suppose Assumption \ref{ass:LAN} holds and $B$ is a linear space. 
	Let $\pi_1'$ denote the restriction of $\pi_1$ to $\mb{H}$. Then, the closure of $\ker \pi_1'$ in $\overline{\mb{H}}$ is $\ker \pi_1$.
\end{lemma}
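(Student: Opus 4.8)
The plan is to prove the two inclusions $\cl_{\overline{\mb H}}(\ker \pi_1')\subseteq \ker \pi_1$ and $\ker \pi_1\subseteq \cl_{\overline{\mb H}}(\ker \pi_1')$ and then conclude. The first is essentially free: by construction $\pi_1'$ is the restriction to $\mb H$ of the extended map, so $\ker \pi_1' = \mb H\cap \ker \pi_1\subseteq \ker \pi_1$; and since $\mb H_1$ is a quotient of $\R^{d_\theta}$, hence finite dimensional and Hausdorff, while $\pi_1:\overline{\mb H}\to\mb H_1$ is continuous by construction, the set $\ker \pi_1 = \pi_1^{-1}(\{0\})$ is closed in $\overline{\mb H}$. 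Passing to closures in $\ker \pi_1'\subseteq \ker \pi_1$ then gives $\cl_{\overline{\mb H}}(\ker \pi_1')\subseteq \ker \pi_1$.

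The reverse inclusion is where the (mild) work lies. Given $x\in \ker \pi_1$, I would use density of $\mb H$ in its completion $\overline{\mb H}$ to pick representatives $z_n = [\tau_n, b_n]\in \mb H$ with $z_n\to x$. Continuity of $\pi_1$ together with $\pi_1(z_n) = [\tau_n]$ yields $\|\tau_n\| = \|[\tau_n]\| \to \|\pi_1 x\| = 0$. Since $\|\tau_n\| = \inf_{b\in B}\|(\tau_n, b)\|_K$, I would then choose $b_n'\in B$ with $\|(\tau_n, b_n')\|_K\le \|\tau_n\| + n^{-1}$ and set $y_n := z_n - [\tau_n, b_n']$. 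Because $B$ is a linear space this equals $[(0,\, b_n - b_n')]$ with $b_n - b_n'\in B$, so $\pi_1'(y_n) = [0] = 0$, i.e.\ $y_n\in \ker \pi_1'$. On the other hand $\|z_n - y_n\|_{\overline{\mb H}} = \|[\tau_n, b_n']\|_{\mb H} = \|(\tau_n, b_n')\|_K\to 0$, whence $y_n = z_n - (z_n - y_n)\to x$; as each $y_n\in \ker \pi_1'$, this shows $x\in \cl_{\overline{\mb H}}(\ker \pi_1')$.

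I do not expect a genuine obstacle; the argument is short once one spots the density-plus-infimum trick that turns an approximating sequence $z_n$ for $x$ into one whose members already lie in $\ker \pi_1'$. The only points needing a line of justification are bookkeeping: that $\pi_1$ really does extend to a continuous linear map on all of $\overline{\mb H}$ (which rests on completeness of $\mb H_1$, automatic since it is finite dimensional), and that the manipulation $z_n - [\tau_n, b_n'] = [(0, b_n - b_n')]$ is legitimate inside $\mb H$, for which linearity of $B$ is exactly what is invoked; everything else is the triangle inequality and the defining infimum of $\|\cdot\|$.
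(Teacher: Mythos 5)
Your proof is correct, and it takes a somewhat different route from the paper's. Both arguments share the same skeleton — closedness of $\ker \pi_1$ gives one inclusion for free, and the other inclusion is obtained by taking a sequence $[\tau_n, b_n] \in \mb{H}$ converging to a point of $\ker \pi_1$ and "correcting" it into a sequence of the form $[0, \tilde b_n] \in \ker \pi_1'$ — but the correction mechanisms differ. The paper works through the decomposition in \eqref{lem:ker-effinfo-is-fzero:eq:IP-decomp}, writing the correction as $t_n\transp \bm{e}$ with $\bm{e}$ the vector of projections $\Pi[e_j, 0]$ onto $\ker\pi_1$, and then asserts that each $\Pi[e_j,0]$ can be approximated by elements $[0,\check b_{j,n}]$ of $\mb{H}$ (a step that itself needs essentially the trick you use, since $\Pi[e_j,0]$ a priori lives only in the completion). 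You instead exploit directly that $\|\tau_n\| = \inf_{b\in B}\|(\tau_n,b)\|_K \to 0$ by continuity of $\pi_1$, pick near-minimizers $b_n'$, and subtract $[\tau_n, b_n']$, using linearity of $B$ to stay inside $\mb{H}$. Your version is more elementary, avoids any reference to $\effinfo$ or the orthogonal projection $\Pi$, and sidesteps the approximation step the paper leaves implicit; the paper's version has the mild advantage of tying the computation to the objects ($\effinfo$, $\Pi$) already set up for the surrounding efficiency results. All the bookkeeping you flag (well-definedness and continuity of $\pi_1$, extension to $\overline{\mb{H}}$ via completeness of the finite-dimensional $\mb{H}_1$, and $[\tau_n,b_n]-[\tau_n,b_n'] = [0,\,b_n-b_n']$) checks out.
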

\begin{proof}
	Since $\pi_1$ is continuous, $\ker \pi_1 = \pi_1^{-1}(\{0\})$ is closed. Hence it suffices to show that $\ker \pi_1 = \{[h]\in \overline{\mb{H}} : [h] = [0, b]\} \subset \cl \ker \pi_1' = \cl \{[h]\in \mb{H} : [h] = [0, b]\}$.
	Let $[h] = [0, b]\in \ker \pi_1$. There is a sequence $\mb{H} \ni[h_n] = [t_n, b_n]\to [h]$. Using \eqref{lem:ker-effinfo-is-fzero:eq:IP-decomp}
	\begin{equation*}
		\|[h_n] - [h]\|_{K} = t_n\transp \effinfo t_n + \left\| t_{n}\transp \bm{e} + [0, b_n] - [0, b]\right\|_{K}, \quad \bm{e} \define  (\Pi[e_1, 0], \ldots, \Pi[e_{d_\theta}, 0])\transp
	\end{equation*}
	For each $n\in \N$, there are $\check{\bm{e}}_n = ([0, \check{b}_{1, n}], \ldots, [0, \check{b}_{d_\theta, n}])\transp$ with each $[0, \check{b}_{j, n}] \in \mb{H}$ such that $\left\| [0, \check{b}_{j, n}] - \Pi[e_j, 0]\right\|_{K} \le 1 / (n|t_{n, j}|)$. Putting $[0, \tilde{b}_{n}] \define t_n\transp \check{\bm{e}}_n + [0, b_n]$,
	\begin{equation*}
		\|[0, \tilde{b}_{n}] - [0, b]\|_{K} \le \|t_n\transp \check{\bm{e}}_n  - t_n\transp\bm{e} \|_{K} + \| t_n\transp\bm{e} + [0, b_n]  - [0, b]\|_{K} \le \frac{d_\theta}{n} + o(1) = o(1).
	\end{equation*}
	Since each $[0, \tilde{b}_{n}]\in \ker \pi_1'$, the limit $[0, b] \in \cl \ker \pi_1'$. 
\end{proof}

\begin{lemma}\label{lem:seq-w-LR-to-0-no-pwr-diff}
	Suppose that for $h_n, g\in H$,	$ P_{n, g} \lcontig P_{n}$ and $L_{n}(h_n) - L_{n}(g) = o_{P_{n}}(1)$. Then $d_{TV}(P_{n, h_n},P_{n, g})\to 0$.
\end{lemma}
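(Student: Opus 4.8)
The plan is to bound the total variation distance $d_{TV}(P_{n,h_n},P_{n,g})$ by an expression in the likelihood ratios that is manifestly $o(1)$ under the two hypotheses. Recall that for measures $P_{n,h_n}$ and $P_{n,g}$ dominated by $\nu_n$ (equivalently, since both are contiguous relative to $P_n$, one can work on the common dominating measure $P_n$), the total variation distance admits the representation
\begin{equation*}
    d_{TV}(P_{n,h_n},P_{n,g}) = \tfrac12 \int \left| \frac{p_{n,h_n}}{p_{n,g}} - 1 \right| \, dP_{n,g} + \tfrac12\, P_{n,g}\!\left(\frac{dP_{n,g}}{d\nu_n} = 0,\ \frac{dP_{n,h_n}}{d\nu_n} > 0\right)\!,
\end{equation*}
but it is cleaner to route everything through $P_n$: write $d_{TV}(P_{n,h_n},P_{n,g}) \le \tfrac12 \int \left| e^{L_n(h_n)} - e^{L_n(g)} \right| dP_n$, using that $P_{n,h_n}\ll P_n$ and $P_{n,g}\ll P_n$ (the latter by the assumed contiguity, and the former by symmetry of the argument applied with the roles reversed, or by noting $L_n(h_n)$ is $P_n$-a.s.\ finite once we have established tightness; this is where $L_n(h_n) - L_n(g) = o_{P_n}(1)$ together with $P_{n,g}\lcontig P_n$ is used).

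The first step is to show that the sequence $e^{L_n(g)}$ is uniformly $P_n$-integrable. This follows from contiguity: $P_{n,g}\lcontig P_n$ is equivalent (by Le Cam's first lemma, e.g.\ \citealp[Lemma 6.4]{vdV98}) to the family $\{dP_{n,g}/dP_n\}_{n\in\N} = \{e^{L_n(g)}\}$ being uniformly $P_n$-integrable. The second step is to write $e^{L_n(h_n)} = e^{L_n(g)} e^{L_n(h_n) - L_n(g)}$ and use $L_n(h_n) - L_n(g) \xrightarrow{P_n} 0$: setting $D_n \define L_n(h_n) - L_n(g)$, one has
\begin{equation*}
    \int \left| e^{L_n(h_n)} - e^{L_n(g)} \right| dP_n = \int e^{L_n(g)} \left| e^{D_n} - 1 \right| dP_n.
\end{equation*}
The integrand is a product of a uniformly integrable factor and a factor $|e^{D_n}-1|$ which tends to $0$ in $P_n$-probability and is bounded. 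Splitting the integral over the events $\{|D_n| \le \varepsilon\}$ and $\{|D_n| > \varepsilon\}$: on the first event $|e^{D_n}-1| \le e^{\varepsilon}-1$, contributing at most $(e^\varepsilon - 1)\int e^{L_n(g)} dP_n \le e^\varepsilon - 1$; on the second event, $P_n(|D_n|>\varepsilon)\to 0$, so by uniform integrability of $e^{L_n(g)}$ the contribution $\int_{\{|D_n|>\varepsilon\}} e^{L_n(g)}|e^{D_n}-1|\,dP_n$ vanishes — but here one must be slightly careful because $|e^{D_n}-1|$ need not be bounded on $\{|D_n|>\varepsilon\}$. To handle this cleanly, further decompose $\{|D_n|>\varepsilon\}$ into $\{\varepsilon < |D_n| \le M\}$ and $\{|D_n| > M\}$: on the former $|e^{D_n}-1|\le e^M$ and $P_n(\varepsilon<|D_n|\le M)\to 0$ gives a vanishing contribution by uniform integrability; on the latter, $e^{L_n(h_n)} = e^{L_n(g)}e^{D_n}$ and both $\{e^{L_n(g)}\}$ and $\{e^{L_n(h_n)}\}$ are uniformly $P_n$-integrable (the latter again by first lemma applied to $P_{n,h_n}$, which is contiguous to $P_n$ since $P_{n,h_n}\ll P_{n,g}$ on the relevant set and $L_n(h_n)=L_n(g)+o_{P_n}(1)$), and $P_n(|D_n|>M)\to 0$, so $\int_{\{|D_n|>M\}}(e^{L_n(g)}+e^{L_n(h_n)})\,dP_n\to 0$. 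Taking $\limsup_n$ then letting $\varepsilon\downarrow 0$ and $M\uparrow\infty$ yields $\int |e^{L_n(h_n)}-e^{L_n(g)}|\,dP_n \to 0$, hence $d_{TV}(P_{n,h_n},P_{n,g})\to 0$.

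The main obstacle is the bookkeeping around establishing that $e^{L_n(h_n)}$ is itself uniformly $P_n$-integrable, i.e.\ that $P_{n,h_n}\lcontig P_n$: this does not follow purely formally from $P_{n,g}\lcontig P_n$ and $D_n = o_{P_n}(1)$ without an argument, because contiguity is not obviously preserved under $o_{P_n}(1)$ perturbations of the log-likelihood ratio. The fix is to note that $D_n\xrightarrow{P_n}0$ combined with $P_{n,g}\lcontig P_n$ gives $D_n\xrightarrow{P_{n,g}}0$, and then that $e^{L_n(h_n)} = e^{L_n(g)}e^{D_n}$ with $\int e^{L_n(h_n)}\,dP_n \le \int e^{L_n(g)}\,dP_n \le 1$ (so the integrals stay bounded); uniform integrability then follows by showing $\limsup_n\int_{\{e^{L_n(g)}>K\}\cup\{|D_n|>M\}} e^{L_n(h_n)}\,dP_n \to 0$ as $K,M\to\infty$, which reduces to the uniform integrability of $e^{L_n(g)}$ already in hand. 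Once this is carefully assembled the result is immediate. (If the paper's conventions make $d_{TV}(P_{n,h_n},P_{n,g}) = \tfrac12\int|p_{n,h_n}-p_{n,g}|\,d\nu_n$ the definition, the identity $\tfrac12\int|p_{n,h_n}-p_{n,g}|\,d\nu_n = \tfrac12\int|e^{L_n(h_n)}-e^{L_n(g)}|\,dP_n$ holds on $\{p_n>0\}$, which carries full $P_n$-mass, and the complementary set contributes nothing to either side by contiguity, so no generality is lost.)
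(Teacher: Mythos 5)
Your strategy of working entirely under $P_n$ is viable, but there is a genuine gap at exactly the point you flag as the ``main obstacle'', and the fix you sketch for it does not work. The problematic term is $\int_{\{|D_n|>M\}} e^{L_n(h_n)}\,dP_n = \int_{\{|D_n|>M\}} e^{L_n(g)}e^{D_n}\,dP_n$: on this set $e^{D_n}$ is unbounded above, so the integral does \emph{not} ``reduce to the uniform integrability of $e^{L_n(g)}$ already in hand''---uniform integrability of $e^{L_n(g)}$ controls $\int_{A_n}e^{L_n(g)}\,dP_n$ when $P_n(A_n)\to 0$, but says nothing about $\int_{A_n}e^{L_n(g)}e^{D_n}\,dP_n$ when $D_n$ can be large and positive on $A_n$. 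The inequality $\int e^{L_n(h_n)}\,dP_n\le\int e^{L_n(g)}\,dP_n$ that you invoke is also unjustified (both integrals are $\le 1$, but neither dominates the other in general). A correct way to close this step is a mass-accounting argument: by contiguity $\int e^{L_n(g)}\,dP_n = 1 - P_{n,g}(p_{n,0}=0)\to 1$, and since $e^{L_n(h_n)}\ge e^{-\varepsilon}e^{L_n(g)}$ on $\{|D_n|\le\varepsilon\}$,
\begin{equation*}
\liminf_{n\to\infty} \int_{\{|D_n|\le\varepsilon\}} e^{L_n(h_n)}\,dP_n \ \ge\ e^{-\varepsilon}\liminf_{n\to\infty}\Bigl[\int e^{L_n(g)}\,dP_n - \int_{\{|D_n|>\varepsilon\}}e^{L_n(g)}\,dP_n\Bigr] = e^{-\varepsilon},
\end{equation*}
the subtracted integral vanishing by uniform integrability and $P_n(|D_n|>\varepsilon)\to 0$. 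Since $\int e^{L_n(h_n)}\,dP_n\le 1$, this gives $\limsup_{n\to\infty}\int_{\{|D_n|>\varepsilon\}}e^{L_n(h_n)}\,dP_n\le 1-e^{-\varepsilon}$ and $\limsup_{n\to\infty} P_{n,h_n}(p_{n,0}=0)\le 1-e^{-\varepsilon}$; letting $\varepsilon\downarrow 0$ after taking $\limsup_{n\to\infty}$ in your decomposition then handles both the $\{|D_n|>M\}$ term and the singular parts (the latter of which your final parenthetical dismisses too quickly, since contiguity of $P_{n,h_n}$ with respect to $P_n$ is not among the hypotheses).

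For comparison, the paper's proof goes the other way around and is much shorter: contiguity converts the hypothesis into $D_n = L_n(h_n)-L_n(g) = o_{P_{n,g}}(1)$, i.e.\ $dP_{n,h_n}/dP_{n,g} = e^{D_n}\to 1$ in $P_{n,g}$-probability, and a Scheff\'{e}-type lemma finishes: Fatou gives $\int e^{D_n}\,dP_{n,g}\to 1$, hence $\int|e^{D_n}-1|\,dP_{n,g} = 2\int(1-e^{D_n})^{+}\,dP_{n,g} + \int e^{D_n}\,dP_{n,g}-1\to 0$ by dominated convergence on the first term, and the $P_{n,g}$-singular part of $P_{n,h_n}$ has mass $1-\int e^{D_n}\,dP_{n,g}\to 0$. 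That route needs no uniform integrability at all. One further small imprecision in your write-up: $P_{n,g}\lcontig P_n$ \emph{implies} uniform $P_n$-integrability of $e^{L_n(g)}$ but is not equivalent to it (mutually singular sequences have $e^{L_n(g)}=0$ $P_n$-a.s., trivially uniformly integrable, yet are not contiguous); you only use the valid direction, so this does not affect the argument.
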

\begin{proof}
	By the contiguity $L_{n}(h_n) - L_{n}(g) = o_{P_{n, g}}(1)$. Apply Lemma S3.3 in \cite{HLM22-S}.
\end{proof}

\begin{corollary}\label{cor:h-eqiv-g-no-pwr-diff}
	Suppose that Assumption \ref{ass:LAN} holds and $H$ is a linear space equipped with the semi-norm $\|\cdot\|_{K}$. If $h, g\in H$ satisfy $\|h-g\|_{K} = 0$, $d_{TV}(P_{n, h}, P_{n, g}) \to 0$.	
\end{corollary}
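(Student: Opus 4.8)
The plan is to reduce the claim to Lemma \ref{lem:seq-w-LR-to-0-no-pwr-diff}, applied with the constant sequence $h_n \equiv h$. That lemma has two hypotheses to check: $P_{n, g} \lcontig P_{n}$ and $L_{n}(h) - L_{n}(g) = o_{P_{n}}(1)$. The first is immediate: Assumption \ref{ass:LAN} (through Remark \ref{rem:mutual-contiguity}) makes $(P_{n, g})_{n\in\N}$ and $(P_{n})_{n\in\N}$ mutually contiguous, so in particular $P_{n, g} \lcontig P_{n}$.

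For the second hypothesis I would expand both log-likelihood ratios using the LAN expansion \eqref{eq:LAN} and the linearity of $\Delta_{n}$ (legitimate since $H$, hence $B$, is a linear space, so $h - g \in H$):
\begin{equation*}
    L_{n}(h) - L_{n}(g) = \Delta_{n}(h - g) - \frac{1}{2}\left(\|\Delta_{n}h\|^2 - \|\Delta_{n}g\|^2\right) + \left(R_{n}(h) - R_{n}(g)\right).
\end{equation*}
The remainder terms $R_{n}(h)$ and $R_{n}(g)$ each tend to $0$ in $P_{n}$-probability by Assumption \ref{ass:LAN}. For the quadratic terms I would use the fact, noted right after Assumption \ref{ass:LAN}, that $d_2$-convergence of $\Delta_{n}k$ forces $\|\Delta_{n}k\|^2 \to \sigma(k)$ for every $k \in H$. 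Applied to $k = h - g$, together with $\sigma(h - g) = \|h - g\|_{K}^2 = 0$, this gives $\|\Delta_{n}(h - g)\| \to 0$, so $\Delta_{n}(h - g) = \Delta_n h - \Delta_n g \to 0$ in $L_2(P_{n})$ and hence in $P_{n}$-probability. The triangle inequality for the seminorm $\|\cdot\|_{K}$ yields $\|h\|_{K} = \|g\|_{K}$, i.e. $\sigma(h) = \sigma(g)$, so $\|\Delta_{n}h\|^2 - \|\Delta_{n}g\|^2 \to \sigma(h) - \sigma(g) = 0$. Collecting the three pieces gives $L_{n}(h) - L_{n}(g) = o_{P_{n}}(1)$, and Lemma \ref{lem:seq-w-LR-to-0-no-pwr-diff} then delivers $d_{TV}(P_{n, h}, P_{n, g}) \to 0$.

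There is no real obstacle here; the only point requiring a moment's care is that one must apply the identity $\sigma(k) = \lim_{n} \|\Delta_{n}k\|^2$ to the difference $k = h - g$, not merely to $h$ and $g$ individually — which is exactly why linearity of $H$ (equivalently of $B$) is invoked in the hypotheses, so that $h - g$ is itself an admissible perturbation direction to which Assumption \ref{ass:LAN} applies.
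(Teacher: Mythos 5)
Your proposal is correct and follows essentially the same route as the paper's proof: both verify the hypotheses of Lemma \ref{lem:seq-w-LR-to-0-no-pwr-diff} (with $h_n = h$) by using the LAN expansion, the identity $\sigma(h-g) = \|h-g\|_K^2 = 0$ applied to the difference, and the reverse triangle inequality to handle the quadratic terms. The paper states this in one line; your version merely spells out the same steps.
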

\begin{proof}
	By Assumption \ref{ass:LAN}, the reverse triangle inequality and $\sigma(h-g) = \|h-g\|_{K}$ we have that $L_{n}(h) - L_{n}(g) = o_{P_{n}}(1)$. Apply Lemma \ref{lem:seq-w-LR-to-0-no-pwr-diff} with $h_n = h$.
\end{proof}

\begin{lemma}\label{lema:cond-exp-orthog-set-equality}
	Let $(U, X)$ be a random vector on a probability space $(\Omega, \F, \P)$ with $U\in L_2(\P)$ and $\E[UU\transp|X]$ non -- singular almost surely. Let $B\subset L_2(\Omega, \sigma(U, X), \P)$ be the set of bounded functions $b$ of $(u, x)$ such that $\E[b(U, X)U|X] = 0$. Then 
	\begin{equation*}
		\cl B = \{U Z : Z \text{ is a bounded, } \sigma(X) \text{--measurable random variable}\}^\perp.
	\end{equation*}
\end{lemma}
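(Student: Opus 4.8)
(the statement that $\cl B = \{UZ : Z \text{ bounded, } \sigma(X)\text{-measurable}\}^\perp$)

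The plan is to prove the two inclusions separately, working inside the Hilbert space $L_2(\Omega, \sigma(U,X), \P)$ throughout; since $\{UZ : Z \text{ bounded, }\sigma(X)\text{-measurable}\}^\perp$ is automatically closed, it suffices for the $\subseteq$ direction to show $B$ is contained in it (closure is then free), and for the $\supseteq$ direction to show any element of the orthogonal complement can be approximated by elements of $B$. First I would set $S \define \{UZ : Z \text{ bounded, } \sigma(X)\text{-measurable}\}$ and observe that for $b \in B$ and any such $Z$, the tower property gives $\E[b(U,X)\, U\transp Z] = \E\big[ Z\transp \E[b(U,X)U \mid X]\big] = 0$, so componentwise $b \perp S$; hence $B \subseteq S^\perp$ and therefore $\cl B \subseteq S^\perp$.

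For the reverse inclusion, take $f \in S^\perp \subseteq L_2(\Omega, \sigma(U,X), \P)$; I need to show $f \in \cl B$. The natural route is to first reduce to bounded functions: truncations $f_M \define (f \wedge M)\vee(-M) \to f$ in $L_2$, and if I can show each $f_M$ lies in $\cl B$ I am done (note $f_M$ need not be in $S^\perp$, but I will handle this by an explicit correction). So fix a bounded $\sigma(U,X)$-measurable $g$; I want to find $b \in B$ close to $g$. Define the correction term $c(U,X) \define \E[UU\transp \mid X]^{-1}\, \E[g(U,X)\, U \mid X]$, which is $\sigma(X)$-measurable, and set $b \define g - c(U,X)\transp U$. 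Then $\E[b(U,X)\, U \mid X] = \E[g U \mid X] - \E\big[UU\transp \mid X\big]\, c(U,X) = 0$, so $b$ satisfies the defining moment condition; a further truncation of $b$ (and of $c\transp U$) restores boundedness while keeping the moment condition approximately satisfied, and one more explicit correction makes it exact — so $b \in \cl B$ up to arbitrarily small $L_2$ error. Finally, when $g = f_M$ with $f \in S^\perp$, the correction term satisfies $\E[f_M\, U \mid X] \to \E[f\, U\mid X]$ (dominated convergence, using $f \in L_2$ and $U \in L_2$, plus $f \perp S$ which forces $\E[f U \mid X] = 0$ after an approximation argument — see below), so $c \to 0$ in the relevant sense and $g = f_M$ is itself within $o(1)$ of an element of $B$; letting $M \to \infty$ gives $f \in \cl B$.

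The main obstacle is the boundedness bookkeeping in the $\supseteq$ direction: $B$ consists of \emph{bounded} functions, so I cannot simply write down $b = g - c\transp U$ and be done — I must truncate and then re-correct, controlling all errors in $L_2$, and I must verify that $\E[g(U,X)U\mid X]$ and $\E[UU\transp\mid X]^{-1}$ interact well enough (the a.s. nonsingularity hypothesis is used here, but $\E[UU\transp\mid X]^{-1}$ could be unbounded, so the correction term $c\transp U$ need only be in $L_2$, not bounded, requiring its own truncation). A secondary subtlety is establishing that $f \in S^\perp$ implies $\E[f U \mid X] = 0$ a.s.: this follows by testing against $Z = \bm 1_{A} v$ for $A \in \sigma(X)$ and $v \in \R^{\dim U}$ — these $Z$ are bounded and $\sigma(X)$-measurable, so $\E[f\, v\transp U\, \bm 1_A] = 0$ for all such $A$, whence $v\transp \E[fU\mid X] = 0$ a.s. for each $v$, hence $\E[fU\mid X]=0$ a.s. Once these truncation estimates are in place the rest is routine.
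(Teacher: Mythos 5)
Your overall architecture matches the paper's: the forward inclusion $\cl B \subseteq S^\perp$ via the tower property is exactly the paper's first line, and for the reverse inclusion the paper also splits the work into (a) $f \in S^\perp \Rightarrow \E[fU|X]=0$ a.s.\ and (b) density of the \emph{bounded} solutions of this conditional moment restriction among all $L_2$ solutions. The difference is that the paper simply cites these two facts (Proposition A.3.1 of Bickel--Klaassen--Ritov--Wellner for (a), Lemma C.7 of the cited N91 reference for (b)), whereas you attempt to prove them. Your argument for (a) — testing against $Z = \bm{1}_A v$ and using integrability of $fU$ from Cauchy--Schwarz — is fine.

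Step (b) is where your plan has a genuine gap. Your correction $b \define g - c(U,X)\transp U$ with $c = \E[UU\transp|X]^{-1}\E[gU|X]$ does satisfy $\E[bU|X]=0$, and the conditional Cauchy--Schwarz bound $\E[(c\transp U)^2] = \E\bigl[\E[gU|X]\transp\E[UU\transp|X]^{-1}\E[gU|X]\bigr] \le \E[g^2]$ controls it in $L_2$; but $c\transp U$ is unbounded, so $b \notin B$. Your proposed fix — truncate $b$, then apply ``one more explicit correction'' to restore the exact moment condition — is circular: the re-correction is again of the form $d(X)\transp U$ with $d = \E[UU\transp|X]^{-1}\E[(b_N - b)U|X]$, hence again unbounded, and iterating truncate-then-correct never lands in $B$. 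The standard way out is to truncate $U$ itself and correct with $d_N(X)\transp U\,\bm{1}\{\|U\|\le N\}$, where $d_N$ solves $\E[UU\transp\bm{1}\{\|U\|\le N\}|X]\,d_N = \E[b_N U\bm{1}\{\|U\|\le N\}|X]$; this keeps the correction bounded (for bounded $d_N$) but forces you to handle the possible singularity or ill-conditioning of the truncated conditional second-moment matrix on a set of $x$'s (an extra indicator on its smallest eigenvalue, sent to zero along with $N\to\infty$). None of this is fatal — it is exactly the content of the lemma the paper cites — but as written your plan does not close, and the phrase ``one more explicit correction makes it exact'' hides the one point where the argument actually has to work.
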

\begin{proof}
	Suppose $b\in B$. Then $\E[b(U, X)UZ] = \E\left[\E[b(U, X)U|X]Z\right] = 0$. Conversely suppose $b\in L_2(\Omega, \sigma(U, X), \P)$ and $\E[b(U, X)UZ] = 0$ for $Z$ any bounded $\sigma(X)$ -- measurable random variable. By 
	Proposition A.3.1 in \cite{BKRW98}, $\E[b(U, X)U|X] = 0$ a.s. whence $b\in \cl B$ by Lemma C.7 in \cite{N91} 
\end{proof}

\begin{theorem}\label{thm:projection-convergence}
	Let $H$ be a Hilbert space. Let $h_n, h\in H$, and $L_n, L$ closed (proper) linear subspaces of $H$. Let $g_n\define \Pi(h_n | L_n)$ and $g\define \Pi(h|L)$. If (i) $h_n \to h$ and (ii) for each $f\in L$, there is a sequence $(f_n)_{n\in \N}$ and a $N\in \N$ such that $f_n \to f$ and $f_n\in L_n$ for $n\ge N$, then $g_n\to g$.
\end{theorem}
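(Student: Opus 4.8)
The plan is to reduce the claim to convergence of the residual norms $\|h_n-g_n\|\to\|h-g\|$, and then upgrade this to strong convergence $g_n\to g$ using boundedness and weak compactness in $H$. Throughout, set $r_n\define h_n-g_n$ and $r\define h-g$, so that $g_n\in L_n$, $r_n\perp L_n$, $g\in L$, $r\perp L$, and, by Pythagoras, $\|h_n\|^2=\|g_n\|^2+\|r_n\|^2$ and $\|h\|^2=\|g\|^2+\|r\|^2$.

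First I would use condition (ii) for the ``easy'' half. Applying (ii) with $f=g$ gives $g_n'\in L_n$ for $n\ge N$ with $g_n'\to g$; since $g_n$ is the nearest point of $L_n$ to $h_n$, $\|r_n\|=\|h_n-g_n\|\le\|h_n-g_n'\|\to\|h-g\|=\|r\|$, so $\limsup_n\|r_n\|\le\|r\|$ and the sequences $(r_n)$ and $(g_n)$ are bounded. Next, $r_n\perp g_n'$ and $g_n'\to g$ with $(r_n)$ bounded give $\langle r_n,g\rangle=\langle r_n,g-g_n'\rangle\to 0$; together with $\langle h,g\rangle=\langle g+r,g\rangle=\|g\|^2$ and $h_n\to h$, this yields $\langle g_n,g\rangle=\langle h_n,g\rangle-\langle r_n,g\rangle\to\|g\|^2$. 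Expanding and using Pythagoras and $\|h_n\|\to\|h\|$,
\[
\|g_n-g\|^2=\|g_n\|^2-2\langle g_n,g\rangle+\|g\|^2=\|g_n\|^2-\|g\|^2+o(1)=\|r\|^2-\|r_n\|^2+o(1).
\]
Hence $g_n\to g$ is \emph{equivalent} to $\|r_n\|\to\|r\|$, which, given the upper bound already shown, is equivalent to the single inequality $\liminf_n\|h_n-g_n\|\ge\|h-g\|$.

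The crux is therefore proving $\liminf_n\|r_n\|\ge\|r\|$, and this is the step I expect to be the main obstacle. Pass to a subsequence $(n_k)$ along which $\|r_{n_k}\|$ tends to $\liminf_n\|r_n\|$; by boundedness, refine so that $g_{n_k}\rightharpoonup\tilde g$ weakly, hence $r_{n_k}\rightharpoonup h-\tilde g$. Using (ii) again: for each $f\in L$ choose $f_n\in L_n$ eventually with $f_n\to f$ strongly; then $\langle r_{n_k},f_{n_k}\rangle=0$ for large $k$, and the weak--strong pairing gives $\langle h-\tilde g,f\rangle=0$, so $h-\tilde g\in L^\perp$. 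If in addition $\tilde g\in L$, then uniqueness of the orthogonal projection forces $\tilde g=\Pi(h|L)=g$ and $h-\tilde g=r$, and weak lower semicontinuity of the norm gives $\liminf_n\|r_n\|=\lim_k\|r_{n_k}\|\ge\|h-\tilde g\|=\|r\|$, completing the argument in combination with the previous paragraph. Finally, $g_n\rightharpoonup g$ (every weak subsequential limit equals $g$) together with $\|g_n\|\to\|g\|$ (from $\|r_n\|\to\|r\|$ and Pythagoras) gives $g_n\to g$ strongly.

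So the one genuinely non-routine ingredient is the assertion $\tilde g\in L$: hypothesis (ii) supplies only the ``lower'' half of the convergence $L_n\to L$ (namely $L\subseteq\liminf_n L_n$), whereas closing the argument needs the matching ``upper'' control that weak subsequential limits of vectors lying in the $L_n$ must again lie in $L$. This has to be obtained either from an extra assumption on $(L_n)$ (a Mosco-type convergence, or $L_n$ increasing with $\overline{\bigcup_n L_n}=L$, or simply $L_n\subseteq L$) or from the concrete form of the subspaces $L_n$ in the intended application, where the $L_n$ are orthocomplements of approximating linear families so that upper control on $L_n$ is dual to lower control on those families. Modulo that point, everything else — projection as nearest point, Pythagoras, ``$g_n\rightharpoonup g$ and $\|g_n\|\to\|g\|$ imply $g_n\to g$'', and the weak--strong limit — is standard Hilbert-space bookkeeping.
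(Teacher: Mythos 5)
Your argument tracks the paper's own proof at the decisive step: both extract a weakly convergent subsequence of the projections, use hypothesis (ii) together with the weak--strong pairing $\IP{h-g_{n_k}}{f_{n_k}}\to\IP{h-g^\star}{f}$ to conclude that the weak limit of the residual is orthogonal to $L$, and then upgrade weak to strong convergence via convergence of norms (the paper first reduces to the case $h_n=h$ and invokes Radon--Riesz; your bookkeeping with $\|r_n\|$ accomplishes the same thing slightly more directly). So there is no real methodological divergence to report.

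The important point is that the obstruction you flag is genuine, and it is present in the paper's proof as well. After establishing $\IP{h-g^\star}{f}=0$ for all $f\in L$, the paper writes ``Hence $g^\star=\Pi h=g$''; this inference needs $g^\star\in L$, which is exactly the missing ``upper'' half of Mosco-type convergence that you identify, and it does not follow from (i)--(ii). Indeed the theorem as literally stated is false: take $H=\R^3$, $L=\Span\{e_1\}$, $L_n=\Span\{e_1,e_2\}$ for all $n$ (so (ii) holds with $f_n=f$), and $h_n=h=e_2$; then $g_n=e_2$ for every $n$ while $g=0$. A correct statement must add a condition guaranteeing that weak subsequential limits of elements of the $L_n$ lie in $L$ --- e.g.\ $L_n\subseteq L$, or $L_n$ increasing with $\cl\bigcup_n L_n=L$, which is what holds in the paper's applications (there the $L_n$ are nested spans $S_m\subset S_{m+1}$ with $\cl\bigcup_m S_m=L$, or finite-dimensional spans of strongly convergent, ultimately linearly independent vectors, for which the upper control can be checked by hand). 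With any such condition added, your argument closes and constitutes a complete proof; as written, you have correctly located the one step that cannot be justified from the stated hypotheses.
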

\begin{proof}
    Let $\Pi_n$ be the orthogonal projection onto $L_n$ and $\Pi$ that onto $L$. First suppose $h_n=h$ ($n\in \N$).
	As $(g_n)_{n\in \N}$ is bounded, any subsequence contains a weakly convergent subsequence, say $g_{n_k} \rightharpoonup g^\star$. By self-adjointness and idempotency (SAI)
    \begin{equation}\label{thm:projection-convergence:eq:step1}
        \IP{g_{n_k}}{g_{n_k}} = \IP{\Pi_{n_k} h}{\Pi_{n_k} h} = \IP{h}{\Pi_{n_k} h} \to \IP{h}{g^\star}.
    \end{equation}
    Let $f\in L$. By hypothesis there are $(f_n)_{n\in \N}$ with $f_n\to f$ and $f_n\in L_n$ for $n\ge N_1$. So $f_{n_k}\to f$ and $f_{n_k}\in L_{n_k}$ for  $k\ge K_1$. Since $h - \Pi_{n_k}h \rightharpoonup h - g^\star$, by Proposition 16.7 in \cite{RF10} and the fact that $h - g_{n_k}\in L_{n_k}^\perp$ for each $k$, $\IP{h - g^\star}{f} = \lim_{k\to\infty} \IP{h - g_{n_k}}{f_{n_k}} = 0$.
    Hence $g^\star = \Pi h = g$. By SAI of $\Pi$ and \eqref{thm:projection-convergence:eq:step1}, $ \lim_{k\to\infty}\IP{g_{n_k}}{g_{n_k}} = \IP{h}{\Pi h} = \IP{\Pi h}{\Pi h} = \IP{g}{g}$
    and hence $g_{n_k}\to g$ by the Radon -- Riesz Theorem. 
	As the initial subsequence was arbitrary, $g_n\to g$. To complete the proof, for $h_n\to h$ an arbitrary convergent sequence,
	$\|g_n - g\| \le \|h_n - h\| + \|\Pi_n h - \Pi h\|$.
    The first RHS term is $o(1)$ by assumption; the second by the case with $h_n = h$.    
\end{proof}

\begin{theorem}\label{sm:thm:cond-exp-weak-conv}
    Let 
	$H$ be a linear space and $B\subset H$ a linear subspace of $H$. 
    Suppose that $\mathsf{G}_n$ is a Gaussian process on a probability space $(\Omega, \F, \bbP)$ with index set $H$ and covariance kernel $K_n$ and that $\mathsf{G}$ is a  Gaussian process on $(\Omega, \F, \bbP)$ with index set $H$ and covariance kernel $K$. Suppose that $K_n(h, g)\to K(h, g)$, $h, g\in H$.
	Let $H$ be equipped with the positive semi - definite, symmetric bilinear form defined as $\IP{h}{g}\define K(h, g)$ and suppose that $H$ is separable under the induced pseudometric.
	Fix $h, g\in H$ and let $\ms{G}_{n}\define \sigma(\{\mathsf{G}_nf: f\in B\})$ and $\ms{G}_{n}\define \sigma(\{\mathsf{G}f: f\in B\})$. Then,
    \begin{equation*}
        X_n \define  \left(
            \mathsf{G}_n h,
            \E[\mathsf{G}_n g | \ms{G}_n]
        \right) \weakconv X\define \left(
            \mathsf{G} h,
            \E[\mathsf{G} g | \ms{G}]
        \right).
    \end{equation*}
\end{theorem}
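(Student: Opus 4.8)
The plan is to reduce the statement to the convergence of the $2\times 2$ covariance matrices of two Gaussian vectors. Writing $M_n\define\cllin\{\mathsf{G}_n f:f\in B\}\subset L_2(\bbP)$ and $M\define\cllin\{\mathsf{G} f:f\in B\}$, Theorem~9.1 in \cite{J97} identifies the conditional expectations with orthogonal projections: $\E[\mathsf{G}_n g\mid\ms{G}_n]=\Pi(\mathsf{G}_n g\mid M_n)$ and $\E[\mathsf{G} g\mid\ms{G}]=\Pi(\mathsf{G} g\mid M)$. Each of these lies in the Gaussian Hilbert space generated by the corresponding process, so $X_n$ and $X$ are centred bivariate Gaussian vectors and it suffices to show that their covariance matrices converge. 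The $(1,1)$ entries are $K_n(h,h)\to K(h,h)$ by hypothesis. Since $\mathsf{G}_n h-\Pi(\mathsf{G}_n h\mid M_n)\perp M_n$, the off-diagonal entry equals $\E[\Pi(\mathsf{G}_n h\mid M_n)\,\Pi(\mathsf{G}_n g\mid M_n)]$ and the $(2,2)$ entry equals $\|\Pi(\mathsf{G}_n g\mid M_n)\|_{L_2(\bbP)}^2$; by polarisation it is enough to prove, for every $v\in H$,
\begin{equation*}
    \|\Pi(\mathsf{G}_n v\mid M_n)\|_{L_2(\bbP)}^2\ \longrightarrow\ \|\Pi(\mathsf{G} v\mid M)\|_{L_2(\bbP)}^2 ,
\end{equation*}
and then to apply this with $v\in\{h,g,h+g\}$.

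For the displayed limit I would use the variational identity $\|\Pi(\mathsf{G}_n v\mid M_n)\|^2=\sup_{f\in B}\{2K_n(v,f)-K_n(f,f)\}$, which holds because $B$ is linear (so $\{\mathsf{G}_n f:f\in B\}$ is already a dense linear subspace of $M_n$) and $w\mapsto 2\IP{\mathsf{G}_n v}{w}-\|w\|^2$ is continuous; the analogous identity with $K_n$ replaced by $K$ describes the limit. The inequality $\liminf_n\sup_{f\in B}\{2K_n(v,f)-K_n(f,f)\}\ge\sup_{f\in B}\{2K(v,f)-K(f,f)\}$ is immediate from $K_n(v,f)\to K(v,f)$ and $K_n(f,f)\to K(f,f)$ for fixed $f$. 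For the reverse inequality I would pass to the completion $\overline{\mb{H}}$ of $(H,\IP{\cdot}{\cdot}_K)$, use separability to choose a countable $K$-dense set $D\subset B$, realise the processes on a common probability space so that $\mathsf{G}_n f\to\mathsf{G} f$ in $L_2(\bbP)$ for all $f\in D\cup\{v\}$ (note that $f\mapsto\mathsf{G}_n f$ and $f\mapsto\mathsf{G} f$ are isometric maps of $(H,\IP{\cdot}{\cdot}_{K_n})$ and $(H,\IP{\cdot}{\cdot}_K)$ into $L_2(\bbP)$), and then invoke Theorem~\ref{thm:projection-convergence} with $h_n=\mathsf{G}_n v$, $h=\mathsf{G} v$, $L_n=M_n$, $L=M$: condition~(i) is $\mathsf{G}_n v\to\mathsf{G} v$, and condition~(ii) holds because each $\mathsf{G} f$ with $f\in D$ --- a set whose closed linear span is $M$ --- is an $L_2(\bbP)$-limit of $\mathsf{G}_n f\in M_n$. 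This yields $\Pi(\mathsf{G}_n v\mid M_n)\to\Pi(\mathsf{G} v\mid M)$ in $L_2(\bbP)$, hence convergence of the squared norms, and therefore $X_n\weakconv X$.

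The hard part is this reverse inequality, equivalently the construction of the common realisation: the inner product governing $M_n$ is $K_n$, which converges to $K$ only pointwise, so the ambient geometry is genuinely $n$-dependent and a near-maximiser $f_n\in B$ of $f\mapsto 2K_n(v,f)-K_n(f,f)$ must be controlled uniformly in $n$. The natural starting point is that any such near-maximiser satisfies $2K_n(v,f_n)-K_n(f_n,f_n)\ge 0$ (take $f=0$), so Cauchy--Schwarz gives $K_n(f_n,f_n)\le 4K_n(v,v)=O(1)$; marrying this uniform bound with the pointwise convergence $K_n\to K$ and the separability of $(H,\IP{\cdot}{\cdot}_K)$ is what identifies the limit. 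The remaining items --- Gaussianity of the projections, the polarisation step, and passing from convergence of covariance matrices to weak convergence of the bivariate normals --- are routine.
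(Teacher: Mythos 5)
Your reduction is a genuinely different route from the paper's. The paper truncates the conditioning information to the first $m$ elements of an orthonormal basis of $\cllin\{\mathsf{G}b : b\in B\}$, proves $X_n^m\weakconv X^m$ (finite-dimensional Gaussian calculus) and $X^m\weakconv X$ (projection convergence in a fixed Hilbert space), and closes the triangle with a Gaussian tail bound, a Skorohod coupling to control $\sigma_{n,m}^2$, and Theorem 3.2 of \cite{B99}. You instead observe that $X_n$ and $X$ are centred bivariate Gaussians (conditional expectation equals orthogonal projection onto the Gaussian Hilbert space, Theorem 9.1 of \cite{J97}), so weak convergence reduces to convergence of the three covariance entries, and by polarisation to $\|\Pi(\mathsf{G}_n v\mid M_n)\|^2\to\|\Pi(\mathsf{G} v\mid M)\|^2$ for $v\in\{h,g,h+g\}$. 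That reduction, the variational identity, and the $\liminf$ half are correct, and the reduction is cleaner than the paper's.

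The gap is exactly where you locate it, and the sketch you offer does not close it. Invoking Theorem \ref{thm:projection-convergence} after a Skorohod coupling over a countable $K$-dense set $D\subset B$ uses only the one-sided hypothesis that every element of $L=M$ is approximable from $L_n=M_n$; that is the direction you already control via the variational formula. The $\limsup$ direction needs the opposite property: every $L_2$-cluster point of a bounded sequence $w_n\in M_n$ must lie in $M$. (This is also the step elided in the proof of Theorem \ref{thm:projection-convergence} itself, where the weak cluster point $g^\star$ is shown to satisfy $h-g^\star\perp L$ but is never shown to belong to $L$.) Pointwise convergence $K_n\to K$ does not supply it: if some $w\in B$ has $K(w,w)=0$ but $K_n(v,w)^2/K_n(w,w)\not\to 0$, then $M_n$ contains the unit vector $\mathsf{G}_n w/K_n(w,w)^{1/2}$, which retains a nonvanishing inner product with $\mathsf{G}_n v$ while the corresponding direction vanishes from $M$; then $\|\Pi(\mathsf{G}_n v\mid M_n)\|^2$ stays bounded away from $\|\Pi(\mathsf{G} v\mid M)\|^2$, and your near-maximisers $f_n$ --- which satisfy $K_n(f_n,f_n)\le 4K_n(v,v)$ but may have $K(f_n,f_n)\to 0$ --- escape along exactly such a $K$-null direction, so the uniform bound you extract does not identify the limit. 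The reverse inequality therefore requires input beyond what you have assembled (some uniform link between $K_n$ and $K$ on $B$, or structure specific to the application), and the proposal does not say what that input is. Note that the paper's own proof passes through the same pressure point when it identifies $\E[\,\cdot\mid\sigma(\{\mathsf{G}_n b_i\}_{i\in\N})]$ with $\E[\,\cdot\mid\sigma(\{\mathsf{G}_n b\}_{b\in B})]$, the $b_i$ having been chosen orthonormal for $K$ rather than for $K_n$, so you will not find the missing argument by consulting it.
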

\begin{proof}
	$\cl \{\mathsf{G} b: b\in B\}$ is a separable Hilbert space and so has an orthonormal basis, which may be taken to be 
	formed of $\mathsf{G}b_j$, $(b_j)_{j\in \N}\subset B$.
	 Let $\mc{G}_n\define \sigma(\{\mathsf{G}_n b_i: i\in \N\})$, $\mc{G}\define \sigma(\{\mathsf{G} b_i: i\in \N\})$, $B_m\define (b_1, \ldots, b_m)$, $\mc{G}_{n}^m\define \sigma(\{\mathsf{G}_n b: b\in B_m\})$, $\mc{G}^m\define \sigma(\{\mathsf{G}b: b\in B_m\})$,  $X_n^m \define  \left(
		\mathsf{G}_n h,
		\E[\mathsf{G}_n g | \mc{G}_n^m]
	\right)$ and $ X^m\define \left(
		\mathsf{G} h,
		\E[\mathsf{G} g | \mc{G}^m]
	\right)$.
	Now let $Z_n^m\define\left(\mathsf{G}_nh, \mathsf{G}_n g, \mathsf{G}_n b_1, \ldots, \mathsf{G}_n b_m\right)\transp \sim \mc{N}(0, \Sigma_n^m)$ and $Z^m\define \left(\mathsf{G}_nh, \mathsf{G} g, \mathsf{G}_n b_1, \ldots, \mathsf{G}_n b_{m}\right)\transp\sim \mc{N}(0, \Sigma^m)$. 
   Partition $\Sigma^m$ conformally with $Z^m_1 = \mathsf{G}h$, $Z_2^m = \mathsf{G}g$ and $Z_{3}^m = \left(\mathsf{G}_n b_1, \ldots, \mathsf{G}_n b_m\right)\transp$ 
	and similarly for $\Sigma_{n}^m$, $Z_{n}^m$. Then 
    \begin{align*}
        X_n^m &= \left(\mathsf{G}_n h,\, \E[\mathsf{G}_n g | \mc{G}_n^m]\right) = \left(Z_{n, 1}^m,\, Z_{n, 2}^m - [\Sigma_n^m]_{2, 3}[\Sigma_n^m]_{3, 3}^{-1} Z_{n, 3}^m\right)\\
		X^m &= \left(\mathsf{G} h,\, \E[\mathsf{G} g | \mc{G}^m]\right) = \left(Z_{1}^m,\, Z_2^m - [\Sigma^m]_{2, 3}[\Sigma^m]_{3, 3}^{-1} Z_{3}^m\right).
    \end{align*}
	Since $K_n(h_1, h_2)\to K(h_1, h_2)$ for all $h_1, h_2\in H$, $\Sigma_n^m\to \Sigma^m$ as $n\to\infty$ and the inverses in the preceding displays exist for all sufficiently large $n$ since $\{\mathsf{G}b_i: i\in \N\}$ is orthonormal. By $\Sigma_n^m\to \Sigma^n$, Levy's continuity Theorem and the Cram\'e{r} -- Wold Theorem, $Z_n^m\weakconv Z^m$. Hence, 
    \begin{equation}\label{eq:B32-i}
		X_n^m \weakconv X^m.
    \end{equation}
    Let $\Pi^m$ be the orthogonal projection onto $S_m\define \Span \{\mathsf{G}b: b\in B_m\}$. Then, $X^m = \E[\mathsf{G} g | \mc{G}^m]  = \E[\mathsf{G} g | \ms{G}^m] = \Pi^m \mathsf{G} g$, 
	by Theorem 9.1 in \cite{J97}.
    $S_m \subset S_{m+1}$ and $S \define \cl \{\mathsf{G}b: b\in B\} = \cl \cup_{m\in \N} S_m$, so by Theorem \ref{thm:projection-convergence} and 
    Theorem 9.1 in \cite{J97},  $\|\E[\mathsf{G} g | \ms{G}^m] - \E[\mathsf{G}g | \ms{G}]\|_{L_2} = \|\Pi^m \mathsf{G} g - \Pi \mathsf{G}g\|_{L_2} \to 0$ and so 
    \begin{equation}\label{eq:B32-ii}
        X^m\weakconv X.
    \end{equation}
    Define $Y_{n} \define \E[\mathsf{G}_nh | \mc{G}_{n}]$, $Y_{n}^m\define  \E[\mathsf{G}_nh | \mc{G}_{n}^m] $, $Y^m\define  \E[\mathsf{G}h | \mc{G}^m]$ and $Y\define \E[\mathsf{G}h | \mc{G}]$.
    As $Y_n \in \cl \{\mathsf{G}_nb : b\in B\}$ and $Y_{n}^m\in  \{\mathsf{G}_nb : b\in B_m\}$ \cite[][Theorem 9.1]{J97}, $  Y_n - Y_{n}^m \sim \mc{N}(0, \sigma_{n,m}^2)$ where $\sigma_{n, m}^2 \define \Var(Y_n - Y_n^m)$.
    As $Y_n = \E[\mathsf{G}_nh | \ms{G}_n]$ and $Y_n^m = \E[\mathsf{G}_nh | \ms{G}_n^m]$ \cite[][Theorem 9.1]{J97},
    \begin{equation}\label{eq:exponential-bound-Gaussian}
		\bbP\left(\|X_n - X_n^m\| >\varepsilon\right) = \bbP\left(|Y_n - Y_n^m| >\varepsilon\right)\le C \exp\left(-\frac{\varepsilon^2}{\sigma_{n, m}^2}\right).
    \end{equation}    
    We show next that $\sigma_{n, m}^2\to \sigma_m^2\define \Var(Y - Y^m)$. For this let $f_0 \define h$, $f_i\define b_i$, $i\in \N$. Consider the restricted processes $F_{n}\define (F_{n,i})_{i\in \N}$ and $F \define (F_{i})_{i\in \N}$ where $F_{n, i}\define \mathsf{G}_n f_{i-1}$ and $F_i\define \mathsf{G}f_{i-1}$. $F_n$ and $F$ are random elements in the separable metric space $(\R^\infty, d)$ where $d$ is the metric given in Example 1.2 of \cite{B99}.
    Hence $F_{n}\weakconv F$ in $(\R^\infty, d)$ by Example 2.4 of \cite{B99}. By 
	Skorohod's representation Theorem \cite[e.g.][Theorem 6.7]{B99} there are random elements $\tilde{F}_n$ and $\tilde{F}$ defined on a common probability space such that $\tilde{F}_n \to \tilde{F}$ surely, $\mc{L}(\tilde{F}) = \mc{L}(F)$ and $\mc{L}(\tilde{F}_n) = \mc{L}(F_n)$. 
    Thus 
	$\tilde{F}_n$ and $\tilde{F}$ are 
	Gaussian processes. As $\Cov(\tilde{F}_{n, i},\tilde{F}_{n, j} )= K_n(f_{i-1}, f_{j-1})\to K(f_{i-1}, f_{j-1}) = \Cov(\tilde{F}_i, \tilde{F}_j)$ each $(\tilde{F}_{n,i})_{n\in \N}$ is uniformly square integrable. As $(\R^\infty, d)$ has the topology of pointwise convergence 
	each $\tilde{F}_{n, i}\to \tilde{F}_i$ surely. Hence $\tilde{F}_{n, i}\xrightarrow{L_2}\tilde{F}_i$. By the equality in law
    \begin{align*}
       \tilde{Y}_n^m&\define \E[\tilde{F}_{n, 1} | \{\tilde{F}_{n, i}: 2\le i \le m\} ] \sim Y_{n}^m, \quad \tilde{Y}_n\define \E[\tilde{F}_{n, 1} | \{\tilde{F}_{n, i}: i\in \N, i \neq 1\} ] \sim Y_{n},\\
	   \tilde{Y}^m&\define \E[\tilde{F}_{1} |\{\tilde{F}_{i}: 2\le i \le m\} ] \sim Y^m, \qquad\  \tilde{Y}\define \E[\tilde{F}_{1} | \{\tilde{F}_{i}: i\in \N, i\neq 1\} ] \sim Y.
    \end{align*}
    Define $\tilde{S}_{n}^m\define \Span\{\tilde{F}_{n, i}: 2\le i\le m\}$, $\tilde{S}_{n}\define \cl\Span\{\tilde{F}_{n, i}: i\in \N, i\neq 1\}$, $\tilde{S}^m\define \Span\{\tilde{F}_{i}: 2\le i\le m\}$ and $\tilde{S}\define \cl\Span\{\tilde{F}_i: i\in \N, i\neq 1\}$. Then $\tilde{Y}_n^m = \Pi[\tilde{F}_{n, 1} | S_n^m], \, \tilde{Y}_n = \Pi[\tilde{F}_{n, 1} | S_n], \, \tilde{Y}^m = \Pi[\tilde{F}_{1} | S^m],\, \tilde{Y} = \Pi[\tilde{F}_{1} | S]$
	by Theorem 9.1 in \cite{J97}. 
    We will apply Theorem \ref{thm:projection-convergence} twice (in $L_2$). It is straightforward to check the hypotheses are satisfied with (i) $L_n\define \tilde{S}_n^m$, $L\define \tilde{S}^m$; (ii) $L_n \define \tilde{S}_n$, $L\define \tilde{S}$ and $h_n\define \tilde{F}_{n, 1}$, $h\define \tilde{F}_1$ in both cases.
	By Theorem \ref{thm:projection-convergence},
    \begin{equation*}
        \|\tilde{Y}_n - \tilde{Y}_n^m - (\tilde{Y} - \tilde{Y}^m)\|_{L_2} \le \|\tilde{Y}_n - \tilde{Y}\|_{L_2} + \|\tilde{Y}_n^m -\tilde{Y}^m\|_{L_2} \to 0,
    \end{equation*}
    hence $\sigma_{n, m}^2 = \Var(Y_n - Y_n^m) = \Var(\tilde{Y}_n - \tilde{Y}_n^m) \to \Var(\tilde{Y} - \tilde{Y}^m) = \Var(Y - Y^m) = \sigma_m^2$. 

    To see that $\lim_{m\to\infty}\sigma_m^2 = 0$
	 set $L_m \define\Span\{\mathsf{G}b: b\in B_m\}$ and $L\define \cl \{\mathsf{G}b: b\in B\}$. It is easy to check the hypotheses of Theorem \ref{thm:projection-convergence} (with $m$ in place of $n$) hold.
	Hence $Y^m \xrightarrow{L_2}Y$ and so $\sigma_m^2 = \Var(Y - Y^m)\to 0$. In conjunction with \eqref{eq:exponential-bound-Gaussian},
    \begin{equation}\label{eq:B32-iii}
        \limm \limsupn  \bbP\left(\|X_n - X_n^m\| >\varepsilon\right)\le   \limm \limsupn  C \exp\left(-\frac{\varepsilon^2}{\sigma_{n, m}^2}\right) = 0.
    \end{equation}
    The result now follows from Theorem 3.2 in \cite{B99}.
\end{proof}

\begin{lemma}\label{lem:SIM-markov-conditional}
    Let $(m_n)_{n\in \N}$ be an increasing sequence of natural numbers such that $m_n \le n$, $(Y_{n, i})_{n\in \N, 1\le i\le m_n}$  a triangular array of random vectors and $\mc{C}_n$ a collection of random variables. Suppose that with probability approaching one either
    \begin{enumerate}
        \item $\E\left[\left\|Y_{n, i}\right\|\middle|\mc{C}_n\right]\le \delta_n n^{-1/2}$ for some $\delta_n \to 0$ and all $i\le m_n$; or \label{lem:SIM-markov-conditional.itm.1}
        \item For each component $Y_{n, i, s}$ of $Y_{n, i}$ and any $i\neq j\le m_n$, $\E[Y_{n, i, s}Y_{n, j, s}|\mc{C}_n] = 0$ almost surely and $\E[Y_{n, i, s}^2|\mc{C}_n] \le \delta_n$ for some $\delta_n\to 0$ and all  $i\le m_n$.\label{lem:SIM-markov-conditional.itm.2}
    \end{enumerate}
    Then $\frac{1}{\sqrt{m_n}} \sum_{i=1}^{m_n} Y_{n, i}$ converges to zero in probability.
\end{lemma}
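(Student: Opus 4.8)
The plan is to condition on $\mc{C}_n$, bound the conditional first or second moment of the normalised sum, and then pass to an unconditional statement via the conditional Markov / Chebyshev inequality together with the fact that the ``bad'' event has probability tending to one's complement. Write $S_n \define m_n^{-1/2}\sum_{i=1}^{m_n} Y_{n,i}$ and let $d$ denote the (fixed, finite) dimension of the vectors $Y_{n,i}$. For each $n$ let $F_n$ be the event on which the relevant bound --- \ref{lem:SIM-markov-conditional.itm.1} or \ref{lem:SIM-markov-conditional.itm.2} --- holds; by hypothesis $\Prob(F_n)\to 1$. The first point to record is that $F_n$ may be taken to lie in $\sigma(\mc{C}_n)$: both bounds are phrased in terms of conditional expectations given $\mc{C}_n$, which are $\sigma(\mc{C}_n)$-measurable, so $F_n$ is $\sigma(\mc{C}_n)$-measurable, and this is exactly what licenses pulling $\bm{1}_{F_n}$ through $\E[\cdot\mid\mc{C}_n]$.

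Under \ref{lem:SIM-markov-conditional.itm.1}, on $F_n$ the triangle inequality gives $\E[\|S_n\|\mid\mc{C}_n]\le m_n^{-1/2}\sum_{i=1}^{m_n}\E[\|Y_{n,i}\|\mid\mc{C}_n]\le m_n^{1/2}\delta_n n^{-1/2} = \delta_n (m_n/n)^{1/2}\le \delta_n$, using $m_n\le n$. The conditional Markov inequality then yields $\bm{1}_{F_n}\Prob(\|S_n\|>\varepsilon\mid\mc{C}_n)\le \varepsilon^{-1}\delta_n$ for any $\varepsilon>0$, and taking expectations and adding $\Prob(F_n^c)$ gives $\Prob(\|S_n\|>\varepsilon)\le \Prob(F_n^c)+\varepsilon^{-1}\delta_n\to 0$.

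Under \ref{lem:SIM-markov-conditional.itm.2}, on $F_n$ the vanishing of the conditional cross-moments collapses the double sum to a single one: for each coordinate $s\in\{1,\ldots,d\}$, $\E[(m_n^{-1/2}\sum_{i=1}^{m_n}Y_{n,i,s})^2\mid\mc{C}_n] = m_n^{-1}\sum_{i=1}^{m_n}\E[Y_{n,i,s}^2\mid\mc{C}_n]\le\delta_n$, hence $\E[\|S_n\|^2\mid\mc{C}_n]\le d\delta_n$ on $F_n$. Applying the conditional Chebyshev inequality and integrating as before gives $\Prob(\|S_n\|>\varepsilon)\le \Prob(F_n^c)+\varepsilon^{-2}d\delta_n\to 0$. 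In either case $S_n\xrightarrow{P}0$, which is the claim.

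There is no genuine difficulty beyond the bookkeeping: the one place calling for attention is the handling of the ``with probability approaching one'' qualifier --- one must split $\Prob(\|S_n\|>\varepsilon)$ across $F_n$ and $F_n^c$ \emph{before} invoking the conditional inequalities, and use the $\sigma(\mc{C}_n)$-measurability of $F_n$ --- after which the argument is a routine first- or second-moment estimate. If one prefers to avoid fixing the dimension $d$ explicitly one can instead run the coordinatewise bound for each component and sum the (finitely many) resulting probability bounds; this changes nothing of substance.
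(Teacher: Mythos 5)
Your proof is correct and follows essentially the same route as the paper's: a conditional first-moment bound with the triangle inequality under condition (i), a conditional second-moment bound exploiting the vanishing cross-terms under condition (ii), and Markov/Chebyshev to conclude. If anything you are more careful than the paper's own two-line argument, which takes unconditional expectations directly and leaves the ``with probability approaching one'' qualifier implicit, whereas your splitting over the $\sigma(\mc{C}_n)$-measurable event $F_n$ makes that step rigorous.
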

\begin{proof}
   If condition \ref{lem:SIM-markov-conditional.itm.1} holds, $\E \left\|  m_n^{-1/2} \sum_{i=1}^{m_n} Y_{n, i} \right\| 
       \le \delta_n m_n^{1/2} n^{-1/2} \to 0$. If condition  \ref{lem:SIM-markov-conditional.itm.2} holds, $\E\left(m_n^{-1/2}\sum_{i=1}^{m_n} Y_{n, i, s} \right)^2 = m_n^{-1}\sum_{i=1}^{m_n} \E Y_{n, i, s}^2 \le \delta_n \to 0$ for each component $Y_{n, i, s}$ of $Y_{n, i}$.  In either case the claim then follows by Markov's inequality.
\end{proof}

\section{Additional details for the examples}\label{sm:sec:examples-detail}

\subsection{Single index model}\label{ssec:example-details-sim}

\subsubsection{Proofs of results in the main text}

\begin{proof}[Proof of Proposition \ref{prop:sim-joint-conv}]
    As is easy to verify, each component of $g_{n}$ belongs to $L_2^0(P_n)$.
    For any $b\in B$,  $\E[\epsilon b_2(\epsilon, X)|X] =0$ by \eqref{eq:sim-b2-conditions}. Plugging in for $Db$ and using this allows the conclusion that $ \E\left[g(W) [Db](W)\right] = 0$. Apply Lemma \ref{lem:iid-orthocomp-joint-conv}.
\end{proof}

\begin{proof}[Proof of Proposition \ref{prop:sim-estimation-HL}]
    For part \ref{ass:consistent:itm:ghat} of Assumption \ref{ass:consistent} note that for some $a_j \in \{-1, 1\}$, $ \fracrootn\sumin \hat{g}_{n, \theta, i} - g(W_i) = \sum_{l=1}^5 a_l \fracrootn  \left[ \sum_{i=1}^{m_n} R_{l, n, i} + \sum_{i=m_n+1}^n R_{l, n ,i}\right]$
    where
    \begin{align*}
        R_{1, n, i} &\define \upomega(X_i) (\hat{f}_{n, i}(V_{\theta, i}) - f(V_{\theta, i}))f'(V_{\theta, i})(X_{2, i} - Z_{0}(V_{\theta, i}))\\
        R_{2, n, i} &\define \upomega(X_i) (Y_i - f(V_{\theta, i}))\left(f'(V_{\theta, i}) - \widehat{f'}_{n, i}(V_{\theta, i}) \right)(X_{2, i} - Z_{0}(V_{\theta, i}))\\
        R_{3, n, i} &\define \upomega(X_i) (Y_i - f(V_{\theta, i}))\widehat{f'}_{n, i}(V_{\theta, i}) \left(\hat{Z}_{0, n, i}(V_{\theta, i}) - Z_{0}(V_{\theta, i}) \right)\\
        R_{4, n, i} &\define \upomega(X_i) (\hat{f}_{n, i}(V_{\theta, i}) - f(V_{\theta, i}))\left(f'(V_{\theta, i}) - \widehat{f'}_{n, i}(V_{\theta, i}) \right)(X_{2, i} - Z_{0}(V_{\theta, i}))\\
        R_{5,n, i} &\define \upomega(X_i) (\hat{f}_{n, i}(V_{\theta, i}) - f(V_{\theta, i}))\widehat{f'}_{n, i}(V_{\theta, i}) \left(\hat{Z}_{0, n, i}(V_{\theta,i}) - Z_{0}(V_{\theta, i}) \right).
    \end{align*}
    We verify that one of \ref{lem:SIM-markov-conditional.itm.1} or \ref{lem:SIM-markov-conditional.itm.2} of Lemma \ref{lem:SIM-markov-conditional} is satisfied with $Y_{n, i} = R_{l, n, i}$ for $i=1, \ldots, m_n\coloneqq \lfloor n/2\rfloor$ or $i=m_n + 1, \ldots, n$.
    Suppose that $1 \le i \le m_n$ and let $\mc{C}_n = \mc{C}_{n, 2}$ (the case with $m_n+1\le i\le n$ and $\mc{C}_n = \mc{C}_{n, 1}$ is analogous).  Each $\hat{Z}_{k, n, i}(V_{\theta, i})$ is $\sigma(V_{\theta, i}, \mc{C}_n)$ -- measurable for $k=0, 1, 2, 3, 4$. $f(V_{\theta, i})$, $f'(V_{\theta, i})$, $\upomega(X_i)$ and $X_{2, i} - Z_0(V_{\theta, i})$ are bounded uniformly in $i$ and there are events $E_n$ with $P_{n, 0}E_n\to 1$ on which $\mathsf{R}_{l, n ,i} \le r_n$ ,  $\hat{f}_{n, i}(V_{\theta, i})$, $\widehat{f'}_{n, i}(V_{\theta, i})$, $\hat{Z}_{1, n, i}(V_{\theta, i})$ are bounded above uniformly in $i$ and $\hat{Z}_{2, n, i}(V_{\theta, i})$ is bounded above and below uniformly in $i$, for all large enough $n\in \N$.
    On these sets, 
    \begin{equation}\label{prop:sim-estimation-HL:eq-Z0}
        \E\left[\left\|  \hat{Z}_{0, n, i}(V_{\theta, i}) - Z_{0}(V_{\theta, i}) \right\|^2 \middle |\mc{C}_n \right]\lesssim r_n^2.
    \end{equation}

    For $l=1, 2, 3$, the first part of condition \ref{lem:SIM-markov-conditional.itm.2} follows by the law of iterated expectations and independence since $\E[\upomega(X_i)(X_{2, i} - Z_0(V_{\theta, i}))|V_{\theta, i}] = 0$ ($l=1$) and $\E[\epsilon_i|X_i] = 0$ ($l=2, 3$). The second part follows by the uniform boundedness noted above, $R_{l, n ,i} \le r_n$ on $E_n$ along with equations \eqref{eq:sim:eps-phi--1} and \eqref{prop:sim-estimation-HL:eq-Z0}.

    \emph{$l=4$}: By the uniform boundedness and the Cauchy -- Schwarz inequality,
    $ \E[\|R_{4, n, i}\||\mc{C}_n] \lesssim \E\left[\left|\hat{f}_{n, i}(V_{\theta, i}) - f(V_{\theta, i})\right|\left|f'(V_{\theta, i}) - \widehat{f'}_{n, i}(V_{\theta, i}) \right|\middle|\mc{C}_n\right]$
    and the RHS is upper bounded by $\mathsf{R}_{3, n, i}\mathsf{R}_{4, n, i} = o(n^{-1/2})$ on $E_n$.

    \emph{$l=5$}: By the uniform boundedness and the Cauchy -- Schwarz inequality,
    $\E[\|R_{4, n, i}\||\mc{C}_n] \lesssim \E\left[\left|\hat{f}_{n, i}(V_{\theta, i}) - f(V_{\theta, i})\right|\left\|\hat{Z}_{0, n, i}(V_{\theta,i}) - Z_{0}(V_{\theta, i})  \right\|\middle|\mc{C}_n\right]$.
    For a $C>0$, the RHS is upper bounded by $Cr_n \mathsf{R}_{3, n, i} = o(n^{-1/2})$ on $E_n$ by \eqref{prop:sim-estimation-HL:eq-Z0}.
    
    For parts \ref{ass:consistent:itm:lambdahat} and \ref{ass:consistent:itm:rank} of Asssumption \ref{ass:consistent}, we show that $\|\check{V}_{n, \theta} - V\| = o_{P_{n, 0}}(\upnu_n)$, which suffices by Proposition S1 of \cite{LM21-S}.
    For $\breve{V} \define \EP gg\transp$, 
    \begin{equation*}
        \check{V}_{n, \theta} - V=  \check{V}_{n, \theta} - \breve{V}+ \breve{V}- V= \meanin \left[\hat{g}_{n, \theta, i}\hat{g}_{n, \theta, i}\transp - g(W_i)g(W_i)\transp\right] + \fracrootn \G_n [g g\transp].
    \end{equation*}
    $\E (g_{l}g_{k})^2<\infty$ by $\E[\epsilon^4]<\infty$ and Assumption \ref{ass:sim-estimation-HL}.
    Hence $ \fracrootn \G_n [g g\transp] = O_{P_{n, 0}}(n^{-1/2})$ by the CLT. For the other term, 
    \begin{equation*}
        \meanin \left(\hat{g}_{n, \theta, i, k} - g_{k}(W_i)\right)^2 \lesssim \sum_{l=1}^5 \fracn \left[ \sum_{i=1}^{m_n} R_{l, n, i, k}^2 + \sum_{i=m_n+1}^n R_{l, n ,i, k}^2\right].
    \end{equation*}
    For $l=1, 2, 3$ we established that if $1 \le i \le m_n$ and $\mc{C}_n = \mc{C}_{n, 2}$ then $\E[R_{l, n, i, k}^2|\mc{C}_n]\lesssim r_n^2$ on $E_n$. We show this also holds for $l=4, 5$. (The case with $m_n+1\le i\le n$ with $\mc{C}_n = \mc{C}_{n,1}$ is once again analogous). For $l\in \{4, 5\}$, $ \E\left[R_{l, n, i, k}^2|\mc{C}_n\right] \lesssim \E\left[\left(\hat{f}_{n, i}(V_{\theta, i}) - f(V_{\theta, i})\right)^2\middle |\mc{C}_n\right]$,
    by the uniform boundedness (for all large enough $n$) and the RHS is bounded above by $r_n^2$ on $E_n$.
    By Markov's inequality, \linebreak $\fracn \left[ \sum_{i=1}^{m_n} R_{l, n, i, k}^2 + \sum_{i=m_n+1}^n R_{l, n ,i, k}^2\right] = O_{P^n}(r_n^2)$ for $l=1, \ldots, 5$  hence the same is true of $\meanin \left\|\hat{g}_{n, \theta, i,  } - g(W_i)\right\|^2$. Therefore, $\|\check{V}_{n, \theta} - \breve{V}\|_2 = O_{P^n}(r_n)$ as $\left\| \check{V}_{n, \theta} - \breve{V}\right\|_2^2$ is bounded above by a multiple of
    \begin{equation*}
        \meanin \left\|\hat{g}_{n, \theta, i}\right\|^2\meanin \left\|\hat{g}_{n, \theta, i}- g(W_i)\right\|^2+\meanin \left\|\hat{g}_{n, \theta, i} - g(W_i) \right\|^2\meanin \left\|g(W_i)\right\|^2.	\qedhere
	\end{equation*}
\end{proof}

\subsubsection{The LAN condition}\label{app:ssec:sim-LAN}

Here I provide examples of local perturbations $P_{n, h}$ and lower level conditions under which the LAN condition in Assumption \ref{ass:sim-LAN} holds. Let $\varphi_{n}$ be as in equation \eqref{eq:SIM-local-alt}
with $B_1\define C_b^1(\ms{D})$
and $B_2$ taken to be the set of functions $b_2:\R^{1+K}\to \R$ such that $b_2$ is bounded, $e\mapsto b_2(e, x)$ is continuously differentiable with bounded derivative and equation \eqref{eq:sim-b2-conditions} holds.

\begin{proposition}\label{prop:sim-DQM-LAN}
    Suppose Assumption \ref{ass:sim-mdl} holds, $\mc{W}_n = \prod_{i=1}^n \R^{1+K}$, $e\mapsto \sqrt{\zeta(e, x)} \in \mc{C}^1$,    
    and $p_{n, h} = p_{\gamma + \varphi_n(h)}^n$ with $p_{\gamma}$ as in \eqref{eq:SIM-running-example-dens}. Then Assumption \ref{ass:sim-LAN} holds.
\end{proposition}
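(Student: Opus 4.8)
The plan is to verify the differentiability--in--quadratic--mean (DQM) expansion \eqref{eq:dqm} for the one--observation densities, with derivative $A$ the bounded linear map of \eqref{eq:Agamh} specified in Assumption \ref{ass:sim-LAN}, and then to invoke Lemma \ref{lem:iid-LAN-DQM} to pass from DQM to the LAN expansion required by Assumption \ref{ass:sim-LAN}.

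Write $\theta_n\define\theta+\tau/\sqrt n$, $f_n\define f+b_1/\sqrt n$, $\zeta_n\define\zeta(1+b_2/\sqrt n)$, $s\define\sqrt\zeta$, $\epsilon\define Y-f(V_\theta)$, $\epsilon_n\define Y-f_n(V_{\theta_n})$ and $u_n\define\epsilon_n-\epsilon$. For all large $n$ one has $\theta_n\in\Theta$, $f_n\in\ms{F}$, and $\zeta_n\ge 0$ integrates to one (using $\E[b_2(\epsilon,X)]=0$ from \eqref{eq:sim-b2-conditions}), so that $p_{\gamma+\varphi_n(h)}$ as in \eqref{eq:SIM-running-example-dens} is a genuine density and
\[ \sqrt{p_{\gamma+\varphi_n(h)}} = s(\epsilon_n,X)\sqrt{1+b_2(\epsilon_n,X)/\sqrt n}. \]
A mean--value expansion gives $f_n(V_{\theta_n})-f(V_\theta) = \bigl(f'(\bar V_n)X_2\transp\tau+b_1(V_{\theta_n})\bigr)/\sqrt n$ for some $\bar V_n$ between $V_\theta$ and $V_{\theta_n}$, so $\sqrt n\,u_n = -\bigl(f'(\bar V_n)X_2\transp\tau+b_1(V_{\theta_n})\bigr)$, which is bounded in modulus by $C(1+\|X\|)$ and converges pointwise to $-\bigl(f'(V_\theta)X_2\transp\tau+b_1(V_\theta)\bigr)$ (continuity of $f'$ and $b_1$, $\bar V_n\to V_\theta$). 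A first--order Taylor expansion of $s(\cdot,X)$ in its first argument --- legitimate by the $\mc{C}^1$ hypothesis, written with integral remainder --- together with $\sqrt{1+r}=1+r/2+O(r^2)$ and $\partial_e s=\tfrac12\phi s$ with $\phi=\partial_e\log\zeta$, then yields
\[ \sqrt n\bigl(\sqrt{p_{\gamma+\varphi_n(h)}}-\sqrt{p_\gamma}\bigr) = \tfrac12\phi(\epsilon,X)s(\epsilon,X)\bigl(\sqrt n\,u_n\bigr) + \tfrac12 b_2(\epsilon,X)s(\epsilon,X) + \rho_n, \]
where $\rho_n$ collects the Taylor remainder of $s$, the $\sqrt{1+\cdot}$ remainder, the error from replacing $b_2(\epsilon_n,X)$ by $b_2(\epsilon,X)$ (bounded using $\|\partial_e b_2\|_\infty|u_n|$), and products of two small factors. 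Since the leading term converges in $L_2(\tilde\nu)$ to $-\tfrac12\phi(\epsilon,X)s(\epsilon,X)\bigl(f'(V_\theta)X_2\transp\tau+b_1(V_\theta)\bigr)$, which together with the $b_2$ term equals $\tfrac12(Ah)\sqrt{p_\gamma}$ (compare \eqref{eq:Agamh} and the scores in Assumption \ref{ass:sim-LAN}), establishing $\rho_n\to 0$ in $L_2(\tilde\nu)$ completes the verification of \eqref{eq:dqm} (with $h_n=h$).

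The substantive part is these $L_2(\tilde\nu)$ limits, and here the stated moment bounds do the work. Convergence of the leading term is by dominated convergence, with dominating function supplied by $\E[\phi(\epsilon,X)^2(1+\|X\|)^2]<\infty$ (which follows from \eqref{ass:sim-parameters:eq:sim-moments}) and boundedness of $f'$, $b_1$. The Taylor remainder of $s$ is $u_n\int_0^1\bigl[\partial_e s(\epsilon+tu_n,X)-\partial_e s(\epsilon,X)\bigr]dt$; multiplied by $\sqrt n$ and using $|\sqrt n\,u_n|\lesssim 1+\|X\|$, its $L_2(\tilde\nu)$ norm is controlled by an $L_2$--continuity--of--translation estimate in the first coordinate --- $\int|\partial_e s(e+v,x)-\partial_e s(e,x)|^2\,de\to 0$ as $v\to 0$, valid for a.e.\ $x$ because $\partial_e s(\cdot,x)=\tfrac12\phi(\cdot,x)\sqrt{\zeta(\cdot,x)}\in L_2$ and $\zeta(\cdot,x)$ has a Lebesgue--density structure --- followed by a Fubini / dominated--convergence pass in $x$ against the same integrable majorant. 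The remaining pieces of $\rho_n$ are $O\bigl((1+\|X\|)^2/n\bigr)$ pointwise, hence $o(1)$ in $L_2(\tilde\nu)$ after multiplication by $\sqrt n$ once $\E(1+\|X\|)^2<\infty$ is used. I expect this translation--continuity estimate with a random shift, threaded through Fubini and dominated convergence in $x$, to be the only genuinely delicate point; the rest is bookkeeping.

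Finally, I would record that $A:\cllin H\to L_2(P)$, $Ah=\tau\transp\dot\ell+Db$, is a bounded linear map: linearity is immediate from \eqref{eq:Agamh}, and
\[ \|Ah\|_{L_2(P)}^2 \lesssim |\tau|^2\,\E\bigl[\phi(\epsilon,X)^2 f'(V_\theta)^2\|X_2\|^2\bigr] + \|b_1\|_\infty^2\,\E[\phi(\epsilon,X)^2] + \E[b_2(\epsilon,X)^2], \]
each term being finite by \eqref{ass:sim-parameters:eq:sim-moments} and \eqref{eq:sim-b2-conditions} and dominated by the natural norm of $h$ on $H=\R^{d_\theta}\times B_1\times B_2$. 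With the DQM expansion in hand, Lemma \ref{lem:iid-LAN-DQM} delivers Assumption \ref{ass:LAN} with $[\Delta_n h](W^{(n)})=\G_n Ah$ and $\sigma(h)=\int(Ah)^2\,dP$, which is exactly the conclusion of Assumption \ref{ass:sim-LAN}.
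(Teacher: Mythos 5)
Your proof is correct in outline and ends exactly where the paper does --- boundedness of the linear map $A$ on $\cllin H$ followed by an appeal to Lemma \ref{lem:iid-LAN-DQM} --- but your route to the DQM condition \eqref{eq:dqm} is genuinely different from the paper's. You expand $\sqrt{p_{\gamma+\varphi_n(h)}}$ directly in $L_2(\tilde\nu)$ via a first-order Taylor formula with integral remainder and control the remainder through $L_2$-continuity of translation in the $e$-coordinate; since the shift $u_n=\epsilon_n-\epsilon=f(V_\theta)-f_n(V_{\theta_n})$ depends only on $X$, the Fubini/dominated-convergence pass you flag as the delicate step does close, with majorant $(1+\|x\|)^2\int\phi(e,x)^2\zeta(e,x)\,de$. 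The paper instead parametrises the perturbation as a one-dimensional path $t\mapsto\gamma_t(h)=\gamma+t(\tau,b_1,b_2\zeta)$, checks that $t\mapsto\sqrt{p_{\gamma_t(h)}(W)}$ is pointwise continuously differentiable with score $q_t$, establishes the uniform moment bound $\int|q_t|^{2+\rho}\,\mathrm{d}P_{\gamma_t(h)}\lesssim \E[(|\phi(\epsilon,X)|^{2+\rho}+1)\|X\|^{2+\rho}]<\infty$ so that the Fisher information $t\mapsto\int q_t^2\,\mathrm{d}P_{\gamma_t(h)}$ is continuous, and then invokes Lemma 1.8 of van der Vaart (2002) to deduce \eqref{eq:dqm}. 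The trade-off is this: your direct expansion is self-contained and makes the limiting score visibly equal to $\tfrac12(Ah)\sqrt{p_\gamma}$, but it implicitly requires the dominating measure $\nu$ to have Lebesgue product structure in the first coordinate for the translation-continuity estimate to be meaningful, and it carries more bookkeeping; the paper's pointwise-differentiability-plus-continuous-information criterion works for an arbitrary dominating measure, replaces the translation argument by a uniform-integrability check exploiting the $(2+\rho)$-moment in \eqref{ass:sim-parameters:eq:sim-moments}, and offloads the hard $L_2$ analysis to a cited lemma. You should also record explicitly, as the paper does, that $\gamma+\varphi_n(h)\in\Gamma$ for large $n$ (so that the perturbed measures remain in the dominated model), which requires the conditions in \eqref{eq:sim-b2-conditions} beyond $\E[b_2(\epsilon,X)]=0$; this is a one-line addition, not a substantive gap.
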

\begin{proof}
    Define $\gamma_t(h)\define \gamma + t(\tau, b_1, b_2\zeta)$ for $h = (\tau, b_1, b_2)$ and $t\in [0, \infty)$. It is easy to verify that $P_{\gamma_t(h)}\in \{P_{\gamma}: \gamma\in \Gamma\}$ for all small enough $t$. This ensures the required domination in Assumption \ref{ass:iid} given Assumption \ref{ass:sim-mdl}.
    Next note that $t\mapsto\sqrt{p_{\gamma_t(h)}}$ is continuously differentiable everywhere since
    it is a composition of continuously differentiable functions for $t$ small enough that $(1+ tb_2)$ is bounded away from zero. This ensures that $q_t(W) \define \deriv{\log p_{\gamma_s(h)(W)}}{s}\rvert_{s=t}$ is defined for small enough $t$. Writing $v_{t}\define V_{\theta + t\tau}$ and $e_t\define Y - f(v_t) - tb_1(v_t)$ this has the form
    \begin{equation}\label{prop:SIM-ULAN:eq:log-deriv}
        \begin{aligned}
            q_t(W)\define &-\phi(e_t, X)[f'(v_t)X_2\transp\tau  + tb_1'(v_t)X_2\transp\tau  + b_1(v_t)]\\
            &\qquad  + \frac{b_2(e_t, X) - tb_2'(e_t, X)[f'(v_t)X_2\transp\tau  + tb_1'(v_t)X_2\transp\tau  + b_1(v_t)]}{1 + tb_2(e_t, X)},
        \end{aligned}
    \end{equation}
    which is a composition of continuous functions. By boundedness of $f'$, $b_1$, $b_1'$, $b_2$, $b_2'$, $(1 + tb_2)^{-1}$ and equation \eqref{ass:sim-parameters:eq:sim-moments},
    $\int |q_t(W)|^{2+\rho} \darg{P_{\gamma_t(h)}}  \le C\E\left[\left(|\phi(\epsilon, X)|^{2+\rho} + 1\right)\|X\|^{2+\rho}\right]<\infty$ for a positive constant $C<\infty$ and a $\rho>0$.
    This implies that for any $t_n\to t$, $(q_{t_n}(W)^2)_{n\in \N}$ is uniformly $P_{\gamma_{t_n}(h)}$ -- integrable. Combination with $q_{t_n}(W)^2 \to q_t(W)^2$ (everywhere) yields $ \int q_{t_n}(W)^2 p_{\gamma_{t_n}(h)}(W)\dlambda \to  \int q_{t}(W)^2 p_{\gamma_{t}(h)}(W)\dlambda$.
    Applying Lemma 1.8 in \cite{vdV02} demonstrates that equation \eqref{eq:dqm} holds, with $Ah$ as in \eqref{eq:Agamh}. Lemma 1.7 of \cite{vdV02} ensures that $A h\in L_2^0(P)$. The form of $Ah$ reveals that it is a linear map on $H$. It is bounded:
    \begin{equation*}
        \|Ah\|^2\le C_1  \E\left[\phi(\epsilon, X)^2 \|X\|^2\right] \|\tau\|^2 + \E\left[\phi(\epsilon, X)^2 \right] \|b_1\|^2 + \|b_2\|^2 \le C_2 \|h\|^2,
    \end{equation*}
    where $C_1,C_2 \in (0, \infty)$ are positive constants.
    Apply Lemma \ref{lem:iid-LAN-DQM}.
\end{proof}

\subsection{IV model with non-parametric first stage}\label{sm:sec:IV}

\subsubsection{Proofs of results in the main text}
\begin{proof}[Proof of Lemma \ref{lem:iv-effscr}]
    $J(Z)$ is nonsingular by \eqref{eq:iv-EphiU}. By Proposition 2.8.4 in \cite{B09}, $J(Z)_{1, 1}^{-1} = \E[\epsilon^2|Z]^{-1}$ exists and is positive.
    Define $\dot{l}(W) \define  \left(\dot{l}_{1}(W)\transp, \dot{l}_{2}(W)\transp\right)\transp = -\phi_1(\upxi)(X\transp, Z_1\transp)\transp$, $[D_{ 1} b_1](W)\define -\phi_2(\upxi)\transp b_1(Z) $ and $[D_{2} b_2](W) \define b_2(\upxi)$, 
    where $\upxi = (Y - X\transp\theta - Z_1\transp\beta, X - \pi(Z),Z)$.
    By Lemma \ref{lema:cond-exp-orthog-set-equality} and Proposition A.3.5 in \cite{BKRW98}, with $\mc{T}_2\define \{[D_{2} b_2](W): b_2
   \in B_2\}$, 
    \begin{align*}
        \breve{l}_{\gamma}(W) &\define \Pi\left[ \dot{l}(W) |\mc{T}_2^\perp \right]= \E\left[-\phi_1(\xi)[X\transp, Z_1\transp]\transp U\transp |Z\right]\E\left[UU\transp |Z\right]^{-1}U,\\
        [\breve{D}_{1}b_1](W) &\define \Pi\left[ [\dot{D}_1b_1](W) |\mc{T}_2^\perp \right]= \E\left[-b_1(Z)\transp \phi_2(\xi) U\transp |Z\right]\E\left[UU\transp |Z\right]^{-1}U.
    \end{align*}
    Let $K \define d_\beta$, and evaluating the conditional expectations using \eqref{eq:iv-EphiU} we obtain:
    \begin{equation*}
        \begin{aligned}
            \breve{l}_{\gamma}(W) &= \begin{bmatrix}
                \pi(Z)\\
                Z_1
            \end{bmatrix}
                 \begin{bmatrix}
                1 & 0_K\transp
            \end{bmatrix} J(Z)^{-1}U =  \begin{bmatrix}
                \pi(Z)\\
                Z_1
            \end{bmatrix}E_1~,\\
            [\breve{D}_{\gamma, 1}b_1](W) &= b_1(Z)\transp \begin{bmatrix}
                0_K & I_K
            \end{bmatrix} J(Z)^{-1}U = b_1(Z)\transp E_2~.
        \end{aligned}, \quad E \define J(Z)^{-1}U.    
    \end{equation*}
    The projection of $\breve{l}(W)$ onto the orthocomplement of $\{\breve{D}_{1}b_1: b_1\in B_{1}\}$ is equal to 
      $\tilde{l}(W)\define \Pi\left[
        \dot{l}(W) | \{[D_1b_1](W) + D_2b_2(W) : b\in B\}^\perp
    \right]$ by Proposition A.2.4 in \cite{BKRW98}.
    The components of $\left[\pi(Z)\transp, Z_1\transp\right]\transp\E[E_1E_2\transp|Z]\E[E_2E_2\transp | Z]^{-1}E_2$ belong to $\cl \{b_1(Z)\transp E_2 : b_1\in B_{1}\}$ as $B_1$ is dense in $L_2$
    and by iterated expectations
    \begin{equation*}
        \E\left[\tilde{l}_{\gamma}(W) b_1(Z)\transp E_2\right] = \E\left[\left[\pi(Z)\transp, Z_1\transp\right]\transp\left[E_1 - \E[E_1E_2\transp|Z]\E[E_2E_2\transp|Z]^{-1}E_2\right]E_2\transp b_1(Z) \right] = 0.
    \end{equation*}
    Hence $\tilde{l}(W)$, the efficient score for $(\theta, \beta)$, has the form
    \begin{equation}
           \tilde{l}(W) = \begin{bmatrix}
            \tilde{l}_{1}(W)\\
            \tilde{l}_{2}(W)
           \end{bmatrix} = \begin{bmatrix}
            \pi(Z)\\
            Z_1
           \end{bmatrix}\left[E_1 - \E[E_1E_2\transp|Z]\E[E_2E_2\transp|Z]^{-1}E_2\right].
    \end{equation}
    $\tilde{\ell}(W)= \tilde{l}_{1}(W) - \E[\tilde{l}_{1}(W)\tilde{l}_{2}(W)\transp]\E[\tilde{l}_{2}(W)\tilde{l}_{2}(W)\transp]^{-1}\tilde{l}_{2}(W)$ by Example A.2.1 in \cite{BKRW98}. To calculate this note that with $Q(Z)\define J(Z)^{-1} = \E[EE\transp|Z]$
    \begin{equation*}
        E_1 - Q(Z)_{1, 2}Q(Z)_{2, 2}^{-1}E_2 = \left[Q(Z)_{1,1}- Q(Z)_{1, 2}Q(Z)_{2, 2}^{-1}Q(Z)_{2,1}\right]U_1 = \E[\epsilon^2|Z]^{-1}\epsilon
    \end{equation*}
    from which the result follows by direct calculation.
\end{proof}

\begin{proof}[Proof of Lemma \ref{lem:iv-bar-effscr}]
    The second claim follows from the expressions in \eqref{eq:iv-effscr} \& \eqref{eq:iv-bar-effscr}. For the first, by Assumption \ref{ass:iv-mdl} and \eqref{eq:iv-EphiU}, $\E[\|g(W)\|^2]<\infty$. By \eqref{eq:iv-b2-conditions}
    \begin{equation*}
        \E\left[ g(W) b_2(\xi)\right] = \E[\epsilon^2]^{-1}\E\left[ (\pi(Z) - MZ_1) \E\left[\epsilon b_2(U, Z)\middle|Z\right]\right] = 0, 
    \end{equation*}
    where $M\define \E[XZ_1\transp]\E[Z_1Z_1\transp]^{-1}$ and $\upxi = (Y - X\transp\theta - Z_1\transp\beta, X - \pi(Z),Z)$. By \eqref{eq:iv-EphiU}
    \begin{equation*}
        \E\left[g(W) \phi_2(\xi)\transp b_1(Z)\right]
        = \E[\epsilon^2]^{-1}\E\left[(\pi(Z) - MZ_1) \E\left[\epsilon \phi_2(\epsilon, \upsilon, Z)\transp \middle|Z \right]b_1(Z)\right] = 0.
    \end{equation*}
    Lastly also $\E\left[g(W)\phi_1(\xi)b_0\transp Z_1
    \right] =0$ as  by \eqref{eq:iv-EphiU} and $\E[\upsilon|Z] =0$,
    \begin{equation*}
        \E\left[g(W)\phi_1(\xi)b_0\transp Z_1
        \right] 
            = -\E[\epsilon^2]^{-1} \left[\E[
                \pi(Z)Z_1\transp] - \E[\pi(Z)Z_1\transp]\E[Z_1Z_1\transp]^{-1} \E[Z_1Z_1\transp ] \right]b_0. 
    \qedhere
    \end{equation*}
    
\end{proof}

\begin{proof}[Proof of Proposition \ref{prop:iv-joint-conv}]
    Assumptions \ref{ass:iv-mdl}, \ref{ass:iv-LAN},  equation \eqref{eq:iv-EphiU} and Lemma \ref{lem:iv-bar-effscr} verify the conditions required to apply Lemma \ref{lem:iid-orthocomp-joint-conv}.
\end{proof}

\begin{proof}[Proof of Proposition \ref{prop:iv-estimation-bar-effscr2}]
    Let $\beta_n = \beta + b_{n, 0}/\sqrt{n}$ with $b_{n, 0}\to b_0\in \R^{d_\beta}$. Let $\breve{\epsilon}_{n, i}$, $\breve{g}_{n, \theta, i}$, $\breve{V}_{n, \theta}$, $\breve{\Lambda}_{n, \theta}$ and $\breve{r}_{n, \theta}$ be formed analogously to $\hat{\epsilon}_{n, i}$, $\hat{g}_{n, \theta, i}$, $\hat{V}_{n, \theta}$, $\hat{\Lambda}_{n, \theta}$ and $\hat{r}_{n, \theta}$  with $\beta_n$ in place of $\hat\beta_n$. As $\hat\beta_n\in \ms{S}_n$, by Lemma S3.1 in \cite{HLM22-S} it suffices to show that Assumption \ref{ass:consistent} holds for $\breve{g}_{n, \theta}\define \fracrootn\sumin \breve{g}_{n, \theta, i}$, $\breve{\Lambda}_{n, \theta}$ and $\breve{r}_{n, \theta}$.
    For Assumption \ref{ass:consistent} part \ref{ass:consistent:itm:ghat}, by Lemma \ref{lem:iv-estimation-bar-effscr-details2}, $ \fracrootn \sumin \left[g_{n, \theta, i} -  {g}(W_i)\right] = \sum_{l=1}^4 \mathsf{R}_{n, l} = o_{P_{n, 0}}(1)$.
    For Assumption \ref{ass:consistent} parts \ref{ass:consistent:itm:lambdahat} and \ref{ass:consistent:itm:rank}, note that 
    \begin{equation}\label{lem:iv-estimation-bar-effscr2:eq-rate}
        \check{V}_{n, \theta} - V  = \meanin \left[g_{n, \theta, i}g_{n, \theta, i}\transp - g(W_i)g(W_i)\transp\right] + \fracrootn \G_n gg\transp.
    \end{equation}
    For the first right hand side term, by Cauchy --- Schwarz,
    \begin{equation*}
       \left\| \meanin \left[g_{n, \theta, i}g_{n, \theta, i}\transp -  g(W_i)g(W_i)\transp\right]\right\| \lesssim \left[\sum_{l=1}^4 \mathsf{S}_{n, l}\right]\left[\meanin \|g_{n, \theta, i}\|^2 + \meanin \|g(W_i)\|^2\right].
    \end{equation*}
    As $\E\|g(W_i)\|^2 < \infty$ under Assumption \ref{ass:iv-mdl} , $\meanin \|g(W_i)\|^2 = O_{P_{n, 0}}(1)$.
    By Lemma \ref{lem:iv-estimation-bar-effscr-details2} $\meanin \|g_{n, \theta, i} - g(W_i)\|^2\lesssim  \sum_{l=1}^4 \mathsf{S}_{n, l} = O_{P_{n,0}}(\delta_n^2 + n^{-1})$,
    hence by the preceding display, the first RHS term in \eqref{lem:iv-estimation-bar-effscr2:eq-rate} is $O_{P_{n, 0}}(\delta_n^2 + n^{-1})$.    
    By Assumption \ref{ass:iv-est2} 
    $\E\|g(W)\|^4 <\infty$. Hence $\G_n g_{\gamma}g_{\gamma}\transp = O_{P_{n, 0}}(1)$ by the CLT and so the second RHS term in \eqref{lem:iv-estimation-bar-effscr2:eq-rate} is $O_{P_{n, 0}}(n^{-1/2})$. 
    The result now follows by the condition on $\upnu_n$ and Proposition S1 of \cite{LM21-S}.
\end{proof}

\begin{lemma}\label{lem:iv-estimation-bar-effscr-details2}
    In the setting of Proposition \ref{prop:iv-estimation-bar-effscr2}, with $\breve{s}_n\define \meanin \breve{\epsilon}_{n, i}^2$,
    \begin{enumerate}
        \item $\|\hat{M_n} - M\| = O_{P_{n, 0}}(n^{-1/2})$ where $\hat{M}_n\define \left[\meanin X_i Z_{1, i}\transp\right] \left[\meanin Z_{1,i} Z_{1, i}\transp\right]^{-1}$;
        \item $\meanin |\breve{\epsilon}_{n, i} - \epsilon_{n, i}|^2 = O_{P_{n, 0}}(n^{-1})$ and $|\breve{s}_{n}^{-1} - \E[\epsilon^2]^{-1}| = O_{P_{n, 0}}(n^{-1/2})$;
        \item  $\meanin \|\tilde{\pi}_{n, i}(Z_i)\|^2x_i^2 = O_{P_{n, 0}}(\delta_n^2)$ for $x_i\in \{1, \epsilon_i\}$;
        \item $\mathsf{R}_{n, 1} = \fracrootn\sumin \breve{s}_n^{-1}\breve{\epsilon}_{n, i}[M - \hat{M}_n]Z_{1, i} = o_{P_{n, 0}}(1)$;
        \item  $\mathsf{R}_{n, 2} = \fracrootn\sumin \breve{s}_n^{-1}\breve{\epsilon}_{n, i}\left[\hat{\pi}_{n, i}(Z_i) - \pi(Z_i)\right] = o_{P_{n, 0}}(1)$;
        \item $\mathsf{R}_{n, 3} = \fracrootn\sumin \breve{s}_n^{-1}(\breve{\epsilon}_{n, i} -\epsilon_i)f(Z_i) = o_{P_{n, 0}}(1)$, where $f(Z_i)\define \pi(Z_i) - MZ_{1, i}$;
        \item $\mathsf{R}_{n, 4} = \fracrootn\sumin (\breve{s}_n^{-1} - \E[\epsilon^2]^{-1})\epsilon_i f(Z_i) = o_{P_{n, 0}}(1)$;
        \item $\mathsf{S}_{n, 1} = \meanin \|\breve{s}_n^{-1}\breve{\epsilon}_{n, i}\left[M - \hat{M}_n\right]Z_{1, i}\|^2 = O_{P_{n, 0}}(n^{-1})$;
        \item  $\mathsf{S}_{n, 2} = \meanin \|\breve{s}_n^{-1}\breve{\epsilon}_{n, i}\left[\hat{\pi}_{n, i}(Z_i) - \pi(Z_i)\right]\|^2 = O_{P_{n, 0}}(\delta_n^2)$;
        \item $\mathsf{S}_{n, 3} =  \meanin \|\breve{s}_n^{-1}(\breve{\epsilon}_{n, i} - \epsilon_i)f(Z_i)\|^2 = O_{P_{n, 0}}(n^{-1})$;
        \item $\mathsf{S}_{n, 4} =  \meanin \|(\breve{s}_n^{-1} - \E[\epsilon^2]^{-1})\epsilon_if(Z_i)\|^2 = O_{P_{n, 0}}(n^{-1})$.
    \end{enumerate}
\end{lemma}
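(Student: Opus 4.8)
The plan is to start from the reduction already made in the proof of Proposition~\ref{prop:iv-estimation-bar-effscr2}: after discretisation it suffices to verify (i)--(xi) with a \emph{deterministic} sequence $\beta_n = \beta + b_{n,0}/\sqrt n$, $b_{n,0}\to b_0$. The single structural fact that makes the error-estimation pieces negligible is that, under $P_{n,0}$, $\breve\epsilon_{n,i} - \epsilon_i = -Z_{1,i}\transp b_{n,0}/\sqrt n$ is a \emph{deterministic} $O(n^{-1/2})$ perturbation. Given this, part~(i) is routine: $\meanin X_iZ_{1,i}\transp$ and $\meanin Z_{1,i}Z_{1,i}\transp$ converge at rate $n^{-1/2}$ to $\E[XZ_1\transp]$ and to the invertible matrix $\E[Z_1Z_1\transp]$ (invertibility by~\eqref{eq:iv-EphiU}), so $\hat M_n - M = O_{P_{n,0}}(n^{-1/2})$ by Slutsky's theorem, using the fourth moments in Assumptions~\ref{ass:iv-mdl} and~\ref{ass:iv-est2}. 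For part~(ii), $\meanin|\breve\epsilon_{n,i}-\epsilon_i|^2 = n^{-1}b_{n,0}\transp(\meanin Z_{1,i}Z_{1,i}\transp)b_{n,0} = O_{P_{n,0}}(n^{-1})$; expanding $\breve\epsilon_{n,i}^2$, applying the CLT to $\meanin(\epsilon_i^2 - \E\epsilon^2)$ and Cauchy--Schwarz to the cross term gives $\breve s_n - \E[\epsilon^2] = O_{P_{n,0}}(n^{-1/2})$, and since $\E[\epsilon^2] = \E[\E[\epsilon^2\,|\,Z]] \ge c > 0$ by~\eqref{eq:iv-EphiU}, the continuous mapping theorem yields the rate for $\breve s_n^{-1}$.

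The heart of the argument is part~(iii). Here I would exploit the leave-one-out construction: $\hat\pi_{n,i}(Z_i)$ is $\sigma(Z_i, \mc{C}_{n,-i})$-measurable while $(\epsilon_i, Z_i)$ is independent of $\mc{C}_{n,-i}$, so
\begin{equation*}
    \E\left[\bm{1}_{F_n}\|\hat\pi_{n,i}(Z_i) - \pi(Z_i)\|^2 x_i^2 \,\big|\, \mc{C}_{n,-i}\right] = \bm{1}_{F_n}\int \|\hat\pi_{n,i}(z) - \pi(z)\|^2\, \E[x^2\,|\,Z=z]\,\darg{\zeta_Z(z)},
\end{equation*}
which is $\le \delta_n^2$ when $x_i = 1$ by~\eqref{ass:iv-est2:eq:pi-est-bound} and $\lesssim \delta_n^2$ when $x_i = \epsilon_i$ using $\E[\epsilon^2\,|\,Z] = J(Z)_{1,1} \le \lambda_{\max}(J(Z)) \le C$ from~\eqref{eq:iv-EphiU}. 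Summing over $i$, taking expectations, and applying Markov's inequality together with $P_{n,0}(F_n)\to 1$ delivers $\meanin\|\tilde\pi_{n,i}(Z_i)\|^2 x_i^2 = O_{P_{n,0}}(\delta_n^2)$.

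Parts~(iv)--(vii) are then products of a controlled factor with a rescaled average. For~(iv), $\mathsf R_{n,1} = \breve s_n^{-1}[M - \hat M_n]\,\fracrootn\sumin \breve\epsilon_{n,i}Z_{1,i}$ and $\fracrootn\sumin\breve\epsilon_{n,i}Z_{1,i} = O_{P_{n,0}}(1)$ (CLT, $\E[\epsilon Z_1] = 0$, $\E[\epsilon^2\|Z_1\|^2] < \infty$), so $\mathsf R_{n,1} = O_{P_{n,0}}(n^{-1/2})$. Part~(v), $\mathsf R_{n,2}$, is the \textbf{main obstacle}: writing $\breve\epsilon_{n,i} = \epsilon_i - Z_{1,i}\transp b_{n,0}/\sqrt n$, the deterministic piece contributes $\breve s_n^{-1}\meanin(Z_{1,i}\transp b_{n,0})\tilde\pi_{n,i}(Z_i) = O_{P_{n,0}}(\delta_n)$ by Cauchy--Schwarz and part~(iii), while the remaining term $\fracrootn\sumin\epsilon_i\tilde\pi_{n,i}(Z_i)$ has summands that are conditionally mean zero ($\E[\epsilon_i\,|\,Z_i, \mc{C}_{n,-i}] = \E[\epsilon_i\,|\,Z_i] = 0$) but \emph{not} independent, since the conditioning sets overlap; a naive martingale or U-statistic bound is not available, which is precisely why the cross-moment condition~\eqref{ass:iv-est2:eq:pi-est-bound2} was imposed. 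The plan is to bound $\E\big[\bm{1}_{F_n}\bm{1}_{G_n}\|\fracrootn\sumin\epsilon_i\tilde\pi_{n,i}(Z_i)\|^2\big] = \fracn\sum_{i,j}\E[\bm{1}_{F_n}\bm{1}_{G_n}\epsilon_i\epsilon_j\,\tilde\pi_{n,i}(Z_i)\transp\tilde\pi_{n,j}(Z_j)]$: the $n$ diagonal terms sum to $O(\delta_n^2)$ by the expectation bound underlying part~(iii), and the off-diagonal terms, each $\lesssim \delta_n^2/n$ by~\eqref{ass:iv-est2:eq:pi-est-bound2}, also sum to $O(\delta_n^2)$ after the $n^{-1}$ normalisation, so Markov's inequality and $P_{n,0}(F_n\cap G_n)\to 1$ give $\mathsf R_{n,2} = o_{P_{n,0}}(1)$. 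Part~(vi): $\mathsf R_{n,3} = -\breve s_n^{-1}\meanin(Z_{1,i}\transp b_{n,0})f(Z_i)$, and $\E[f(Z)Z_1\transp] = \E[\pi(Z)Z_1\transp] - \E[XZ_1\transp] = 0$ because $\E[\upsilon\,|\,Z] = 0$, so it is $o_{P_{n,0}}(1)$ by the LLN. Part~(vii): $\mathsf R_{n,4} = (\breve s_n^{-1} - \E[\epsilon^2]^{-1})\,\fracrootn\sumin\epsilon_if(Z_i) = O_{P_{n,0}}(n^{-1/2})\cdot O_{P_{n,0}}(1)$ by part~(ii) and the CLT ($\E[\epsilon f(Z)] = 0$, finite variance by $\E[\epsilon^4(\|\pi(Z)\| + \|Z_1\|)^4] < \infty$).

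Finally, parts~(viii)--(xi) follow from the same ingredients by the triangle and Cauchy--Schwarz inequalities: $\mathsf S_{n,1} \le \breve s_n^{-2}\|M - \hat M_n\|^2\meanin\breve\epsilon_{n,i}^2\|Z_{1,i}\|^2 = O_{P_{n,0}}(n^{-1})$; $\mathsf S_{n,2} \le \breve s_n^{-2}\meanin\breve\epsilon_{n,i}^2\|\tilde\pi_{n,i}(Z_i)\|^2 = O_{P_{n,0}}(\delta_n^2)$, the cross terms from $\breve\epsilon_{n,i} - \epsilon_i = O(n^{-1/2})$ being of strictly smaller order by part~(iii); and $\mathsf S_{n,3} = \breve s_n^{-2}n^{-1}\meanin(Z_{1,i}\transp b_{n,0})^2\|f(Z_i)\|^2$ and $\mathsf S_{n,4} = (\breve s_n^{-1} - \E[\epsilon^2]^{-1})^2\meanin\epsilon_i^2\|f(Z_i)\|^2$ are both $O_{P_{n,0}}(n^{-1})$, since the relevant averages are $O_{P_{n,0}}(1)$ under the fourth-moment hypothesis. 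The only genuinely delicate point is part~(v): without the explicit cross-term control~\eqref{ass:iv-est2:eq:pi-est-bound2}, the non-independence of the leave-one-out estimates across $i$ would only yield a bound of order $n\delta_n^2$ for the second moment of $\fracrootn\sumin\epsilon_i\tilde\pi_{n,i}(Z_i)$, which is not $o(1)$.
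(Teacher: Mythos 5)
Your proposal follows essentially the same route as the paper's proof: the same term-by-term decompositions, the same use of the leave-one-out structure via conditioning on $\mc{C}_{n,-i}$ plus Markov's inequality for part~(iii), and the same identification of $\mathsf{R}_{n,2}$ as the critical term, handled by exactly the paper's diagonal/off-diagonal second-moment computation in which \eqref{ass:iv-est2:eq:pi-est-bound2} controls the $n(n-1)$ cross terms. Two small points of divergence. First, in part~(iii) you pull $\bm{1}_{F_n}$ outside the conditional expectation given $\mc{C}_{n,-i}$; this requires replacing $F_n$ by the $\sigma(\mc{C}_{n,-i})$-measurable event on which \eqref{ass:iv-est2:eq:pi-est-bound} holds for index $i$ (as the paper does with its $F_{n,i}$) — a one-line fix. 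Second, and more substantively, for $\mathsf{S}_{n,2}$ the paper switches to the contiguous measure $Q_n = P_{n,(0,b_{0,n},0,0)}$, under which $(\breve\epsilon_{n,i},Z_i)\,|\,\mc{C}_{n,-i}$ has the law of $(\epsilon_i,Z_i)\,|\,\mc{C}_{n,-i}$, uses $\E[\breve\epsilon_{n,i}^2\,|\,Z_i,\mc{C}_{n,-i}]\le C$ from \eqref{eq:iv-EphiU}, and transfers the $O_{Q_n}(\delta_n^2)$ bound back by contiguity. You instead expand $\breve\epsilon_{n,i}^2$ and assert the cross terms are of smaller order ``by part~(iii)''; as stated this does not quite follow, because those terms are $n^{-1/2}\meanin|\epsilon_i|\,\|Z_{1,i}\|\,\|\tilde\pi_{n,i}(Z_i)\|^2$ and $n^{-1}\meanin\|Z_{1,i}\|^2\|\tilde\pi_{n,i}(Z_i)\|^2$, and part~(iii) only covers the weights $x_i\in\{1,\epsilon_i\}$, not $\|Z_{1,i}\|$. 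Your route can be rescued (e.g.\ via $\max_{i\le n}\|Z_{1,i}\|^2=o_{P_{n,0}}(\sqrt n)$ under the fourth-moment hypothesis, combined with part~(iii) and Cauchy--Schwarz), but it needs that extra step; the contiguity argument avoids it entirely. Everything else, including parts~(i), (ii), (iv), (vi)--(viii), (x), (xi), matches the paper's proof.
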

\begin{proof}
    Let $\tilde{\pi}_{n, i}(Z_i)\define \hat{\pi}_{n, i}(Z_i) - \pi(Z_i)$. By a simplification of the argument in Proposition \ref{prop:iv-LAN}, \eqref{prop:iv-LAN:eq:DQM} holds for $t\mapsto \gamma + t(0, (b_0, 0, 0))$. Then $P_{n, 0} \mcontig P_{n, (0, b_{0, n}, 0, 0)}\coloneqq Q_n$ by Example 6.5, Theorem 7.2 \& Lemma 7.6 in \cite{vdV98}.
    \begin{enumerate}
        \item Follows from the CLT, given the  moment conditions in Assumption \ref{ass:iv-mdl}.
        \item The first holds by standard arguments as $\beta_n - \beta = O(n^{-1/2})$ and $\E \|Z_i\|^2 < \infty$; the second by the CLT and delta method.
        \item  As $P_{n, 0}(\meanin \|\tilde{\pi}_{n, i}(Z_i)\|^2x_i^2> K \delta_n^2)$ is bounded by $P_{n, 0}(\bm{1}_{F_n} \meanin \|\tilde{\pi}_{n, i}(Z_i)\|^2x_i^2 > K\delta_n^2) + P_{n, \gamma}F_n^\complement$. By Markov's inequality $\meanin \|\tilde{\pi}_{n, i}(Z_i)\|^2x_i^2 = O_{P_{n, 0}}(\delta_n^2)$ as by \eqref{eq:iv-EphiU}, \eqref{ass:iv-est2:eq:pi-est-bound} $\E[\bm{1}_{F_{n}}\|\tilde{\pi}_{n, i}(Z_i)\|^2x_i^2]\le \E[\E[\bm{1}_{F_{n, i}}\|\tilde{\pi}_{n, i}(Z_i)\|^2x_i^2 | Z_i, \mc{C}_{n, -i}]] \lesssim \delta_n^2$ , where $F_{n,i}$ is the $\sigma(\mc{C}_{n,-i})$ -- measurable set on which \eqref{ass:iv-est2:eq:pi-est-bound} holds for index $i$.
        \item $\mathsf{R}_{n, 1}\transp = \breve{s}_n^{-1} \left[\meanin Z_{1, i}\transp \epsilon_i + \meanin Z_{1, i}\transp (\breve{\epsilon}_{n, i} - \epsilon_i)\right] \sqrt{n}[M - \hat{M}_n]\transp$.
        By (i) and (ii) it suffices to note
        $\meanin Z_{1, i}\transp \epsilon_i = o_{P_{n, 0}}(1)$ by the WLLN and \linebreak $\meanin Z_{1, i}\transp (\breve{\epsilon}_{n, i} - \epsilon_i) = o_{P_{n, 0}}(1)$ by $\E \|Z_i\|^2<\infty$, (ii) and Cauchy -- Schwarz.
        \item $\mathsf{R}_{n, 2} = \breve{s}_n^{-1}\sqrt{n}(\beta - \beta_n)\transp \meanin Z_{1, i} \tilde{\pi}_{n, i}(Z_i) + \breve{s}_n^{-1}\fracrootn\sumin \epsilon_{i}\tilde{\pi}_{n, i}(Z_i)$. The first RHS term is $o_{P_{n, 0}}(1)$ by (iii) and $\E Z_{1, i}^2 < \infty$. For the second, by $\breve{s}_n^{-1} = O_{P_{n, 0}}(1)$, Assumption \ref{ass:iv-est2} and Markov's inequality it suffices to observe that $\E\left[\meanin \bm{1}_{F_n} \tilde{\pi}_{n, i, k}(Z_i)^2 \epsilon_{i}^2\right]\lesssim \delta_n^2$ by the argument in (iii) and by \eqref{ass:iv-est2:eq:pi-est-bound2}, \\  $\E\left[\meanin \sum_{j=1, j\neq i}^n  \bm{1}_{F_n}\bm{1}_{G_n}\tilde{\pi}_{n, i, k}(Z_i)\tilde{\pi}_{n, j, k}(Z_j)\transp \epsilon_i\epsilon_j            
        \right] \lesssim \delta_n^2 \to 0.$
        
        \item $\mathsf{R}_{n, 3} =\breve{s}_n^{-1}[\meanin f(Z_i)Z_{1, i}\transp]\sqrt{n}(\beta - \beta_n)$, where the bracketed term is $o_{P_{n, 0}}(1)$ by the WLLN as $ \E\left[f(Z)Z_{1}\transp\right] = \E\left[\pi(Z)Z_1\transp - \E[\pi(Z)Z_1\transp]\E[Z_1Z_1\transp]^{-1}Z_1Z_1\transp\right] = 0$ and the remaning factors are $O_{P_{n, 0}}(1)$ by (ii) and $\beta - \beta_n = O(n^{-1/2})$. 
        \item As $\E[\epsilon f(Z)] = \E[\E[\epsilon|Z]f(Z)] = 0$ this follows from (ii) and the WLLN.
        \item $\mathsf{S}_{n, 1} \lesssim s_n^{-2}\|M - M_n\|^2 [\meanin \epsilon_i^2 \|Z_{1, i}\|^2 + 2\|\beta - \beta_n\| |\epsilon_i| \|Z_{1, i}\|^3 + \|\beta - \beta_n\|^2\|Z_{1, i}\|^4]$ hence this follows by (i), (ii) and the moment conditions in Assumption \ref{ass:iv-mdl}.
        \item
        By contiguity $Q_n F_n\to 1$. As the (conditional) distribution of $(\breve{\epsilon}_{n, i}, Z_i) | C_{n, -i}$ under $Q_n$ as that of $(\epsilon_i, Z_i) | C_{n, -i}$, under $Q_n$, 
        $\E[\breve{\epsilon}_{n, i}^2 |Z_i, \mc{C}_{n, -i}] \le C$ a.s. by \eqref{eq:iv-EphiU}. Therefore, under $Q_n$, $  \E\left[ \bm{1}_{F_{n}} \meanin \breve{\epsilon}_{n, i}^2\|\tilde{\pi}_{n, i}(Z_i)\|^2 \right] \lesssim \delta_n^2$, similar to in (iii),
        and hence Markov's inequality implies $\meanin \breve{\epsilon}_{n, i}^2\|\tilde{\pi}_{n, i}(Z_i)\|^2 = O_{Q_n}(\delta_n^2)$. By contiguity this holds also under $P_{n, 0}$.
        \item As $\mathsf{S}_{n, 2} \le \breve{s}_{n}^{-2}\|\beta - \beta_n\|^2 \meanin \|Z_{1, i}\|^2  [\|\pi(Z_i)\|^2 + \|M\|^2\|Z_{1, i}\|^2]$, the result holds by (ii), $\beta - \beta_n = O(n^{-1/2})$ \& the moment conditions in Assumption \ref{ass:iv-mdl}.
        \item Since $\mathsf{S}_{n, 4} \le (\breve{s}_n^{-1} - \E[\epsilon_i^2])^2\meanin \epsilon_i^2(\|\pi(Z_i)\|^2 + \|M\|^2\|Z_{1, i}\|^2)$, this follows from (ii) and the  moment conditions in Assumption \ref{ass:iv-mdl}.\qedhere
    \end{enumerate}
\end{proof}

\paragraph*{Condition \eqref{ass:iv-est2:eq:pi-est-bound2}} The condition in equation \eqref{ass:iv-est2:eq:pi-est-bound2} is natural when $\hat{\pi}_{n, i}$ is a leave-one-out series estimator: $\hat{\pi}_{n, i}(Z_i)\define \hat\upalpha_{n, i}\transp p_{K_n}(Z_i)$ for $p_{K_{n}}(Z_i)$ a $K_n$-vector of functions of $Z_i$ and $\hat\upalpha_{n, i} = Q_{n, i}^{-1}\frac{1}{n-1}\sum_{j=1, j\neq i}^n p_{K_n}(Z_i)X_i\transp$ with $Q_{n, i}\define \left[\frac{1}{n-1}\sum_{j=1, j\neq i}^n p_{K_n}(Z_i)p_{K_n}(Z_i)\transp\right]$. Then, with $\tilde{\pi}_{n, i}(Z_i)\define \hat{\pi}_{n, i}(Z_i) - \pi(Z_i)$ and $G_n \in \sigma(Z_1, \ldots, Z_n)$,
\begin{equation}\label{eq:series-eq-condition1}
    \begin{aligned}
        \E\left[\bm{1}_{G_n}\tilde{\pi}_{n, i, k}(Z_i)\tilde{\pi}_{n, j, k}(Z_j)\epsilon_i\epsilon_j\right] 
        &= -\E\left[\bm{1}_{G_n}\tilde{\pi}_{n, i, k}(Z_i) \epsilon_j \E\left[ p_{K_n}(Z_j)\transp  \hat\upalpha_{n, j}e_k\epsilon_i  \middle|Z_i, \mc{C}_{n, -i}\right]\right].
    \end{aligned}
\end{equation}
as by $\E[\epsilon_i | Z_i] = 0$ and independence $\E[\epsilon_i \pi(Z_i) | Z_i, \mc{C}_{n, -i}]=0$. The RHS of \eqref{eq:series-eq-condition1} is 
\begin{equation*}
    -\E\left[\bm{1}_{G_n} \tilde{\pi}_{n, i, k}(Z_i) \epsilon_j p_{K_n}(Z_j)\transp Q_{n, j}^{-1} \frac{1}{n-1}\sum_{l=1, l\neq j}^n p_{K_n}(Z_l) \E\left[  X_l\transp \epsilon_i  \middle|Z_i, \mc{C}_{n, -i}\right]e_k\right],
\end{equation*}
and as $\E\left[  X_l\transp\epsilon_i  \middle|Z_i, \mc{C}_{n, -i}\right] =0$ if $l\neq i$, therefore with $\mu(Z_i)\define \E[\upsilon_i\transp \epsilon_i|Z_i]e_k$,
\begin{align*}
    &\E\left[\bm{1}_{G_n}\tilde{\pi}_{n, i, k}(Z_i)\tilde{\pi}_{n, j, k}(Z_j)\epsilon_i\epsilon_j\right] = -\E\left[\bm{1}_{G_n}\tilde{\pi}_{n, i, k}(Z_i) \epsilon_j p_{K_n}(Z_j)\transp Q_{n, j}^{-1} \frac{1}{n-1} p_{K_n}(Z_i) \mu(Z_i)\right]\\
    &\qquad = -\frac{1}{n-1}\E\left[\bm{1}_{G_n}e_k\transp p_{K_n}(Z_j)\transp Q_{n, j}^{-1}  p_{K_n}(Z_i) \mu(Z_i) \E\left[\tilde{\pi}_{n, i, k}(Z_i) \epsilon_j  | Z_j, \mc{C}_{n, -j}\right]\right].
\end{align*}
Arguing as before with the roles of $i$ and $j$ interchanged and using \eqref{eq:iv-EphiU} yields
\begin{align*}
    |\E\left[\bm{1}_{G_n}\tilde{\pi}_{n, i}(Z_i)\tilde{\pi}_{n, j}(Z_j)\epsilon_i\epsilon_j\right]| 
    &\le C^2 \frac{|\E\left[\bm{1}_{G_n} p_{K_n}(Z_j)\transp Q_{n, j}^{-1}  p_{K_n}(Z_i) p_{K_n}(Z_i)\transp Q_{n, i}^{-1}  p_{K_n}(Z_j)    \right]|}{(n-1)^2}.
\end{align*}
Therefore \eqref{ass:iv-est2:eq:pi-est-bound2} holds if the RHS is bounded above by a constant multiple of $\delta_n^2 / n$.

\subsubsection{The LAN condition}\label{app:ssec:iv-LAN}

Here I provide examples of $P_{n, h}$ and lower level conditions under which the LAN condition in Assumption \ref{ass:iv-LAN} holds. Let $\varphi_n$ be as in \eqref{eq:iv-varphi} with
\begin{equation}\label{app:eq:iv-varphik}
    \varphi_{n, 1}(b_1)\define b_1 / \sqrt{n},\quad  \varphi_{n, 2}(b_2)\define \zeta b_2 / \sqrt{n}, \qquad (b_1, b_2)\in B_{1} \times B_{2},
\end{equation}
where $B_{1}$ is the space of bounded functions $b_1:\R^{d_Z}\to \R^{d_\theta}$ and $B_{2}$ the space of bounded functions $b_2:\R^{d_w}\to \R$ which are continuously differentiable in their first $1 + d_\theta$ components with bounded derivative and such that \eqref{eq:iv-b2-conditions} hold.

\begin{proposition}\label{prop:iv-LAN}
    If Assumption \ref{ass:iv-mdl} holds, $\mc{W}_n = \prod_{i=1}^n \R^{d_w}$, $u\mapsto \sqrt{\zeta(u, z)} \in \mc{C}^1$
      and $p_{n,h} = p_{\gamma + \varphi_n(h)}^n$ with $p_\gamma$ as in \eqref{eq:iv-dens}. Then Assumption \ref{ass:iv-LAN} holds.
\end{proposition}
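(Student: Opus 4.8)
The plan is to follow the template of Proposition~\ref{prop:sim-DQM-LAN}: establish a differentiability in quadratic mean (DQM) condition \eqref{eq:dqm} along each perturbation direction and then apply Lemma~\ref{lem:iid-LAN-DQM} to obtain the LAN expansion required by Assumption~\ref{ass:iv-LAN}. Fix $\gamma = (\theta, \beta, \pi, \zeta)\in \Gamma$ and $h = (\tau, (b_0, b_1, b_2))\in H$ and define the path $\gamma_t(h) \define \gamma + t(\tau, b_0, b_1, \zeta b_2)$ for $t\ge 0$, so that along the path $\theta_t = \theta + t\tau$, $\beta_t = \beta + tb_0$, $\pi_t = \pi + tb_1$ and the error density is $\zeta_t = \zeta(1 + tb_2)$; note $\gamma_{1/\sqrt{n}}(h) = \gamma + \varphi_n(h)$ for $\varphi_n$ as in \eqref{eq:iv-varphi}--\eqref{app:eq:iv-varphik}. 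The first step is to check $\gamma_t(h)\in \Gamma$ for all small enough $t$: $\theta_t\in \Theta$, $\beta_t\in \mc{B}$ by openness; $\zeta_t\ge 0$ and $\int\zeta_t\,d\nu = 1$ since $b_2$ is bounded and $\E[b_2(U,Z)] = 0$ by \eqref{eq:iv-b2-conditions}; $\E_{\gamma_t(h)}[U\mid Z] = 0$ since $\E[Ub_2(U,Z)\mid Z] = 0$, again by \eqref{eq:iv-b2-conditions}; and the fourth-moment bounds of Assumption~\ref{ass:iv-mdl} transfer because $b_1$, $b_2$, $\nabla b_2$ and $(1+tb_2)^{-1}$ are bounded, so the perturbed score $\phi + t\nabla b_2 / (1 + tb_2)$ and first stage $\pi_t$ remain in $L_4$. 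This membership also gives the domination $P_{n,h}\ll \nu_n$ demanded by Assumption~\ref{ass:iv-LAN}.

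Next I would verify pointwise continuous differentiability of $t\mapsto \sqrt{p_{\gamma_t(h)}(W)}$ near $0$. With $\xi_t \define (Y - X\transp\theta_t - Z_1\transp\beta_t,\ X - \pi_t(Z),\ Z)$ one has $\sqrt{p_{\gamma_t(h)}(W)} = \sqrt{\zeta(\xi_t)}\,\sqrt{1 + tb_2(\xi_t)}$, a composition of $\mc{C}^1$ maps: $t\mapsto \xi_t$ is affine, $u\mapsto\sqrt{\zeta(u,z)}\in \mc{C}^1$ by hypothesis, and $b_2$ is $\mc{C}^1$ in its first $1 + d_\theta$ arguments with $1 + tb_2$ bounded away from zero for small $t$. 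Differentiating $\log p_{\gamma_t(h)}$ in $t$ gives
\begin{equation*}
    q_t(W) \define -\phi_1(\xi_t)\bigl(X\transp\tau + Z_1\transp b_0\bigr) - \phi_2(\xi_t)\transp b_1(Z) + \frac{b_2(\xi_t) + t\,\tfrac{d}{dt}b_2(\xi_t)}{1 + tb_2(\xi_t)},
\end{equation*}
and at $t = 0$ this coincides with $[Ah] = \tau\transp\dotscr + [Db]$ of Assumption~\ref{ass:iv-LAN}. By Lemma~1.8 of \cite{vdV02}, \eqref{eq:dqm} holds (with $h_n = h$ and score $Ah$) once the path Fisher information $t\mapsto \int q_t^2\,dP_{\gamma_t(h)}$ is finite and continuous at $0$, and Lemma~1.7 of \cite{vdV02} then yields $Ah\in L_2^0(P)$.

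The one point requiring care beyond the single-index argument is the continuity of $t\mapsto \int q_t^2\,dP_{\gamma_t(h)}$: Assumption~\ref{ass:iv-mdl} supplies only \emph{fourth} moments (rather than the $(2+\rho)$-type moments assumed in the single-index model), so a uniform $L_{2+\rho}$ bound on $q_t$ is not directly available. The remedy is to change variables. Under $P_{\gamma_t(h)}$ the vector $\xi_t$ has $\nu$-density $\zeta_t$ and $X = \pi_t(Z) + \upsilon_t$, so $\int q_t^2\,dP_{\gamma_t(h)}$ is an integral of $q_t^2\zeta_t$ against the \emph{fixed} measure $\nu$; the integrand depends continuously on $t$ and converges pointwise to $q_0^2\zeta_0$ as $t\to 0$. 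For $|t|\le\delta$ it is dominated by a constant multiple of $(\|\phi\|^2 + 1)(\|\pi(Z)\|^2 + \|\upsilon\|^2 + \|Z_1\|^2 + 1)\zeta$, which is $\nu$-integrable since the cross terms are controlled by Cauchy--Schwarz, e.g.\ $\E[\|\phi(\xi)\|^2\|\pi(Z)\|^2]\le (\E\|\phi(\xi)\|^4)^{1/2}(\E\|\pi(Z)\|^4)^{1/2} < \infty$; dominated convergence gives the claim. Finally $A$ is linear in $h$ by inspection and bounded, since
\begin{equation*}
    \|Ah\|_{L_2(P)}^2 \lesssim \E\bigl[\phi_1(\xi)^2(\|X\|^2 + \|Z_1\|^2)\bigr]\bigl(\|\tau\|^2 + \|b_0\|^2\bigr) + \E\|\phi_2(\xi)\|^2\,\|b_1\|_\infty^2 + \E[b_2(\xi)^2] \lesssim \|h\|^2,
\end{equation*}
again by Assumption~\ref{ass:iv-mdl} and Cauchy--Schwarz. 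Lemma~\ref{lem:iid-LAN-DQM} then delivers Assumption~\ref{ass:LAN} with $[\Delta_n h](W^{(n)}) = \G_n Ah$, i.e.\ Assumption~\ref{ass:iv-LAN}. The main obstacle is precisely the moment bookkeeping for the path Fisher information just sketched; the rest is a routine transcription of the proof of Proposition~\ref{prop:sim-DQM-LAN}.
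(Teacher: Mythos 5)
Your proposal is correct and follows essentially the same route as the paper's proof: verify $\gamma_t(h)\in\Gamma$ for small $t$, check $t\mapsto\sqrt{p_{\gamma_t(h)}}$ is $\mc{C}^1$ with score matching $Ah$ at $t=0$, dominate the squared path score by an integrable envelope built from the fourth moments of Assumption \ref{ass:iv-mdl} via Cauchy--Schwarz so that the path Fisher information is continuous, and conclude DQM via the van der Vaart differentiability criterion before invoking Lemma \ref{lem:iid-LAN-DQM}. The envelope you write, $(\|\phi\|^2+1)(\|\pi(Z)\|^2+\|\upsilon\|^2+\|Z_1\|^2+1)$, is the same bound (up to the substitution $X=\pi(Z)+\upsilon$) as the paper's $\overline{E^2}$, and your explicit remark about why fourth moments replace the $(2+\rho)$-moments of the single-index case is a correct reading of the only point where the two examples differ.
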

\begin{proof}
   For all large enough $n$ each $\gamma + \varphi_n(h)\in \Gamma$. Assumption \ref{ass:iid} is satisfied by construction; to apply Lemma \ref{lem:iid-LAN-DQM} it remains to verify \eqref{eq:dqm} (with $h_n = h$).
    Let $q_{\tau, b, t}\define p_{(\theta, \eta) + t(\tau, (b_0, b_1, b_2\zeta))}$, $t\in[0, \infty)$ and let $q\define q_{0,0,0}$. For all small enough $\tau, b$ and $t$, $\gamma + t(\tau, (b_0, b_1, b_2\zeta))\in \Gamma$. It suffices to show
    \begin{equation}\label{prop:iv-LAN:eq:DQM}
        \int \left[q_{\tau, b, t}^{1/2} - q^{1/2} - \frac{t}{2}\left((\tau\transp, b_0\transp) \dot{l}- \phi\transp b_1 + b_2\right)q^{1/2}\right]^2 \darg{\nu} = o(t^2) \quad \text{ as }t \downarrow 0,
    \end{equation}
    where $ \dot{l}(W)\define -\phi(\epsilon(\theta, \beta), \upsilon(\pi),Z)[X\transp, Z_1\transp]\transp$.
    Note that $t\mapsto \sqrt{q_{\tau, b, t}} \in \mc{C}^1$ follows from $(e, v)\mapsto \sqrt{\zeta(e, v, z)} \in \mc{C}^1$. 
     Under $q_{\tau, b, s}$,
     $\pderiv{\log q_{ \tau, b, t}}{t}|_{t = s}$
      has the same law as 
    \begin{align*}
        E_s&\define -\phi_1(\epsilon, \upsilon, Z)[X\transp, Z_1\transp](\tau\transp, b_0\transp)\transp -\phi_2(\epsilon, \upsilon, Z)\transp b_1(Z)\\
        &\qquad  + \frac{b_2(\epsilon, \upsilon, Z) - sb_{2, 1}(\epsilon, \upsilon, Z)[X\transp, Z_1\transp](\tau\transp, b_0\transp) - sb_{2, 2}(\epsilon, \upsilon, Z)\transp b_1(Z)}{1 + sb_2(\epsilon, \upsilon, Z)}~, 
    \end{align*}
    where $b_{2, i}$ indicates the derivative of $(e, v)\mapsto b_2(e, v, z)$ in the $i$-th argument. Take a neighbourhood of 0, $\mc{U}\define [0, \delta)$ such that $1+sb_2(\epsilon, \upsilon, Z)$ is bounded below. 
     Let $ \overline{E^2} \define C\left[\phi_1(\epsilon, \upsilon, Z)^2[ \|X\|^2 + \|Z\|^2]+ \|\phi_2(\epsilon, \upsilon, Z)\|^2 + \|X\|^2 + \|Z\|^2 + 1 \right]$
    for some positive constant $C$. Provided $C$ is large enough, by Assumption \ref{ass:iv-mdl} 
    $E_s^2 \le \overline{E^2}$ a.s. and $\E \overline{E^2} < \infty$. 
    Therefore, as $E_{s_n}^2\to E_{s}^2$ pointwise, $\E E_{s_n}^2 \to \E E_s^2$, which verifies that Lemma 7.6 in \cite{vdV98} applies, whence \eqref{prop:iv-LAN:eq:DQM} holds.
\end{proof}

\section{Additional simulation details \& results}\label{sm:sec:extra-sim-results}

\subsection{Single index model}\label{sm:ssec:sim-extra-sim-results}

As discussed in Section \ref{sm:shape-constraints}, locally regular C($\alpha$) tests do not exhibit size distortions when nuisance parameters are estimated under shape constraints. Here I explore this in simulation, using Example \ref{ex:SIM-running-example} with $\ms{F}$ restricted to contain only monotonically increasing functions. I set $H_0:\theta = \theta_0=0$ and consider three possible link functions: $f_1$ is a logistic function, whilst $f_2$ and $f_3$ are double logistic functions which include a flat section in between two increasing sections. These functions are formally defined in \eqref{sm:eq:dbllogis} below and plotted in Figure \ref{fig:SIM-dbllogistic-fs}. Each considered link function has flat sections which may cause monotonicity constraints to bind in the estimation of $f$. I explore the effect this has on the rejection frequencies of the $\psi_{n, \theta_0}$ test as described on p. \pageref{para:sim:ghat} and an \cite{I93} -- style Wald test. Both tests are computed with $f, f'$ estimated by 9 monotonic I -- splines \cite[e.g.][]{R88}, whilst $Z_1$ is estimated using 6 cubic B -- splines. As $\effinfo > 0$ in this design, $\upnu =0$. $\epsilon$ is drawn from a standard normal and the covariates are drawn as $X= (Z_1, 0.2 Z_1 + 0.4 Z_2 + 0.8)$, where each $Z_k\sim U(-1.5, 1.5)$ is independent.

The $f_j$ functions used are as follows.  Let $b(x)\define \bm{1}\{x >0 \}\exp(-1/x)$ (a bump function) and form the smooth transition function $a(x)\define b(x) / (b(x) + b(1-x))$. Then with $g(v; a, b)\define 1 / (1 + \exp(- (x - b)/ a))$, a logistic function, let
\begin{equation}\label{sm:eq:dbllogis}
	\begin{aligned}
		f_1(v) &\define 8g(v, 0.25, 0)~;\\
		f_2(v) &\define 4\big[\bm{1}\{4v\le -1\}g(4v, 0.4, -3)  + \bm{1}\{4v> 1\}(1 + g(4v, 0.4, 3))\\
		&\qquad\quad + \bm{1}\{1- < 4v\le 1\}a((4v + 1)/2) (1 + g(1, 0.4, 3) - g(-1, 0.4, -3))\big]\\
		f_3(v)&\define  4\big[\bm{1}\{3v\le -1\}g(3v, 0.2, -3) + \bm{1}\{3v> 1\}(1 + g(2v, 0.2, 3))~;\\
		&\qquad\quad + \bm{1}\{1- < 3v\le 1\}a((3v + 1)/2) (1 + g(1, 0.2, 3) - g(-1, 0.2, -3))\big]~.\\
	\end{aligned}
\end{equation}

Table \ref{tbl:SIM-size2} displays the empirical rejection frequencies attained by $\psi_{n, \theta_0}$ and the Wald test. The former provides rejection rates close to the nominal level of 5\% in each simulation design considered. The Wald test displays substantial overrejection in each simulation design. 
The 3 panels of Figure \ref{fig:SIM-power-2} depict the finite-sample power curves for $f = f_1, f_2, f_3$ respectively. In each panel, the Wald test shows a relatively slow increase in power as $\theta$ moves away from $\theta_0$ with $\psi_{n, \theta_0}$ providing a much higher rate of increase in power as $\theta$ deviates from the null.\footnote{The power of the Wald test exceeds that of $\psi_{n, \theta_0}$ around the null. However, this is not a like-for-like comparison, as the Wald test over-rejects; see Table \ref{tbl:SIM-size2}.}

\section{Tables and Figures}\label{sm:sec:tbls-figs}

\begin{figure}[htbp]
	\caption{Index functions $f_j(v) = 5\exp(-v^2 / 2c_j^2)$}
	\label{fig:SIM-gaussian-fs}
	\begin{minipage}{.99\textwidth}
		\begin{subfigure}[b]{0.49\textwidth}
			\centering
			\includegraphics[width=\textwidth]{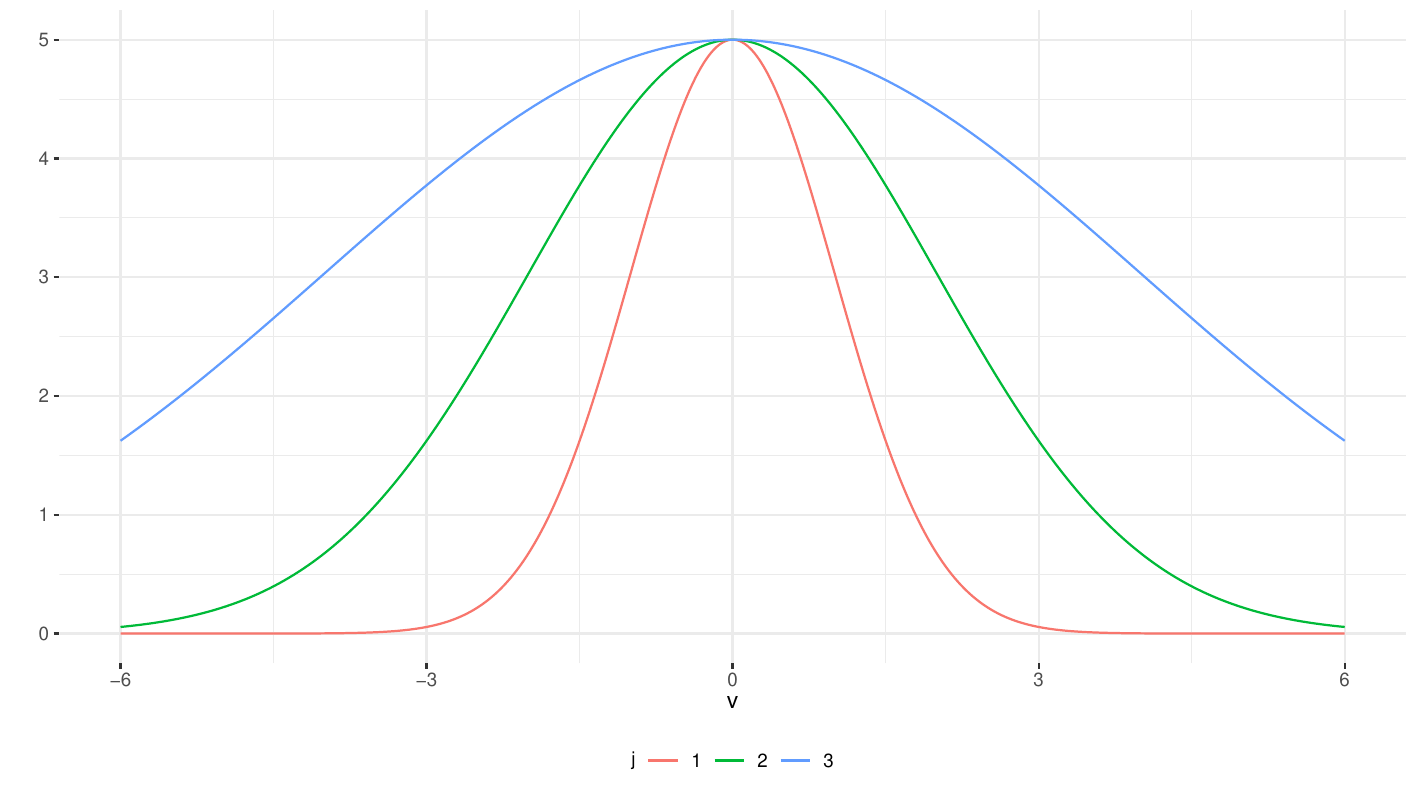}
			\caption[]%
			{{\small $f$}}
		\end{subfigure}
		\hfill
		\begin{subfigure}[b]{0.49\textwidth}
			\centering
			\includegraphics[width=\textwidth]{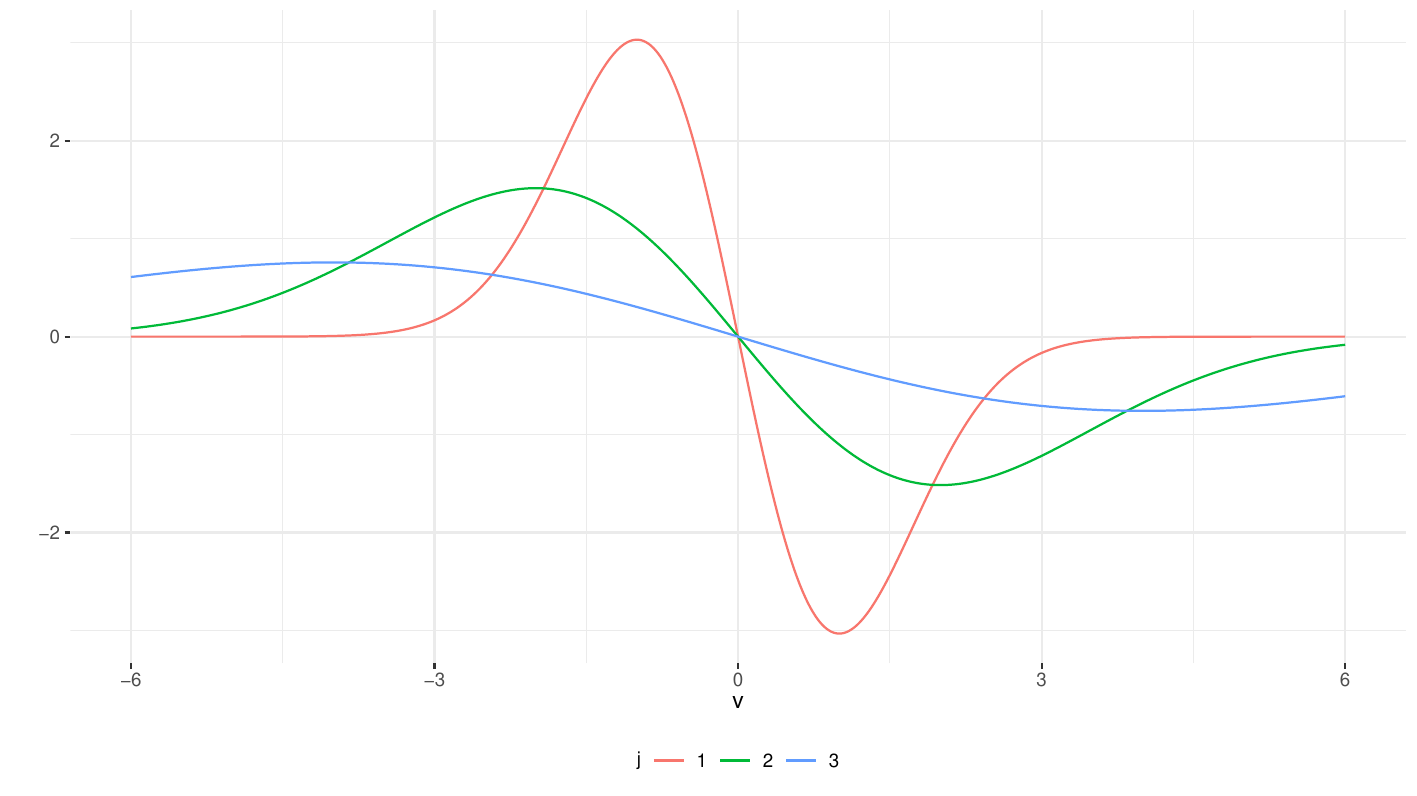}
			\caption[]%
			{{$f'$}}
		\end{subfigure}
		\centering
		\end{minipage}
\end{figure}

\begin{figure}[htbp]
	\caption{Index functions$f_j(v) = 25\left(1 + \exp(-v  /c_j)
	\right)^{-1}$}
	\label{fig:SIM-logistic-fs}
	\begin{minipage}{.99\textwidth}
		\begin{subfigure}[b]{0.49\textwidth}
			\centering
			\includegraphics[width=\textwidth]{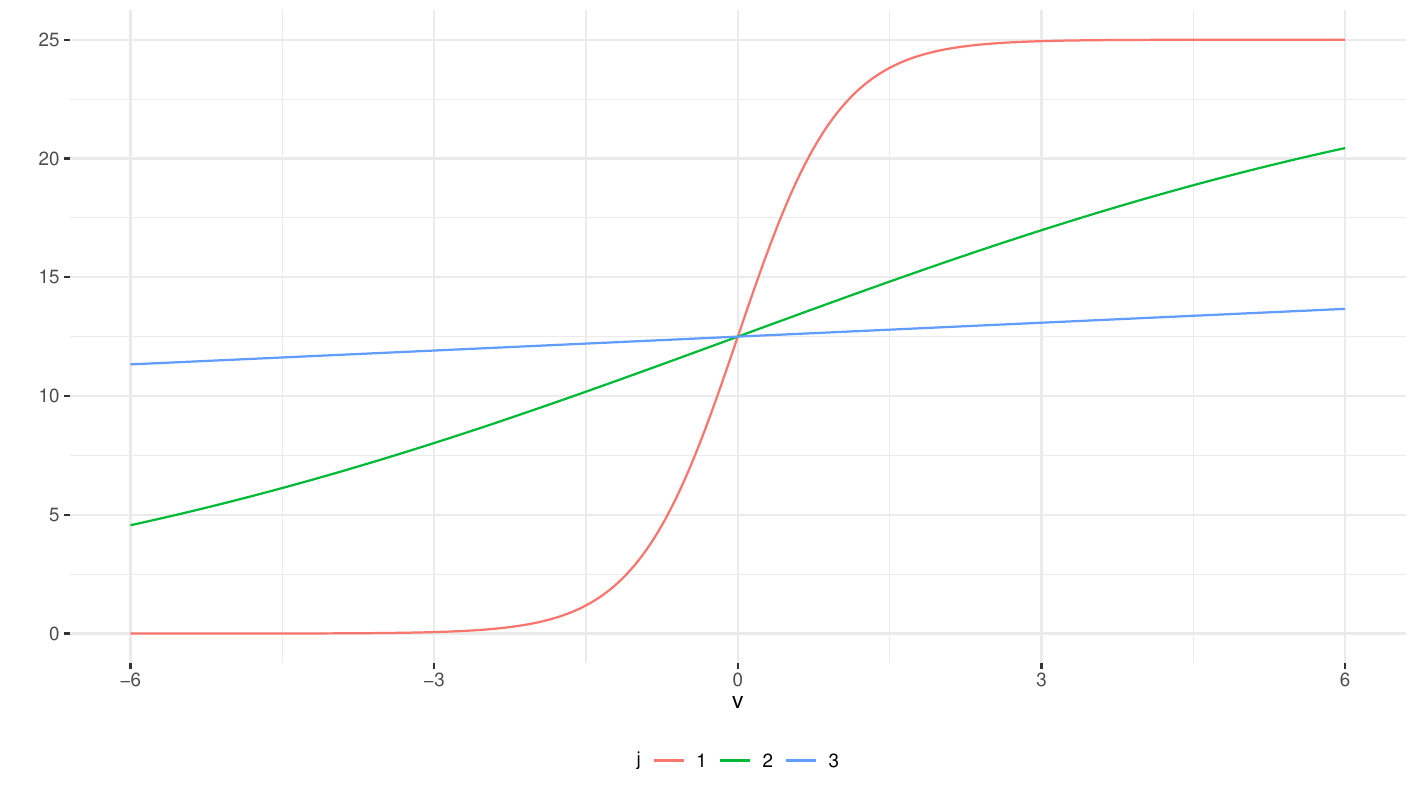}
			\caption[]%
			{{\small $f$}}
		\end{subfigure}
		\hfill
		\begin{subfigure}[b]{0.49\textwidth}
			\centering
			\includegraphics[width=\textwidth]{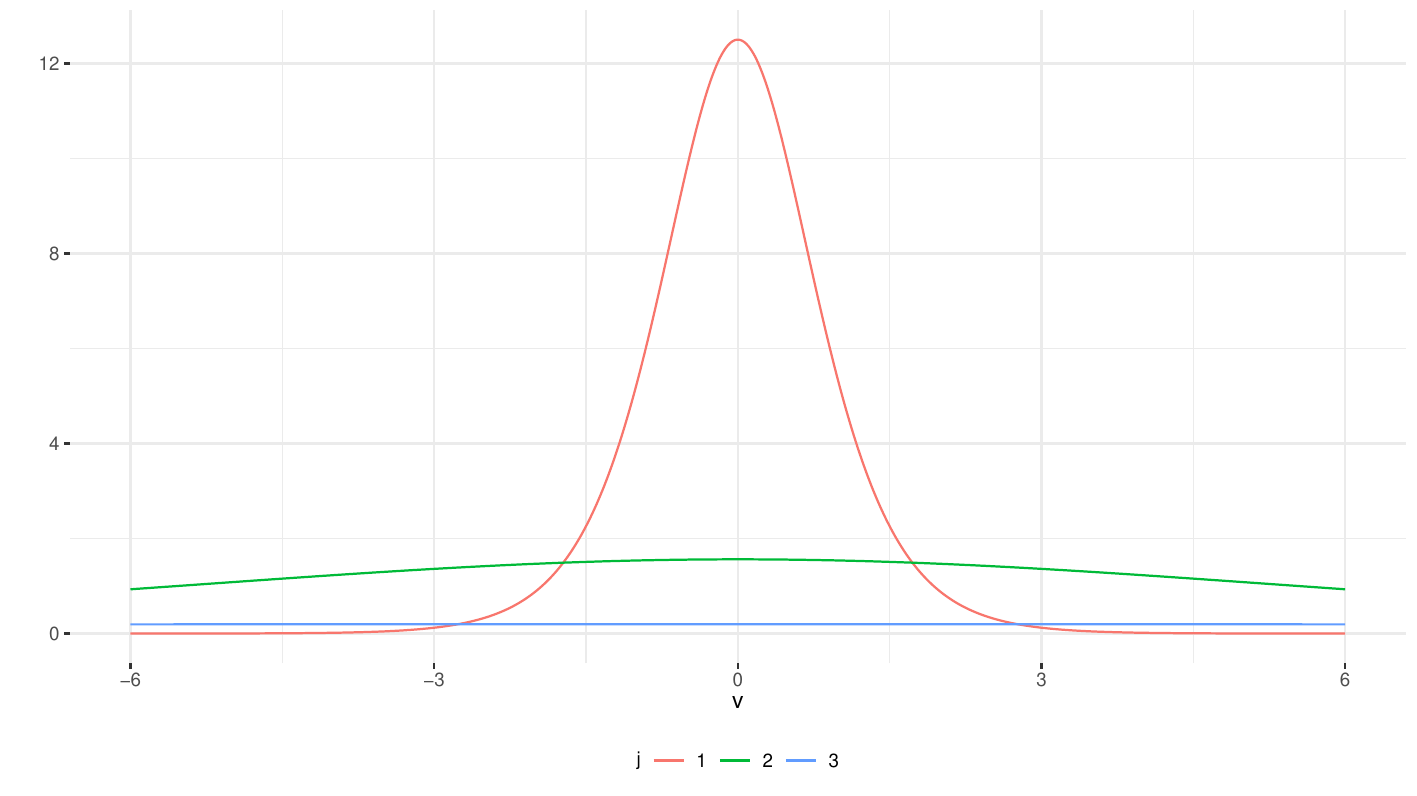}
			\caption[]%
			{{$f'$}}
		\end{subfigure}
		\centering
		\end{minipage}
\end{figure}

\begin{figure}[htbp]
	\caption{ Double logistic index functions as in \eqref{sm:eq:dbllogis}}
	\label{fig:SIM-dbllogistic-fs}
	\begin{minipage}{.99\textwidth}
		\begin{subfigure}[b]{0.49\textwidth}
			\centering
			\includegraphics[width=\textwidth]{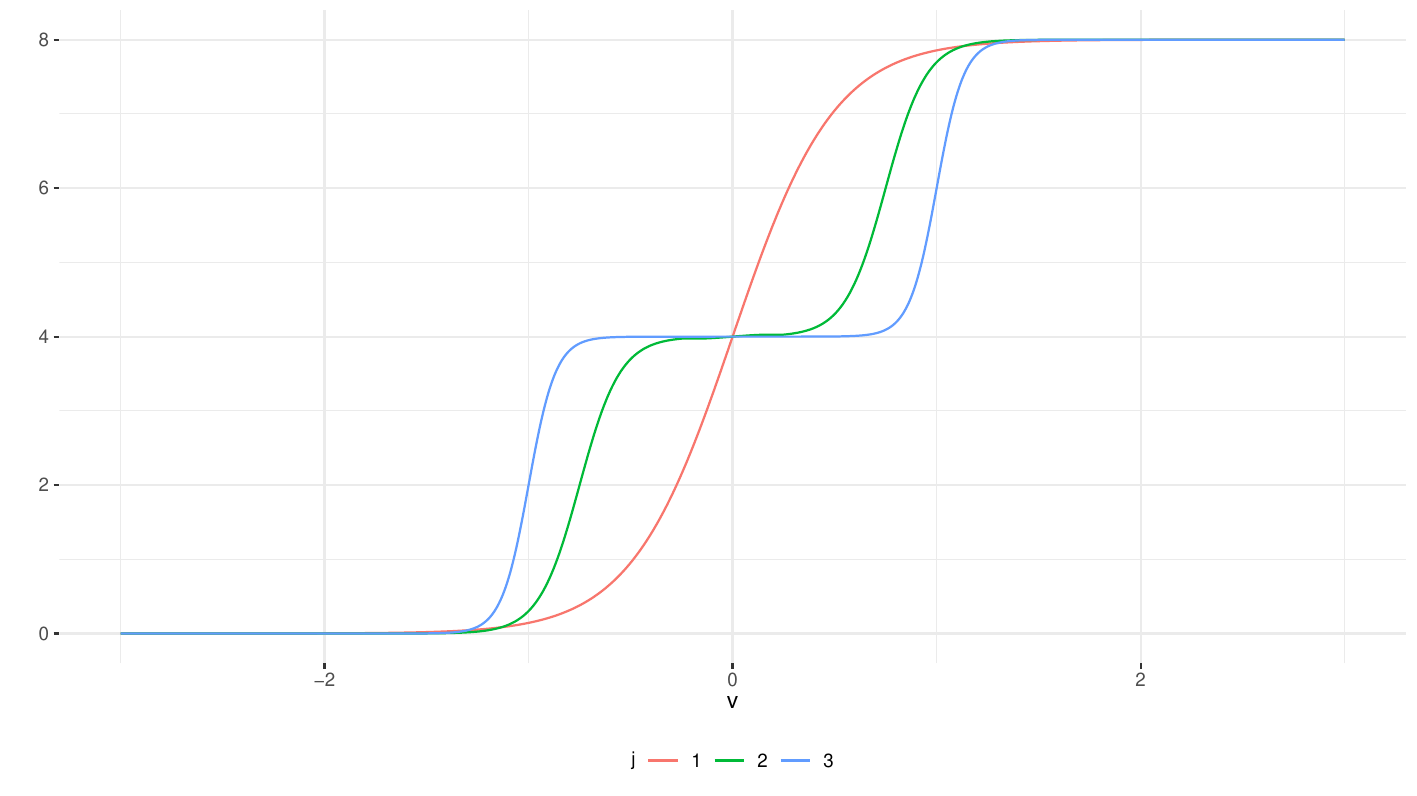}
			\caption[]%
			{{\small $f$}}
		\end{subfigure}
		\hfill
		\begin{subfigure}[b]{0.49\textwidth}
			\centering
			\includegraphics[width=\textwidth]{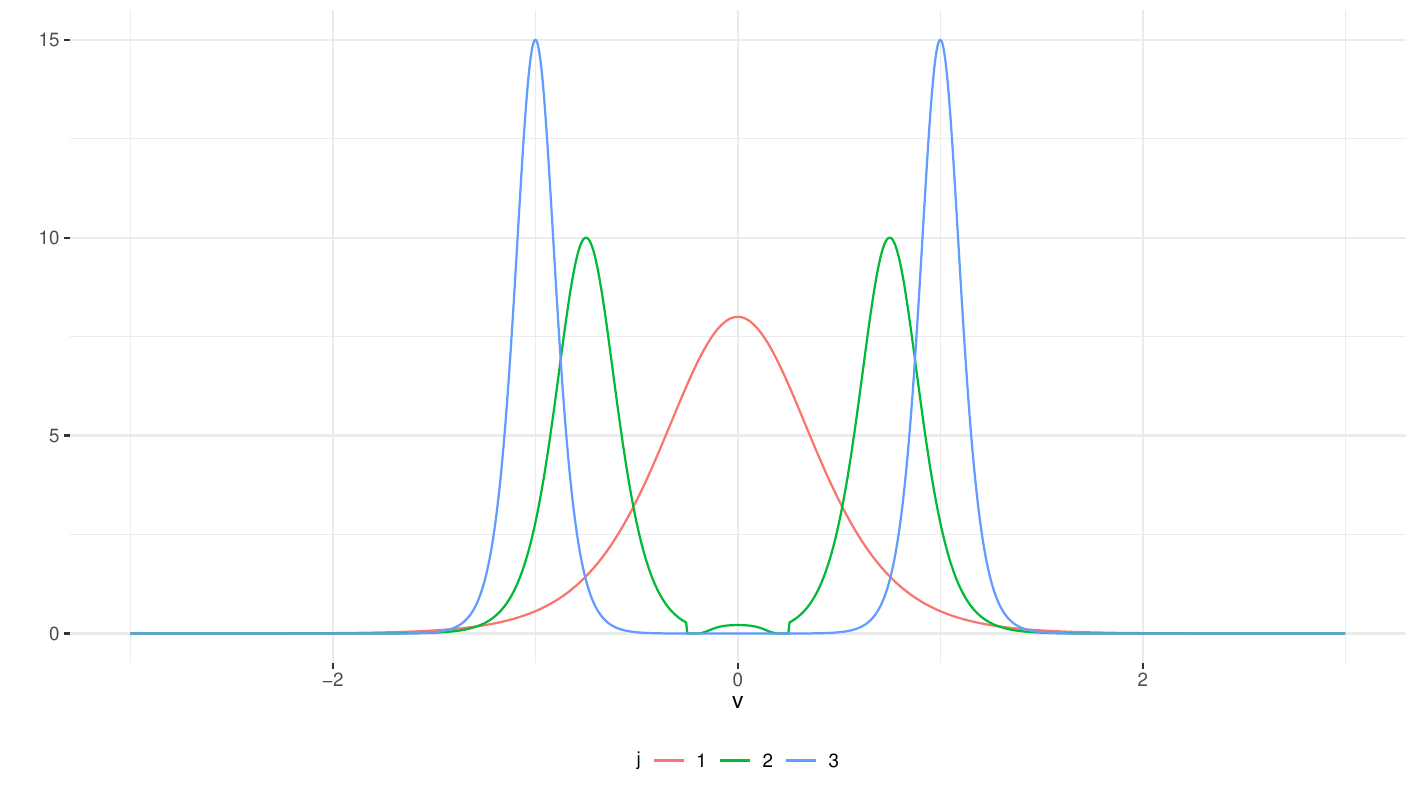}
			\caption[]%
			{{$f'$}}
		\end{subfigure}
		\centering
		\end{minipage}
\end{figure}

\begin{table*}[!htbp]
	\footnotesize
	\begin{center}
		\caption{\label{tbl:SIM-size2} \small ERF (\%), index function as in \eqref{sm:eq:dbllogis}}
		\begin{threeparttable}
			
\begin{tabular}{l*{6}{>{\raggedleft\arraybackslash}p{3em}}l*{6}{>{\raggedleft\arraybackslash}p{3em}}l*{6}{>{\raggedleft\arraybackslash}p{3em}}l*{6}{>{\raggedleft\arraybackslash}p{3em}}l*{6}{>{\raggedleft\arraybackslash}p{3em}}l*{6}{>{\raggedleft\arraybackslash}p{3em}}l*{6}{>{\raggedleft\arraybackslash}p{3em}}}
\toprule
\multicolumn{1}{c}{} & \multicolumn{3}{c}{$\psi_{n, \theta_0}$} & \multicolumn{3}{c}{Wald} \\
\cmidrule(l{3pt}r{3pt}){2-4} \cmidrule(l{3pt}r{3pt}){5-7}
$n$ & $f_1$ & $f_2$ & $f_3$ & $f_1$ & $f_2$ & $f_3$\\
\midrule
400 & 5.72 & 5.38 & 5.78 & 28.58 & 26.06 & 34.22\\
600 & 5.82 & 5.06 & 5.70 & 26.56 & 23.04 & 35.22\\
800 & 5.28 & 5.08 & 5.62 & 23.72 & 19.94 & 32.88\\
\bottomrule
\end{tabular}

		\end{threeparttable}
	\end{center}
\end{table*}

\begin{figure}[htbp]
	\begin{minipage}{.99\textwidth}
		\centering
		\caption{\label{fig:SIM-power-2} \small ERF (\%), index function as in \eqref{sm:eq:dbllogis}}
		\includegraphics[scale = 0.45]{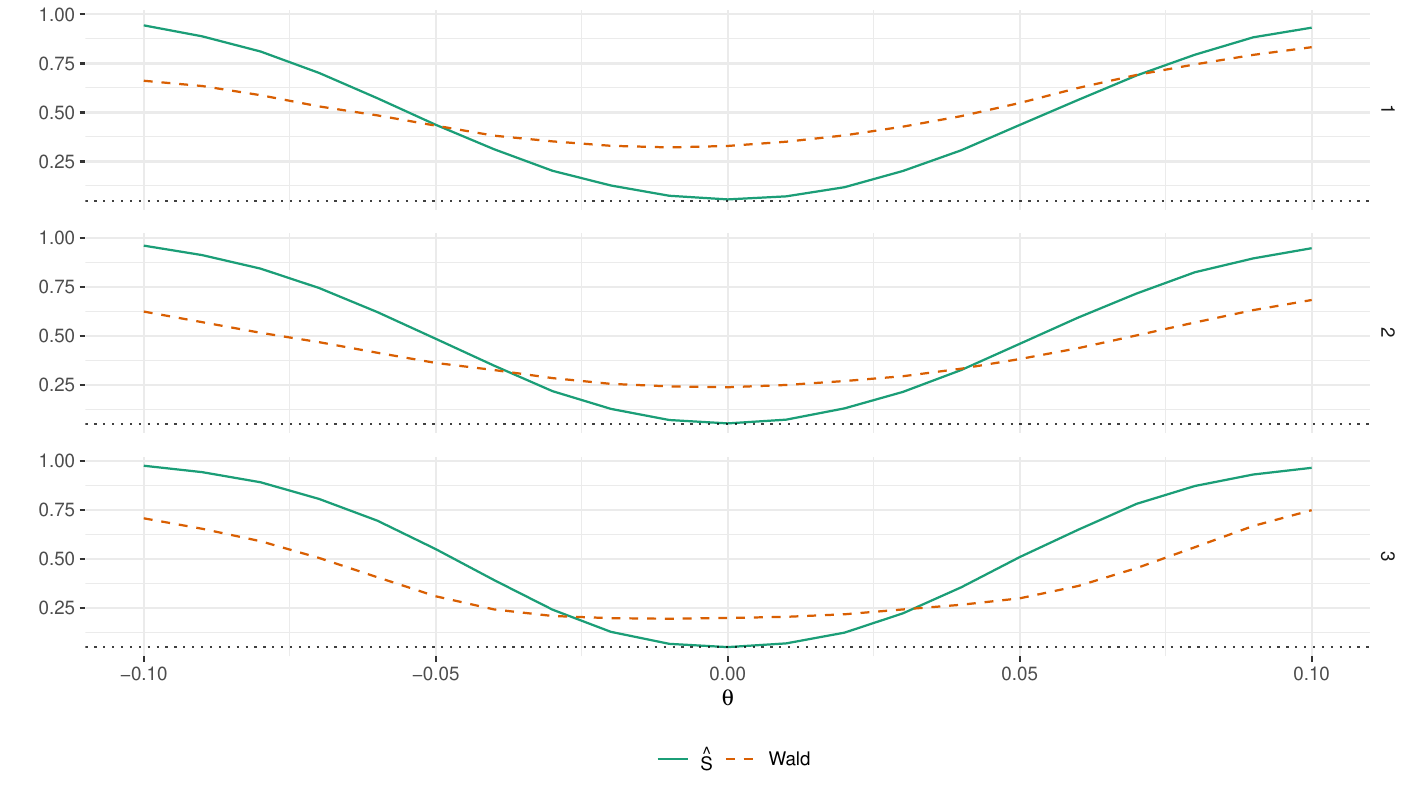}
		\end{minipage}
		\vspace*{-1em}
\end{figure}

\end{document}